\documentclass{article}
\usepackage{fullpage}
\usepackage{amssymb}
\usepackage{amsmath}
\usepackage{amsthm}
\usepackage[colorlinks]{hyperref}
\usepackage{cleveref}
\usepackage{enumitem}
\usepackage{algorithm}
\usepackage{xcolor}
\usepackage{tikz}
\usepackage{authblk}
\usepackage{thmtools}

\counterwithin{algorithm}{section}
\counterwithin{figure}{section}
\counterwithin{table}{section}

\theoremstyle{plain}
\newtheorem{theorem}[algorithm]{Theorem}
\newtheorem{lemma}[algorithm]{Lemma}
\newtheorem{corollary}[algorithm]{Corollary}
\theoremstyle{definition}
\newtheorem{definition}[algorithm]{Definition}
\newtheorem{result}[algorithm]{Result}
\newtheorem{open-question}[algorithm]{Open question}

\crefname{algorithm}{Algorithm}{Algorithms}
\crefname{figure}{Figure}{Figures}
\crefname{table}{Table}{Tables}
\crefname{theorem}{Theorem}{Theorems}
\crefname{lemma}{Lemma}{Lemmas}
\crefname{definition}{Definition}{Definitions}
\crefname{section}{Section}{Sections}
\crefname{subsection}{Subsection}{Subsections}
\crefname{equation}{Equation}{Equations}
\crefname{appendix}{Appendix}{Appendices}

\newcommand{\bra}[1]{\ensuremath{\left\langle#1\right|}}
\newcommand{\ket}[1]{\ensuremath{\left|#1\right\rangle}}
\newcommand{\braket}[2]{\ensuremath{\left\langle#1\middle|#2\right\rangle}}
\newcommand{\norm}[1]{\ensuremath{\left\|#1\right\|}}

\newcommand{\C}{\ensuremath{\mathbb{C}}}
\newcommand{\D}{\ensuremath{\mathcal{D}}}
\renewcommand{\H}{\ensuremath{\mathcal{H}}}
\newcommand{\K}{\ensuremath{\mathcal{K}}}
\newcommand{\N}{\ensuremath{\mathbb{N}}}
\newcommand{\R}{\ensuremath{\mathbb{R}}}
\renewcommand{\S}{\ensuremath{\mathcal{S}}}
\newcommand{\T}{\ensuremath{\mathcal{T}}}
\newcommand{\W}{\ensuremath{\mathcal{W}}}

\newcommand{\ALG}{\ensuremath{\mathsf{ALG}}}
\newcommand{\FS}{\ensuremath{\mathsf{FS}}}

\newcommand{\GT}{\ensuremath{\mathsf{GT}}}
\newcommand{\LG}{\ensuremath{\mathsf{LG}}}
\newcommand{\st}{\ensuremath{\mathsf{st}}}
\newcommand{\WDT}{\ensuremath{\mathsf{WDT}}}

\DeclareMathOperator{\argmin}{argmin}
\DeclareMathOperator{\aux}{aux}
\DeclareMathOperator{\EW}{EW}
\DeclareMathOperator{\img}{img}
\DeclareMathOperator{\OR}{OR}
\DeclareMathOperator{\pSEARCH}{pSEARCH}
\DeclareMathOperator{\polylog}{polylog}
\DeclareMathOperator{\QRAG}{QRAG}
\DeclareMathOperator{\QROG}{QROG}
\DeclareMathOperator{\Span}{Span}
\DeclareMathOperator{\Th}{Th}

\title{Quantum algorithms through graph composition}
\author[1]{Arjan Cornelissen}
\affil[1]{Simons Institute, UC Berkeley, California, USA}
\begin{document}
    \maketitle

    \begin{abstract}
        In this work, we unify several quantum algorithmic frameworks for boolean functions that are based on the quantum adversary bound. First, we show that the $st$-connectivity framework subsumes the (adaptive/extended) learning graph framework, and the weighted-decision-tree framework. Additionally, we show that every randomized algorithm can be turned into an $st$-connectivity problem as well, with the same complexity up to constants. This situates the $st$-connectivity framework in between randomized and quantum algorithms, indicating that it's an intermediate computational model between classical and quantum. We also introduce a generalization of the $st$-connectivity framework, the \textit{graph composition framework}, and show that it subsumes part of the quantum divide and conquer framework, and it is itself subsumed by the multidimensional quantum walk framework.

        Second, we investigate these frameworks' power by investigating the most efficient algorithms they can produce, in terms of the number of queries to the input. We show that the weighted-decision-tree framework's power is polynomially related to deterministic query complexity, showing that the quantum speed-ups that can be obtained with this framework are at most quadratic.

        Finally, we turn our attention to time-efficient implementations of the algorithms constructed through the $st$-connectivity and graph composition frameworks. To that end, we convert instances to the two-subspace phase estimation framework, and we show how we can implement these as transducers. This has the added benefit of removing the effective spectral gap lemma from the construction, significantly simplifying the analysis. We showcase the techniques developed in this work to give improved algorithms for various string search problems.
    \end{abstract}

    \section{Introduction}

    Over the last two decades, we have seen rapid developments in the design of quantum algorithms for solving computational problems. When measuring these algorithms' efficiency, the number of times it accesses the input has become a well-studied metric. If an algorithm is allowed to query a computational problem's input coherently in superposition, we say that the algorithm is a \textit{quantum query algorithm}, and we refer to the minimal number of queries required to solve said problem with high probability as its \textit{quantum query complexity}.

    In a landmark result, Reichardt showed that the quantum query complexity of computing any boolean function $f : \{0,1\}^n \supseteq \D \to \{0,1\}$ is characterized up to constants by a semidefinite program (SDP), called the adversary bound~\cite{reichardt2009span,reichardt2011reflections}. Moreover, he exhibited a constructive way to turn any feasible solution of the minimization version of this semidefinite program for $f$ into a quantum query algorithm that computes it. This opened up a new avenue for designing quantum algorithms, i.e., by constructing feasible solutions to the adversary SDP in a smart way.

    In what followed, several independent lines of research emerged, designing algorithmic frameworks that build on Reichardt's result. We give a brief overview of the most prominent examples here:

    \begin{enumerate}[noitemsep]
        \item \textit{Span programs} \cite{reichardt2009span,reichardt2012span,ito2019approximate,cornelissen2020span}. The span program framework represents quantum algorithms as two subspaces in a Hilbert space, where reflecting through one of the subspaces represents the query to the input, and the other reflection is input-independent. Span programs are complete, i.e., they can generate any solution in the SDP's feasible region.
        \item \textit{The (adaptive/extended) learning graph frameworks}~\cite{belovs2012span-learning-graphs,belovs2012learning,carette2020extended}. These frameworks represent algorithms by rooted directed graphs, where every edge represents a query to the input. As one progresses through the graph, one learns progressively more about the input. The complexity of an instance is determined by the effective resistance between the root of the graph and all the $1$-certificates.
        \item \textit{The bomb-testing/guessing-tree/weighted-decision-tree frameworks}~\cite{lin2016upper,beigi2020quantum,cornelissen2022improved}. These frameworks represent algorithms as decision trees, just like (classical) deterministic query algorithms. In order to improve over the classical setting, weights are associated to each of the edges, and the shape of the tree determines how much of a speed-up can be obtained.
        \item \textit{The $st$-connectivity framework}~\cite{belovs2012span,jeffery2017quantum,jarret2018quantum}. This framework represents algorithms as connectivity problems on undirected graphs, where the availability of every edge is determined by a query to the input. The complexity of the instance is determined by the effective resistance of the graph restricted to the edges that are available (unavailable) whenever the function value is positive (negative).
        \item \textit{Quantum divide-and-conquer}~\cite{childs2022quantum,allcock2023quantum,jeffery2024multidimensional}. This framework\footnote{In this work, we only consider Strategy~1 of \cite{childs2022quantum}.} solves a computational problem by a divide-and-conquer approach, where the conquer step combines the subproblems with a boolean formula. The complexity of an instance is determined by a recurrence relation over the complexities of the subproblems.
        \item \textit{Transducers}~\cite{belovs2023one,belovs2024taming}. Transducers represent quantum algorithms as unitary operations acting on the direct sum of an action and a catalyst space. The action space is where the desired operation occurs, and the unitary acts as identity on a particular catalyst vector in the catalyst space. The norm of this catalyst vector determines the complexity of the instance. Transducers are also complete, in the sense that they can represent any solution in the SDP's feasible region.
    \end{enumerate}

    All of these frameworks have the property that they allow for efficient composition of several of their instances. In particular, composition on the level of these frameworks is more efficient than in the circuit model as it avoids two bottlenecks that are present in the latter. These are:

    \begin{enumerate}[noitemsep]
        \item \textit{Build-up of errors.} When composing bounded-error quantum algorithms in the circuit model, every level of the recursion adds some failure probability. Mitigating these typically requires polylogarithmic multiplicative overhead, for every level of composition. This is especially costly when we want to do recursive compositions (e.g., in divide-and-conquer algorithms). These frameworks all have a way to avoid paying this overhead when composing several of their instances together.
        \item \textit{Amortization.} When composing quantum algorithms at the circuit level, we typically analyze them using the worst-case query complexity for each of the components. This is because when we run the circuit, we cannot know what input the circuit is running on, and hence we don't know if we can terminate the circuit early. These frameworks all provide ways to get around this limitation, when composing several instances together.
    \end{enumerate}

    These composition properties motivate phrasing quantum algorithmic building blocks in terms of these frameworks instead of on the quantum circuit level, as they can subsequently be more efficiently composed into more complicated quantum procedures.

    The striking similarities between these independent lines of research naturally lead to the following question that we investigate in this work:

    \begin{center}
        \bf\emph{``Can these independent lines of research be unified?''}
    \end{center}

    We answer this question affirmatively in this work, in the form of five main results.

    \subsection{Relations between quantum algorithmic frameworks}

    First, we show that the $st$-connectivity framework plays a central role in unifying several of the existing frameworks:

    \begin{result}[Informal version of \Cref{thm:extended-learning-graphs,thm:weighted-decision-trees}]
        \label{res:st-lg-wdt}
        The $st$-connectivity framework subsumes the (adaptive/extended) learning graph frameworks, and the bomb-testing/guessing-tree/weighted-decision-tree frameworks.
    \end{result}

    In order to prove the above result, we prove that the complexity of an algorithm designed through the extended learning graph framework, or the weighted-decision-tree framework, is captured by the effective resistance of a corresponding $st$-connectivity instance. We remark that the above result implies in particular that every (classical) deterministic query algorithm can be turned into an instance of the $st$-connectivity framework, with at most the same complexity.

    This motivates the question whether we can also turn (classical) \textit{randomized} algorithms into instances of the $st$-connectivity framework. Since deciding connectivity in an undirected graph is inherently a deterministic problem, it is not immediately clear how one encodes randomness in the $st$-connectivity graph. Nevertheless, in this work, we arrive at the surprising observation that this is indeed possible:

    \begin{result}[Informal version of \Cref{thm:st-vs-R}]
        \label{res:randomized}
        Any bounded-error randomized query algorithm can be turned into an instance of the $st$-connectivity framework, with the same complexity up to a constant factor.
    \end{result}

    We prove this result by observing that we can solve the gapped majority function (i.e., $f(x) = 1$ if and only if $|x| \geq 2n/3$, and $f(x) = 0$ if and only if $|x| \leq n/3$) in constant complexity using the $st$-connectivity framework~(\Cref{lem:gapped-majority}). Next, recall that every randomized query algorithm is described by a family of decision trees, of which at least a two-third fraction accept (if the function value is $1$), or at most a one-third fraction accept (if the function value is $0$). Thus, we can view the randomized algorithm as evaluating the gapped majority function on a bit string, where each bit is computed by one of the decision trees. We subsequently use the composition properties of the $st$-connectivity framework to conclude that every randomized algorithm can be framed as an $st$-connectivity problem with the appropriate complexity.

    We remark that the above result situates the $st$-connectivity framework in between that of randomized and quantum algorithms. That is, the $st$-connectivity framework can represent a strictly larger set of algorithms than the set of randomized ones, and a strictly smaller set than the quantum ones, indicating that it's an intermediate model of computation between classical and quantum. To the best of our knowledge, this is the only quantum algorithmic framework that possesses this property.

    Next, we introduce a generalized version of the $st$-connectivity framework in this work, where instead of putting single bit queries on each of the edges of the graph, we allow for associating span programs to each of the edges. We refer to the resulting framework as the \textit{graph composition framework}. It is clear that it subsumes the $st$-connectivity framework, and we show two additional relations.

    \begin{result}[Informal version of~\Cref{thm:divide-and-conquer,thm:subspace-graph}]
        \label{res:graph-composition}
        The graph composition framework subsumes the quantum divide-and-conquer framework, and it is in turn subsumed by the subspace-graph framework (introduced in~\cite{jeffery2024multidimensional}).
    \end{result}

    The subspace-graph framework~\cite{jeffery2024multidimensional} is a formalization of the multidimensional quantum walk introduced in \cite{jeffery2025multidimensional}. The quantum algorithmic framework in this line of works is developed independently from the adversary bound, and \Cref{res:graph-composition} establishes how it is related to adversary-bound-based approaches.

    We summarize \Cref{res:st-lg-wdt,res:randomized,res:graph-composition} in \Cref{fig:framework-relations}, together with other known connections between quantum algorithmic frameworks.

    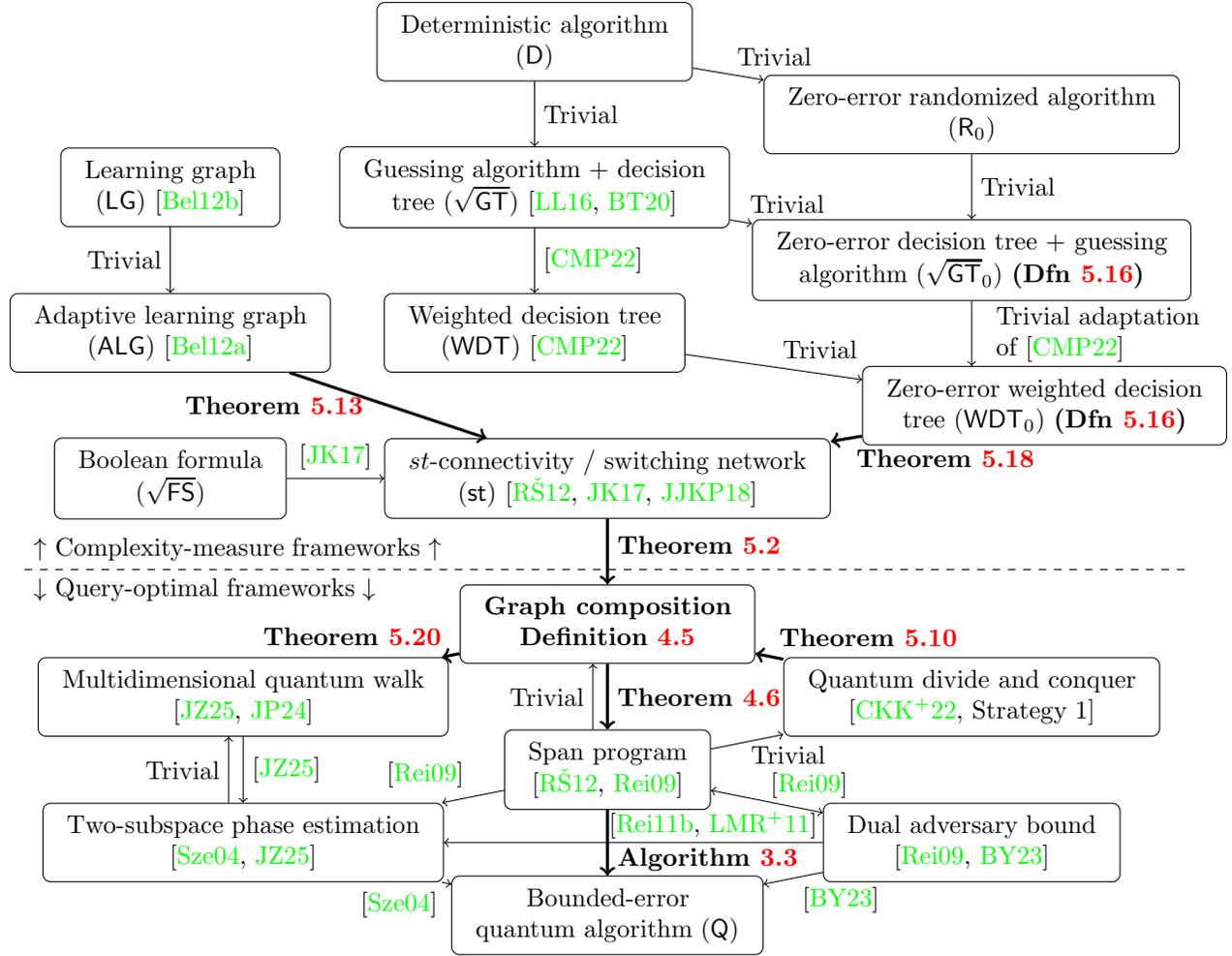
\begin{figure}[!ht]
        \centering
        \begin{tikzpicture}[vertex/.style = {draw, rounded corners = .3em}]
            \node[vertex] (QA) at (0,-1.5) {\begin{tabular}{c}
                Bounded-error quantum algorithm ($\mathsf{Q}$)
            \end{tabular}};
            \node[vertex] (T) at (0,0) {\begin{tabular}{c}
                Transducer \\
                \cite{belovs2024taming}
            \end{tabular}};
            \node[vertex] (ADV) at (5,1) {\begin{tabular}{c}
                Dual adversary bound \\
                \cite{reichardt2009span,belovs2023one}
            \end{tabular}};
            \node[vertex] (PEA) at (-5,1) {\begin{tabular}{c}
                Two-subspace phase estimation \\
                \cite{szegedy2004quantum,jeffery2025multidimensional}
            \end{tabular}};
            \node[vertex] (SP) at (0,2) {\begin{tabular}{c}
                Span program \\
                \cite{reichardt2012span,reichardt2009span}
            \end{tabular}};
            \node[vertex] (MQW) at (-5,3) {\begin{tabular}{c}
                Multidimensional quantum walk \\
                \cite{jeffery2025multidimensional,jeffery2024multidimensional}
            \end{tabular}};
            \node[vertex] (GC) at (0,4) {\bf\begin{tabular}{c}
                Graph composition \\
                \cref{def:graph-composition}
            \end{tabular}};
            \node[vertex] (QDC) at (5,3) {\begin{tabular}{c}
                Quantum divide and conquer \\
                \cite[Strategy~1]{childs2022quantum}
            \end{tabular}};
            \node[vertex] (st) at (0,6) {\begin{tabular}{c}
                $st$-connectivity / switching network \\
                ($\st$) \cite{reichardt2012span,jeffery2017quantum,jarret2018quantum}
            \end{tabular}};
            \node[vertex] (ALG) at (-6,8) {\begin{tabular}{c}
                Adaptive learning graph \\
                ($\ALG$) \cite{belovs2012learning}
            \end{tabular}};
            \node[vertex] (XLG) at (-6,6) {\begin{tabular}{c}
                Extended learning graph \\
                ($\mathsf{XLG}$) \cite{carette2020extended}
            \end{tabular}};
            \node[vertex] (LG) at (-6,10) {\begin{tabular}{c}
                Learning graph \\
                ($\LG$) \cite{belovs2012span-learning-graphs}
            \end{tabular}};
            \node[vertex] (FE) at (6,6) {\begin{tabular}{c}
                Boolean formula \\
                ($\sqrt{\FS}$)
            \end{tabular}};
            \node[vertex] (WDT) at (5,8) {\begin{tabular}{c}
                Weighted decision tree \\
                ($\WDT$) \cite{cornelissen2022improved}
            \end{tabular}};
            \node[vertex] (GT) at (5,10) {\begin{tabular}{c}
                Guessing algorithm + decision \\
                tree ($\sqrt{\GT}$) \cite{lin2016upper,beigi2020quantum}
            \end{tabular}};
            \node[vertex] (D) at (-1,10) {\begin{tabular}{c}
                Deterministic algorithm \\
                ($\mathsf{D}$)
            \end{tabular}};
            \node[vertex] (R) at (-1,8) {\begin{tabular}{c}
                Randomized algorithm \\
                ($\mathsf{R}$)
            \end{tabular}};

            \draw[<->] (ADV) to node[below right] {\cite{belovs2024taming}} (T);
            \draw[<->] (ADV) to node[above right=-.2em] {\cite{reichardt2009span}} (SP);
            \draw[->] (ADV) to node[above=-.2em] {\cite{reichardt2011reflections,lee2011quantum}} (PEA);
            \draw[->] (SP) to node[above left] {Trivial} (PEA);
            \draw[->] (SP) to node[below right=-.2em] {Trivial} (QDC);
            \draw[->] (MQW) to node[right] {\cite{jeffery2025multidimensional}} (PEA);
            \draw[->] ([shift={(-.2,0)}]PEA.north) to node[left] {Trivial} ([shift={(-.2,0)}]MQW.south);
            \draw[->] ([shift={(-.2,0)}]SP.north) to node[left] {Trivial} ([shift={(-.2,0)}]GC.south);
            \draw[->] (FE) to node[above] {\cite{jeffery2017quantum}} (st);
            \draw[->] (LG) to node[left] {Trivial} (ALG);
            \draw[->] (ALG) to node[left] {Trivial} (XLG);
            \draw[->] (GT) to node[right] {\cite{cornelissen2022improved}} (WDT);
            \draw[->] (D) to node[above] {Trivial} (GT);
            \draw[->] (D) to node[right] {Trivial} (R);
            \draw[->] (T) to node[right] {\cite{belovs2024taming}} (QA);

            \draw[very thick,->] (GC) to node[right] {\bf\cref{thm:graph-composition}} (SP);
            \draw[very thick,->] (PEA) to node[below left] {\bf\cref{thm:phase-estimation-algorithm-to-transducer}} (T);
            \draw[very thick,->] (GC) to node[above left] {\bf\cref{thm:subspace-graph}} (MQW);
            \draw[very thick,->] (QDC) to node[above right] {\bf\cref{thm:divide-and-conquer}} (GC);
            \draw[very thick,->] (st) to node[pos = .4, right] {\bf\cref{thm:st-connectivity}} (GC);
            \draw[very thick,->, bend left] ([shift={(0,.4)}]XLG.east) to node[above] {\bf\cref{thm:extended-learning-graphs}} ([shift={(0,.4)}]st.west);
            \draw[very thick,->] (WDT) to node[below right=-.2em] {\bf\cref{thm:st-to-WDT0}} (st);
            \draw[very thick,->] (R) to node[above right=-.2em] {\bf\cref{thm:st-vs-R}} (st);

            \draw[dashed] (-8.5,4.75) node[above right] {$\uparrow$ Complexity-measure frameworks $\uparrow$} node[below right] {$\downarrow$ Query-optimal frameworks $\downarrow$} to (7.75,4.75);
        \end{tikzpicture}
        \caption{Relations between quantum algorithmic frameworks. Framework A points to B, if it is generically possible to turn an instance of framework A into one of B. The results in bold are new in this work. The frameworks below the dashed line are all complete in terms of the quantum query complexity of $\mathsf{Q}$, i.e., one can always devise query-optimal algorithms in all these frameworks. Above the dashed line, it is not (known to be) possible to generically devise query-optimal algorithms, and thus it makes sense to define a complexity measure as the minimal number of queries made by a quantum query algorithm designed through said framework. These complexity measures are denoted in parentheses.}
        \label{fig:framework-relations}
    \end{figure}

    \subsection{Frameworks as complexity measures of boolean functions}

    Next, we investigate the power of the quantum algorithmic frameworks by investigating the optimal algorithm that it can produce, in terms of the number of queries to the input. We observe that there are two types of algorithmic frameworks, namely those that can always produce a query-optimal algorithm (up to constant factors), and those that are inherently limited. Since graph composition can encode any span program in a single edge between $s$ and $t$, it is trivially query-optimal, and therefore all the frameworks that subsume it must also be query-optimal. These are indicated below the dashed line in \Cref{fig:framework-relations}.

    For the frameworks that are not (known to be) query-optimal, we define a complexity measure $\mathsf{M}(f)$ that captures the minimum number of queries it can attain for a particular boolean function $f : \{0,1\}^n \supseteq \D \to \{0,1\}$. For instance, we write $\WDT(f)$ for the minimum number of queries that an algorithm computing $f$ makes, when it is designed through the weighted-decision-tree framework. The symbols for the other complexity measures are included in \Cref{fig:framework-relations} for convenience.

    We can now compare the frameworks by relating their complexity measures to each other, as well as to the more well-known complexity measures like quantum, randomized and deterministic query complexity ($\mathsf{Q}$, $\mathsf{R}$ and $\mathsf{D}$). We observe that an arrow from $A$ to $B$ in \Cref{fig:framework-relations} implies that the complexity measure associated to $A$ is bigger than the one associated to $B$. As such, \Cref{res:randomized} implies that $\st(f) \in O(\mathsf{R}(f))$, for all boolean functions $f : \{0,1\}^n \supseteq \D \to \{0,1\}$.

    We know from existing work that for all \textit{total} boolean functions, i.e., those for which $\D = \{0,1\}^n$, $\mathsf{D}$ and $\mathsf{Q}$ are polynomially related. Indeed, we have $\mathsf{D}(f) \in O(\mathsf{Q}(f)^4)$ for total functions~\cite{aaronson2021degree}, and this is tight~\cite{ambainis2017separations}. From the relations displayed in \Cref{fig:framework-relations}, we infer that the complexity measures $\mathsf{Q}$, $\st$, $\WDT$, $\sqrt{\GT}$, $\mathsf{R}$ and $\mathsf{D}$ are all polynomially related to one another.

    For \textit{partial} boolean functions, i.e., where we merely have $\D \subseteq \{0,1\}^n$, the situation is very different. There, for instance, we know that $\mathsf{Q}$ and $\mathsf{R}$ can be unboundedly separated~\cite{bernstein1997quantum}, i.e., there exists a partial function $f$ for which $\mathsf{Q}(f)$ is constant, and $\mathsf{R}(f)$ is not, and hence we can never meaningfully upper bound $\mathsf{R}$ with $\mathsf{Q}$. A similar unbounded separation is known between $\mathsf{R}$ and $\mathsf{D}$~\cite{deutsch1992rapid}.

    This naturally raises the question how the other complexity measures in \Cref{fig:framework-relations} are related to one another. We showcase all known relations between the complexity measures considered in this work in a Hasse diagram in \Cref{fig:compl-meas}, including the new relations we prove here:

    \begin{figure}[!ht]
        \centering
        \begin{tikzpicture}[
            vertex/.style = {draw, blue, rounded corners = .5em},
            sep/.style = {dashed, red},
            scale = 1.8
            ]

            \clip (-4,-.25) rectangle (4,5.5);

            \draw[dashed, blue, fill=blue!10] plot [smooth cycle] coordinates {(-1,-1) (-.6,3) (4,6) (6,6) (6,-2.5) (-1,-2.5)};
            \node[blue] at (2.75,.25) {$\begin{array}{c}
                    \text{Polynomially related for} \\
                    \text{all total functions}
                \end{array}$};
            \draw[dashed, teal, fill=teal!10] plot [smooth cycle] coordinates {(4,1) (1.5,1) (1.5,4) (4,5) (6,5) (6,-2.5) (6,2)};
            \node[teal] at (2.75,1.25) {$\begin{array}{c}
                    \text{Polynomially related for} \\
                    \text{all partial functions}
                \end{array}$};

            \node[vertex] (Q) at (0,.25) {$\mathsf{Q}$};
            \node[vertex] (st) at (0,1) {$\st$};
            \node[vertex] (R) at (0,3) {$\mathsf{R}$};
            \node[vertex] (D) at (2,4) {$\mathsf{D}$};
            \node[vertex] (WDT) at (2,2) {$\WDT$};
            \node[vertex] (GT) at (2,3) {$\sqrt{\GT}$};
            \node[vertex] (XLG) at (-2.75,1.5) {$\mathsf{XLG}$};
            \node[vertex] (ALG) at (-2.75,2.5) {$\ALG$};
            \node[vertex] (LG) at (-2.75,3.5) {$\LG$};
            \node[vertex] (FS) at (-2,2.5) {$\sqrt{\FS}$};

            \draw (Q) to node[left] {$\begin{array}{c}
                    \cite{reichardt2012span} \\
                    \cite{jeffery2017quantum}
                \end{array}$} (st);
            \draw[very thick] (st) to node[below, rotate = 90] {\bf Thm~\ref{thm:st-vs-R}} (R);
            \draw (R) to (D);
            \draw (WDT) to (GT) to (D);
            \draw[very thick] (st) to node[below, pos = .6, rotate = {-atan(1/6)}] {\bf Thm~\ref{thm:extended-learning-graphs}} (XLG);
            \draw (XLG) to (ALG) to (LG);
            \draw (st) to node[above, rotate = {-atan(4/5)}] {\cite{jeffery2017quantum}} (FS);
            \draw[very thick] (st) to node[below, rotate = {atan(2.2/4.2)}] {\bf Thm~\ref{thm:st-to-WDT0}} (WDT);

            \draw (D) to (1.25,4.75) node[left] {$\FS$};
            \draw (D) to (1.5,5) node[above left] {$n$};
            \draw[dashed] (D) to (2,4.6) node[above] {$\mathsf{R}^3$};
            \draw[dashed] (D) to (2.25,4.75) node[above] {$\mathsf{Q}^4$};
            \draw[very thick] (D) to node[above, rotate = {atan(1/2)}] {\bf Thm~\ref{thm:weighted-decision-tree-complexity}} (3,4.5) node[above right] {$\mathsf{WDT}^2$};
            \draw (LG) to (-2.75,4) node[above] {$n$};
            \draw (FS) to node[above, rotate = 90] {\cite{cornelissen2022improved}} (-2,3.5) node[above] {$\sqrt{2^{\mathsf{D}}}$};
            \draw[very thick] (GT) to node[above, rotate = {atan(1/2)}] {\bf Thm~\ref{thm:GT-ub-WDT}} (3,3.5) node[above right] {$\WDT^{3/2}$};

            \draw[sep] (Q) to node[above] {$\oplus$} (.5,0) node[right] {$n$};
            \draw[sep] (R) to node[below, rotate = {-atan(1/3)}] {$\begin{array}{c}
                    \cite{sherstov2021optimal} \\
                    \cite{bansal2021k}
                \end{array}$} (.75,2.75) node[right] {$\mathsf{Q}^3$};
            \draw[sep] (R) to node[left] {$\lor$} (-.25,2.5) node[below] {$\mathsf{GT}$};
            \draw[sep] (R) to node[above left] {$\lor$} (-1.3,2.5) node[below] {$\mathsf{LG}^2,\mathsf{FS}$};
            \draw[sep] (D) to node[above, rotate = {atan(1/4)}] {\cite{ambainis2017separations}} (1,3.75) node[left] {$\mathsf{Q}^4$};
            \draw[sep,thick] (GT) to node[above, rotate = {-atan(1/2)}] {\bf Thm~\ref{thm:GT-sep-WDT}} (3,2.5) node[below] {$\WDT^{3/2}$};
            \draw[sep] (FS) to (-2,2) node[below] {$\sqrt{\frac{2^n}{\log(n)}}$};
            \draw[sep] (LG) to node[above, rotate = {atan(7/4)}] {$\mathrm{Threshold}$} node[below, rotate = {atan(7/4)}] {\cite{belovs2014power}} (-3.375,2.4) node[below] {$(\mathsf{GT},R)^{1-o(1)}$};

            \draw[dotted,red] (WDT) to (1,1.8) node[left] {$\mathsf{R}$};
            \draw[dotted,red] (R) to (-.5,2.7) node[below left] {$\mathsf{Q}$};
        \end{tikzpicture}
        \caption{Hasse diagram of the relationships between complexity measures of boolean functions. When two measures are connected by a solid black line, then the upper complexity measure is bigger than the lower one, for every boolean function $f : \{0,1\}^n \supseteq \D \to \{0,1\}$. On the other hand, if two measures are connected by a dashed black line, then this relation only holds for total boolean functions. If, they are connected by a dashed red line, then there exists a total boolean function $f : \{0,1\}^n \to \{0,1\}$ for which the upper measure is bigger than the lower one. And finally, a dotted red line indicates that there exists an unbounded separation.}
        \label{fig:compl-meas}
    \end{figure}

    \begin{result}[Informal version of \Cref{thm:GT-ub-WDT,thm:GT-sep-WDT}]
        \label{res:D-vs-WDT}
        For all boolean functions $f$, we have $\sqrt{\GT}(f) \in O(\WDT(f)^{3/2})$, and $\mathsf{D}(f) \in O(\WDT(f)^2)$. Moreover, both relations are tight, even for total boolean functions.
    \end{result}

    We conclude from \Cref{res:D-vs-WDT} that $\WDT$, $\sqrt{\GT}$ and $\mathsf{D}$ are polynomially related, for all partial boolean functions. This implies in particular that these frameworks can never obtain a more-than-quadratic speed-up over deterministic algorithms, placing a barrier on how much these models of computation can gain over their classical counterpart.

    We now naturally arrive at to the most important open question that arises from this work:

    \begin{open-question}
        Are $\st$ and $\mathsf{R}$ polynomially related for all (partial) boolean functions?
    \end{open-question}

    A positive answer to this question would put a similar barrier on the maximum quantum speed-up that can be obtained using the $st$-connectivity framework and any of the frameworks it subsumes. If, moreover, $\st$ and $\mathsf{R}$ are quadratically related, it could also help explain why quadratic speedups are plentiful, but super-quadratic speedups are few and far between. We leave this question for future work.

    We also remark that as of now, we don't have a proof that $\st$ and $\mathsf{Q}$ are unboundedly separated. A possible partial boolean function for showing such a separation could be a construction based on the Bernstein-Vazirani problem~\cite{bernstein1997quantum} (or perhaps the forrelation problem~\cite{aaronson2015forrelation}). Such a function indeed has constant quantum query complexity, so it remains to prove that its $\st$-connectivity complexity is non-constant. We leave this for future work as well.

    \subsection{Time-efficient implementations}

    Finally, we turn our attention to the time complexity of implementing algorithms designed through the $st$-connectivity and graph composition frameworks. To that end, we observe from \Cref{fig:framework-relations} that we can convert all instances of either framework into an instance of the two-subspace phase estimation framework, as implicitly used in \cite{szegedy2004quantum} for the first time, and formally introduced in \cite{jeffery2025multidimensional}.

    In order to turn an instance of the two-subspace phase estimation framework into a bounded-error quantum algorithm, we first of all observe that we can convert an instance of the two-subspace phase estimation algorithm into a transducer (\Cref{thm:phase-estimation-algorithm-to-transducer}). This is already implicitly remarked in \cite{jeffery2024multidimensional}, but we make the connection explicit here.

    A fortunate consequence of performing the conversion from a two-subspace phase estimation instance into a transducer is that it removes the need for running phase estimation in the resulting quantum algorithm. This also removes the need of the ``effective spectral gap lemma''~\cite[Lemma~4.2]{lee2011quantum}, greatly simplifying the analysis of the algorithm. We comment more on this in \Cref{sec:span-program-algorithm}, where we highlight how this simplifies the span program algorithm in particular.

    To implement the resulting transducer time-efficiently, one has to efficiently implement the reflection through both of the subspaces that make up the instance of the two-subspace phase estimation framework. By tracing back how these subspaces are constructed by instances of the $st$-connectivity framework, we observe that one of the two subspaces is a single query, and the other is completely independent of the input. Thus, the main difficulty in finding a time-efficient implementation of an algorithm constructed from the $st$-connectivity framework lies in implementing a reflection through this input-independent subspace.

    The core observation in this work is that the cost of implementing the reflection through the input-independent subspace is dominated by the cost of implementing a reflection through the \textit{circulation space} of the graph. For any undirected graph $G = (V,E)$, a circulation is an assignment of flows $f \in \C^E$, such that at every node $v \in V$, the net-flow (i.e., the difference between the total incoming and outgoing flow) is $0$. Then, if the graph has resistances $(r_e)_{e \in E} \subseteq \R_{>0}$, the circulation space is defined to be the subspace $\mathcal{C}_{G,r} \subseteq \C^E$, defined as
    \[\mathcal{C}_{G,r} := \Span\left\{\sum_{e \in E} f_e\sqrt{r_e}\ket{e} : f \in \C^E \text{ circulation in } G\right\} \subseteq \C^E.\]

    In order to implement the reflection through this subspace efficiently, we introduce a novel recursive way to decompose the graph, which we refer to as the tree-parallel decomposition. In every decomposition step, we partition the edges in the graph into disjoint subsets, such that if we contract every subset, we are left with either a tree, or a parallel graph with just two nodes. See also \cref{fig:tree-parallel-intro}.

    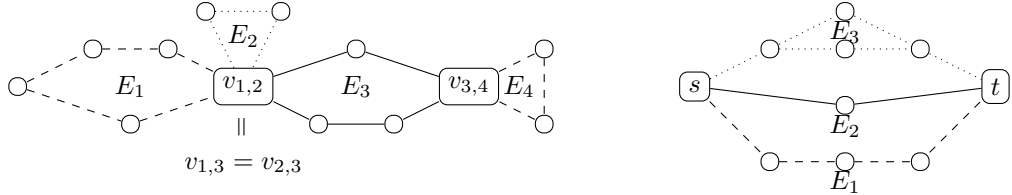
\begin{figure}[!ht]
        \centering
        \begin{tikzpicture}[vertex/.style = {draw, rounded corners = .3em}]
            \begin{scope}
                \node[vertex] (s) at (0,0) {};
                \node[vertex] (1) at (1,.5) {};
                \node[vertex] (3) at (1.5,-.5) {};
                \node[vertex] (2) at (2,.5) {};
                \node[vertex] (v) at (3,0) {$v_{1,2}$};
                \node[below] at (v.south) {$\begin{array}{c}
                        \rotatebox{90}{=} \\
                        v_{1,3} = v_{2,3}
                    \end{array}$};
                \node[vertex] (4) at (4,-.5) {};
                \node[vertex] (6) at (4.5,.5) {};
                \node[vertex] (5) at (5,-.5) {};
                \node[vertex] (t) at (6,0) {$v_{3,4}$};
                \node[vertex] (7) at (2.5,1) {};
                \node[vertex] (8) at (3.5,1) {};
                \node[vertex] (9) at (7,.5) {};
                \node[vertex] (10) at (7,-.5) {};

                \node at (1.5,0) {$E_1$};
                \node at (3,{2/3}) {$E_2$};
                \node at (4.5,0) {$E_3$};
                \node at ({6+2/3},0) {$E_4$};

                \draw[dashed] (s) to (1) to (2) to (v);
                \draw[dashed] (s) to (3) to (v);
                \draw (v) to (4) to (5) to (t);
                \draw (v) to (6) to (t);
                \draw[dotted] (v) to (7) to (8) to (v);
                \draw[dashed] (t) to (9) to (10) to (t);
            \end{scope}
            \begin{scope}[shift = {(9,0)}]
                \node[vertex] (s) at (0,0) {$s$};
                \node[vertex] (1) at (1,-1) {};
                \node[vertex] (2) at (2,-1) {};
                \node[vertex] (3) at (3,-1) {};
                \node[vertex] (4) at (2,-.25) {};
                \node[vertex] (5) at (1,.5) {};
                \node[vertex] (6) at (2,1) {};
                \node[vertex] (7) at (2,.5) {};
                \node[vertex] (8) at (3,.5) {};
                \node[vertex] (t) at (4,0) {$t$};

                \draw[dashed] (s) to (1) to (2) node[below] {$E_1$} to (3) to (t);
                \draw (s) to (4) node[below] {$E_2$} to (t);
                \draw[dotted] (s) to (5) to (6) node[below] {$E_3$} to (8) to (t);
                \draw[dotted] (5) to (7) to (8);
            \end{scope}
        \end{tikzpicture}
        \caption{Examples of the tree decomposition (left), and the parallel decomposition (right). The dashed, dotted and solid sets of edges represent the disjoint edge sets $E_1, \dots, E_k$.}
        \label{fig:tree-parallel-intro}
    \end{figure}

    For a tree decomposition, it is clear that the circulation space decomposes as a direct sum over its components, as any circulation through the original graph must necessarily decompose into circulations over the components. On the other hand, in a parallel decomposition, we can precompute the minimum unit flows $\ket{f_1}, \dots, \ket{f_k}$ between the two nodes through each of the edge sets $E_1, \dots, E_k$. We then observe that any circulation in the original graph can be decomposed as circulation within the respective components, and a vector in $\Span\{\ket{f_1}\} \oplus \cdots \oplus \Span\{\ket{f_k}\}$ that is orthogonal to $\ket{f_1} \oplus \cdots \oplus \ket{f_k}$.

    Next, we observe that if we have access to \textit{quantum read-only memory}~(QROM), then we can implement the state-preparation routines for the unit-flows $\ket{f_1}, \dots, \ket{f_k}$ in time logarithmic in the size of the graph. This subsequently also enables us to implement the reflection through $\ket{f_1} \oplus \cdots \oplus \ket{f_k}$ time-efficiently, and therefore decomposes the problem of reflecting through the circulation space into $k$ smaller components. We also remark here that the quantum read-only memory model (QROM) is a much less stringent assumption than the more common quantum random-access memory model (QRAM). We refer for a more elaborate discussion to~\Cref{subsec:model}.

    We use these decomposition steps recursively, to decompose the edge set $E$ of any undirected graph $G = (V,E)$, until the final decomposed edge sets are merely simple trees. The resulting construction gives rise to a \textit{tree-parallel decomposition tree}~(\cref{def:tree-parallel-decomposition}), where the root node is labeled by $E$, and all the leaf nodes are labeled by the simple trees. We stress that this entire procedure can be precomputed, i.e., it can computed beforehand and hard-coded into the quantum circuit and memory, thus not contributing to the cost of the quantum computation.

    We arrive at the following statement on the time overhead of implementing a reflection through the circulation space.

    \begin{result}[Informal version of \cref{thm:circulation-space-reflection-implementation}]
        \label{res:circulation-space}
        Let $G = (V,E)$ be an undirected graph. Suppose we have a tree-parallel decomposition tree of $G$ with depth $d$, and such that in the $(\ell-1)$th layer, the maximum number of children is $k_{\ell}$. Let $K = (k_1+1) \cdots (k_d+1)$. Then, we can implement the reflection through the circulation space of $G$ in the QROM-model, using a QROM with $\widetilde{O}(|E|K)$ bits, and $\widetilde{O}(d\log(K))$ gates.
    \end{result}

    We remark that even though the time overhead is typically small, the space overhead as stated in the previous theorem typically isn't. However, if the graph that we use in our construction is particularly well-structured, one can typically store the resulting decomposition more efficiently in QROM, drastically reducing the required memory overhead. We work out an example where this is the case in \Cref{subsec:var-time-search-formula-evaluation-divide-conquer}.

    Finally, we remark that for many usecases, the depth $d$ is polylogarithmic in the query complexity, in which case \Cref{res:circulation-space} gives rise to essentially time-optimal algorithms. We showcase several applications for which this is the case in \Cref{tbl:applications}.

    \begin{table}[!ht]
        \centering
        \begin{tabular}{r|ccl}
            \textbf{Problem} & \textbf{Query complexity} & \textbf{Time complexity} & \textbf{Result} \\\hline
            Pattern matching with known pattern & $O(\sqrt{n\log(p)})$ & $\widetilde{O}(\sqrt{n})$ & \Cref{thm:pattern-matching} \\
            {\bf OR $\circ$ pSEARCH} & $O(\sqrt{T\log(T)})$ & $\widetilde{O}(\sqrt{T})$ & \Cref{thm:or-psearch} \\
            $\Sigma^*20^*2\Sigma^*$ & $O(\sqrt{n\log(n)})$ & $\widetilde{O}(\sqrt{n})$ & \Cref{thm:202} \\
            {\bf Dyck-language with depth $3$} & $O(\sqrt{n\log(n)})$ & $\widetilde{O}(\sqrt{n})$ & \Cref{thm:dyck-3} \\
            {\bf$3$-increasing subsequence} & $O(\sqrt{n\log(n)})$ & $\widetilde{O}(\sqrt{n})$ & \Cref{thm:3-is}
        \end{tabular}
        \caption{The obtained results by applying the $st$-connectivity framework to various string problems. The time complexity results hold in the QROM-model~(see \Cref{subsec:model}). In the pattern matching problem, $p$ is the length of the pattern's period. For the problems dislpayed in bold, we obtain slight improvements over the best-known time-efficient implementations.}
        \label{tbl:applications}
    \end{table}

    \subsection{Organization}

    We fix notation, discuss the computational model, and recall relevant existing results in \Cref{sec:preliminaries}. Subsequently, in \Cref{sec:span-program-algorithm}, we present the improved span program algorithm. In \Cref{sec:graph-composition}, we introduce the graph composition framework. In \Cref{sec:relations}, we relate the graph composition framework to existing frameworks. Finally, in \Cref{sec:applications}, we apply the graph composition framework to several concrete computational problems.

    \section{Preliminaries}
    \label{sec:preliminaries}

    \subsection{Notation}

    We start by fixing some notation. $\N = \{1,2,\dots\}$ is the set of natural numbers. We assume throughout the paper that $a/\infty = 0$, for all $a > 0$, and similarly that $a/0 = \infty$, for all $a > 0$.

    Let $d \in \N$ and $f,g : \R_{\geq 0}^d \to \R_{\geq 0}$. We write $f \in O(g)$ if there exist $C,M > 0$ such that for all $x \in \R_{\geq 0}^d$ with $\norm{x} \geq M$, we have $f(x) \leq Cg(x)$. We write $f \in \Omega(g)$ if $g \in O(f)$, and we write $f \in \Theta(g)$, if $f \in O(g) \cap \Omega(g)$. We further write $f \in \widetilde{O}(g)$, exists $k > 0$ such that $f \in O(g \cdot \log^k(g))$. We similarly write $f \in \widetilde{\Omega}(g)$ if $g \in \widetilde{O}(f)$, and we write $f \in \widetilde{\Theta}(g)$ if $f \in \widetilde{O}(f) \cap \widetilde{\Omega}(f)$. Finally, we write $f \in O(g \cdot \polylog(x_j))$ for some $j \in [d]$, if there exists a $k \in \N$ such that $f \in O(g(x) \cdot \log^k(x_j))$.

    \subsection{Quantum algorithms and the computational model}
    \label{subsec:model}

    We give a very brief introduction into quantum algorithms here. For a more elaborate introduction, we refer to more entry-level texts, e.g., \cite{nielsen2010quantum, wolf2019quantum}.

    Quantum algorithms act on a complex finite-dimensional Hilbert space $\H$, referred to as the state space. The algorithm starts in a unit vector in the Hilbert space, referred to as the initial state, and subsequently modifies this state by applying unitary operations to it. At the end of the computation, the algorithm can produce a sample from a finite outcome set $O$, where all the outcomes are associated to mutually orthogonal subspaces of $\H$. The probability of obtaining $o \in O$ is the norm squared of the projection of the final state onto the subspace corresponding to $o$.

    A quantum algorithm can access the some computational problem's input $x \in \D$, where $\D$ is the domain, i.e., the set of allowed inputs, by means of a unitary $O_x$ that encodes this input. We typically refer to this unitary as the oracle. A quantum query algorithm can make queries to this oracle, and all its other unitary operations cannot depend on the input. We say that a quantum query algorithm computes a function $f : \D \to \mathcal{O}$ with high probability, if it outputs $f(x)$ on input $x \in \D$ with probability at least $2/3$. The minimal number of queries that any quantum query algorithm must make to $O_x$ in order to compute $f$ with high probability, is referred to as the quantum query complexity of $f$.

    We can also characterize the cost of implementing the unitary operations that do not depend on the input, known as their \textit{time complexity}. This is somewhat more tricky, because it might depend heavily on the specific architecture on which the algorithm is implemented. We assume that every Hilbert space has a special orthonormal basis, referred to as the computational basis. We assume that every unitary that in the computational basis acts as identity on all but a constant number of dimensions, takes constant time to implement (we refer to these as \textit{elementary operations}). We also assume that implementing a unitary $U$ on a Hilbert space $\H_1$ has the same cost as implementing a unitary $U \otimes I$ on $\H_1 \otimes \H_2$. If we allow for an additional polylogarithimic overhead in the dimension of the state space, we expect to be able to map our approaches to actual implementations, and so we phrase all our time complexity results with $\widetilde{O}$-notation, where the tilde always hides factors that are polylogartihmic in the dimension of the Hilbert space. These assumptions are broadly in line with the \textit{circuit model}, traditionally considered in quantum computation, see e.g.~\cite{nielsen2010quantum}.

    In addition to the above model, we also assume that our quantum algorithms have the option to interface coherently with random-access memory. There are broadly two types of memory we can assume to have access to, read-only memory (QROM) and read-write memory (QRAM). In this paper, we only require a memory register with read-only access. That is, we assume to have access to a QROM of size $N \in \N$, initialized in some immutable precomputed bitstring $x \in \{0,1\}^N$. Then, suppose we have a Hilbert space $\C^N \otimes (\C^2)^{\otimes N}$, where the first and second registers are the index and data registers, respectively. We assume to be able to perform the operation $\QROG$ (\textit{quantum read-only gate}) in unit cost that acts as
    \[\QROG : \ket{j} \otimes \ket{x} \mapsto (-1)^{x_j}\ket{j} \otimes \ket{x}.\]
    Note that this is a fundamentally less-powerful assumption than having access to a quantum random access gate, $\QRAG$, as is more common in the literature. This model appeared in \cite[Section~6.2]{ambainis2007quantum}, and it's also used in e.g.~\cite{buhrman2022memory,akmal2023near,wang2024quantum,belovs2024taming}. The distinction between quantum read-only memory and quantum read-write memory is mentioned in several places in the existing literature, e.g., in \cite{van2019quantum,naya2020optimal,allcock2023quantum}.\footnote{In \cite{van2019quantum}, these are referred to as the QCRAM- and \textit{full}-QRAM models. In \cite{naya2020optimal}, these are referred to as the QACM and QAQM-models. In \cite{allcock2023quantum}, these models are referred to as the QRAM- and QRAG-models.}

    In this model, suppose we denote the minimal cost of implementing a unitary operation $U$ up to constant operator norm error by $\mathsf{T}(U)$. We remark here that with polylogarithmic overhead in both the precision and the dimension of the Hilbert space, it possible to store the description of $U$'s implementation in quantum read-only memory, and then to read this description and apply this operation concurrently. Thus, if $U_1, \dots, U_n$ are unitary operations acting on $\H_1, \dots, \H_n$, we can implement the operation $U = \sum_{j=1}^n \ket{j}\bra{j} \otimes U_j$ on $\H := \H_1 \oplus \cdots \oplus \H_n$ in time
    \begin{equation}
        \label{eq:time-complextiy-composition}
        \mathsf{T}\left(\sum_{j=1}^n \ket{j}\bra{j} \otimes U_j\right) \in \widetilde{O}\left(\max_{j \in [n]} \mathsf{T}(U_j) \cdot \polylog(\dim(\H))\right).
    \end{equation}
    This model mirrors the setting in the classical case, and it is also implicitly used in \cite{akmal2023near}, and explicitly stated in \cite[Section~4.5]{belovs2024taming}.

    \subsection{Quantum subroutines}

    We start by recalling a subroutine for quantum state preparation.

    \begin{theorem}[Quantum state preparation {(see, e.g., \cite[Claim~2.2.3]{prakash2014quantum})}]
        \label{thm:quantum-state-preparation}
        Let $\H$ be a Hilbert space, and let $\ket{\psi} \in \H$ be a state. Let $\ket{\bot}$ be any computational basis state. We can implement an operation $C_{\ket{\bot},\ket{\psi}}$ that implements $\ket{\bot} \mapsto \ket{\psi}$ in time $\widetilde{O}(\log(\dim(\H)))$, using a QROM of size $\widetilde{O}(\dim(\H))$.
    \end{theorem}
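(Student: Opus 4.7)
The plan is to use the standard binary-tree state-preparation construction, originally due to Grover and Rudolph, adapted to the QROM model. Write $N = \dim(\H)$ and expand $\ket{\psi} = \sum_{j=0}^{N-1} \alpha_j \ket{j}$ in the computational basis; without loss of generality assume $N$ is a power of two (else pad with zero amplitudes). The idea is to build a complete binary tree of depth $\log_2(N)$ whose leaves are labeled by the amplitudes $\alpha_j$, and where each internal node at depth $\ell$ corresponding to a prefix $b_1\cdots b_\ell \in \{0,1\}^\ell$ stores the partial norm $p(b_1\cdots b_\ell) := \sum_{j : j_1\cdots j_\ell = b_1\cdots b_\ell} |\alpha_j|^2$, together with the conditional rotation angle $\theta(b_1\cdots b_\ell)$ and phase $\varphi(b_1\cdots b_\ell)$ needed to split the amplitude between its two children. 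This description fits in $\widetilde{O}(N)$ classical bits of precomputed data, which we place into a QROM.

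Given this precomputed tree, the state preparation proceeds in $\log_2(N)$ rounds. At round $\ell$, the state of the first $\ell$ qubits encodes a prefix $\ket{b_1\cdots b_\ell}$, and we want to apply the unitary that maps $\ket{b_1\cdots b_\ell}\ket{0}$ to $\ket{b_1\cdots b_\ell}(\cos\theta\ket{0}+e^{i\varphi}\sin\theta\ket{1})$, with $\theta, \varphi$ the angles stored at node $b_1\cdots b_\ell$. This is implemented by (i) reading $(\theta,\varphi)$ from the QROM into an ancilla register controlled on the current prefix, (ii) performing a controlled single-qubit rotation on the next qubit whose angles are controlled by the ancilla, and (iii) uncomputing the ancilla with a second QROM read. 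Each round therefore uses $O(1)$ QROM reads and a polylogarithmic (in $N$ and the precision) number of elementary gates to realize the rotation to the required accuracy; summing over the $\log_2(N)$ rounds gives total time $\widetilde{O}(\log(N))$ and QROM usage $\widetilde{O}(N)$, as claimed.

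The only subtleties are bookkeeping: one must verify that the conditional rotation really does carry $\ket{\bot}$ through to $\ket{\psi}$, which follows because after $\ell$ rounds the reduced state on the first $\ell$ qubits is $\sum_{b_1\cdots b_\ell} \sqrt{p(b_1\cdots b_\ell)} e^{i\Phi(b_1\cdots b_\ell)} \ket{b_1\cdots b_\ell}$ with the correct accumulated phase, by an immediate induction on $\ell$; and one must verify that the elementary-rotation approximation error is controlled, which is standard since we only need constant operator-norm error overall and each of the $\log_2(N)$ rotations needs accuracy $O(1/\log N)$, achievable in polylogarithmic gate count. The main non-routine point is the QROM bookkeeping to read and uncompute controlled data without paying a query per basis state; this is exactly what the $\QROG$ gate of Section~\ref{subsec:model} provides. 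Once these pieces are in place, the uniform time bound $\widetilde{O}(\log N)$ and memory bound $\widetilde{O}(N)$ follow directly, so the main obstacle is really just writing down the tree-data representation cleanly enough that the QROM reads are manifestly of constant arity.
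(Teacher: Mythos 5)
The paper gives no proof of this statement, citing it as a known result (Prakash's Claim~2.2.3), and your binary-tree construction with precomputed conditional angles stored in QROM is precisely the standard argument behind that citation. Your proposal is correct and matches the intended approach.
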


    Next, we observe that we can use this subroutine to reflect through arbitrary one-dimensional subspaces, in time polylogarithmic in the dimension of the Hilbert space.

    \begin{theorem}[Reflection through a one-dimensional subspace]
        \label{thm:quantum-state-reflection}
        Let $\H$ be a Hilbert space and $\ket{\psi} \in \H$ be a state. We can reflect through $\Span\{\ket{\psi}\}$ in time $\widetilde{O}(\log(\dim(\H)))$, and using a QROM of size $\widetilde{O}(\dim(\H))$.
    \end{theorem}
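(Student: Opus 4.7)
The plan is to reduce the reflection through the one-dimensional subspace $\Span\{\ket{\psi}\}$ to a reflection through a computational basis state, conjugated by the state preparation routine from \cref{thm:quantum-state-preparation}. Explicitly, if $C_{\ket{\bot},\ket{\psi}}$ maps $\ket{\bot} \mapsto \ket{\psi}$, then I claim that
\[R_{\ket{\psi}} \;=\; C_{\ket{\bot},\ket{\psi}} \cdot R_{\ket{\bot}} \cdot C_{\ket{\bot},\ket{\psi}}^\dagger,\]
where $R_{\ket{\bot}} = 2\ket{\bot}\bra{\bot} - I$. This identity follows from the fact that conjugating a reflection by a unitary produces the reflection through the image of the reflected subspace, and since $C_{\ket{\bot},\ket{\psi}}$ is unitary it preserves the one-dimensional structure. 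I would verify this identity directly on two basis elements: one inside $\Span\{\ket{\psi}\}$ (taking $\ket{\psi}$ to itself with sign flipped after going through $R_{\ket{\bot}}$ and back), and one orthogonal to it (which maps to something orthogonal to $\ket{\bot}$ and is thus preserved up to sign by $R_{\ket{\bot}}$).

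Next I would account for the three operations. The unitary $C_{\ket{\bot},\ket{\psi}}$ and its inverse $C_{\ket{\bot},\ket{\psi}}^\dagger$ each have cost $\widetilde{O}(\log(\dim(\H)))$ and require QROM of size $\widetilde{O}(\dim(\H))$, by \cref{thm:quantum-state-preparation} (the inverse has the same cost as the forward direction, since any quantum circuit can be inverted gate-by-gate without changing the number or type of elementary operations, and the same QROM description suffices). The middle operation $R_{\ket{\bot}}$ is a reflection through a single computational basis state, which on $\log_2(\dim(\H))$ qubits is a single multi-controlled phase flip and can be decomposed into $O(\log(\dim(\H)))$ elementary gates with no additional QROM.

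Composing the three factors gives total time $\widetilde{O}(\log(\dim(\H)))$ and QROM size $\widetilde{O}(\dim(\H))$, as required. There is no real obstacle here; the only point that deserves a sentence of care is that the cost and QROM contents of $C_{\ket{\bot},\ket{\psi}}^\dagger$ match those of $C_{\ket{\bot},\ket{\psi}}$, which is immediate from the standard circuit-reversal argument combined with the fact that the stored description of $\ket{\psi}$ in QROM is read, not written, during both the forward and reverse execution.
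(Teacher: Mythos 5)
Your proof is correct, but it follows a different route from the paper's. You implement the reflection as $C_{\ket{\bot},\ket{\psi}}\, R_{\ket{\bot}}\, C_{\ket{\bot},\ket{\psi}}^\dagger$, i.e., you conjugate a computational-basis reflection by the state-preparation unitary acting directly on the data register; the paper instead prepares $\ket{\psi}$ in a fresh ancilla register isomorphic to $\H$, runs a SWAP-test between the two registers to flip the sign exactly when the data register is orthogonal to $\ket{\psi}$, and then uncomputes the ancilla. Both arguments are standard and yield the same asymptotic costs, $\widetilde{O}(\log(\dim\H))$ time and $\widetilde{O}(\dim\H)$ QROM, since each uses the preparation routine and its inverse once plus $O(\log(\dim\H))$ extra gates. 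Your version is more economical in workspace (no second copy of $\H$) and makes the identity $C R_{\ket{\bot}} C^\dagger = 2\ket{\psi}\bra{\psi}-I$ completely transparent; the paper's version only ever applies the preparation routine to the fixed state $\ket{\bot}$ in the ancilla, which can be convenient when the preparation circuit is most naturally specified only by its action on $\ket{\bot}$. One small wording slip: in your verification sketch, the component along $\ket{\psi}$ is \emph{fixed} (it is the orthogonal complement that acquires the minus sign under $2\ket{\bot}\bra{\bot}-I$); this does not affect the correctness of your formula or the cost accounting, and either global sign convention gives a valid reflection through $\Span\{\ket{\psi}\}$.
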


    \begin{proof}
        We prepare an ancilla register, with the same state space $\H$, and we prepare $\ket{\psi}$ in that register using \cref{thm:quantum-state-preparation}. Next, we use the quantum SWAP-test, as introduced in \cite[Section~4]{barenco1997stabilization}, to flip the sign of the state in the original register, if it is orthogonal to $\ket{\psi}$. Finally, we uncompute the state-preparation routine in the ancilla register.
    \end{proof}

    Note that these subroutines are very general, in the sense that they can implement and reflect through any quantum state, and they are time-efficient, in the sense that they require time polylogarithmic in the dimension of the Hilbert space. However, they are not very space-efficient, since they require a QROM of size linear in the dimension of the Hilbert space. Thus, if the space complexity is of concern, it can sometimes still be beneficial to circumvent using these results by coming up with a more ad hoc construction.

    One example of such a case is where we want to prepare a uniform superposition. This routine was considered folklore, but was recently written up by \cite{shukla2024efficient}.

    \begin{theorem}[Uniform state preparation~(see, e.g., \cite{shukla2024efficient})]
        \label{thm:uniform-state-preparation}
        Let $n,m \in \N$, let $\ket{\bot}$ be a computational basis state in $\C^n$, and let $\ket{\psi} = \frac{1}{\sqrt{m}}\sum_{j=1}^m \ket{j} \in \C^n$. Then, we can implement an operation that maps $\ket{\bot} \mapsto \ket{\psi}$ in time $\widetilde{O}(\log(n))$.
    \end{theorem}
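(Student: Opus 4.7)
The plan is to reduce to preparing the uniform superposition starting from $\ket{0}$ (mapping between $\ket{\bot}$ and $\ket{0}$ costs only a $\widetilde{O}(\log n)$-time permutation of the computational basis), and then to build the state by a recursive construction driven by the binary expansion of $m$. Write $n$ using $\lceil \log_2 n \rceil$ qubits, so the state space factorizes as a tensor product of qubits.

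If $m$ is a power of two, say $m = 2^L$, the state $\frac{1}{\sqrt{m}} \sum_{j=1}^m \ket{j}$ is obtained by applying Hadamards to $L$ qubits (with the appropriate shift of the index range so the superposition lies over $\{1, \dots, m\}$ rather than $\{0, \dots, m-1\}$); this trivially costs $O(L) = O(\log n)$ elementary operations. For general $m$ with $L = \lceil \log_2 m \rceil$, I would write $m = 2^{L-1} + r$ with $0 \leq r < 2^{L-1}$, which suggests the decomposition
\[\ket{\psi} \;=\; \sqrt{\frac{2^{L-1}}{m}}\, \ket{0} \otimes \frac{1}{\sqrt{2^{L-1}}}\sum_{j=0}^{2^{L-1}-1} \ket{j} \;+\; \sqrt{\frac{r}{m}}\, \ket{1} \otimes \frac{1}{\sqrt{r}}\sum_{j=0}^{r-1} \ket{j},\]
where the first qubit decides between the ``full lower half'' and the ``partial upper half''. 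The construction then proceeds in three steps: (i) apply a single-qubit rotation on the leading qubit to produce the superposition $\sqrt{2^{L-1}/m}\ket{0} + \sqrt{r/m}\ket{1}$; (ii) conditioned on the leading qubit being $\ket{0}$, apply Hadamards to the remaining $L-1$ qubits to produce the uniform superposition over $2^{L-1}$ states; (iii) conditioned on the leading qubit being $\ket{1}$, recursively prepare the uniform superposition over $r$ states on the remaining qubits.

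Since $r < 2^{L-1}$, the recursion depth is at most $L \leq \lceil \log_2 n \rceil$, and each level performs at most one controlled rotation plus at most $L$ controlled Hadamards. Thus the total elementary gate count is $O(L^2) = O(\log^2 n) \subseteq \widetilde{O}(\log n)$, matching the claim, and no QROM is used at all. The main conceptual obstacle, which the recursion neatly sidesteps, is that $m$ need not be a power of two, so pure Hadamard-based preparation is unavailable; the choice to bite off exactly one power of two per recursive call is what keeps both the depth and the rotation angles explicit and computable from the binary expansion of $m$. I would conclude by noting that conjugating by the $O(\log n)$-time basis permutation that sends $\ket{\bot} \leftrightarrow \ket{0}$ yields the claimed operation $\ket{\bot} \mapsto \ket{\psi}$.
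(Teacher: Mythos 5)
Your construction is correct, but note that the paper does not prove this statement at all: it is presented as folklore and delegated to \cite{shukla2024efficient}, so there is no internal proof to compare against. Your recursion on the binary expansion of $m$ is essentially the construction of that reference, here made self-contained, and it has the virtue of using no QROM and only explicitly computable rotation angles. Two small points deserve attention. First, the gates at recursion depth $d$ are not singly controlled as your gate count suggests: the level-$d$ rotation and Hadamards must be conditioned on \emph{all} previously peeled-off leading qubits equalling $1$ (otherwise they would corrupt the branches in which an earlier level already applied its Hadamards), so they are $d$-fold controlled. This is harmless --- under the paper's model a multiply-controlled single-qubit gate acts as the identity on all but constantly many computational-basis dimensions and hence counts as elementary, and even in a stricter circuit model one can carry a single flag ancilla updated by one Toffoli per level to keep every gate singly controlled --- but it should be said. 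Second, your inclusion $O(\log^2 n) \subseteq \widetilde{O}(\log n)$ is valid only under the paper's convention that the tilde in time bounds hides factors polylogarithmic in $\dim(\H) = n$; under the paper's generic definition $\widetilde{O}(g) = O(g \log^k g)$ it would not hold. With these clarifications the argument goes through, including the reduction of $\ket{\bot} \mapsto \ket{\psi}$ to the $\ket{0}$-anchored case by an elementary basis permutation and the index shift from $\{0,\dots,m-1\}$ to $\{1,\dots,m\}$.
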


    \subsection{Span programs}

    Span programs exist in many different formulations in the literature. Here, we mostly follow the presentation from~\cite[Chapter~6]{cornelissen2023quantum}. The primary difference from related works, e.g., \cite{reichardt2009span,belovs2012span,ito2019approximate}, is that here we don't assume that the input is a string from some alphabet. In fact, we allow the inputs to the span program to come from an arbitrary finite set $\D$. Additionally, in contrast to earlier works, we drop the constraint that the initial vector has to be of unit norm. To highlight the distinction, we rename $\ket{w_0}$ to be the \textit{initial vector}, rather than the \textit{initial state}.

    \begin{definition}[Span programs]
        \label{def:span-program}
        A span program consists of the following mathematical objects:
        \begin{enumerate}[nosep]
            \item The \textit{state space}: A Hilbert space $\H$ of finite dimension.
            \item The \textit{domain}: A finite set $\D$, whose elements are referred to as \textit{inputs}.
            \item The \textit{input-dependent subspaces}: to every input $x \in \D$, we associate an input-dependent subspace $\H(x) \subseteq \H$.
            \item The \textit{input-independent subspace}: A subspace $\K \subseteq \H$.
            \item The \textit{initial vector}: $0 \neq \ket{w_0} \in \K^{\perp}$.
        \end{enumerate}
        Then, $\mathcal{P} = (\H, x \mapsto \H(x), \K, \ket{w_0})$ is a span program on $\D$. We make a distinction between positive and negative inputs, as such:
        \begin{enumerate}[nosep]
            \item $x \in \D$ is a \textit{positive input}, if $\ket{w_0} \in \K + \H(x)$.
            \item $x \in \D$ is a \textit{negative input}, if $\ket{w_0} \not\in \K + \H(x)$.
        \end{enumerate}
        Finally, we let $f : \D \to \{0,1\}$ be the function that evaluates to $1$ if and only if the input $x \in \D$ is positive. We say that $\mathcal{P}$ computes $f$.
    \end{definition}

    Next, we define the witness sizes.

    \begin{definition}[Span program witnesses]
        \label{def:witnesses}
        Let $\mathcal{P} = (\H, x \mapsto \H(x), \K, \ket{w_0})$ be a span program on $\D$ that computes $f$. Then,
        \begin{enumerate}[nosep]
            \item If $x \in \D$ is a positive input, then every vector $\ket{w_x} \in \H(x)$ that satisfies $\ket{w_x} - \ket{w_0} \in \K$ is a positive witness for $x$. We write $\W_+(x,\mathcal{P})$ for the set of these vectors. The positive witness complexity for $x$ is the minimal norm-squared of such vectors, and it's denoted by $w_+(x,\mathcal{P})$, which is $\infty$ if such a vector does not exist.
            \item If $x \in \D$ is a negative input, then every vector $\ket{w_x} \in \K^{\perp} \cap \H(x)^{\perp}$ for which $\braket{w_x}{w_0} = 1$ is a negative witness for $x$. We write $\W_-(x,\mathcal{P})$ for the set of these vectors. The negative witness complexity for $x$ is the minimal norm-squared of such vectors, and it's denoted by $w_-(x,\mathcal{P})$, which is $\infty$ if such a vector does not exist.
            \item We define the positive and negative witness complexity, $W_+(\mathcal{P})$ and $W_-(\mathcal{P})$, respectively, as
            \[W_+(\mathcal{P}) = \max_{x \in f^{-1}(1)} w_+(x,\mathcal{P}), \qquad \text{and} \qquad W_-(\mathcal{P}) = \max_{x \in f^{-1}(0)} w_-(x,\mathcal{P}),\]
            and we define the span program complexity, $C(\mathcal{P})$, as
            \[C(\mathcal{P}) = \sqrt{W_+(\mathcal{P}) \cdot W_-(\mathcal{P})}.\]
        \end{enumerate}
    \end{definition}

    For a more intuitive interpretation of these objects, we refer to the more elaborate introduction in~\cite[Chapter~6]{cornelissen2023quantum}.

    \subsection{Elementary span program manipulations}

    We can manipulate span programs in several elementary ways. We start with scalar multiplication, where we multiply the initial state with a constant factor.

    \begin{definition}[Scalar multiplication of span programs]
        Let $\mathcal{P} = (\H, x \mapsto \H(x), \K, \ket{w_0})$ be a span program, and $\alpha > 0$. Then, we write $\alpha\mathcal{P} = (\H, x \mapsto \H(x), \K, \sqrt{\alpha}\ket{w_0})$ as the $\alpha$-scalar multiple of $\mathcal{P}$.
    \end{definition}

    \begin{theorem}[Properties of scalar multiplication of span programs]
        Let $\mathcal{P}$ be a span program on $\D$, and $\alpha > 0$. Then, for all $x \in \D$
        \begin{enumerate}[nosep]
            \item $\W_+(x,\alpha\mathcal{P}) = \W_+(x,\mathcal{P}) \cdot \alpha$, and so $w_+(x,\alpha\mathcal{P}) = w_+(x,\mathcal{P}) \cdot \alpha$.
            \item $\W_-(x,\alpha\mathcal{P}) = \W_-(x,\mathcal{P})/\alpha$, and so $w_-(x,\alpha\mathcal{P}) = w_-(x,\mathcal{P})/\alpha$.
            \item $W_+(\alpha\mathcal{P}) = \alpha W_+(\mathcal{P})$, $W_-(\alpha\mathcal{P}) = W_-(\mathcal{P})/\alpha$, and so $C(\alpha\mathcal{P}) = C(\mathcal{P})$.
            \item If $\mathcal{P}$ computes $f$, then so does $\alpha\mathcal{P}$.
        \end{enumerate}
    \end{theorem}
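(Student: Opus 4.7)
The plan is to unfold \cref{def:span-program} for $\alpha\mathcal{P}$ and observe that only the initial vector is modified, from $\ket{w_0}$ to $\sqrt{\alpha}\ket{w_0}$, while the state space $\H$, the input-dependent subspaces $\H(x)$, and the input-independent subspace $\K$ are left untouched. All four claims then reduce to tracking how this single rescaling propagates through the witness definitions in \cref{def:witnesses}.

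For item 1, I would observe that $\ket{w_x'} \in \H(x)$ is a positive witness for $\alpha\mathcal{P}$ at $x$ if and only if $\ket{w_x'} - \sqrt{\alpha}\ket{w_0} \in \K$. Since $\H(x)$ and $\K$ are subspaces and $\sqrt{\alpha} \neq 0$, the map $\ket{w_x} \mapsto \sqrt{\alpha}\ket{w_x}$ is a well-defined bijection between $\W_+(x,\mathcal{P})$ and $\W_+(x,\alpha\mathcal{P})$; norms-squared therefore scale by $\alpha$, giving $w_+(x,\alpha\mathcal{P}) = \alpha \cdot w_+(x,\mathcal{P})$. For item 2, a negative witness $\ket{w_x'}$ for $\alpha\mathcal{P}$ lives in the unchanged subspace $\K^\perp \cap \H(x)^\perp$ and must satisfy $\braket{w_x'}{\sqrt{\alpha} w_0} = \sqrt{\alpha}\braket{w_x'}{w_0} = 1$, so the map $\ket{w_x} \mapsto \ket{w_x}/\sqrt{\alpha}$ is a bijection between $\W_-(x,\mathcal{P})$ and $\W_-(x,\alpha\mathcal{P})$, and norms-squared scale by $1/\alpha$.

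Item 3 is immediate by taking maxima over $f^{-1}(1)$ and $f^{-1}(0)$: $W_+(\alpha\mathcal{P}) = \alpha W_+(\mathcal{P})$ and $W_-(\alpha\mathcal{P}) = W_-(\mathcal{P})/\alpha$, and the invariance of $C$ follows from the cancellation $\sqrt{\alpha W_+(\mathcal{P}) \cdot W_-(\mathcal{P})/\alpha} = \sqrt{W_+(\mathcal{P}) \cdot W_-(\mathcal{P})}$. For item 4, I would note that $x \in \D$ is positive for $\alpha\mathcal{P}$ if and only if $\sqrt{\alpha}\ket{w_0} \in \K + \H(x)$; since $\K + \H(x)$ is a subspace and $\sqrt{\alpha} \neq 0$, this is equivalent to $\ket{w_0} \in \K + \H(x)$, so the positive/negative classification, and hence the function computed, is preserved. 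There is no real obstacle; the only care needed is with the placement of the square root and the bookkeeping that $\W_{\pm}$ denote sets of vectors whose minimal norms-squared define $w_{\pm}$.
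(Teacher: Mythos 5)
Your proposal is correct and follows essentially the same route as the paper: both establish that $\ket{w}\mapsto\sqrt{\alpha}\ket{w}$ (resp.\ $\ket{w}\mapsto\ket{w}/\sqrt{\alpha}$) is a bijection between the positive (resp.\ negative) witness sets, from which the scaling of $w_{\pm}$, the maxima, and the invariance of $C$ follow. Your explicit argument for item 4 (that $\sqrt{\alpha}\ket{w_0}\in\K+\H(x)$ iff $\ket{w_0}\in\K+\H(x)$ because $\K+\H(x)$ is a subspace) is slightly more detailed than the paper's "follows directly," but the content is the same.
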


    \begin{proof}
        For the first claim, observe that
        \begin{align*}
            \ket{w} \in \mathcal{W}_+(x,\alpha\mathcal{P}) &\Leftrightarrow \ket{w} \in \H(x) \land \ket{w} - \sqrt{\alpha}\ket{w_0} \in \K \Leftrightarrow \frac{\ket{w}}{\sqrt{\alpha}} \in \H(x) \land \frac{\ket{w}}{\sqrt{\alpha}} - \ket{w_0} \in \K \\
            &\Leftrightarrow \frac{\ket{w}}{\sqrt{\alpha}} \in \mathcal{W}_+(x,\mathcal{P}),
        \end{align*}
        and similarly,
        \begin{align*}
            \ket{w} \in \mathcal{W}_-(x,\alpha\mathcal{P}) &\Leftrightarrow \ket{w} \in \K^{\perp} \cap \H(x)^{\perp} \land \bra{w} \sqrt{\alpha}\ket{w_0} = 1 \Leftrightarrow \sqrt{\alpha}\ket{w} \in \K^{\perp} \cap \H(x)^{\perp} \land (\sqrt{\alpha}\ket{w})^{\dagger}\ket{w_0} = 1 \\
            &\Leftrightarrow \sqrt{\alpha}\ket{w} \in \mathcal{W}_-(x,\mathcal{P}).
        \end{align*}
        The final claims follow directly from the first two.
    \end{proof}

    We also recall how to perform the negation of a span program. This operation appeared implicitly in several works, see e.g., \cite{reichardt2011reflections,ito2019approximate,jeffery2017quantum}, but we phrase it explicitly here.

    \begin{definition}[Span program negation]
        Let $\mathcal{P} = (\H, x \mapsto \H(x), \K, \ket{w_0})$ be a span program on $\D$. We define:
        \begin{enumerate}[nosep]
            \item For all $x \in \D$, $\H'(x) = \H(x)^{\perp}$,
            \item $\K' = (\K \oplus \Span\{\ket{w_0}\})^{\perp}$,
            \item $\ket{w_0'} = \ket{w_0}/\norm{\ket{w_0}}^2$.
        \end{enumerate}
        Then $\lnot\mathcal{P} = (\H, x \mapsto \H'(x), \K', \ket{w_0'})$ is the negation of $\mathcal{P}$.
    \end{definition}

    \begin{theorem}[Properties of span program negation]
        \label{thm:negation}
        Let $\mathcal{P}$ be a span program on $\D$.
        \begin{enumerate}[nosep]
            \item For all $x \in \D$, $\mathcal{W}_-(x,\lnot\mathcal{P}) = \mathcal{W}_+(x,\mathcal{P})$, and so $w_-(x,\lnot\mathcal{P}) = w_+(x,\mathcal{P})$.
            \item For all $x \in \D$, $\mathcal{W}_+(x,\lnot\mathcal{P}) = \mathcal{W}_-(x,\mathcal{P})$, and so $w_+(x,\lnot\mathcal{P}) = w_-(x,\mathcal{P})$.
            \item $W_-(\lnot\mathcal{P}) = W_+(\mathcal{P})$, $W_+(\lnot\mathcal{P}) = W_-(\mathcal{P})$, and $C(\lnot\mathcal{P}) = C(\mathcal{P})$.
            \item If $\mathcal{P}$ computes $f$, then $\lnot\mathcal{P}$ computes $\lnot f$.
        \end{enumerate}
    \end{theorem}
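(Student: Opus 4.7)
The approach is purely definitional: unfold the witness sets of $\lnot\mathcal{P}$ and match them against those of $\mathcal{P}$, using two structural identities that fall out of the construction. Since $\H'(x) = \H(x)^{\perp}$, double-complement gives $\H'(x)^{\perp} = \H(x)$. Moreover, $\ket{w_0} \in \K^{\perp}$ by the definition of a span program, so the sum $\K \oplus \Span\{\ket{w_0}\}$ is genuinely orthogonal; hence $\K'^{\perp} = \K \oplus \Span\{\ket{w_0}\}$ and $\K' = \K^{\perp} \cap \Span\{\ket{w_0}\}^{\perp}$.

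For part~1, I would unfold $\ket{w} \in \W_-(x,\lnot\mathcal{P})$: this asks $\ket{w} \in \K'^{\perp} \cap \H'(x)^{\perp} = (\K \oplus \Span\{\ket{w_0}\}) \cap \H(x)$ together with $\braket{w}{w_0'} = 1$, i.e., $\braket{w}{w_0} = \norm{\ket{w_0}}^2$. Decomposing $\ket{w} = \ket{k} + c\ket{w_0}$ along the orthogonal sum and using $\braket{k}{w_0} = 0$ forces $c = 1$, so $\ket{w} - \ket{w_0} = \ket{k} \in \K$; combined with $\ket{w} \in \H(x)$, this is exactly the condition for $\ket{w} \in \W_+(x,\mathcal{P})$. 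Every step is an equivalence, so the two witness sets coincide, and the norm identity $w_-(x,\lnot\mathcal{P}) = w_+(x,\mathcal{P})$ follows immediately.

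For part~2, I would unfold $\ket{w} \in \W_+(x,\lnot\mathcal{P})$: this requires $\ket{w} \in \H(x)^{\perp}$ and $\ket{w} - \ket{w_0}/\norm{\ket{w_0}}^2 \in \K'$. Taking the inner product of this difference against $\ket{w_0}$, and using $\K' \perp \ket{w_0}$, yields $\braket{w_0}{w} = 1$ and hence $\braket{w}{w_0} = 1$; moreover, since both $\ket{w_0}/\norm{\ket{w_0}}^2$ and the difference lie in $\K^{\perp}$, so does $\ket{w}$. The resulting conditions are precisely those for $\ket{w} \in \W_-(x,\mathcal{P})$, and the derivation is reversible.

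Parts~3 and~4 are then immediate consequences: the complexity identities follow by taking maxima over the respective input classes and by the definition of $C$, while $x$ is positive for $\lnot\mathcal{P}$ iff $\W_+(x,\lnot\mathcal{P}) = \W_-(x,\mathcal{P}) \neq \emptyset$, iff $x$ is negative for $\mathcal{P}$, iff $(\lnot f)(x) = 1$. I do not anticipate any genuine obstacle here; the only content to verify carefully is that the scaling $1/\norm{\ket{w_0}}^2$ in $\ket{w_0'}$ is exactly what is needed to turn $\braket{w}{w_0'} = 1$ into the coefficient $c = 1$ in the decomposition used in part~1, which is precisely why the definition of $\lnot\mathcal{P}$ fixes that scaling.
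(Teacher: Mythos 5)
Your proof is correct and follows essentially the same route as the paper: part~1 is the same chain of definitional equivalences, using $(\K')^{\perp} = \K \oplus \Span\{\ket{w_0}\}$ and the orthogonal decomposition of $\ket{w}$ to force the coefficient of $\ket{w_0}$ to be $1$. The only (minor) divergence is in part~2, where the paper invokes the involution $\lnot(\lnot\mathcal{P}) = \mathcal{P}$ and applies part~1, whereas you unfold the positive-witness condition of $\lnot\mathcal{P}$ directly; both arguments are short and equally valid.
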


    \begin{proof}
        For the first claim, it suffices to prove that $\mathcal{W}_-(x,\lnot\mathcal{P}) = \mathcal{W}_+(x,\mathcal{P}) \cdot \norm{\ket{w_0}}^2$. To that end, we find the following chain of equivalences:
        \begin{align*}
            &\ket{w} \in \mathcal{W}_-(x,\lnot\mathcal{P}) \Leftrightarrow \ket{w} \in (\K')^{\perp} \cap \H'(x)^{\perp} \land \braket{w_0'}{w} = 1 \\
            &\quad\Leftrightarrow \ket{w} \in (\K \oplus \Span\{\ket{w_0}\}) \cap \H(x) \land \braket{w_0'}{w} = 1 \\
            &\quad\Leftrightarrow \ket{w} \in \H(x) \land \exists \alpha \in \C : \ket{w} - \alpha\ket{w_0} \in \K \land \braket{w_0}{w} = \norm{\ket{w_0}}^2 \\
            &\quad \Leftrightarrow \ket{w} \in \H(x) \land \ket{w} - \ket{w_0} \in \K \\
            &\quad \Leftrightarrow \ket{w} \in \mathcal{W}_+(x,\mathcal{P}).
        \end{align*}

        For the second claim, observe direction from the definition that $\lnot(\lnot\mathcal{P})$. Consequently, we find that $\mathcal{W}_-(x,\mathcal{P}) = \mathcal{W}_-(x,\lnot(\lnot\mathcal{P})) = \mathcal{W}_+(x,\lnot\mathcal{P})$, completing the proof of the second claim. Finally, the third and fourth claims follow directly from the first two.
    \end{proof}

    Finally, for future reference, we refer to a one-dimensional span program as a \textit{trivial span program}. In such a trivial span program on $\D$, without loss of generality we have $\H = \Span\{\ket{0}\}$, $\K = \{0\}$, $\ket{w_0} = \ket{0}$, and for all $x \in \D$, we have $\H(x) = \{0\}$ or $\H(x) = \H$. Thus, we can canonically embed any function $f : \D \to \{0,1\}$, by choosing
    \[\H(x) = \begin{cases}
        \{0\}, & \text{if } f(x) = 0, \\
        \H, & \text{otherwise}.
    \end{cases}\]
    Then, the trivial span program evaluates $f$, and the complexity is $1$. This construction can be useful if the function $f$ we compute can be implemented in $\Theta(1)$ queries and time. We will see several examples of this in \cref{sec:applications}.

    \section{Improved span program algorithm}
    \label{sec:span-program-algorithm}

    In this section, we show how a span program computing $f$ can be converted into a bounded-error quantum algorithm that computes $f$. We build on ideas from Belovs and Yolcu~\cite[Section~7.4]{belovs2023one}, who convert a dual adversary bound solution into a quantum algorithm. We start by revising the definition of the dual adversary bound.

    \begin{definition}[Dual adversary bound with general oracles~{\cite[Equation~(7.2)]{belovs2023one}}]
        \label{def:dual-adversary-bound}
        Let $f : \D \to \{0,1\}$ be a function, and for all $x \in \D$, let $O_x$ be a unitary operation on $\H$, referred to as the oracle. Then, the adversary bound for $f$ with input oracles $\{O_x\}_{x \in \D}$ is the following optimization program:
        \begin{align*}
            \min\quad & \max_{x \in \D} \norm{\ket{w_x}}^2, \\
            \text{s.t.}\quad & \bra{w_x}((I - O_x^{\dagger}O_y) \otimes I_{\W})\ket{w_y} = \delta_{f(x) \neq f(y)}, & \forall x,y \in \D, \\
            & \ket{w_x} \in \H \otimes \mathcal{W}, & \forall x \in \D, \\
            & \W \text{ Hilbert space}.
        \end{align*}
    \end{definition}

    Now, if we have a span program and we consider the reflection through $\H(x)$ as the oracle belonging to input $x$, then the corresponding span program witnesses form a solution to the above optimization program. We prove this in the following theorem. This is a simplification of \cite[Theorem~3.39]{belovs2014applications}.

    \begin{theorem}
        \label{thm:feasible-sln-adv-bound}
        Let $\mathcal{P} = (\H, x \mapsto \H(x), \K, \ket{w_0})$ be a span program on $\D$, computing the function $f : \D \to \{0,1\}$. For all $x \in \D$, let $\ket{w_x}$ be a span program witness, i.e., if $f(x) = 0$, let $\ket{w_x} \in \W_-(x,\mathcal{P})$, and if $f(x) = 1$, let $\ket{w_x} \in \W_+(x,\mathcal{P})$, cf.\ \cref{def:witnesses}. Then, with $\mathcal{W} = \C$, the set $\{\ket{w_x}/\sqrt{2}\}_{x \in \D}$ forms a feasible solution to the dual adversary bound for $f$ with oracles $\{R_{\H(x)}\}_{x \in \D}$.
    \end{theorem}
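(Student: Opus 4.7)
The plan is to verify each constraint in \cref{def:dual-adversary-bound} by cases on the pair $(f(x), f(y))$, using the crucial observation that the reflection $R_{\H(x)}$ acts as $+I$ on $\H(x)$ and $-I$ on $\H(x)^{\perp}$, so positive witnesses are $+1$-eigenvectors of the associated reflection, while negative witnesses are $-1$-eigenvectors. Taking $\mathcal{W} = \C$, and writing $O_x = R_{\H(x)}$ (self-adjoint, so $O_x^{\dagger} = O_x$), the constraint I need to verify for the scaled vectors $\ket{w_x}/\sqrt{2}$ is
\[\frac{1}{2}\bra{w_x}\bigl(I - R_{\H(x)}R_{\H(y)}\bigr)\ket{w_y} = \delta_{f(x)\neq f(y)}.\]

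For the two diagonal cases $f(x) = f(y)$, both reflections act on the adjacent witness by the same sign: in the all-positive case $R_{\H(y)}\ket{w_y} = \ket{w_y}$ and $\bra{w_x}R_{\H(x)} = \bra{w_x}$, while in the all-negative case both signs flip to $-1$. In either situation $\bra{w_x}R_{\H(x)}R_{\H(y)}\ket{w_y} = \braket{w_x}{w_y}$, which cancels with $\braket{w_x}{w_y}$ in the $I$-term, yielding $0 = \delta_{f(x)\neq f(y)}$ as required.

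For the off-diagonal case, take $f(x) = 1$ and $f(y) = 0$ (the other direction is handled by conjugation, since the matrix of constraints is Hermitian). Then $\bra{w_x}R_{\H(x)} = \bra{w_x}$ but $R_{\H(y)}\ket{w_y} = -\ket{w_y}$, so $\bra{w_x}R_{\H(x)}R_{\H(y)}\ket{w_y} = -\braket{w_x}{w_y}$, and the left-hand side collapses to $\braket{w_x}{w_y}$. The main content of the proof is then to show $\braket{w_x}{w_y} = 1$. For this I decompose $\ket{w_x} = \ket{w_0} + \ket{k}$ with $\ket{k} \in \K$ (by the positive witness condition), giving $\braket{w_x}{w_y} = \braket{w_0}{w_y} + \braket{k}{w_y}$; the second term vanishes because $\ket{w_y} \in \K^{\perp}$, and the first equals $\overline{\braket{w_y}{w_0}} = 1$ by definition of a negative witness.

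The only mildly subtle point is keeping track of conjugations when flipping the roles of $x$ and $y$ in the off-diagonal case, which forces the scalar 1 rather than an arbitrary complex number of modulus 1; this is exactly why the definition demands $\braket{w_y}{w_0} = 1$ (real and normalized) rather than just nonzero. Everything else reduces to one-line manipulations of reflections. I would not expect any obstacles beyond bookkeeping, and the proof should fit comfortably within a few short case analyses.
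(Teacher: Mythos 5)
Your proposal is correct and follows essentially the same route as the paper's proof: the eigenvector observation $R_{\H(x)}\ket{w_x} = -(-1)^{f(x)}\ket{w_x}$, the reduction of the constraint to $\delta_{f(x)\neq f(y)}\braket{w_x}{w_y}$, and the computation $\braket{w_x}{w_y} = 1$ in the mixed case via the decomposition of the positive witness as $\ket{w_0}$ plus an element of $\K$ against a negative witness in $\K^{\perp}$ with unit overlap with $\ket{w_0}$. The only cosmetic difference is which of $x,y$ you take to be positive, which is immaterial.
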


    \begin{proof}
        Let $x \in \D$. Observe that if $f(x) = 1$, we have $\ket{w_x} \in \H(x)$, and so $R_{\H(x)}\ket{w_x} = \ket{w_x}$. On the other hand, if $f(x) = 0$, we have $\ket{w_x} \in \H(x)^{\perp}$, and so $R_{\H(x)}\ket{w_x} = -\ket{w_x}$. Thus, we conclude that $R_{\H(x)}\ket{w_x} = -(-1)^{f(x)}\ket{w_x}$, for all $x \in \D$.

        Now, let $x,y \in \D$. We obtain
        \begin{align*}
            \bra{w_x}(I - O_x^{\dagger}O_y)\ket{w_y} &= \braket{w_x}{w_y} - \bra{w_x}R_{\H(x)}R_{\H(y)}\ket{w_y} = \left[1 - (-1)^{f(x) + f(y)}\right]\braket{w_x}{w_y} \\
            &= 2\delta_{f(x) \neq f(y)}\braket{w_x}{w_y}.
        \end{align*}
        If $f(x) = f(y)$, the right-hand evaluates to $0$, so it remains to consider the case where $f(x) \neq f(y)$. Without loss of generality, assume that $f(x) = 0$ and $f(y) = 1$. Then, we find that $\ket{w_x} \in \K^{\perp}$, $\braket{w_x}{w_0} = 1$, and $\ket{w_y} - \ket{w_0} \in \K$. Thus, we find
        \[\braket{w_x}{w_y} = \braket{w_x}{w_0} + \bra{w_x}(\ket{w_y} - \ket{w_0}) = 1 + 0 = 1.\qedhere\]
    \end{proof}

    We observe that span programs are in fact special instances of the more general two-subspace phase estimation algorithms, as formalized by \cite[Definition~3.1]{jeffery2025multidimensional}. Such an object is a $4$-tuple $(\H, x \mapsto \H_A(x), x \mapsto \H_B(x), x \mapsto \ket{\psi_0^x})$, where $\H_A(x),\H_B(x) \subseteq \H$, and $\ket{\psi_0^x} \in \H_B(x)^{\perp}$. It computes a function $f : \D \to \{0,1\}$, where $f(x) = 1$ if and only if $\ket{\psi_0^x} \in \H_A(x) + \H_B(x)$, in which case a vector $\ket{\psi_A^x} \in \H_A(x)$ is a positive witness if $\ket{\psi_0^x} - \ket{\psi_A^x} \in \H_B(x)$. Similarly, a negative witness is a vector $\ket{\psi_{\perp}^x} \in \H_A(x)^{\perp} \cap \H_B(x)^{\perp}$, such that $\braket{\psi_0^x}{\psi_{\perp}^x} = 1$. It is clear that any span program is also a two-subspace phase estimation algorithm, since we can set $\H_A(x) = \H(x)$, $\H_B(x) = \K$, and $\ket{\psi_0^x} = \ket{w_0}$, and then the witnesses are the same for both.

    Next, we prove that every instance of the two-subspace phase estimation algorithm can be turned into a transducer, as follows.

    \begin{theorem}
        \label{thm:phase-estimation-algorithm-to-transducer}
        Let $\mathcal{P} = (\H, x \mapsto \H_A(x), x \mapsto \H_B(x), x \mapsto \ket{\psi_0^x})$ be a two-subspace phase estimation algorithm computing a function $f : \D \to \{0,1\}$. We define
        \[U = -(2\Pi_R - I)(I \oplus (2\Pi_{\H_B(x)} - I))(I \oplus (2\Pi_{\H_A(x)} - I)),\]
        with $R = \Span\{\ket{\psi^x}\}$, and $\ket{\psi^x} := \ket{-} \oplus -\ket{\psi_0^x}$. Then, $U$ is a transducer with witnesses equal to those in the phase estimation algorithm, and transduction action $\ket{-} \overset{U}{\rightsquigarrow} (-1)^{f(x)}\ket{-}$.
    \end{theorem}

    \begin{proof}
        Let $x \in \D$ be a positive instance, and $\ket{\psi_B^x} \in \H_B(x)$ be a corresponding witness vector. We write $\ket{\psi_A^x} = \ket{\psi_0^x} - \ket{\psi_B^x}$, and we observe that
        \begin{align*}
            \begin{bmatrix}
                \ket{-} \\
                \ket{\psi_A^x}
            \end{bmatrix} &\overset{I \oplus (2\Pi_{\H_A(x)} - I)}{\mapsto} \begin{bmatrix}
                \ket{-} \\
                \ket{\psi_A^x}
            \end{bmatrix} = \begin{bmatrix}
                \ket{-} \\
                \ket{\psi_0^x} - \ket{\psi_B^x}
            \end{bmatrix} \overset{I \oplus (2\Pi_{\H_B(x)} - I)}{\mapsto} \begin{bmatrix}
                \ket{-} \\
                -\ket{\psi_0^x} - \ket{\psi_B^x}
            \end{bmatrix} \\
            &\overset{-(2\ket{\psi^x}\!\!\bra{\psi^x} - I)}{\mapsto} \begin{bmatrix}
                -\ket{-} \\
                \ket{\psi_0^x} - \ket{\psi_B^x}
            \end{bmatrix} = \begin{bmatrix}
                -\ket{-} \\
                \ket{\psi_A^x}
            \end{bmatrix}
        \end{align*}
        On the other hand, let $x \in \D$ be a negative instance, and $\ket{\psi_{\perp}^x} \in \H_A(x)^{\perp} \cap \H_B(x)^{\perp}$ be a corresponding witness. Then, we observe that $(\bra{-} \oplus \bra{\psi_{\perp}^x})\ket{\psi^x} = 1 - \braket{\psi_{\perp}^x}{\psi_0^x} = 1 - 1 = 0$. Hence,
        \[\begin{bmatrix}
            \ket{-} \\
            \ket{\psi_{\perp}^x}
        \end{bmatrix} \overset{I \oplus (2\Pi_{\H_A(x)} - I)}{\mapsto} \begin{bmatrix}
            \ket{-} \\
            -\ket{\psi_{\perp}^x}
        \end{bmatrix} \overset{I \oplus (2\Pi_{\H_B(x)} - I)}{\mapsto} \begin{bmatrix}
            \ket{-} \\
            \ket{\psi_{\perp}^x}
        \end{bmatrix} \overset{-(2\ket{\psi^x}\!\!\bra{\psi^x} - I)}{\mapsto} \begin{bmatrix}
            \ket{-} \\
            \ket{\psi_{\perp}^x}
        \end{bmatrix}.\qedhere\]
    \end{proof}

    The previous theorem is an explicit implementation of a transducer. This strongly contrasts with the existing constructions that produce transducers from dual adversary bound solutions, since they merely prove existence and are not constructive~\cite{belovs2023one,belovs2024taming}. One way to interpret the above theorem, thus, is that the objects that make up span programs, or two-subspace phase estimation algorithms, highlight some hidden structure that is not easily recovered directly from the adversary bound solution, and which moreover can be exploited to obtain an explicit implementation of the resulting quantum algorithm.

    We have now provided all the ingredients for an explicit implementation of the span program, namely, one can interpret it as a two-subspace phase estimation algorithm, use \cref{thm:phase-estimation-algorithm-to-transducer} to turn it into a transducer, and then use \cite[Section~7.4]{belovs2023one} to turn it into a quantum algorithm. For ease of reference, we concretely present the resulting algorithm in \cref{alg:span-program-algorithm}.

    \begin{algorithm}[!ht]
        \caption{The span program algorithm}
        \label{alg:span-program-algorithm}
        \textbf{Input:}
        \begin{enumerate}[nosep]
            \item A span program $\mathcal{P} = (\H, x \mapsto \H(x), \K, \ket{w_0})$ on $\D$.
            \item Upper bounds $W_+ > 0$ and $W_- > 0$ on $W_+(\mathcal{P})$ and $W_-(\mathcal{P})$, respectively.
            \item $R_{\H(x)}$: a (controlled) operation that reflects through the subspace $\H(x) \subseteq \H$.
            \item $R_{\K}$: a (controlled) operation that reflects through the subspace $\K \subseteq \H$.
            \item $C_{\ket{w_0}}$: a (controlled, inverse) operation that implements $\ket{\bot} \mapsto \ket{w_0}/\norm{\ket{w_0}}$, for some computational basis state $\ket{\bot} \in \H$.
        \end{enumerate}
        \textbf{Derived parameter:} $K = 18\sqrt{W_+W_-}$.

        \textbf{Output:} $1$ if $x \in \D$ is a positive input for $\mathcal{P}$, $0$ otherwise.

        \textbf{Success probability:} At least $2/3$.

        \textbf{Cost:}
        \begin{enumerate}[nosep]
            \item $O(\sqrt{W_+W_-})$ queries to $R_{\H(x)}$, $R_{\K}$ and $C_{\ket{w_0}}$.
            \item $\widetilde{O}(\sqrt{W_+W_-})$ elementary operations.
            \item $O(\log(W_+W_-) + \log(\dim(\H)))$ qubits.
        \end{enumerate}

        \textbf{Procedure:} $\texttt{SpanProgramAlgorithm}(\mathcal{P}, W_+, W_-, R_{\H(x)}, R_{\K}, C_{\ket{w_0}})$:

        We use the state space $\C^{[K]} \otimes \C^2 \oplus \H$.
        \begin{enumerate}[nosep]
            \item Prepare the state $\ket{\psi_0} := \frac{1}{\sqrt{K}} \sum_{j=1}^k \ket{j}\ket{0} \in \C^{[K]} \otimes \C^2$.
            \item For $j = 1,\dots,K$, perform the following operations:
            \begin{enumerate}[nosep]
                \item Perform $I \oplus R_{\H(x)}$.
                \item Perform $I \oplus R_{\K}$.
                \item Perform $R$, i.e., a reflection through $\Span\{\ket{j}\ket{-} \oplus -\left(W_-/W_+\right)^{1/4}\ket{w_0}\}^{\perp}$.
            \end{enumerate}
            \item Measure according to the projection operator $I_{[K]} \otimes \ket{1}\bra{1}$ and output the result.
        \end{enumerate}
    \end{algorithm}

    \begin{theorem}
        \cref{alg:span-program-algorithm} is a quantum query algorithm in the circuit model that evaluates a span program $\mathcal{P}$ making $O(\sqrt{W_+W_-})$ queries to $R_{\H(x)}$, $R_{\K}$ and $C_{\ket{w_0}}$, with $\widetilde{O}(\sqrt{W_+W_-})$ elementary operations, and using $O(\log(W_+W_-) + \log(\dim(\H)))$ qubits.
    \end{theorem}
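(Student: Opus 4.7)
The plan is to verify three things in turn: correctness with success probability at least $2/3$, the query count of $O(\sqrt{W_+W_-})$, and the time and space overhead. The core handle on correctness is \cref{thm:feasible-sln-adv-bound}: from any span-program witnesses $\ket{w_x}$ we obtain a feasible solution to the dual adversary bound for $f$ with the oracles $R_{\H(x)}$, which is exactly the setting where Belovs and Yolcu's algorithm in \cite[Section~7.4]{belovs2023one} applies. The role of \cref{alg:span-program-algorithm} is to make the unitaries appearing in their construction concrete in the span program setting, so the bulk of the argument is to verify that the explicit choices made in the algorithm realize their abstract procedure.

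For the correctness analysis, I would treat the flag register $\C^2$ as marking the ``answer'' bit, with the computational basis state $\ket{1}$ encoding ``$x$ is positive'' and the projection $I_{[K]} \otimes \ket{1}\bra{1}$ reading it out. For a positive input, the rescaled witness $(W_-/W_+)^{1/4}\ket{w_x}$, coupled through the reflection $R$ to the control register, is preserved by $R_{\H(x)}$ (since $\ket{w_x} \in \H(x)$), and up to the $\K$-component by $R_{\K}$ (since $\ket{w_x} - \ket{w_0} \in \K$); I would track how the amplitude of this trajectory accumulates on the $\ket{1}$-subspace over the $K$ steps and show it crosses $\sqrt{2/3}$ by step $K = 18\sqrt{W_+W_-}$. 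For a negative input, I would construct a low-dimensional invariant subspace containing $\ket{\bot}$ and the rescaled negative witness $(W_+/W_-)^{1/4}\ket{w_x}$, on which the product $R \cdot (I \oplus R_{\K}) \cdot (I \oplus R_{\H(x)})$ acts as a small rotation of angle $O(1/\sqrt{W_+W_-})$, so that after $K$ steps only a bounded fraction of the amplitude has leaked onto the $\ket{1}$-subspace. The $(W_-/W_+)^{1/4}$ rescaling factor in the reflected state is precisely the balance point between the two witness-size bounds.

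For the complexity, each of the $K$ outer iterations applies one controlled $R_{\H(x)}$, one controlled $R_{\K}$, and $O(1)$ calls to $C_{\ket{w_0}}$ (the last entering through the preparation of the vector reflected through by $R$), giving the claimed $O(\sqrt{W_+W_-})$ query count. Step~1 is a uniform superposition preparation over $[K]$, which costs $\widetilde{O}(\log K)$ time by \cref{thm:uniform-state-preparation}. The reflection $R$ is a reflection through a one-dimensional subspace of $\C^{[K]} \otimes \C^2 \oplus \H$, whose generating state is assembled from a uniform superposition on $[K]$, a Hadamard producing $\ket{-}$, and a single call to $C_{\ket{w_0}}$ with the appropriate amplitude; by \cref{thm:quantum-state-reflection} this is $\widetilde{O}(\log(K \cdot \dim(\H)))$ elementary operations on top of the oracle call. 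Multiplying by $K$ yields the $\widetilde{O}(\sqrt{W_+W_-})$ time bound, and the qubit count $O(\log(W_+W_-) + \log(\dim(\H)))$ is immediate from $\log_2 \dim(\C^{[K]} \otimes \C^2 \oplus \H)$.

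The main obstacle is the correctness analysis itself, since the entire point of this section is to derive the success-probability bound without phase estimation and without the effective spectral gap lemma. I expect the delicate step to be the negative-input case: one must argue directly, via an explicit small invariant subspace and a triangle-inequality estimate controlled by $\braket{w_x}{w_0} = 1$ and $\norm{\ket{w_x}}^2 \leq W_-$, that the $K$-fold composition of the three reflections keeps the amplitude on the flag state $\ket{1}$ below the $1/3$ error threshold. The positive-input case is comparatively straightforward once the right trajectory is identified, but I would budget most of the effort for writing down the negative-case invariant subspace explicitly, checking the action of the three reflections on its basis, and then optimizing the constant in $K = 18\sqrt{W_+W_-}$ against both sides of the analysis simultaneously.
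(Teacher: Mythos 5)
Your complexity accounting (queries, elementary gates via \cref{thm:uniform-state-preparation} and the SWAP-test reflection, and qubit count) is correct and matches the paper. The gap is in the correctness strategy, which is the heart of the result. You propose to show that for positive inputs the amplitude on the $\ket{1}$-flag \emph{accumulates over the $K$ iterations} until it crosses $\sqrt{2/3}$, and that for negative inputs the product of the three reflections is a \emph{small rotation of angle $O(1/\sqrt{W_+W_-})$} whose leakage over $K$ steps stays bounded. That is the phase-estimation/effective-spectral-gap picture that this section is explicitly constructed to avoid, and it does not match the algorithm's structure: the reflection $R$ in iteration $j$ targets the single control branch $\ket{j}$, so nothing is being iterated on a fixed subspace $K$ times. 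A direct attempt to bound the leakage of a $K$-fold product of reflections without a spectral gap runs you straight back into the lemma the paper removes.

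The actual mechanism is a one-shot exactness plus perturbation argument. On the subspace $\Span\{\ket{j}\}\otimes\C^2\oplus\H$, a \emph{single} iteration maps $\sqrt{2}\ket{0}\oplus\ket{w_x'}$ exactly to $\sqrt{2}\ket{f(x)}\oplus\ket{w_x'}$, where $\ket{w_x'}$ is the rescaled positive or negative witness; there is no accumulation and no leakage at all along this trajectory. The sole source of error is that the ideal initial state $\ket{\chi_0}$ contains the unknown witness $\ket{w_x'}/\sqrt{2K}$ in the $\H$-component, which the algorithm cannot prepare; it starts from $\ket{\psi_0}$ instead. The role of $K = 18\sqrt{W_+W_-}$ is purely to dilute this component: $\norm{\ket{\chi_0}-\ket{\psi_0}}^2 = \norm{\ket{w_x'}}^2/(2K) \le 1/36$ for both input types (this is where the $(W_-/W_+)^{1/4}$ balancing enters), and unitarity then transports this $1/6$ norm error to the final state, giving success probability at least $(1-1/6)^2 > 2/3$. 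Without this idea — run the exact dynamics on a perturbed initial state and invoke unitary stability — the proof does not go through along the lines you sketch.
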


    \begin{proof}
        The claims on the costs are readily verified, so it remains to analyze the success probability. Let $f : \D \to \{0,1\}$ be the function that $\mathcal{P}$ computes. We observe from \cref{thm:phase-estimation-algorithm-to-transducer} that Step~2 of \cref{alg:span-program-algorithm} implements the transducer formed from the span program $\sqrt{W_-/W_+}\mathcal{P}$, and so it has transduction action $\ket{-} \rightsquigarrow (-1)^{f(x)}\ket{-}$. We also easily observe that it acts as identity on $\ket{+}$, with no witness vector, and so by linearity it implements the transduction action $\ket{0} \rightsquigarrow \ket{f(x)}$, where the witnesses are $\sqrt{2}$ smaller. Using \cite[Section~7.4]{belovs2023one}, we obtain that the total norm error made by the algorithm is upper bounded by
        \[\sqrt{\max\left\{\max_{x \in f^{-1}(1)} \frac{w_+(x,\sqrt{W_-/W_+}\mathcal{P})}{2 \cdot 18\sqrt{W_+W_-}}, \max_{x \in f^{-1}(0)} \frac{w_+(x,\sqrt{W_-/W_+}\mathcal{P})}{2 \cdot 18\sqrt{W_+W_-}}\right\}} \leq \sqrt{\max\left\{\frac{W_+(\mathcal{P})}{36W_+}, \frac{W_-(\mathcal{P})}{36W_-}\right\}} \leq \frac16,\]
        and so the total success probability is reduced by at most $1/3$.
    \end{proof}

    We end this section by remarking that this algorithm's analysis does not rely on the effective spectral gap lemma. This is somewhat surprising, since all previous constructions for the span program algorithm, or the two-subspace phase estimation algorithm, made use of this lemma. We believe that this simplified analysis can help to adapt this construction to a wider variety of settings, but we leave that for future research.

    \section{The graph composition framework}
    \label{sec:graph-composition}

    In this section, we introduce the graph composition framework. To that end, we first introduce some necessary background on electrical networks, in \cref{subsec:electrical-networks}, before formally stating our main result, i.e., \cref{thm:graph-composition}, in \cref{subsec:graph-composition}.

    \subsection{Electrical networks}
    \label{subsec:electrical-networks}

    We first give an intuitive sketch of the setting that we consider in this setting. The idea is to consider an undirected graph $G = (V,E)$, where each edge $e \in E$ represents an electrical wire with resistance $r_e \in [0,\infty]$. We can think of $r_e = 0$ as $e$ being a shortcut in the circuit, and $r_e = \infty$ as a missing wire, or cut. Next, we consider a source node $s$ and a target node $t$, and send unit current, or flow, into $s$. Then, Kirchhoff's laws predict how the flow distributes over the graph, i.e., how much flow $f_e$ passes through the edge $e \in E$. These laws are derived from the physical principle of ``path of least resistance'', i.e., that the network will find a steady state in which the energy dissipation is minimized.

    We can phrase Kirchoff's laws entirely in linear algebraic terms, by considering this minimization of energy as a shortest vector problem over an affine subspace encoding all current-preserving flows in the network. We formalize this idea in the following definition.

    \begin{definition}[Electrical networks]
        \label{def:electrical-networks}
        Let $G = (V,E)$ be an undirected graph, with \textit{resistances} $r : E \to [0,\infty]$. With every edge $e \in E$, we associate a default direction, i.e., it has a head $e_+$ and a tail $e_-$. We also write $N_+(v), N_-(v) \subseteq E$ to be the set of outgoing and incoming edges to $V$, respectively. Now, we define the following objects:
        \begin{enumerate}[nosep]
            \item The Hilbert space $\H_G = \C^E$ is the \textit{flow space} of $G$, i.e., the set $\{\ket{e} : e \in E\}$ forms an orthonormal basis of the complex Hilbert space $\H_G$.
            \item Every function $f : E \to \C$ that satisfies $f_e = 0$ if $r_e = \infty$ is a \textit{flow} on $G$. To every flow $f$ on $G$, we associate a \textit{flow state} $\ket{f_{G,r}} = \sum_{e \in E} f_e\sqrt{r_e}\ket{e} \in \H_G$, where here and in the following we take $0 \cdot \infty = 0$. Sometimes, we write $\ket{f}$ instead of $\ket{f_{G,r}}$, if the graph and the resistances are clear from context.
            \item If a flow $f$ on $G$ satisfies for all $v \in V$, $\sum_{e \in N_+(v)} f_e - \sum_{e \in N_-(v)} f_e = 0$, then $f$ is a \textit{circulation}. We denote the set of circulations by $C_G$, and the \textit{circulation space} is $\mathcal{C}_{G,r} = \{\ket{f_{G,r}} : f \in C_G\} \subseteq \H_G$.
            \item Let $s,t \in V$ with $s \neq t$. If a flow $f$ on $G$ satisfies
            \[\sum_{e \in N_+(v)} f_e - \sum_{e \in N_-(v)} f_e = \begin{cases}
                1, & \text{if } v = s, \\
                -1, & \text{if } v = t, \\
                0, & \text{otherwise},
            \end{cases}\]
            then $f$ is a \textit{unit $st$-flow} on $G$. We let $ F_{G,s,t}$ be the set of all unit $st$-flows on $G$, and the \textit{unit-$st$-flow space} on $G$ is $\mathcal{F}_{G,s,t,r} = \{\ket{f_{G,r}} : f \in F_{G,s,t}\} \subseteq \H_G$. We say that $s$ and $t$ are \textit{connected} in $G$ if $F_{G,s,t}$ is non-empty.
            \item Let $f$ be a flow on $G$. The norm squared of $\ket{f}$, i.e., $\norm{\ket{f_{G,r}}}^2 = \sum_{e \in E} |f_e|^2r_e$ is the \textit{energy} of $f$. If $s$ and $t$ are connected in $G$, we define the \textit{minimum-energy unit $st$-flow} and the \textit{effective resistance} between $s$ and $t$, respectively, by
            \[f^{\min}_{G,s,t,r} := \underset{f \in F_{G,s,t}}\argmin \norm{\ket{f_{G,r}}}^2, \qquad \text{and} \qquad R_{G,s,t,r} := \norm{\ket{f^{\min}_{G,s,t,r}}}^2.\]
            On the other hand, if $s$ and $t$ are not connected in $G$, then $R_{G,s,t,r} = \infty$.
        \end{enumerate}
    \end{definition}

    We start by proving several immediate properties of this definition.

    \begin{lemma}[Electric network properties]
        \label{lem:electric-network-properties} We have the following properties:
        \begin{enumerate}[nosep]
            \item $\mathcal{C}_{G,r}$ is a linear subspace of $\H_G$.
            \item $\mathcal{F}_{G,s,t,r} = \ket{f^{\min}_{G,s,t,r}} + \mathcal{C}_{G,r}$ and $\ket{f^{\min}_{G,s,t,r}} \in \mathcal{C}_{G,r}^{\perp}$.
        \end{enumerate}
    \end{lemma}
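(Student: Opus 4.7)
The plan is to prove both parts by exploiting the linearity of the flow-state map $f \mapsto \ket{f_{G,r}}$ together with the affine structure of the set of unit $st$-flows.

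\textbf{Part (1).} I would first observe that both the definition of a circulation and the flow-state map are linear. Concretely, the circulation conditions $\sum_{e \in N_+(v)} f_e - \sum_{e \in N_-(v)} f_e = 0$ (one per vertex), together with the constraint $f_e = 0$ for every $e$ with $r_e = \infty$, are homogeneous linear equations in the coordinates of $f \in \C^E$, so $C_G$ is a linear subspace of $\C^E$. The map $f \mapsto \ket{f_{G,r}} = \sum_{e \in E} f_e \sqrt{r_e} \ket{e}$ is $\C$-linear on this subspace (any infinite-resistance term vanishes automatically by the convention $0 \cdot \infty = 0$), so its image $\mathcal{C}_{G,r}$ is a linear subspace of $\H_G$.

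\textbf{Part (2), affine description.} I would verify both inclusions directly. For $\supseteq$, adding any circulation $c$ to $f^{\min}_{G,s,t,r}$ preserves the net-flow balance at every vertex (the $\pm 1$ at $s,t$ and $0$ elsewhere), hence $f^{\min}_{G,s,t,r} + c \in F_{G,s,t}$, and then $\ket{f^{\min}_{G,s,t,r}} + \ket{c} \in \mathcal{F}_{G,s,t,r}$. For $\subseteq$, given any $f \in F_{G,s,t}$ the difference $f - f^{\min}_{G,s,t,r}$ has zero net flow at every vertex (the $+1$ and $-1$ contributions cancel at $s,t$), so $f - f^{\min}_{G,s,t,r} \in C_G$ and $\ket{f} = \ket{f^{\min}_{G,s,t,r}} + \ket{f - f^{\min}_{G,s,t,r}} \in \ket{f^{\min}_{G,s,t,r}} + \mathcal{C}_{G,r}$.

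\textbf{Part (2), orthogonality.} The remaining claim is a standard first-order-optimality argument. Fix $\ket{c} \in \mathcal{C}_{G,r}$. By the affine description just proved, $\ket{f^{\min}_{G,s,t,r}} + \alpha\ket{c} \in \mathcal{F}_{G,s,t,r}$ for every $\alpha \in \C$, so by minimality of the energy
\[\norm{\ket{f^{\min}_{G,s,t,r}}}^2 \;\leq\; \norm{\ket{f^{\min}_{G,s,t,r}} + \alpha\ket{c}}^2 = \norm{\ket{f^{\min}_{G,s,t,r}}}^2 + 2\operatorname{Re}\bigl(\alpha \braket{f^{\min}_{G,s,t,r}}{c}\bigr) + |\alpha|^2 \norm{\ket{c}}^2.\]
If $\braket{f^{\min}_{G,s,t,r}}{c}$ were nonzero (so in particular $\ket{c} \neq 0$), then taking $\alpha = -\overline{\braket{f^{\min}_{G,s,t,r}}{c}}/\norm{\ket{c}}^2$ would make the right-hand side strictly smaller than $\norm{\ket{f^{\min}_{G,s,t,r}}}^2$, contradicting minimality. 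Hence $\braket{f^{\min}_{G,s,t,r}}{c} = 0$ for all $\ket{c} \in \mathcal{C}_{G,r}$, which is exactly $\ket{f^{\min}_{G,s,t,r}} \in \mathcal{C}_{G,r}^{\perp}$. The only thing to be careful about is bookkeeping around the edge-weight conventions ($f_e$ forced to $0$ on infinite-resistance edges together with $0 \cdot \infty = 0$, and zero-resistance edges contributing nothing to $\ket{f}$ at all); these conventions are compatible with both the linearity used in Part~(1) and the Kirchhoff-additivity used in Part~(2), so no real obstacle arises beyond the standard ``closest point in an affine subspace is orthogonal to the direction subspace'' principle.
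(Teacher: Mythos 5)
Your proposal is correct and follows essentially the same route as the paper: linearity of the circulation conditions and of the flow-state embedding for part (1), and the observation that differences of unit $st$-flows are circulations for the affine description in part (2). The only difference is one of detail — the paper treats the orthogonality of the minimum-norm point to the direction subspace as an immediate standard fact, whereas you spell out the first-order variational argument; both are fine.
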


    \begin{proof}
        We start by observing that the embedding of flows into the flow space is linear. Furthermore, it is also clear that the set of circulations is closed under addition and scalar multiplication, from which it follows that $\mathcal{C}_{G,r}$ is a linear subspace of $\H_G$, proving the first claim.

        For the second claim, suppose we have two unit $st$-flows on $G$, $f$ and $f'$. Then, it is easily verified that their difference $f-f' \in C_G$ is a circulation. Thus, the set of all unit-$st$-flow states is an affine subspace of $\H_G$. Moreover, the unique vector with smallest norm in this affine subspace, which is $\ket{f^{\min}_{G,s,t,r}}$ by definition, is indeed orthogonal to $\mathcal{C}_{G,r}$, completing the proof.
    \end{proof}

    The above lemma gives us a direct connection between the circulation space and the unit $st$-flow space. It turns out we can also neatly characterize the orthogonal complement of the circulation space, through potential functions. We define them here.

    \begin{definition}[Potential functions]
        Let $G = (V,E)$ be an undirected graph with resistances $r : E \to [0,\infty]$. A \textit{potential function} on $G$ is a function $U : V \to \C$, such that for all $e \in E$ where $r_e = 0$, we have $U_{e_-} = U_{e_+}$. We define a flow $f_{G,U,r}$ derived from the potential $U$, such that for all $e \in E$,
        \[(f_{G,U,r})_e := \frac{U_{e_-} - U_{e_+}}{r_e},\]
        where we use the convention that $0/0 = 0$. We refer to $\ket{f_{G,U,r}}$ as the \textit{potential state}.
    \end{definition}

    The core observation we make is that a vector in the flow space is a potential state, if and only if it is orthogonal to the circulation subspace. We make this observation precise in the following theorem, and we show that we can also characterize the effective resistance in terms of a minimization over potential functions.

    \begin{theorem}[Properties of potential functions]
        \label{thm:potential-function-properties}
        Let $G = (V,E)$ be an undirected graph with resistances $r : E \to [0,\infty]$. We have the following properties of potential functions:
        \begin{enumerate}[nosep]
            \item A flow state is a potential state if and only if it is orthogonal to the circulation subspace.
            \item For any potential function $U$ on $G$ and any unit $st$-flow $f$ on $G$, we have $\braket{f_{G,r}}{f_{G,U,r}} = U_s - U_t$.
            \item We have
            \[\min_{\substack{U \text{ potential on } G \\ U_s - U_t = 1}} \norm{\ket{f_{G,U,r}}}^2 = R^{-1}_{G,s,t,r}.\]
        \end{enumerate}
    \end{theorem}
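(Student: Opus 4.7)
The plan is to establish a shared inner-product identity that immediately yields parts (1) and (2), and then to derive part (3) from these together with Cauchy--Schwarz and \cref{lem:electric-network-properties}.

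For the identity, I would compute, for any flow $h$ on $G$ and any potential $U$,
\[\braket{h_{G,r}}{f_{G,U,r}} = \sum_{e \in E} \overline{h_e}\,(U_{e_-} - U_{e_+}) = \sum_{v \in V} U_v \left(\sum_{e \in N_+(v)} \overline{h_e} - \sum_{e \in N_-(v)} \overline{h_e}\right),\]
by unpacking definitions, using that $\sqrt{r_e}$ is real, and grouping the sum according to which vertex each $U_v$ attaches to; the convention $0 \cdot \infty = 0$ and the constraint $U_{e_-} = U_{e_+}$ at $r_e = 0$ ensure the edge cases $r_e \in \{0, \infty\}$ contribute nothing. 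Part (2) drops out by taking $h$ to be the given unit $st$-flow, since only $v \in \{s, t\}$ then contribute and the divergence values $\pm 1$ are real. For the forward direction of (1), substituting a circulation $g$ for $h$ makes every vertex contribution vanish, so $\braket{g_{G,r}}{f_{G,U,r}} = 0$, and taking complex conjugates gives $\braket{f_{G,U,r}}{g_{G,r}} = 0$ as well.

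The main obstacle is the backward direction of (1): given a flow state $\ket{h_{G,r}} \in \mathcal{C}_{G,r}^{\perp}$, we must exhibit a potential $U$ with $\ket{h_{G,r}} = \ket{f_{G,U,r}}$. I would proceed by explicit construction, after first restricting attention to the subgraph $G'$ of edges with $r_e \in (0, \infty)$, since edges with $r_e \in \{0, \infty\}$ contribute trivially to flow states and can be reinstated at the end. In each connected component of $G'$, fix a reference vertex $v_0$ and set $U_{v_0} = 0$; for any other vertex $v$, define $U_v$ by ``integrating'' $h_e r_e$ with a $\pm 1$ sign determined by the traversal direction along any path from $v_0$ to $v$. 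Well-definedness is precisely the content of the orthogonality hypothesis: the discrepancy between two path choices is a signed sum along a closed walk, which decomposes into simple cycles, and the $\pm 1$-indicator of each simple cycle is a circulation whose inner product with $\ket{h_{G,r}}$ vanishes. A one-edge telescoping then yields $h_e r_e = U_{e_-} - U_{e_+}$ for every $e \in E(G')$. Finally, one extends $U$ back to $G$: the $r_e = 0$ edges correspond to contracted pairs of vertices on which $U$ is automatically constant, and the $r_e = \infty$ edges carry $(f_{G,U,r})_e = 0 = h_e$.

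For part (3), assume first that $s$ and $t$ are connected, so $R := R_{G,s,t,r} < \infty$. Any potential $U$ with $U_s - U_t = 1$ satisfies $\braket{f^{\min}_{G,s,t,r}}{f_{G,U,r}} = 1$ by (2), and Cauchy--Schwarz then yields $\norm{\ket{f_{G,U,r}}}^2 \geq 1 / \norm{\ket{f^{\min}_{G,s,t,r}}}^2 = 1/R$. For the matching upper bound, \cref{lem:electric-network-properties} places $\ket{f^{\min}_{G,s,t,r}}$ in $\mathcal{C}_{G,r}^{\perp}$, so (1) supplies a potential $U^*$ with $\ket{f^{\min}_{G,s,t,r}} = \ket{f_{G,U^*,r}}$; applying (2) with $\norm{\ket{f^{\min}_{G,s,t,r}}}^2 = R$ gives $U^*_s - U^*_t = R$, and rescaling $U := U^*/R$ produces a valid potential attaining energy exactly $1/R$. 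The disconnected case $R = \infty$ reduces to exhibiting a zero-energy potential with $U_s - U_t = 1$, namely $U = 1$ on the finite-resistance component of $s$ and $U = 0$ elsewhere: every edge on which $U$ changes value must have $r_e = \infty$, so $\ket{f_{G,U,r}} = 0$ by the conventions on flow states.
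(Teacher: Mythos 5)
Your proposal is correct and follows essentially the same route as the paper: the same summation-by-parts identity gives part (2) and the forward direction of (1), the same path-integration construction (with well-definedness coming from orthogonality to cycle circulations) gives the converse of (1), and part (3) is the same minimum-norm-over-an-affine-constraint computation, merely phrased via Cauchy--Schwarz plus the explicit rescaling $U^*/R$. The only substantive additions are cosmetic: you treat the $r_e = 0$ edges by contraction rather than keeping them in the path (equivalent, since they contribute nothing to the integral), and you explicitly dispatch the disconnected case of (3), which the paper leaves implicit.
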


    \begin{proof}
        For the first claim, we show the two implications separately. First, let $U$ be a potential function on $G$, and $f \in C_G$. Then,
        \[\braket{f_{G,r}}{f_{G,U,r}} = \sum_{e \in E} \overline{f_e}(f_{G,U,r})_er_e = \sum_{e \in E} (U_{e_-} - U_{e_+})\overline{f_e} = \sum_{v \in V} U_v\left(\sum_{e \in N_+(v)} \overline{f_e} - \sum_{e \in N_-(v)} \overline{f_e}\right) = 0,\]
        and so indeed $\ket{f_{G,U,r}} \in \mathcal{C}_{G,r}^{\perp}$.

        For the other direction, let $f$ be a flow on $G$, and suppose that $\ket{f_{G,r}} \in \mathcal{C}_{G,r}^{\perp}$. For the moment, we remove edges $e \in E$ for which $r_e = \infty$. In every connected component of $G$, we pick a vertex $v$ arbitrarily, and set $U(v) = 0$. Next, for every $w \in V$ in that connected component, we find a path to $v$, and we denote the edges by $e_1, \dots, e_k$, and the corresponding directions by $m_1, \dots m_k$, such that $m_j = -1$ if we traverse $e_j$ in the right direction, and $1$ otherwise. Now, we let
        \[U(w) = \sum_{j=1}^k m_jf_{e_j}r_{e_j},\]
        To check whether this defines a well-defined potential function, let $U'(w)$ defined through a different path $e_1', \dots, e_{k'}'$, with signs $m_1', \dots, m_{k'}'$. Now, we observe that $e_1, \dots, e_k, e_{k'}, \dots, e_1'$ with corresponding signs $m_1, \dots, m_k, -m_{k'}', \dots, -m_1'$ is a cycle, and as such, if we send a unit flow $f'$ along this cycle, we obtain a circulation state
        \[\ket{f'_{G,r}} = \sum_{j=1}^k m_j\sqrt{r_{e_j}} - \sum_{j=1}^{k'} m_j'\sqrt{r_{e_j'}} \in \mathcal{C}_{G,r}.\]
        Since we assumed that $\ket{f_{G,r}} \in \mathcal{C}_{G,r}^{\perp}$, we must have in particular that $\braket{f'_{G,r}}{f_{G,r}} = 0$. As such, we find
        \[U(w) - U'(w) = \sum_{j=1}^k m_jf_{e_j}r_{e_j} - \sum_{j=1}^{k'} m_j'f_{e_j'}r_{e_j'} = \braket{f'_{G,r}}{f_{G,r}} = 0,\]
        and hence $U(w) = U'(w)$, which implies that $U$ is well-defined. Moreover, for any edge $e \in E$ where $r_e = 0$, we observe from the definition that indeed $U_{e_-} = U_{e_+}$, and so $U$ is indeed a potential function. Finally, for all $e \in E$ for which $r_e \neq \infty$, we indeed find that
        \[f_e = \frac{U_{e_-} - U_{e_+}}{r_e},\]
        and for all edges $e \in E$ for which $r_e = \infty$, we have $f_e = 0$ by definition, which means that the above relation also holds in that case. As such, $\ket{f_{G,r}} = \ket{f_{G,U,r}}$, completing the proof of the first claim.

        For the second claim, let $f \in F_{G,s,t}$ be a unit $st$-flow, and $U$ a potential function on $G$. We have
        \[\braket{f_{G,r}}{f_{G,U,r}} = \sum_{e \in E} \overline{f_e}(f_{G,U,r})_er_e = \sum_{e \in E} (U_{e_-} - U_{e_+})\overline{f_e} = \sum_{v \in V} U_v\left(\sum_{e \in N_+(v)} \overline{f_e} - \sum_{e \in N_-(v)} \overline{f_e}\right) = U_s - U_t,\]
        which proves the second claim.

        For the final claim, we observe that the states $\ket{f_{G,U,r}}$ we are minimizing over, are exactly the states in $\mathcal{C}_{G,r}^{\perp}$ that have inner product $1$ with all vectors in $\mathcal{F}_{G,s,t,r}$, by the first and second claim. Thus, we conclude
        \[\min_{\substack{U \text{ potential on } G \\ U_s - U_t = 1}} \norm{\ket{f_{G,U,r}}}^2 = \min_{\substack{\ket{f} \in \mathcal{C}_{G,r}^{\perp} \\ \forall \ket{f'} \in \mathcal{F}_{G,s,t,r}, \braket{f'}{f} = 1}} \norm{\ket{f}}^2 = \min_{\substack{\ket{f} \in \mathcal{C}_{G,r}^{\perp} \\ \braket{f^{\min}_{G,s,t,r}}{f} = 1}} \norm{\ket{f}}^2 = \frac{1}{\norm{\ket{f^{\min}_{G,s,t,r}}}^2} = R_{G,s,t,r}^{-1}.\qedhere\]
    \end{proof}

    \subsection{Span program composition through graphs}
    \label{subsec:graph-composition}

    Inspired by the theory of electrical networks, we show how to compose span programs on a common domain $\D$, by associating them to the edges of an undirected graph $G = (V,E)$ with distinct source and sink nodes $s,t \in V$. The idea is that every input $x \in \D$ will accept on the span programs associated to some of the edges $e \in E$, and we denote the subset of these edges $E(x) \subseteq E$. The resulting composed span program, now, will accept if and only if there is a path from $s$ to $t$ exclusively using these accepting edges $E(x)$.

    We start by formally introducing the construction, in the following definition.

    \begin{definition}[Graph composition]
        \label{def:graph-composition}
        Let $G = (V,E)$ be a connected, undirected graph. Let $\D$ be a finite set, and for every edge $e \in E$, let $\mathcal{P}^e = (\H^e, x \mapsto \H^e(x), \K^e, \ket{w_0^e})$ be a span program on $\D$. Let $s,t \in V$ with $s \neq t$, such that $s$ and $t$ are connected in $G$. For all $e \in E$, we write $r_e = \norm{\ket{w_0^e}}^2$. For all $x \in \D$, we let
        \[\H = \bigoplus_{e \in E} \H^e, \qquad \text{and} \qquad \H(x) = \bigoplus_{e \in E} \H^e(x).\]
        Now, we define $\mathcal{E} : \H_G \to \H$ as a linear isometric embedding that for all $e \in E$ maps $\ket{e} \mapsto \ket{w_0^e}/\norm{\ket{w_0^e}}$. We let
        \[\K = \bigoplus_{e \in E} \K^e \oplus \mathcal{E}(\mathcal{C}_{G,r}), \qquad \text{and} \qquad \ket{w_0} = \mathcal{E}\left(\ket{f^{\min}_{G,s,t,r}}\right).\]
        The span program $\mathcal{P} = (\H, x \mapsto \H(x), \K, \ket{w_0})$ is the graph composition of $G$ with source node $s$, sink node $t$, and span programs $(\mathcal{P}^e)_{e \in E}$.
    \end{definition}

    In order to check the well-definedness of this definition, we must check that $\ket{w_0} \in \K^{\perp}$, cf.\ \cref{def:span-program}. Indeed, since for every $e \in E$, we have $\ket{w_0^e} \in (\K^e)^{\perp}$, we find that $\img(\mathcal{E}) \subseteq \oplus_{e \in E} (\K^e)^{\perp}$, and so it remains to check that $\ket{w_0} \in \mathcal{E}(\mathcal{C}_{G,r})^{\perp}$. To that end, observe that $\mathcal{E}$ is an isometric embedding, and so it is sufficient to check whether $\ket{f^{\min}_{G,s,t,r}} \in \mathcal{C}_{G,r}^{\perp}$, which we know to be true from \cref{lem:electric-network-properties}.

    Next, we compute the witness sizes of graph-composed span programs.

    \begin{theorem}[Graph composition witness sizes]
        \label{thm:graph-composition}
        Let $G = (V,E)$ be an undirected graph with $s,t \in V$, $s \neq t$, such that $s$ and $t$ are connected in $G$. Let $\mathcal{P}$ be the graph composition of $G$ with source node $s$, sink node $t$, and span programs $(\mathcal{P}^e)_{e \in E}$. Let $x \in \D$. Then,
        \begin{enumerate}[nosep]
            \item $w_+(x,\mathcal{P}) = R_{G,s,t,r^+}$, where for all $e \in E$, $r^+_e = w_+(x,\mathcal{P}^e)$.
            \item $w_-(x,\mathcal{P}) = R_{G,s,t,r^-}^{-1}$, where for all $e \in E$, $r^-_e = w_-(x,\mathcal{P}^e)^{-1}$.
        \end{enumerate}
    \end{theorem}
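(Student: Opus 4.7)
The plan is to derive both claims by decomposing a candidate witness $\ket{w_x}$ across the edges of $G$, reducing the feasibility conditions from \cref{def:witnesses} to optimizations over unit $st$-flows or potential functions on $G$, and then invoking \cref{lem:electric-network-properties} and \cref{thm:potential-function-properties}. Throughout, the key structural fact is that $\K$ decomposes as the orthogonal direct sum $\bigoplus_e \K^e \oplus \mathcal{E}(\mathcal{C}_{G,r})$, since $\img(\mathcal{E}) \subseteq \bigoplus_e \Span\{\ket{w_0^e}\} \subseteq \bigoplus_e (\K^e)^{\perp}$ and $\mathcal{E}$ is an isometry.

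For the first claim, I would write $\ket{w_x} = \bigoplus_{e \in E} \ket{w_x^e}$ with $\ket{w_x^e} \in \H^e(x)$, and expand $\ket{w_x} - \ket{w_0} \in \K$ through the above decomposition. The $e$-th component then reads $\ket{w_x^e} - g_e \ket{w_0^e} \in \K^e$, where $g := f^{\min}_{G,s,t,r} + c$ ranges over all unit $st$-flows on $G$ (with resistances $r$) as $c$ varies over $\mathcal{C}_{G,r}$. For fixed $g$, minimizing $\|\ket{w_x^e}\|^2$ gives $|g_e|^2 w_+(x,\mathcal{P}^e) = |g_e|^2 r^+_e$ when $g_e \neq 0$ (since $\ket{w_x^e}/g_e$ must then be a positive witness for $x$ under $\mathcal{P}^e$), and $0$ when $g_e = 0$. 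Feasibility forces $g_e = 0$ whenever $r^+_e = \infty$, so $g$ becomes exactly a unit $st$-flow on $G$ with resistances $r^+$, and the minimum of $\sum_e |g_e|^2 r^+_e$ over such $g$ is $R_{G,s,t,r^+}$ by definition.

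For the second claim, the orthogonality $\ket{w_x} \in \H(x)^{\perp} \cap \K^{\perp}$ decomposes into $\ket{w_x^e} \in \H^e(x)^{\perp} \cap (\K^e)^{\perp}$ for each $e$, together with $\ket{w_x} \in \mathcal{E}(\mathcal{C}_{G,r})^{\perp}$. The last condition translates (via the isometry $\mathcal{E}$) to the statement that the numbers $\braket{w_x^e}{w_0^e}$ are the coordinates, in the flow space, of a vector in $\mathcal{C}_{G,r}^{\perp}$; by \cref{thm:potential-function-properties}(1) there is therefore a potential function $U$ on $G$ with $\braket{w_x^e}{w_0^e} = U_{e_-} - U_{e_+}$ for every $e$. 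Combined with \cref{thm:potential-function-properties}(2), the normalization $\braket{w_x}{w_0} = 1$ becomes $U_s - U_t = 1$. For fixed $U$, minimizing each $\|\ket{w_x^e}\|^2$ (by choosing a scaled minimal negative witness for $\mathcal{P}^e$ when the gap is nonzero) gives $|U_{e_-} - U_{e_+}|^2 / r^-_e$; feasibility forces $U_{e_-} = U_{e_+}$ whenever $r^-_e = 0$, so $U$ ranges precisely over potentials on $G$ with resistances $r^-$ satisfying $U_s - U_t = 1$, and by \cref{thm:potential-function-properties}(3) the minimum equals $R_{G,s,t,r^-}^{-1}$.

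The main obstacle I expect is bookkeeping the edge cases so that the conventions $a/0 = \infty$, $a/\infty = 0$, and $0 \cdot \infty = 0$ line up correctly with the graph-theoretic constraints: edges with $r^+_e = \infty$ must be exactly those on which feasibility forbids any positive-witness flow, and dually $r^-_e = 0$ edges must be exactly those on which every feasible potential is flat. Once this correspondence is pinned down, the per-edge optimizations decouple and are tight by construction, so the remaining identifications with $R_{G,s,t,r^\pm}$ follow immediately from \cref{thm:potential-function-properties}.
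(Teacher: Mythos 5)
Your proposal is correct and follows essentially the same route as the paper's proof: decompose the witness over the edge spaces $\H^e$ using the orthogonal splitting of $\K$, identify the cross-edge coefficients with a unit $st$-flow (positive case, via \cref{lem:electric-network-properties}) or with a potential difference (negative case, via \cref{thm:potential-function-properties}), and then optimize each edge component separately using the minimal witnesses of the $\mathcal{P}^e$. The edge cases you flag ($r^+_e = \infty$ forcing zero flow, $r^-_e = 0$ forcing flat potentials) are exactly the ones the paper's chain of equivalences handles via its case distinction on $f_e = 0$ versus $f_e \neq 0$.
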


    \begin{proof}
        We start with the first claim. Let $\ket{w} \in \H$. We write $\ket{\overline{w}_e} = \Pi_{\H^e}\ket{w}$. Moreover, we let $f_e := \braket{w_0^e}{\overline{w}_e}/\norm{\ket{w_0^e}}^2$. If $f_e \neq 0$, we also write $\ket{w^e} := \ket{\overline{w}_e}/f_e$. Now, we observe the following sequence of equivalences:
        \begin{align*}
            &\ket{w} \in \mathcal{W}_+(x,\mathcal{P}) \Leftrightarrow \ket{w} \in \H(x) \land \ket{w} - \ket{w_0} \in \K \\
            &\quad \Leftrightarrow \left[\forall e \in E, \ket{\overline{w}_e} \in \H^e(x) \land \Pi_{\Span\{\ket{w_0^e}\}^{\perp}}\ket{\overline{w}_e} \in \K^e\right] \land \sum_{e \in E} \Pi_{\Span\{\ket{w_0^e}\}}\ket{\overline{w}_e} \in \ket{w_0} + \mathcal{E}(\mathcal{C}_{G,r}) \\
            &\quad \Leftrightarrow \left[\forall e \in E, \ket{\overline{w}_e} \in \H^e(x) \land \ket{\overline{w}_e} - \frac{\braket{w_0^e}{\overline{w}_e}}{\norm{\ket{w_0^e}}^2}\ket{w_0^e} \in \K^e\right] \land \sum_{e \in E} \frac{\braket{w_0^e}{\overline{w}_e}}{\norm{\ket{w_0^e}}^2}\ket{w_0^e} \in \mathcal{E}(\ket{f^{\min}_{G,s,t,r}}) + \mathcal{E}(\mathcal{C}_{G,r}) \\
            &\quad \Leftrightarrow \left[\forall e \in E, \ket{\overline{w}_e} \in \H^e(x) \land \ket{\overline{w}_e} - f_e\ket{w_0^e} \in \K^e\right] \land \sum_{e \in E} f_e\ket{w_0^e} \in \mathcal{E}(\ket{f^{\min}_{G,s,t,r}} + \mathcal{C}_{G,r}) \\
            &\quad \Leftrightarrow \left[\forall e \in E, \begin{cases}
                \ket{w_e} \in \H^e(x) \land \ket{w_e} - \ket{w_0^e} \in \K^e, & \text{if } f_e \neq 0, \\
                \ket{\overline{w}_e} \in \H^e(x) \cap \K^e, & \text{otherwise}
            \end{cases}\right] \land \sum_{e \in E} f_e\norm{\ket{w_0^e}}\ket{e} \in \ket{f^{\min}_{G,s,t,r}} + \mathcal{C}_{G,r} \\
            &\quad \Leftrightarrow \left[\forall e \in E, \begin{cases}
                \ket{w_e} \in \mathcal{W}_+(x,\mathcal{P}^e), & \text{if } f_e \neq 0, \\
                \ket{\overline{w}_e} \in \H^e(x) \cap \K^e, & \text{otherwise}
            \end{cases}\right] \land f \in F_{G,s,t},
        \end{align*}
        where in the last step we used the second claim of \cref{lem:electric-network-properties}. Next, by computing the norm of these vectors $\ket{w} \in \mathcal{W}_+(x,\mathcal{P})$, we can compute the positive witness size of $x$. We find
        \begin{align*}
            w_+(x,\mathcal{P}) &= \min_{\ket{w} \in \mathcal{W}_+(x,\mathcal{P})} \norm{\ket{w}}^2 = \min_{\ket{w} \in \mathcal{W}_+(x,\mathcal{P})} \left[\sum_{\substack{e \in E \\ f_e \neq 0}} |f_e|^2\norm{\ket{w_e}}^2 + \sum_{\substack{e \in E \\ f_e = 0}} \norm{\ket{\overline{w}_e}}^2\right] \\
            &= \min_{f \in F_{G,s,t}} \left[\sum_{\substack{e \in E \\ f_e \neq 0}} |f_e|^2w_+(x,\mathcal{P}^e)\right] = \min_{f \in F_{G,s,t}} \norm{\ket{f_{G,r^+}}}^2 = R_{G,s,t,r^+},
        \end{align*}
        where we observed that the optimal choices for all the $\ket{\overline{w}_e}$'s are simply $0$, and the optimal choices for all $\ket{w_e}$'s are the minimal positive witnesses for $x$ in $\mathcal{P}^e$, which have norm squared $w_+(x,\mathcal{P}^e)$. This completes the proof of the first claim.

        For the second claim, again let $\ket{w} \in \H$ and let $\ket{\overline{w}_e} = \Pi_{\H^e}\ket{w}$. Let $f_e = \braket{w_0^e}{\overline{w}_e}/\norm{\ket{w_0^e}}^2$, and if $f_e \neq 0$, let $\ket{w_e} = \ket{\overline{w}_e}/(f_e\norm{\ket{w_0^e}}^2)$. Now,
        \begin{align*}
            &\ket{w} \in \mathcal{W}_-(x,\mathcal{P}) \Leftrightarrow \ket{w} \in \H(x)^{\perp} \cap \K^{\perp} \land \braket{w}{w_0} = 1 \\
            &\quad \Leftrightarrow \left[\forall e \in E, \ket{\overline{w}_e} \in \H^e(x)^{\perp} \cap (\K^e)^{\perp}\right] \land \sum_{e \in E} \frac{\braket{w_0^e}{\overline{w}_e}}{\norm{\ket{w_0^e}}^2}\ket{w_0^e} \in \mathcal{E}(\mathcal{C}_{G,r})^{\perp} \land \sum_{e \in E} (f^{\min}_{G,s,t,r})_e\sqrt{r_e}\frac{\braket{w_0^e}{\overline{w}_e}}{\norm{\ket{w_0^e}}} = 1 \\
            &\quad \Leftrightarrow \left[\forall e \in E, \ket{\overline{w}_e} \in \H^e(x)^{\perp} \cap (\K^e)^{\perp}\right] \land \sum_{e \in E} f_e \ket{w_0^e} \in \mathcal{E}(\mathcal{C}_{G,r})^{\perp} \land \sum_{e \in E} (f^{\min}_{G,s,t,r})_e f_er_e = 1 \\
            &\quad \Leftrightarrow \left[\forall e \in E, \begin{cases}
                \ket{w_e} \in \H^e(x)^{\perp} \cap (\K^e)^{\perp} \land \braket{w_e}{w_0^e} = 1, & \text{if } f_e \neq 0, \\
                \ket{\overline{w}_e} \in \H^e(x)^{\perp} \cap (\K^e)^{\perp} \cap \Span\{\ket{w_0^e}\}^{\perp}, & \text{otherwise}
            \end{cases}\right] \\
            &\qquad \land \sum_{e \in E} f_e\sqrt{r_e}\ket{e} \in \mathcal{C}_{G,r}^{\perp} \land \braket{f^{\min}_{G,s,t,r}}{f_{G,r}} = 1 \\
            &\quad \Leftrightarrow \left[\forall e \in E, \begin{cases}
                \ket{w_e} \in \mathcal{W}_-(x,\mathcal{P}^e), & \text{if } f_e \neq 0, \\
                \ket{\overline{w}_e} \in \H^e(x)^{\perp} \cap (\K^e)^{\perp} \cap \Span\{\ket{w_0^e}\}^{\perp}, & \text{otherwise}
            \end{cases}\right] \land \ket{f_{G,r}} \in \mathcal{C}_{G,r}^{\perp} \land \braket{f^{\min}_{G,s,t,r}}{f_{G,r}} = 1 \\
            &\quad \Leftrightarrow \left[\forall e \in E, \begin{cases}
                \ket{w_e} \in \mathcal{W}_-(x,\mathcal{P}^e), & \text{if } f_e \neq 0, \\
                \ket{\overline{w}_e} \in \H^e(x)^{\perp} \cap (\K^e)^{\perp} \cap \Span\{\ket{w_0^e}\}^{\perp}, & \text{otherwise}
            \end{cases}\right] \land \left[\begin{array}{l}
                \exists U \text{ potential on } G : \\
                \quad f = f_{G,U,r} \land U_s - U_t = 1
            \end{array}\right],
        \end{align*}
        where in the last step we used claims 1 and 2 from \cref{thm:potential-function-properties}.

        By computing the norms of the resulting vectors $\ket{w} \in \mathcal{W}_-(x,\mathcal{P})$, we can compute the negative witness size for $x$. We find
        \begin{align*}
            w_-(x,\mathcal{P}) &= \min_{\ket{w} \in \mathcal{W}_-(x,\mathcal{P})} \norm{\ket{w}}^2 = \min_{\ket{w} \in \mathcal{W}_-(x,\mathcal{P})} \left[\sum_{\substack{e \in E \\ f_e \neq 0}} |f_e|^2\norm{\ket{w_0^e}}^4\norm{\ket{w_e}}^2 + \sum_{\substack{e \in E \\ f_e = 0}} \norm{\ket{\overline{w}_e}}^2\right] \\
            &= \min_{\substack{U \text{ potential on } G \\ U_s - U_t = 1}} \sum_{\substack{e \in E \\ f_e \neq 0}} |(f_{U,r})_e|^2\norm{\ket{w_0^e}}^4w_-(x,\mathcal{P}^e) \\
            &= \min_{\substack{U \text{ potential on } G \\ U_s - U_t = 1}} \sum_{\substack{e \in E \\ f_e \neq 0}} \left|(f_{U,r^-})_e \cdot \frac{r^-_e}{r_e}\right|^2\norm{\ket{w_0^e}}^4w_-(x,\mathcal{P}^e) \\
            &= \min_{\substack{U \text{ potential on } G \\ U_s - U_t = 1}} \sum_{\substack{e \in E \\ f_e \neq 0}} |(f_{U,r^-})_e|^2 r^-_e = \min_{\substack{U \text{ potential on } G \\ U_s - U_t = 1}} \norm{\ket{f_{G,U,r^-}}}^2 = R_{G,s,t,r^-}^{-1},
        \end{align*}
        where we used that the optimal choice for all $\ket{\overline{w}_e}$'s is $0$, and for all $\ket{w_e}$'s the optimal choice is the minimal negative witness for $x$ in $\mathcal{P}^e$, with norm squared $w_-(x,\mathcal{P}^e)$. Finally, in the last inequality, we used the third claim from \cref{thm:potential-function-properties}.
    \end{proof}

    The characterization of the witness sizes in \cref{thm:graph-composition} can be usefully upper bounded by ``paths'' and ``cuts'' in the graph $G$, which can usually be more tractable than computing effective resistances.

    \begin{theorem}
        \label{thm:witness-upper-bounds}
        Let $G = (V,E)$ be an undirected graph, and let $s,t \in V$ with $s \neq t$, such that $s$ and $t$ are connected in $G$. Let $\mathcal{P}$ be the graph composition of $G$ with source node $s$, sink node $t$, and span programs $(\mathcal{P}_e)_{e \in E}$ on a common domain $\D$.
        \begin{enumerate}[nosep]
            \item Suppose that $x \in \D$ is a positive instance for $\mathcal{P}$. Let $P \subseteq E$ be an $st$-path in $G$ such that for all $e \in E$, $x$ is a positive instance for $\mathcal{P}_e$. Then, $w_+(x,\mathcal{P}) \leq \sum_{e \in P} w_+(x,\mathcal{P}_e)$.
            \item Suppose that $x \in \D$ is a negative instance for $\mathcal{P}$. Let $C \subseteq E$ be an $st$-cut in $G$, i.e., a set of edges such that every path from $s$ to $t$ has to intersect $C$ at least once, such that for all $e \in C$, $x$ is a negative instance for $\mathcal{P}_e$. Then, $w_-(x,\mathcal{P}) \leq \sum_{e \in C} w_-(x,\mathcal{P}_e)$.
        \end{enumerate}
    \end{theorem}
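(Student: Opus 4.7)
The plan is to leverage the exact characterization of witness sizes from \cref{thm:graph-composition}, which reduces both claims to upper-bounding effective resistances. Since $R_{G,s,t,r}$ is itself the minimum energy over unit $st$-flows, and $R_{G,s,t,r}^{-1}$ is (by claim~3 of \cref{thm:potential-function-properties}) the minimum energy over potentials with unit source-sink gap, it is enough to exhibit one explicit feasible flow and one explicit feasible potential, derived from $P$ and $C$ respectively.

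For the first claim, the natural candidate is the unit $st$-flow $f$ that routes one unit of current along $P$: orient $P$ from $s$ to $t$ and set $f_e \in \{+1,-1\}$ for $e \in P$ according to whether $e$'s default direction agrees with this traversal, and $f_e = 0$ otherwise. This is manifestly a unit $st$-flow, so combining with \cref{thm:graph-composition} yields
\[w_+(x,\mathcal{P}) \;=\; R_{G,s,t,r^+} \;\leq\; \norm{\ket{f_{G,r^+}}}^2 \;=\; \sum_{e \in P} r^+_e \;=\; \sum_{e \in P} w_+(x,\mathcal{P}^e).\]
Any edge $e \notin P$ for which $r^+_e = \infty$ (possible if $x$ happens to be negative for $\mathcal{P}^e$) contributes nothing to this energy, since $f_e = 0$ and $0 \cdot \infty = 0$ by convention.

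For the second claim, I would build the potential directly from the cut. Because $C$ is an $st$-cut, removing its edges from $G$ disconnects $s$ from $t$; let $V_s$ be the connected component of $s$ in $G \setminus C$, and put $V_t := V \setminus V_s$. Define $U : V \to \C$ by $U_v := 1$ if $v \in V_s$ and $U_v := 0$ otherwise, so that $U_s - U_t = 1$. Every edge with $U_{e_-} \neq U_{e_+}$ crosses the partition and hence must lie in $C$; call this subset $C^{*} \subseteq C$. For each $e \in C^{*}$, $x$ is negative for $\mathcal{P}^e$ by hypothesis, so $w_-(x,\mathcal{P}^e) < \infty$ and $r^-_e > 0$, which confirms $U$ is a valid potential function. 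Then \cref{thm:graph-composition} together with claim~3 of \cref{thm:potential-function-properties} gives
\[w_-(x,\mathcal{P}) \;=\; R_{G,s,t,r^-}^{-1} \;\leq\; \norm{\ket{f_{G,U,r^-}}}^2 \;=\; \sum_{e \in C^{*}} (r^-_e)^{-1} \;=\; \sum_{e \in C^{*}} w_-(x,\mathcal{P}^e) \;\leq\; \sum_{e \in C} w_-(x,\mathcal{P}^e),\]
where the final inequality uses non-negativity of the terms.

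The main point to watch is the interaction with the $0/\infty$ conventions, particularly in the negative direction: one must check that no zero-resistance edge (one with $r^-_e = 0$, which would forbid a nonzero potential drop) ends up crossing the partition. This is automatically excluded by the cut property, since $r^-_e = 0$ means $w_-(x,\mathcal{P}^e) = \infty$, so $x$ is positive for $\mathcal{P}^e$ and hence $e \notin C$; if such an $e$ did cross between $V_s$ and $V_t$, its endpoints would be connected in $G \setminus C$, contradicting the definition of the partition. Beyond this bookkeeping, the argument is essentially the standard Thomson/Dirichlet variational principle, applied through the dictionary provided by \cref{thm:graph-composition}.
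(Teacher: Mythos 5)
Your proposal is correct and follows essentially the same route as the paper: send a unit flow along $P$ to bound $R_{G,s,t,r^+}$ for the positive case, and use the indicator potential of the $s$-side of the cut together with claim~3 of \cref{thm:potential-function-properties} for the negative case. The only differences are cosmetic refinements on your part (signing the flow according to edge orientations, and explicitly verifying that no zero-resistance edge crosses the partition), both of which the paper leaves implicit.
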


    \begin{proof}
        For the first claim, we send unit flow along the path $P$, i.e., we let $f : E \to \R_{\geq 0}$ satisfy $f_e = 1$ iff $e \in P$, and $f_e = 0$ otherwise. Then, $f \in F_{G,s,t}$, and so using \cref{thm:graph-composition} and \cref{def:electrical-networks}, we find that
        \[w_+(x,\mathcal{P}) = R_{G,s,t,r^+} \leq \norm{\ket{f_{G,r}}}^2 = \sum_{e \in P} r_e^+ = \sum_{e \in P} w_+(x,\mathcal{P}_e).\]

        For the second claim, we define the potential function $U : V \to \C$, by for all $v \in V$, setting $U_v = 1$ if we can reach $v$ from $s$ without crossing $C$, and $U_v = 0$ otherwise. Now, $U_s = 1$ and $U_t = 0$, and so from \cref{thm:graph-composition} and \cref{thm:potential-function-properties}, we observe that
        \[w_-(x,\mathcal{P}) = R_{G,s,t,r^-}^{-1} \leq \norm{\ket{f_{G,U,r^-}}}^2 \leq \sum_{e \in C} \frac{1}{r_e^-} = \sum_{e \in C} w_-(x,\mathcal{P}_e).\qedhere\]
    \end{proof}

    \subsection{Time-efficient implementation}
    \label{subsec:time-efficiency}

    To turn a graph composition into a quantum algorithm, we can run the span program algorithm that we presented in \cref{alg:span-program-algorithm}. The witness size analysis in \cref{thm:graph-composition} already analyzes the number of queries that this algorithm needs to make to the input routines, i.e., the reflections through $\H(x)$, $\K$, and a routine that constructs $\ket{w_0} / \norm{\ket{w_0}}$. In order to analyze the time complexity of the resulting algorithm, it remains to provide implementations for these subroutines.

    We start with a general statement about the implementation of these routines in the circuit model, in the following theorem.

    \begin{theorem}
        \label{thm:implementation-graph-composition}
        Let $\mathcal{P} = (\H, x \mapsto \H(x), \K, \ket{w_0})$ on $\D$ be a graph composition of $G = (V,E)$, with source node $s$, sink node $t$, and edge span programs $(\mathcal{P}_e)_{e \in E}$, where for all $e \in E$, $\mathcal{P}_e = (\H^e, x \mapsto \H^e(x), \K^e, \ket{w_0^e})$. Suppose we have implementations of the following operations:
        \[\overline{R}_{\H(x)} : \ket{e}\ket{\psi} \mapsto \ket{e}(2\Pi_{\H^e(x)} - I)\ket{\psi}, \qquad \overline{R}_\K : \ket{e}\ket{\psi} \mapsto \ket{e}(2\Pi_{\K^e} - I)\ket{\psi},\]
        and
        \[\overline{C}_{\ket{w_0}} : \ket{e}\ket{\bot} \mapsto \ket{e}\frac{\ket{w_0^e}}{\norm{\ket{w_0^e}}}, \qquad R_{\mathcal{C}_{G,r}} = 2\Pi_{\mathcal{C}_{G,r}} - I, \qquad \text{and} \qquad C_{\ket{f_{G,s,t,r}^{\min}}} : \ket{\bot} \mapsto \frac{\ket{f_{G,s,t,r}^{\min}}}{\norm{\ket{f_{G,s,t,r}^{\min}}}}.\]
        Then, we have
        \[R_{\H(x)} = \overline{R}_{\H(x)}, \qquad R_{\K} = -\overline{R}_{\K}R_{\mathcal{C}_{G,r}}, \qquad \text{and} \qquad C_{\ket{w_0}} = \overline{C}_{\ket{w_0}}(C_{\ket{f^{\min}_{G,s,t,r}}} \otimes I),\]
        and as such the costs of implementing the routines $R_{\H(x)}$, $R_{\K}$ and $C_{\ket{w_0}}$ in the circuit model are
        \begin{center}
            \begin{tabular}{r|cccccc}
                & $\overline{R}_{\H(x)}$ & $\overline{R}_{\K}$ & $\overline{C}_{\ket{w_0}}$ & $R_{\mathcal{C}_{G,r}}$ & $C_{\ket{f_{G,s,t,r}^{\min}}}$ & Additional gates \\\hline
                $R_{\H(x)}$ & $1$ & $0$ & $0$ & $0$ & $0$ & $0$ \\
                $R_{\K}$ & $0$ & $1$ & $0$ & $1$ & $0$ & $O(1)$ \\
                $C_{\ket{w_0}}$ & $0$ & $0$ & $1$ & $0$ & $1$ & $0$
            \end{tabular}
        \end{center}
    \end{theorem}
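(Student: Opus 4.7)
The proof is essentially a matter of verifying three algebraic identities about subspaces of $\H = \bigoplus_{e \in E} \H^e$, where states are encoded as $\ket{e}\ket{\psi}$ with $\ket{\psi} \in \H^e$. The cost table then follows by counting the operations appearing on the right-hand side of each formula, so the substantive content is the three identities; I plan to dispatch them in increasing order of difficulty.

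The identity $R_{\H(x)} = \overline{R}_{\H(x)}$ is immediate from $\H(x) = \bigoplus_{e \in E} \H^e(x)$: in the $\ket{e}\ket{\psi}$ encoding the projector $\Pi_{\H(x)}$ acts as $\ket{e}\bra{e} \otimes \Pi_{\H^e(x)}$ blockwise, which is exactly the action specified for $\overline{R}_{\H(x)}$. The only thing to remark is that the edge register is left untouched.

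For $R_{\K}$, the key structural observation is that the two summands in $\K = \bigoplus_{e \in E} \K^e \oplus \mathcal{E}(\mathcal{C}_{G,r})$ are orthogonal: every vector in $\mathcal{E}(\mathcal{C}_{G,r})$ is a linear combination of the $\ket{w_0^e}$'s, and each $\ket{w_0^e} \in (\K^e)^\perp$ by definition of a span program. Writing $A = \bigoplus_{e \in E} \K^e$ and $B = \mathcal{E}(\mathcal{C}_{G,r})$, the orthogonality $\Pi_A \Pi_B = 0$ gives $\Pi_{\K} = \Pi_A + \Pi_B$, and expanding
\[
-\overline{R}_{\K} R_{\mathcal{C}_{G,r}} = -(2\Pi_A - I)(2\Pi_B - I) = -(4\Pi_A \Pi_B - 2\Pi_A - 2\Pi_B + I) = 2\Pi_{\K} - I = R_{\K}.
\]
Here I will also need to note that $\overline{R}_{\K}$, which block-diagonally applies $2\Pi_{\K^e} - I$ on each $\H^e$, is indeed the reflection through $A$ in the direct-sum encoding.

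The identity for $C_{\ket{w_0}}$ is a direct computation that I will carry out by tracking coefficients. Starting from $\ket{\bot}\ket{\bot}$, the operator $C_{\ket{f^{\min}_{G,s,t,r}}} \otimes I$ produces $\frac{1}{\norm{\ket{f^{\min}}}} \sum_{e \in E} (f^{\min})_e \sqrt{r_e}\, \ket{e}\ket{\bot}$ using the definition $\ket{f^{\min}_{G,r}} = \sum_e (f^{\min})_e \sqrt{r_e}\ket{e}$. Then $\overline{C}_{\ket{w_0}}$ replaces each $\ket{\bot}$ in the second register by $\ket{w_0^e}/\norm{\ket{w_0^e}} = \ket{w_0^e}/\sqrt{r_e}$, and the $\sqrt{r_e}$ factors cancel, leaving $\frac{1}{\norm{\ket{f^{\min}}}} \sum_e (f^{\min})_e \ket{e}\ket{w_0^e}$, which is precisely the direct-sum encoding of $\mathcal{E}(\ket{f^{\min}})/\norm{\ket{f^{\min}}} = \ket{w_0}/\norm{\ket{w_0}}$ (since $\mathcal{E}$ is an isometry). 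The cancellation of the $\sqrt{r_e}$ factors is the step I expect to be the trickiest to present cleanly, because the flow-state convention introduces those weights precisely so that the encoding is isometric; I will emphasize this point rather than leave it implicit. Once the three identities are established, the cost table is read off directly from the formulas.
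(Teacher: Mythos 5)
Your proposal is correct and takes essentially the same route as the paper's (very terse) proof, which simply observes that $\overline{C}_{\ket{w_0}}$ implements the embedding $\mathcal{E}$ and then asserts the three identities, leaving the verification to the reader. You supply exactly the omitted details — the orthogonality of $\bigoplus_{e \in E}\K^e$ and $\mathcal{E}(\mathcal{C}_{G,r})$ (so that the product of the two reflections, negated, is the reflection through $\K$), and the cancellation of the $\sqrt{r_e}$ factors against $\norm{\ket{w_0^e}}$ in the state-preparation identity.
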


    \begin{proof}
        First, note that $\overline{C}_{\ket{w_0}}$ is an embedding of all the spaces $\H^e$ into $\H$, so it plays the same role as $\mathcal{E}$ in \cref{thm:graph-composition}. Thus, we find that
        \[R_{\H(x)} = \bigoplus_{e \in E} R_{\H^e(x)}, \qquad R_\K = -\left[\bigoplus_{e \in E} R_{\K^e}\right] \cdot \mathcal{E}R_{\mathcal{C}_{G,r}}\mathcal{E}^{\dagger}, \qquad \text{and} \qquad C_{\ket{w_0}} = \mathcal{E}C_{\ket{f^{\min}_{G,s,t,r}}}.\]
        The results follow by writing out the actions of the consecutive operations.
    \end{proof}

    We observe that the routines $\overline{R}_{\H(x)}$, $\overline{R}_{\K}$ and $\overline{C}_{\ket{w_0}}$ perform several operations in parallel. This is particularly interesting in the QROM-model, since we know from \cref{eq:time-complextiy-composition} that we can implement these parallel operations time-efficiently, as long as we have access to their descriptions in QROM. Furthermore, we recall from \cref{thm:quantum-state-preparation} that preparing a state can be done time-efficiently as well in the QROM-model, which guarantees the existence of a time-efficient implementation of $C_{\ket{f_{G,s,t,r}^{\min}}}$, i.e., in time polylogarithmic in the number of edges in the graph.

    The reflection through the circulation space $R_{\mathcal{C}_{G,r}}$, though, can potentially be very costly to implement. There exists a generic approach to implementing this reflection, through what we will refer to as the ``spectral method''. This idea was first featured in \cite[Lemma~32]{jeffery2017quantum}, and we restate it here for convenience.\footnote{Note that the proof of \cite[Corollary~26 in the arXiv version]{jarret2018quantum} seems to give an $O(1)$-time implementation of this approach. This, however, was based on an earlier faulty version of \cite[Lemma~32]{jeffery2017quantum}, which has subsequently been fixed.}

    \begin{lemma}[{\cite[Lemma~32]{jeffery2017quantum}}]
        \label{thm:spectral-method}
        Let $G = (V,E)$ be an undirected graph with resistances $r : E \to \R_{>0}$. For all $v \in V$, let $N(v) \subseteq E$ be the set of edges incident to $v$. Let
        \[U_{G,r} : \ket{v}\ket{\bot} \mapsto \ket{v} \left[\sum_{e \in N(v)} \frac{1}{r_e}\right]^{-\frac12}\sum_{e \in N(v)} \frac{1}{\sqrt{r_e}}\ket{e}.\]
        Then, we can implement the reflection through $\mathcal{C}_{G,r}$ with $\widetilde{O}(1/\sqrt{\delta})$ call to $U_{G,r}$ and $\widetilde{O}(1/\sqrt{\delta})$ additional gates, where $\delta$ is the smallest non-zero eigenvalue of the symmetrically normalized Laplacian of $G$, which is defined as $L \in \R^{V \times V}$, with
        \[L[v,w] = \frac{\sum_{e \in N(v) \cap N(w)} \frac{1}{r_e}}{\sqrt{\sum_{e \in N(v)} \frac{1}{r_e} \cdot \sum_{e \in N(w)} \frac{1}{r_e}}}.\]
    \end{lemma}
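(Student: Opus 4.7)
The plan is to reduce the reflection through $\mathcal{C}_{G,r}$ to a phase-estimation-based reflection through the $+1$-eigenspace of a Szegedy-style walk unitary $W$ built from $U_{G,r}$, whose eigenphase gap around $0$ is $\widetilde{\Omega}(\sqrt{\delta})$. Running phase estimation on $W$ with precision on the order of $\sqrt{\delta}$, flipping the sign conditioned on the estimated phase being $0$, and uncomputing gives the desired reflection through $\mathcal{C}_{G,r}$ (after an appropriate isometric embedding of the input). The cost is $\widetilde{O}(1/\sqrt{\delta})$ applications of $W$ plus $\widetilde{O}(1/\sqrt{\delta})$ extra elementary gates for the phase estimation, matching the claim.

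For the walk construction, I would work in $\C^V \otimes \H_G$. The operator $U_{G,r}$ is a state-preparation isometry $A : \C^V \to \C^V \otimes \H_G$, so two calls to $U_{G,r}$ (with a reflection around $\ket{\bot}$ in the edge register in between) implement $2AA^{\dagger} - I$. To inject the signed flow-conservation structure that is absent from $U_{G,r}$ itself, I would pair this with a second isometry $B : \H_G \to \C^V \otimes \H_G$ sending $\ket{e} \mapsto \tfrac{1}{\sqrt{2}}(\ket{e_-} - \ket{e_+}) \otimes \ket{e}$. Since $BB^{\dagger}$ is a disjoint union of $O(1)$-dimensional projectors indexed by edges, $2BB^{\dagger} - I$ costs $O(1)$ gates per application given the edge register. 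The walk is then $W := (2BB^{\dagger} - I)(2AA^{\dagger} - I)$, and its non-trivial eigenphases come in pairs $\pm 2\arccos(\sigma)$ as $\sigma$ ranges over the singular values of the discriminant $A^{\dagger}B$, by the standard Jordan-lemma analysis for products of two reflections. A direct computation shows that $(A^{\dagger}B)(A^{\dagger}B)^{\dagger}$ coincides, up to an affine transformation, with the operator $L$ from the statement, so the spectral gap of $L$ around its extreme eigenvalue translates into a phase gap of $\widetilde{\Omega}(\sqrt{\delta})$ around $0$ for $W$ via the usual $\arccos$-of-cosine conversion.

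Finally, I need to match the $+1$-eigenspace of $W$ with $\mathcal{C}_{G,r}$. The natural isometric embedding of the input is $\ket{f} \mapsto B\ket{f}$, sending each edge basis vector to its signed incidence state in the vertex register. The content of \cref{thm:potential-function-properties} is exactly what I need here: a flow is a circulation iff it is orthogonal to every potential state, and potential states are in correspondence (via the map $B^{\dagger}A$) with vectors in the orthogonal complement of $\ker(A^{\dagger}B)$ inside $\C^V$. I expect the main obstacle to be precisely this bookkeeping: the Szegedy construction automatically introduces ``trivial'' $\pm 1$ eigenvectors from the orthogonal complements of $\mathrm{Im}(A)$ and $\mathrm{Im}(B)$, and I need to verify that phase estimation, applied to an input embedded via $B$, returns phase $0$ only on genuine circulations, so that the uncomputed reflection agrees with $R_{\mathcal{C}_{G,r}}$ on the original $\H_G$ register.
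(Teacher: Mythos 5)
The paper gives no proof of this lemma---it is imported verbatim from \cite[Lemma~32]{jeffery2017quantum}---and your reconstruction is precisely the standard Szegedy-walk-plus-phase-estimation argument behind that result, including the correct observation that the edge orientations (which $U_{G,r}$ does not see) must be injected through the second isometry $B$, so that $B^{\dagger}A$ maps vertex indicators to (normalized) potential states and $\ker(A^{\dagger}B)$ is exactly $\mathcal{C}_{G,r}$ under the embedding $B$. The only bookkeeping slip is a sign convention: with $W = (2BB^{\dagger}-I)(2AA^{\dagger}-I)$ a circulation $B\ket{f}$ lies in $\img(B)\cap\img(A)^{\perp}$ and hence has eigenphase $\pi$ rather than $0$, but the phase-$\pi$ eigenspace is $(\img(A)^{\perp}\cap\img(B))\oplus(\img(A)\cap\img(B)^{\perp})$ whose second summand is orthogonal to $\img(B)$, so the verification you defer in your last paragraph does go through and the conditional sign flip should simply be triggered on estimated phase $\pi$ instead of $0$.
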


    We specifically note that the operation $U_{G,r}$ is essentially $|V|$ state-preparation unitaries in parallel, and so using \cref{thm:quantum-state-preparation}, it can be implemented efficiently in the QROM-model too. Thus, we now have a generic way to implement all the operations from \cref{thm:implementation-graph-composition} in the QROM-model.

    There are, however, cases where the aforementioned spectral method is severely suboptimal. For instance, if we take $G$ to be a line graph with $n$ vertices, then the spectral gap $\delta$ of the symmetrically normalized Laplacian is $\Theta(1/n)$, and so we obtain an overhead of $\widetilde{O}(\sqrt{n})$ in \cref{thm:spectral-method}. The circulation space $\mathcal{C}_{G,r}$ of the line graph, however, is empty, and so implementing a reflection through it is trivial, suggesting that the $\widetilde{O}(1/\sqrt{\delta})$-overhead is not always necessary.

    We present a second way to implement the reflection through the circulation space, based on graph decomposition, and as such we will refer to it as the ``decomposition method''. To that end, we first introduce these decompositions formally.

    \begin{definition}[Circulation space decompositions]
        \label{def:tree-parallel-decomposition}
        Let $G = (V,E)$ be an undirected graph with resistances $r : E \to \R_{>0}$. Let $E_1, \dots, E_k$ be a partition of $E$ such that each of the $E_j$'s is not empty, and for all $j \in [k]$, let $V_j = \{v \in V : N(v) \cap E_j \neq \varnothing\}$.
        \begin{enumerate}[nosep]
            \item \textit{Tree decomposition.} Suppose that for all $j,j' \in [k]$ with $j \neq j'$, we have $|V_j \cap V_{j'}| \leq 1$. Let $G' = (V',E')$ be the graph where we contract all of the edge sets $E_1, \dots, E_k$ into a single edge and then prune the edges that have a loose end, i.e., formally we write $V' = \{v_{j,j'} : \{j,j'\} \subseteq [k]^2 \land j \neq j' \land V_j \cap V_{j'} = \{v_{j,j'}\}\}$ and $E' = \{\{v_{j,j'},v_{\ell,\ell'}\} \in (V')^2 : |\{j,j',\ell,\ell'\}| = 3 \land v_{j,j'} \neq v_{\ell,\ell'}\}$ as a multiset. Suppose that $G'$ is a tree. Then, we refer to $G|_{E_1}, \dots, G|_{E_k}$ as a \textit{tree decomposition} of $G$.
            \item \textit{Parallel decomposition.} If there exist $s,t \in V$ with $s \neq t$, such that for all $j,j' \in [k]$ with $j \neq j'$, $V_j \cap V_{j'} = \{s,t\}$, then we refer to $G|_{E_1}, \dots, G|_{E_k}$ as a \textit{parallel decomposition} of $G$.
        \end{enumerate}
        Finally, let $T$ be a tree, where every node is labeled by a subset of $E$, such that:
        \begin{enumerate}[nosep]
            \item The root node is labeled by $E$.
            \item Every leaf is labeled by a single edge $\{e\}$, with $e \in E$.
            \item If an internal node is labeled by $E'$ and all its children are labeled by $E_1, \dots, E_k$, then $G|_{E_1}, \dots, G|_{E_k}$ is a tree or parallel decomposition of $G|_{E'}$.
        \end{enumerate}
        Then, we refer to $T$ as a \textit{tree-parallel decomposition} of $G$.
    \end{definition}

    We present pictorial representations of the tree and parallel decompositions in \cref{fig:graph-decompositions}.

    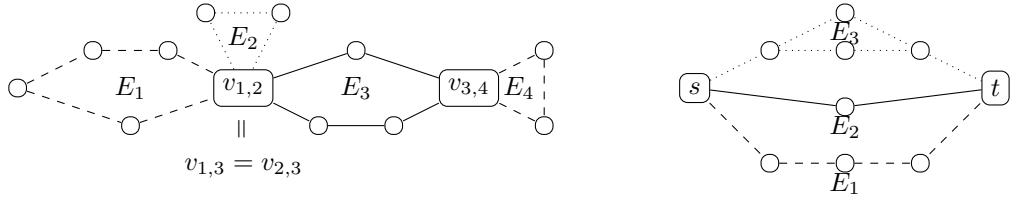
\begin{figure}[!ht]
        \centering
        \begin{tikzpicture}[vertex/.style = {draw, rounded corners = .3em}]
            \begin{scope}
                \node[vertex] (s) at (0,0) {};
                \node[vertex] (1) at (1,.5) {};
                \node[vertex] (3) at (1.5,-.5) {};
                \node[vertex] (2) at (2,.5) {};
                \node[vertex] (v) at (3,0) {$v_{1,2}$};
                \node[below] at (v.south) {$\begin{array}{c}
                    \rotatebox{90}{=} \\
                    v_{1,3} = v_{2,3}
                \end{array}$};
                \node[vertex] (4) at (4,-.5) {};
                \node[vertex] (6) at (4.5,.5) {};
                \node[vertex] (5) at (5,-.5) {};
                \node[vertex] (t) at (6,0) {$v_{3,4}$};
                \node[vertex] (7) at (2.5,1) {};
                \node[vertex] (8) at (3.5,1) {};
                \node[vertex] (9) at (7,.5) {};
                \node[vertex] (10) at (7,-.5) {};

                \node at (1.5,0) {$E_1$};
                \node at (3,{2/3}) {$E_2$};
                \node at (4.5,0) {$E_3$};
                \node at ({6+2/3},0) {$E_4$};

                \draw[dashed] (s) to (1) to (2) to (v);
                \draw[dashed] (s) to (3) to (v);
                \draw (v) to (4) to (5) to (t);
                \draw (v) to (6) to (t);
                \draw[dotted] (v) to (7) to (8) to (v);
                \draw[dashed] (t) to (9) to (10) to (t);
            \end{scope}
            \begin{scope}[shift = {(9,0)}]
                \node[vertex] (s) at (0,0) {$s$};
                \node[vertex] (1) at (1,-1) {};
                \node[vertex] (2) at (2,-1) {};
                \node[vertex] (3) at (3,-1) {};
                \node[vertex] (4) at (2,-.25) {};
                \node[vertex] (5) at (1,.5) {};
                \node[vertex] (6) at (2,1) {};
                \node[vertex] (7) at (2,.5) {};
                \node[vertex] (8) at (3,.5) {};
                \node[vertex] (t) at (4,0) {$t$};

                \draw[dashed] (s) to (1) to (2) node[below] {$E_1$} to (3) to (t);
                \draw (s) to (4) node[below] {$E_2$} to (t);
                \draw[dotted] (s) to (5) to (6) node[below] {$E_3$} to (8) to (t);
                \draw[dotted] (5) to (7) to (8);
            \end{scope}
        \end{tikzpicture}
        \caption{Examples of the tree decomposition (left), and the parallel decomposition (right). The dashed, dotted and solid sets of edges represent the disjoint edge sets $E_1, \dots, E_k$.}
        \label{fig:graph-decompositions}
    \end{figure}

    We remark that every graph admits a tree-parallel decomposition. To see this, observe that we can always take two adjacent vertices $s,t \in V$, and do a parallel decomposition with $E_1 = N(v) \cap N(w)$, and $E_2 = E \setminus E_1$. If this leads to $E_2$ being empty, then we can decompose $E_1$ into single edges with a parallel decomposition. Inductively, then, this means that we can always iterate this process until we have decomposed the entire graph into subsets that just contain a single edge. Moreover, we can always upper bound the depth of the tree-parallel decomposition by $|E|$.

    We now proceed to show how these decompositions relate to the circulation space.

    \begin{lemma}[Tree decomposition of the circulation space]
        \label{lem:tree-decomposition}
        Let $G = (V,E)$ be a graph with resistances $r : E \to \R_{>0}$. Let $E_1, \dots, E_k \subseteq E$ such that $G|_{E_1}, \dots, G|_{E_k}$ is a tree decomposition of $G$. Then,
        \[\mathcal{C}_{G,r} = \bigoplus_{j=1}^k \mathcal{C}_{G|_{E_j},r|_{E_j}}.\]
    \end{lemma}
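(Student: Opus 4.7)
The plan is to prove both inclusions directly from the zero-divergence characterization of circulations. The $\supseteq$ direction is the easy one: given $\ket{f^{(j)}} \in \mathcal{C}_{G|_{E_j}, r|_{E_j}}$ for each $j$, embedded by zero-extension into $\H_G$, their sum $\ket{f} = \sum_j \ket{f^{(j)}}$ has at every $v \in V$ a divergence equal to $\sum_{j : v \in V_j} (\text{divergence of } f^{(j)} \text{ at } v)$, since the edge sets $E_j$ partition the edges incident to $v$. Each summand vanishes, so $\ket{f} \in \mathcal{C}_{G,r}$. The summands are orthogonal since they live in orthogonal subspaces $\C^{E_j} \subseteq \H_G$, so the direct sum is genuine.

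For the $\subseteq$ direction, fix $\ket{f} \in \mathcal{C}_{G,r}$ and a block $j$; I want to show $f|_{E_j}$ is a circulation on $G|_{E_j}$. For any $v \in V_j$ not shared with any other $V_{j'}$, all edges of $G$ incident to $v$ lie in $E_j$, so the divergence of $f|_{E_j}$ at $v$ equals that of $f$ at $v$ in $G$, which is zero. The only nontrivial vertices are the meeting points $v \in V_j$, where the partial divergence $\phi_j(v) := \sum_{e \in N_+(v) \cap E_j} f_e - \sum_{e \in N_-(v) \cap E_j} f_e$ need not obviously vanish. Summing $\phi_j$ over all $v \in V_j$ telescopes to zero (each edge of $E_j$ contributes $+1$ and $-1$), and non-meeting vertices already contribute zero, so the meeting-vertex values must sum to zero.

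The main work is to extract from the tree-structure of $G'$ that each $\phi_j(v)$ vanishes individually. I would define a contracted flow $\tilde f$ on $G'$: since $G'$ is a tree, no $V_j$ can contain three or more distinct meeting vertices (this would induce a triangle in $E'$ per the definition), so the meeting vertices of $V_j$ amount to either zero, one, or two distinct points in $V'$. In the first two cases, $\phi_j \equiv 0$ at meeting vertices is immediate from the sum-to-zero constraint. In the two-vertex case, $\phi_j(v_a) = -\phi_j(v_b)$, so $\tilde f$ on the edge of $G'$ from $v_a$ to $v_b$ is well-defined as $\phi_j(v_a)$. A direct check then shows that at every $v' \in V'$, the divergence of $\tilde f$ equals the divergence of $f$ at the corresponding vertex of $G$, hence zero; so $\tilde f$ is a circulation on $G'$. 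Since $G'$ is a tree, its circulation space is trivial, forcing $\tilde f = 0$ and thus $\phi_j = 0$ everywhere. The main obstacle is the bookkeeping around multi-edges in $G'$ (several blocks sharing the same pair of meeting points) and loops (blocks with a single meeting vertex), but both resolve from the global sum-to-zero constraint on $\phi_j$ and the absence of cycles in $G'$.
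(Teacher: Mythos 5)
Your proof is correct and follows essentially the same route as the paper's: contract each block $E_j$ to an edge of the tree $G'$, observe that the contracted net flows inherit zero divergence from $f$ and hence form a circulation on the tree $G'$, which must vanish, and conclude that each $f|_{E_j}$ is a circulation. You are in fact slightly more careful than the paper, which glosses over the well-definedness of the contracted flow and the blocks with fewer than two meeting vertices that you handle via the telescoping sum-to-zero argument.
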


    \begin{proof}
        The ``$\supseteq$''-direction is clear, so it remains to prove the ``$\subseteq$''-direction. To that end, let $f : E \to \R$ be a circulation on $G$. For every edge $\{v_{j,j'},v_{j,j''}\} = e' \in E'$, we can compute the net flow over that edge as
        \[f_{e'} = \sum_{e \in E_j \cap N^+(v_{j,j'})} f_e - \sum_{e \in E_j \cap N^-(v_{j,j'})} f_e.\]
        This defines a flow on $G'$, and since the net-flow on every vertex $v \in V$ is zero, so is the flow on all the vertices $v \in V'$. But as $G'$ is a tree, that means that $f_{e'} = 0$ for all $e' \in E'$. Thus, for every edge $\{v_{j,j'},v_{j,j''}\} = e' \in E'$, we have
        \[\sum_{e \in E_j \cap N^+(v_{j,j'})} - \sum_{e \in E_j \cap N^-(v_{j,j'})} = 0,\]
        and so we observe for all $j \in [k]$ that $f|_{E_j}$ is a circulation in $G|_{E_j}$. Since the $E_j$'s form a partition of $E$, we can write $f$ as a linear combination of circulations in $G|_{E_j}$.
    \end{proof}

    \begin{lemma}[Parallel decomposition of the circulation space]
        \label{lem:parallel-decomposition}
        Let $G = (V,E)$ be a graph with resistances $r : E \to \R_{>0}$. Let $E_1, \dots, E_k \subseteq E$, such that $G|_{E_1}, \dots, G|_{E_k}$ is a parallel decomposition of $G$. Let $\mathcal{E} : \C^k \to \H_E = \Span\{\ket{e} : e \in E\}$, that for all $j \in [k]$ performs the mapping
        \begin{equation}
            \label{eq:parallel-embedding}
            \mathcal{E}: \ket{j}\ket{\bot} \mapsto \ket{j}\frac{\ket{f_j}}{\norm{\ket{f_j}}}, \qquad \text{with} \qquad \ket{f_j} = \ket{f^{\min}_{G|_{E_j},s,t,r|_{E_j}}} \in \C^{E_j}.
        \end{equation}
        Then,
        \begin{equation}
            \label{eq:parallel-decomposition}
            \mathcal{C}_{G,r} = \mathcal{E}\left(\mathcal{C}'\right) \oplus \bigoplus_{j=1}^k \mathcal{C}_{G|_{E_j},r|_{E_j}}, \quad \text{with} \quad \mathcal{C}' = \Span\left\{\ket{\psi}\right\}^{\perp}, \quad \text{and} \quad \ket{\psi} = \sum_{j=1}^k \frac{1}{\norm{\ket{f_j}}}\ket{j}.
        \end{equation}
    \end{lemma}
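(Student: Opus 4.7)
The plan is to establish the two inclusions of the claimed equality after first checking that the right-hand side is an internal orthogonal direct sum. The subspaces $\mathcal{C}_{G|_{E_j},r|_{E_j}}$ live in pairwise disjoint coordinate subspaces $\C^{E_j} \subseteq \H_E$, so they are trivially mutually orthogonal. For any $\ket{v} = \sum_j \alpha_j \ket{j} \in \mathcal{C}'$, the restriction of $\mathcal{E}(\ket{v})$ to $\C^{E_j}$ is a scalar multiple of $\ket{f_j}$, which by \cref{lem:electric-network-properties} lies in $\mathcal{C}_{G|_{E_j},r|_{E_j}}^{\perp}$; hence $\mathcal{E}(\mathcal{C}')$ is orthogonal to each $\mathcal{C}_{G|_{E_j},r|_{E_j}}$.

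For the ``$\supseteq$'' inclusion, every circulation on $G|_{E_j}$ extends by zero to a flow on $G$: conservation at vertices of $V_j \setminus \{s,t\}$ is automatic because the parallel decomposition forces such a vertex to touch no edge outside $E_j$, and conservation at $s$ and $t$ is inherited from the restricted circulation. To see $\mathcal{E}(\mathcal{C}') \subseteq \mathcal{C}_{G,r}$, note that for $\ket{v} = \sum_j \alpha_j \ket{j} \in \mathcal{C}'$, the vector $\mathcal{E}(\ket{v})$ corresponds to the flow $\sum_j (\alpha_j/\norm{\ket{f_j}}) \cdot f^{\min}_{G|_{E_j},s,t,r|_{E_j}}$ on $G$, which has zero net flow at every internal vertex (each summand does individually), and net outflow at $s$ equal to $\sum_j \alpha_j/\norm{\ket{f_j}} = \braket{\psi}{v} = 0$; the same holds at $t$ by symmetry.

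For the ``$\subseteq$'' inclusion, take any circulation $f$ on $G$ and split it as $f = \sum_j f^{(j)}$, where $f^{(j)}$ is the restriction of $f$ to $E_j$. The parallel decomposition again forces each $f^{(j)}$ to conserve flow at every $v \in V_j \setminus \{s,t\}$, so $f^{(j)}$ is a flow on $G|_{E_j}$ from $s$ to $t$ of some value $c_j \in \C$. Applying \cref{lem:electric-network-properties} to $G|_{E_j}$ (treating the case $c_j = 0$ separately and trivially) lets us write $f^{(j)} = c_j \cdot f^{\min}_{G|_{E_j},s,t,r|_{E_j}} + g^{(j)}$ for a circulation $g^{(j)}$ on $G|_{E_j}$. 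Conservation of $f$ at $s$ yields $\sum_j c_j = 0$. Passing to flow states and using $\ket{f_j}/\norm{\ket{f_j}} = \mathcal{E}(\ket{j})$ gives $\ket{f_{G,r}} = \mathcal{E}\bigl(\sum_j c_j \norm{\ket{f_j}}\,\ket{j}\bigr) + \sum_j \ket{g^{(j)}_{G|_{E_j},r|_{E_j}}}$, and the vector $\sum_j c_j \norm{\ket{f_j}}\ket{j}$ lies in $\mathcal{C}' = \Span\{\ket{\psi}\}^{\perp}$ precisely because its inner product with $\ket{\psi} = \sum_j \ket{j}/\norm{\ket{f_j}}$ equals $\sum_j c_j = 0$.

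The main obstacle is purely bookkeeping: one must carefully track the normalizations $\norm{\ket{f_j}}$ introduced by the embedding $\mathcal{E}$ so that the perpendicularity constraint $\ket{v} \perp \ket{\psi}$ lines up exactly with the physical conservation law $\sum_j c_j = 0$ at $s$ (equivalently at $t$). No new ideas beyond \cref{lem:electric-network-properties} and the defining property of the parallel decomposition are required.
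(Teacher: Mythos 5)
Your proposal is correct and follows essentially the same route as the paper's proof: both directions are handled by the same decomposition of a circulation into its restrictions to the $E_j$'s, the same use of \cref{lem:electric-network-properties} to split each restriction into a multiple of the minimum-energy flow plus a circulation, and the same matching of the conservation law $\sum_j c_j = 0$ at $s$ with orthogonality to $\ket{\psi}$. Your explicit verification that the right-hand side is an internal orthogonal direct sum is a small addition the paper leaves implicit, but it does not change the argument.
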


    \begin{proof}
        For the ``$\supseteq$''-direction, it is clear that $\mathcal{C}_{G|_{E_j},s,t,r|_{E_j}}$ is contained in $\mathcal{C}_{G,r}$. On the other hand, let $\sum_{j=1}^k f_j'\norm{\ket{f^{\min}_{G|_{E_j},s,t,r|_{E_j}}}}\ket{j} = \ket{f'} \in \mathcal{C}'$. Then, we observe that
        \[0 = \bra{f'} \sum_{j=1}^k \frac{1}{\norm{\ket{f^{\min}_{G|_{E_j},s,t,r|_{E_j}}}}}\ket{j} = \sum_{j=1}^k \frac{f_j'\norm{\ket{f^{\min}_{G|_{E_j},s,t,r|_{E_j}}}}}{\norm{\ket{f^{\min}_{G|_{E_j},s,t,r|_{E_j}}}}} = \sum_{j=1}^k f_j'.\]
        We find immediately that the flow is conserved at every node besides $s$ and $t$, and for $s$ we find
        \begin{align*}
            &\sum_{e \in N^+(s)} \bra{e}\frac{1}{\sqrt{r_e}} \mathcal{E}(\ket{f'}) - \sum_{e \in N^-(s)} \bra{e}\frac{1}{\sqrt{r_e}} \mathcal{E}(\ket{f'}) \\
            &\qquad = \sum_{j=1}^k \left[\sum_{e \in E_j \cap N^+(s)} \frac{1}{\sqrt{r_e}} f_j'(f^{\min}_{G|_{E_j},s,t,r|_{E_j|}})_e\sqrt{r_e} - \sum_{e \in E_j \cap N^-(s)} \frac{1}{\sqrt{r_e}} f_j'(f^{\min}_{G|_{E_j},s,t,r|_{E_j|}})_e\sqrt{r_e}\right] \\
            &\qquad = \sum_{j=1}^k f_j'\left[\sum_{e \in E_j \cap N^+(s)} (f^{\min}_{G|_{E_j},s,t,r|_{E_j|}})_e - \sum_{e \in E_j \cap N^-(s)} (f^{\min}_{G|_{E_j},s,t,r|_{E_j|}})_e\right] = \sum_{j=1}^k f_j' = 0.
        \end{align*}
        The same argument can be applied to $t$, and so $\mathcal{E}(\ket{f'}) \in \mathcal{C}_{G,r}$.

        For the ``$\subseteq$''-direction, let $f$ be a circulation on $G$. For all $j \in [k]$, we write
        \[f_j' = \sum_{e \in E_j \cap N^+(s)} f_e - \sum_{e \in E_j \cap N^-(s)} f_e.\]
        Since $f$ is a circulation, we have net-zero flow on all the nodes, so in particular on those besides $s$ and $t$. This means that the flow flowing from $s$ into $E_j$, i.e., $f_j'$, is the same as the flow from $E_j$ into $t$. As such, we observe that we have an $st$-flow through $E_j$ with total flow $f_j'$, and so we can write
        \[\ket{f|_{E_j}} - f_j'\ket{f^{\min}_{G|_{E_j},s,t,r|_{E_j}}} \in \mathcal{C}_{G|_{E_j},r|_{E_j}}.\]
        As such, it remains to prove that $\sum_{j=1}^k f_j'\ket{f^{\min}_{G|_{E_j},s,t,r|_{E_j}}} \in \mathcal{E}(\mathcal{C}')$, for which it suffices to prove that $\ket{f'} := \sum_{j=1}^k f_j'\norm{\ket{f^{\min}_{G|_{E_j},s,t,r|_{E_j}}}}\ket{j} \in \mathcal{C}'$. We conclude by observing that
        \[\bra{f'} \sum_{j=1}^k \frac{1}{\norm{\ket{f^{\min}_{G|_{E_j},s,t,r|_{E_j}}}}}\ket{j} = \sum_{j=1}^k \frac{f_j'\norm{\ket{f^{\min}_{G|_{E_j},s,t,r|_{E_j}}}}}{\norm{\ket{f^{\min}_{G|_{E_j},s,t,r|_{E_j}}}}} = \sum_{j=1}^k f_j' = 0.\qedhere\]
    \end{proof}

    We observe that the tree and parallel decompositions effectively split up the circulation space into several mutually orthogonal components, through which we can reflect individually. These decompositions can be useful to implement the reflection through the circulation space time-efficiently in the circuit model. We prove this in the following theorem.

    \begin{theorem}
        \label{thm:implementation-circuit-model}
        Let $G = (V,E)$ be a graph with resistances $r : E \to \R_{>0}$, and let $T$ be a tree-parallel decomposition of the circulation space of $G$, with depth $d$. For every $\ell \in [d]$, let $k_{\ell}$ be the maximum number of children of nodes in the $(\ell-1)$th vertex layer. We use a Hilbert space $\C^{[k_1] \cup \{\bot\}} \otimes \cdots \otimes \C^{[k_d] \cup \{\bot\}}$, with dimension $K = (k_1+1) \cdots (k_d+1)$.

        Every node $u$ in $T$ is connected to its root by a unique path. If $u$ is in the $\ell$th vertex layer, we write $u = (j_1, \dots, j_\ell)$, if $u$ is reached through starting at the root node and taking the $j_{\ell'}$th child at the $(\ell'-1)$th vertex layer, for all $\ell' \in [\ell]$. Moreover, since every leaf $u = (j_1, \dots, j_d)$ is uniquely labeled by a singleton $\{e\}$, we identify
        \[\ket{e} := \ket{j_1, \dots, j_d}.\]

        Next, let an internal node $u \in V$ be labeled by $E_u$, and let $u_1, \dots, u_k$ be its children, labeled by $E_{u_1}, \dots, E_{u_k}$. Suppose that $G|_{E_{u_1}}, \dots, G|_{E_{u_k}}$ is a parallel decomposition of $G|_{E_u}$, with source and sink nodes $s$ and $t$. For all $j \in [k]$, we write
        \[\ket{\psi_u} := \sum_{j=1}^k \frac{1}{\norm{\ket{f_{u,j}}}}\ket{j}, \qquad \text{with} \qquad \ket{f_{u,j}} := \ket{f^{\min}_{G|_{E_{u_j}}, s, t, r|_{E_{u_j}}}}.\]

        Now, for every $\ell \in [d]$, we write the operation $\mathcal{E}_{\ell}$ that for all $u = (j_1, \dots, j_{\ell-1})$ in the $(\ell-1)$th layer of $T$ acts for all $j \in [k]$,
        \[\mathcal{E}_{\ell} : \ket{j_1,\dots,j_{\ell-1}}\ket{j}\ket{\bot}^{\otimes(d-\ell)} \mapsto \begin{cases}
            \ket{j_1,\dots,j_{\ell-1}}\ket{j}\ket{\bot}^{\otimes(d-\ell)}, & \text{if } u \text{ is the root of a tree decomposition}, \\
            \ket{j_1,\dots,j_{\ell-1}}\ket{j}\frac{\ket{f_{u,j}}}{\norm{\ket{f_{u,j}}}}, & \text{otherwise}.
        \end{cases}\]

        Similarly, for all $\ell \in [d]$, we write $U_{\ell}$ as the operation that for all $u = (j_1, \dots, j_{\ell-1})$ in the $(\ell-1)$th layer of $T$ acts as
        \[U_{\ell} : \ket{j_1,\dots,j_{\ell-1}}\ket{\bot}\ket{\bot}^{\otimes(d-\ell)} \mapsto \begin{cases}
            \ket{j_1,\dots,j_{\ell-1}}\ket{\bot}\ket{\bot}^{\otimes(d-\ell)}, & \text{if } u \text{ is the root of a tree decomposition}, \\
            \ket{j_1, \dots, j_{\ell-1}}\ket{\psi_u}\ket{\bot}^{\otimes(d-\ell)}, & \text{otherwise}.
        \end{cases}\]

        Then, we can implement the reflection through $\mathcal{C}_{G,r}$ in the circuit model with two calls to all the $\mathcal{E}_{\ell}$'s and $U_{\ell}$'s, and $\widetilde{O}(d\log(K))$ auxiliary elementary gates.
    \end{theorem}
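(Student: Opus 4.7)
The plan is to proceed by induction on the depth $d$ of the tree-parallel decomposition, recursively decomposing $\mathcal{C}_{G,r}$ via \cref{lem:tree-decomposition} and \cref{lem:parallel-decomposition} and implementing the reflection through each piece. In the base case $d = 0$, the graph is a single edge with trivial circulation space, so the reflection is $-I$ on a one-dimensional subspace, realizable by an $O(1)$-gate global phase and no calls to the $\mathcal{E}_\ell$'s or $U_\ell$'s.

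For the inductive step at the root $u$, I would split into the two decomposition types. If $u$ is a tree-decomposition node, \cref{lem:tree-decomposition} gives $\mathcal{C}_{G,r} = \bigoplus_j \mathcal{C}_{G|_{E_{u_j}}}$, each summand sitting inside the slice controlled by the first register being $\ket{j}$; one then invokes the inductive hypothesis on each subtree and composes the controlled reflections, with $\mathcal{E}_1$ and $U_1$ vacuous by definition. If $u$ is a parallel-decomposition node, \cref{lem:parallel-decomposition} gives $\mathcal{C}_{G,r} = \mathcal{E}(\mathcal{C}') \oplus \bigoplus_j \mathcal{C}_{G|_{E_{u_j}}}$, and the key observation is that the embedding $\mathcal{E}$ from that lemma is realized in the tensor state space precisely by the operation $\mathcal{E}_1$: the compact state $\ket{j}\ket{\bot}^{\otimes(d-1)}$ encodes unit flow along $\ket{f_{u,j}}/\norm{\ket{f_{u,j}}}$ inside $\C^{E_{u_j}}$. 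The strategy is then to (i) apply $\mathcal{E}_1^{\dagger}$ to move the edge-basis flow into the compact basis; (ii) reflect through $\Span\{\ket{\psi_u}\}^{\perp}$ in the second register, which after sandwiching by $U_1^{\dagger}$ and $U_1$ reduces to reflecting through a single computational basis state, costing $\widetilde{O}(\log K)$ elementary gates by \cref{thm:quantum-state-reflection}; (iii) recursively reflect through each $\mathcal{C}_{G|_{E_{u_j}}}$, controlled on the second register, via the inductive hypothesis; and (iv) apply $\mathcal{E}_1$ to revert. Since the summands in \cref{lem:parallel-decomposition} are orthogonal, the product of the individual reflections agrees with the direct-sum reflection up to a global phase.

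Summing over the $d$ levels gives two calls to each of $\mathcal{E}_\ell$ and $U_\ell$ and $\widetilde{O}(d \log K)$ auxiliary elementary gates, as claimed. The main obstacle will be verifying that the compact tensor representation interacts correctly with the edge representation at each level — specifically, that after $\mathcal{E}_1^{\dagger}$ the subspace $\mathcal{E}(\mathcal{C}')$ really becomes $\Span\{\ket{\psi_u}\}^{\perp}$ in the second register tensored with $\ket{\bot}^{\otimes(d-1)}$ on the later ones, and that the recursive subtree reflections act within the correct $\ket{j}$-slices without disturbing Step~(ii). A secondary subtlety is the parity-dependent sign that arises when a reflection through an orthogonal direct sum is written as a product of individual reflections, which must be tracked and cancelled by an extra phase in the even case.
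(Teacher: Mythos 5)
Your proposal is correct and is essentially the paper's proof: the paper writes the reflection directly as a product over the $d$ layers of operators of the form $\mathcal{E}_\ell U_\ell(\cdots)U_\ell^\dagger\mathcal{E}_\ell^\dagger$ with a reflection through $\ket{\bot}$ in the middle, which is exactly what your induction produces when unrolled, including the use of \cref{lem:tree-decomposition,lem:parallel-decomposition}, the identification of $\mathcal{E}_\ell$ with the embedding $\mathcal{E}$, and the sign bookkeeping for products of reflections through mutually orthogonal subspaces.
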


    \begin{proof}
        We prove that we can implement the reflection through $\mathcal{C}_{G,r}$ by
        \[R_{\mathcal{C}_{G,r}} = \prod_{\ell=1}^d -\underbrace{\mathcal{E}_{\ell}U_{\ell} \left(I^{\otimes(\ell-1)} \otimes \left[(-R_{\ket{\bot}}) \otimes (\ket{\bot}\bra{\bot})^{\otimes(d-\ell)} + (-I) \otimes (I - (\ket{\bot}\bra{\bot})^{\otimes(d-\ell)})\right]\right) U_{\ell}^{\dagger} \mathcal{E}_{\ell}^{\dagger}}_{=: R_\ell}.\]
        The claim about the number of calls to the $\mathcal{E}_{\ell}$'s and the $U_{\ell}$'s then follows immediately. The reflection through $\ket{\bot}$ and its control structure can both be implemented can be implemented using $\widetilde{O}(\log(K))$ auxiliary gates.

        Let $\ell \in [d]$. We observe that $\mathcal{E}_{\ell}U_{\ell}$ acts as identity on $\ket{j_1, \dots, j_{\ell-1}} \otimes \C^{[k_{\ell}] \cup \{\bot\}} \otimes \cdots \otimes \C^{[k_d] \cup \{\bot\}}$, if $(j_1, \dots, j_{\ell-1})$ is the root node of a tree decomposition. Using \cref{lem:tree-decomposition,lem:parallel-decomposition}, it remains to show that if $u = (j_1, \dots, j_{\ell-1})$ is the root node of a parallel decomposition, $R_\ell$ reflects through all the spaces $\mathcal{E}_{\ell}(\ket{j_1, \dots, j_{\ell-1}} \otimes \Span\{\ket{\psi_u}\}^{\perp} \otimes \ket{\bot}^{\otimes (d-\ell)})$. This is equivalent to showing that $U_{\ell}^{\dagger}\mathcal{E}_{\ell}^{\dagger}R_{\ell}\mathcal{E}_{\ell}U_{\ell}$ reflects through $\C^{([k_1] \cup \{\bot\})} \otimes \cdots \otimes \C^{[k_{\ell-1}] \cup \{\bot\}} \otimes \Span\{\ket{\bot}\}^{\perp} \otimes \ket{\bot}^{\otimes (d-\ell)}$, which is indeed the action of the middle operation of $R_{\ell}$.
    \end{proof}

    In specific instances it is possible to implement the operations $\mathcal{E}_\ell$ and $U_\ell$ efficiently in the circuit model. However, we remark here that is possible to give a time-efficient implementation of these operations unconditionally, in the QROM-model.

    \begin{theorem}
        \label{thm:circulation-space-reflection-implementation}
        Let $G = (V,E)$ be a graph with resistances $r : E \to \R_{>0}$, and let $T$ be a tree-parallel decomposition of $G$, with depth $d$. For every $\ell \in [d]$, let $k_{\ell}$ be the maximum number of children nodes of vertices in the $(\ell-1)$th vertex layer, and let $K = (k_1 + 1) \cdots (k_d + 1)$. Then, we can implement $R_{\mathcal{C}_{G,r}}$ in the QROM-model with $\widetilde{O}(|E|K) \subseteq \widetilde{O}(|E|^{d+1})$ bits of QROM, and $\widetilde{O}(d\log(K)) \subseteq \widetilde{O}(d^2\log|E|)$ elementary gates.
    \end{theorem}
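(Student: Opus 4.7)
My plan is to invoke Theorem~\ref{thm:implementation-circuit-model}, which reduces the implementation of $R_{\mathcal{C}_{G,r}}$ to implementing the $2d$ operations $\{\mathcal{E}_\ell\}_{\ell=1}^d$ and $\{U_\ell\}_{\ell=1}^d$ together with $\widetilde{O}(d\log K)$ auxiliary elementary gates. It therefore suffices to show that each $\mathcal{E}_\ell$ and $U_\ell$ admits a QROM-model implementation using $\widetilde{O}(\log K)$ gates and $\widetilde{O}(K)$ bits of QROM; summing over $\ell$ and combining with the auxiliary gates from Theorem~\ref{thm:implementation-circuit-model} then gives the stated bounds.

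Both $\mathcal{E}_\ell$ and $U_\ell$ are controlled state preparations. Concretely, $\mathcal{E}_\ell$ is controlled on the first $\ell$ coordinates $\ket{j_1, \dots, j_{\ell-1}, j}$ and, in the parallel-decomposition case, prepares the precomputable state $\ket{f_{u,j}}/\norm{\ket{f_{u,j}}}$ on the remaining $d-\ell$ coordinates (dimension at most $(k_{\ell+1}+1)\cdots(k_d+1)$). Likewise, $U_\ell$ is controlled on the first $\ell-1$ coordinates and prepares $\ket{\psi_u}$ on the $\ell$th coordinate (dimension at most $k_\ell+1$). The target states are determined entirely by $G$, $r$, and $T$, all of which are input-independent, so they can be computed classically in preprocessing and written once into a shared QROM.

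For each individual state preparation, I apply Theorem~\ref{thm:quantum-state-preparation}, which prepares a state on a $D$-dimensional Hilbert space using $\widetilde{O}(\log D)$ gates and $\widetilde{O}(D)$ bits of QROM. To lift this to the full controlled operation $\sum_{\vec\jmath} \ket{\vec\jmath}\bra{\vec\jmath} \otimes U_{\vec\jmath}$, I appeal to the parallel-implementation identity in equation~\eqref{eq:time-complextiy-composition}, whose time overhead is only $\polylog(K)$. This yields $\widetilde{O}(\log K)$ elementary gates per layer for both $\mathcal{E}_\ell$ and $U_\ell$. The QROM requirement at layer $\ell$ is bounded by the sum over control values of the per-preparation QROM: for $\mathcal{E}_\ell$ this is at most $(k_1+1)\cdots(k_\ell+1) \cdot \widetilde{O}((k_{\ell+1}+1)\cdots(k_d+1)) = \widetilde{O}(K)$, and for $U_\ell$ it is at most $(k_1+1)\cdots(k_{\ell-1}+1) \cdot \widetilde{O}(k_\ell+1) \leq \widetilde{O}(K)$.

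Summing across the $d$ layers, the total cost is $\widetilde{O}(d\log K)$ elementary gates and $\widetilde{O}(dK) \subseteq \widetilde{O}(|E| \cdot K)$ bits of QROM (using the trivial bound $d \leq |E|$), and the final containment $\widetilde{O}(|E|K) \subseteq \widetilde{O}(|E|^{d+1})$ follows from $K \leq (|E|+1)^d$. The main bookkeeping obstacle is ensuring that the QROM budget behaves additively across layers and that the intra-layer controlled state preparations combine without incurring multiplicative factors beyond the polylogarithmic slack hidden by the $\widetilde{O}$-notation; both are handled cleanly by observing that different layers can use disjoint QROM blocks and that equation~\eqref{eq:time-complextiy-composition} absorbs the intra-layer dispatch overhead.
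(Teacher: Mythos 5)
Your proposal is correct and follows essentially the same route as the paper: reduce to \cref{thm:implementation-circuit-model}, observe that $\mathcal{E}_\ell$ and $U_\ell$ are parallel/controlled state preparations implementable via \cref{thm:quantum-state-preparation} and \cref{eq:time-complextiy-composition}, and tally the QROM and gate counts. Your per-layer bookkeeping ($\widetilde{O}(dK)$ bits, then $d \leq |E|$) is in fact slightly tighter than the paper's per-node count, but both arguments are the same in substance and land within the stated bounds.
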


    \begin{proof}
        It remains to implement the operations $\mathcal{E}_{\ell}$ and $U_{\ell}$ in \cref{thm:implementation-circuit-model}. To that end, observe that both are parallel quantum state-preparation operations. For each internal node, we need to store the state $\ket{\psi_u}$, and the states $\ket{f_{u,j}}$, all of which requires $\widetilde{O}(K)$ bits of QROM to store. The number of nodes in the tree for which we have a parallel decomposition is at most $|E|$, and so we obtain that the QROM-size required is $\widetilde{O}(|E|K)$.

        Finally, every state-preparation routine requires $\widetilde{O}(\log(K))$ elementary gates to be implemented, and this is also the number of gates to implement it in parallel. Thus, the total number of gates required to implement the entire reflection through $\mathcal{C}_{G,r}$ is $\widetilde{O}(d\log(K))$.
    \end{proof}

    Note that even though the above implementations can be quite efficient when it comes to the number of elementary gates, they are typically still very space-inefficient. In general, we require a QROM of size $\widetilde{O}(|E|K)$ to implement the aforementioned operations, which, as we will see in \cref{sec:relations,sec:applications}, can be exponentially large for some applications. Moreover, it also takes time to classically precompute the contents of this memory, which might also take exponential time to do. Thus, when applying these techniques to specific applications, it might still be necessary to find problem-specific improvements over the generic techniques presented in this section, and we will see several examples of that in the following sections.

    \section{Relations to other quantum algorithmic frameworks}
    \label{sec:relations}

    In this section, we elaborate on how the graph composition framework relates to other ways of designing quantum algorithms that exist in various places in the literature.

    \subsection{$st$-connectivity and planar graphs}

    The $st$-connectivity problem in the adjacency matrix model has received quite some attention in the literature. The first quantum algorithm that solves it in an undirected graph of $n$ vertices was presented by D\"urr et al.~\cite{durr2006quantum}, who constructed an $O(n^{3/2})$-query algorithm. It was subsequently improved by Belovs and Reichardt~\cite{belovs2012span}, who constructed a span-program-based algorithm that makes $\widetilde{O}(n\sqrt{\ell})$ queries, if one is promised that if $s$ and $t$ are connected, there always exists a path between them of length at most $\ell$. Later, Jeffery and Kimmel improved the analysis of the algorithm from \cite{belovs2012span} in the special case where $G \cup \{\{s,t\}\}$ is a planar graph, relating the span program witness sizes to effective resistances~\cite{jeffery2017quantum}. Moreover, they also showed how the boolean formula evaluation problem fits in this framework. Finally, Jarret et al.~\cite{jarret2018quantum} generalized the approach taken in \cite{jeffery2017quantum} to non-planar graphs, leading to the current state-of-the-art algorithm for solving the $st$-connectivity problem in the adjacency matrix model~\cite[Corollary~18]{jarret2018quantum}.

    The graph composition framework is a strict generalization of the setting considered in the $st$-connectivity problem. In the $st$-connectivity problem, one can make direct queries whether a given edge $\{u,v\} \in V^2$ is present in the graph. In the graph composition framework, we instead allow for putting arbitrary span programs on the edges in the graph that compute whether the edge is present. Indeed, the main result from \cite{jarret2018quantum} for a graph $G = (V,E)$ can be recovered from \cref{thm:graph-composition} by setting all the span programs $(\mathcal{P}_e)_{e \in E}$ to scalar multiples of trivial span programs that evaluate the $e$th bit of a bitstring $x \in \{0,1\}^E$.

    We formalize the definition of an $st$-connectivity graph here.

    \begin{definition}
        Let $G = (V,E)$ be an undirected multigraph with resistances $r : E \to \R_{>0}$, and $s,t \in V$ with $s \neq t$. Let $\D \subseteq \{0,1\}^n$ and for all $e \in E$, let $j(e) \in [n]$ and $b(e) \in \{0,1\}$. We refer to $\mathcal{G} = (G,j,b,r)$ as an $st$-connectivity graph on $\D$, and we say that it computes the function $f : \D \to \{0,1\}$, where $f(x) = 1$ if and only if $s$ and $t$ are connected along the graph $G(x) = (V,E(x))$, where $E(x) = \{e \in E : x_{j(e)} = b(e)\}$. We let $w_+(x,\mathcal{G}) = R_{G(x),s,t,r}$, and $w_-(x,\mathcal{G}) = R_{G,s,t,r'}^{-1}$, where $r'(e) = r(e)^{-1}$ if $x_{j(e)} \neq b(e)$, and $0$ otherwise. We say that the complexity of $\mathcal{G}$ is
        \[C(\mathcal{G}) = \sqrt{\max_{x \in f^{-1}(1)} w_+(x,\mathcal{G}) \cdot \max_{x \in f^{-1}(0)} w_-(x,\mathcal{G})}.\]
        Finally, for any boolean function $f : \D \to \{0,1\}$ with $\D \subseteq \{0,1\}^n$, we write $\st(f)$ for the minimal complexity $C(\mathcal{G})$ of an $st$-connectivity graph $\mathcal{G}$ that computes $f$.
    \end{definition}

    \begin{theorem}
        \label{thm:st-connectivity}
        Let $n \in \mathbb{N}$ and $\D \subseteq \{0,1\}^n$. Every $st$-connectivity graph $\mathcal{G} = (G,j,e,r)$ on $\D$ can be turned into a graph composition $\mathcal{P}$ on $\D$, such that $C(\mathcal{P}) = C(\mathcal{G})$.
    \end{theorem}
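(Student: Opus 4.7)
My plan is to realize $\mathcal{P}$ as a graph composition on the same underlying graph $G$, with the same source $s$ and sink $t$, where each edge $e \in E$ is equipped with a small span program $\mathcal{P}^e$ on the common domain $\D$ that decides the single-bit predicate $[x_{j(e)} = b(e)]$ and whose two witness sizes are balanced at $r(e)$: that is, $w_+(x, \mathcal{P}^e) = r(e)$ on every $x$ with $x_{j(e)} = b(e)$ and $w_-(x, \mathcal{P}^e) = r(e)$ on every $x$ with $x_{j(e)} \neq b(e)$. The theorem then becomes a direct invocation of \cref{thm:graph-composition}.

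Concretely, given such edge span programs, \cref{thm:graph-composition} assigns, for a positive input $x$, edge resistances $r^+_e = w_+(x, \mathcal{P}^e) = r(e)$ on edges present in $G(x)$ and $r^+_e = \infty$ on absent ones. Since infinite-resistance edges are removed from the effective-resistance computation by our conventions, this gives $w_+(x, \mathcal{P}) = R_{G,s,t,r^+} = R_{G(x),s,t,r} = w_+(x, \mathcal{G})$. For a negative input $x$, the theorem assigns $r^-_e = w_-(x, \mathcal{P}^e)^{-1} = r(e)^{-1}$ on absent edges and $r^-_e = 0$ on present edges (which are now short-circuited), and this is precisely the resistance assignment $r'$ appearing in the definition of $w_-(x, \mathcal{G})$, so $w_-(x, \mathcal{P}) = R^{-1}_{G,s,t,r'} = w_-(x, \mathcal{G})$. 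Since the two pairs of witness sizes agree on every $x \in \D$, the maxima over positive and negative inputs agree, and hence $C(\mathcal{P}) = C(\mathcal{G})$.

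It remains to build an edge-local span program with balanced witnesses $r(e)$, and this is the one algebraic ingredient that needs care. A scalar multiple of the trivial one-dimensional span program always satisfies $W_+ \cdot W_- = 1$, so it cannot by itself produce $W_+ = W_- = r(e)$ when $r(e) \neq 1$. I would instead use a three-dimensional inflation: $\H^e = \C^3$, $\K^e = \Span\{\ket{0} - \ket{1}\}$, $\H^e(x) = \H^e$ on accepting inputs and $\H^e(x) = \Span\{\ket{2}\}$ on rejecting ones, with $\ket{w_0^e} = \alpha_e(\ket{0} + \ket{1}) + \beta_e \ket{2}$. A direct computation of the minimum-norm witnesses yields $W_+(\mathcal{P}^e) = 2\alpha_e^2 + \beta_e^2$ and $W_-(\mathcal{P}^e) = 1/(2\alpha_e^2)$, so setting $\alpha_e^2 = 1/(2 r(e))$ and $\beta_e^2 = r(e) - r(e)^{-1}$ balances both at $r(e)$ whenever $r(e) \geq 1$; the case $r(e) < 1$ is handled by a dual variant obtained via \cref{thm:negation}. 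The hard part is really just pinning down this balanced edge span program; once it is in hand, the effective-resistance identifications in \cref{thm:graph-composition} line up mechanically with the resistance assignments $r$ and $r'$ used to define $\mathcal{G}$'s witness sizes, and the equality $C(\mathcal{P}) = C(\mathcal{G})$ follows at once.
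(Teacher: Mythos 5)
Your high-level plan --- keep the graph $G$, put a single-bit span program on each edge, and read the witness sizes off \cref{thm:graph-composition} --- is exactly the paper's. The divergence, and the gap, is in the edge gadget. The paper simply sets $\mathcal{P}^e = r_e\mathcal{P}_{j(e)}$ if $b(e) = 1$ and $\mathcal{P}^e = r_e(\lnot\mathcal{P}_{j(e)})$ if $b(e) = 0$, where $\mathcal{P}_j$ is the trivial span program for the $j$th bit --- precisely the scalar multiple you considered and discarded. You discarded it because it cannot achieve $W_+ = W_- = r(e)$, but balance is not what is needed: \cref{thm:graph-composition} feeds $w_+(x,\mathcal{P}^e)$ into the positive resistances and $w_-(x,\mathcal{P}^e)^{-1}$ into the negative ones, so the unbalanced pair $(w_+,w_-) = (r_e,\, 1/r_e)$ produced by the scalar multiple makes \emph{both} induced resistances equal to $r_e$: present edges get resistance $r_e$ in the positive computation (absent ones are cut), and absent edges get resistance $r_e$ in the negative computation (present ones are shorted). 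This is the standard resistance/capacitance pairing, it works for every $r_e > 0$, and it gives $C(\mathcal{P}) = C(\mathcal{G})$ in one line.

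Your balanced three-dimensional gadget, by contrast, has an unfixable hole at $r(e) < 1$: there $\beta_e^2 = r(e) - r(e)^{-1} < 0$, and the proposed rescue via \cref{thm:negation} cannot work, because negation merely swaps $W_+$ and $W_-$ --- a balanced span program stays balanced --- and, more fundamentally, no span program computing a non-constant predicate can have $\sqrt{W_+ W_-} < 1$ (a positive and a negative witness have inner product $1$, so $w_+ w_- \geq 1$ by Cauchy--Schwarz). What drove you to this gadget is that you matched the inversion $r'(e) = r(e)^{-1}$ in the paper's displayed definition of $w_-(x,\mathcal{G})$ literally. That inversion appears to be an error in the definition rather than a constraint you must satisfy: under it, rescaling $r \mapsto \varepsilon r$ sends $C(\mathcal{G})$ to $\varepsilon\, C(\mathcal{G})$, which can be made smaller than $1$ and hence smaller than $C(\mathcal{P})$ for \emph{any} span program computing a non-constant $f$, so the theorem would be false as literally stated. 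With the intended reading $r'(e) = r(e)$ on absent edges (the standard effective-capacitance convention, and the one every other witness computation in the paper produces), the scalar-multiple construction matches the definition exactly and the theorem follows.
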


    \begin{proof}
        Let $G = (V,E)$. For all $j \in [n]$, let $\mathcal{P}_j$ on $\D$ be a trivial span program that computes $j$th bit. For all the edges $e \in E$, let $\mathcal{P}_e = r_e\mathcal{P}_{j(e)}$ if $b(e) = 1$, and $\mathcal{P}_e = r_e(\lnot\mathcal{P}_{j(e)})$ if $b(e) = 0$. Now, we take $\mathcal{P}$ to be the graph composition of $(\mathcal{P}_e)_{e \in E}$ on $G$. The analyses of the witness sizes then follows directly from \cref{thm:graph-composition}.
    \end{proof}

    It turns out that in the case where $G \cup \{\{s,t\}\}$ is a planar graph, the span program negation of the graph composition of $G$ is related to a span program composition of the planar dual graph $G^{\dagger}$. This was observed in the unit-cost setting in \cite{jeffery2017quantum}, and here we extend this observation to our generalized setting. The core result that makes the special case of planar graphs interesting to us is the following characterization of the circulation space of dual graphs.

    \begin{lemma}
        \label{lem:planar-dual-circulation-space}
        Let $G = (V,E)$ be a connected, undirected, planar graph, where every edge $e \in E$ has an implicit direction meaning that it goes from $e_-$ to $e_+$. Let $G^{\dagger} = (F^{\dagger},E^{\dagger})$ be a planar dual of $G$, where every edge $e$ is associated to a dual edge $e^{\dagger} \in E^{\dagger}$, that points from the face on the left of $e$ to the face on the right of $e$, as seen from the perspective of traversing $e$ in its implicit direction. Moreover, let $r : E \to [0,\infty]$ be resistances on $G$, and let $r^{\dagger} : E^{\dagger} \to [0,\infty]$ be defined by $r^{\dagger}_{e^{\dagger}} = 1/r_e$. Then, by identifying $\ket{e^{\dagger}} = \ket{e}$, and as such identifying $\H_{G^{\dagger}} = \H_G$, we find that
        \[\H_G = \mathcal{C}_{G,r} \oplus \mathcal{C}_{G^{\dagger},r^{\dagger}} = \H_{G^{\dagger}}.\]
    \end{lemma}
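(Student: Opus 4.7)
The plan is to establish the decomposition by a dimension count combined with an orthogonality argument. Since $\mathcal{C}_{G,r}$ and $\mathcal{C}_{G^{\dagger}, r^{\dagger}}$ are both subspaces of the common flow space $\H_G = \H_{G^{\dagger}} = \C^E$, it suffices to show (i) $\dim \mathcal{C}_{G,r} + \dim \mathcal{C}_{G^{\dagger}, r^{\dagger}} = |E|$, and (ii) $\mathcal{C}_{G,r} \perp \mathcal{C}_{G^{\dagger}, r^{\dagger}}$.

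For (i), I would first recall the standard fact that for a connected graph with finite positive resistances, the circulation space has dimension equal to the cyclomatic number $|E| - |V| + 1$ (the first Betti number). Applied to $G^{\dagger}$, whose vertex set is the face set $F^{\dagger}$ and whose edge set is in bijection with $E$, this gives $\dim \mathcal{C}_{G^{\dagger}, r^{\dagger}} = |E| - |F^{\dagger}| + 1$. Euler's formula for a connected planar graph, $|V| - |E| + |F^{\dagger}| = 2$, then yields $\dim \mathcal{C}_{G,r} + \dim \mathcal{C}_{G^{\dagger}, r^{\dagger}} = (|E| - |V| + 1) + (|V| - 1) = |E|$, as desired. (Edges with $r_e \in \{0, \infty\}$ require a small amount of extra care, which I would handle by noting that such edges contribute matching adjustments on the two sides under the identification $r^{\dagger}_{e^{\dagger}} = 1/r_e$.)

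For (ii), the key observation is that under the identification $\ket{e^{\dagger}} = \ket{e}$, the inner product between a primal flow state and a dual flow state simplifies, since $\sqrt{r_e} \cdot \sqrt{r^{\dagger}_{e^{\dagger}}} = 1$, to $\braket{f_{G,r}}{g_{G^{\dagger}, r^{\dagger}}} = \sum_{e \in E} \overline{f_e}\, g_{e^{\dagger}}$. By \cref{thm:potential-function-properties}(1), $\ket{g_{G^{\dagger}, r^{\dagger}}} \in \mathcal{C}_{G,r}^{\perp}$ if and only if it equals a potential state $\ket{f_{G,U,r}}$ for some potential function $U$ on $G$. Comparing coefficients, this is equivalent to producing $U : V \to \C$ with $U_{e_-} - U_{e_+} = g_{e^{\dagger}}$ for every edge $e$. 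I would construct $U$ by fixing a base vertex $v_0$ with $U(v_0) = 0$ and, for any other $v$, choosing a walk from $v_0$ to $v$ along signed edges $(e_1, \epsilon_1), \dots, (e_k, \epsilon_k)$ in $G$, and setting $U(v) = -\sum_i \epsilon_i g_{e_i^{\dagger}}$. Well-definedness reduces to showing the signed sum vanishes around any cycle of $G$; since the boundaries of the $|E| - |V| + 1$ bounded faces generate the cycle space, it suffices to check each face-boundary cycle, where the alternating sign convention for $e^{\dagger}$ (left-to-right across $e$) makes the sum precisely the Kirchhoff conservation equation at the corresponding vertex of $G^{\dagger}$, which vanishes because $g$ is a circulation on $G^{\dagger}$.

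The main obstacle I expect is bookkeeping the orientation conventions carefully: getting the signs of $\epsilon_i$ and the left/right assignment of $e^{\dagger}$ to align so that traversing a bounded face counterclockwise in $G$ translates exactly into the signed sum of $g$-values on outgoing versus incoming dual edges at the corresponding face-vertex of $G^{\dagger}$. Once this is pinned down, the Kirchhoff condition on $G^{\dagger}$ delivers well-definedness of $U$ around every bounded face, hence around every element of a generating set of the primal cycle space, completing the construction of the potential and thus the orthogonality. Combined with the dimension count from (i), this yields the claimed orthogonal direct sum $\H_G = \mathcal{C}_{G,r} \oplus \mathcal{C}_{G^{\dagger}, r^{\dagger}}$.
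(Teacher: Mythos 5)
Your proposal is correct and follows essentially the same route as the paper: both arguments combine an Euler-formula dimension count with the observation that pairing a dual circulation against a face-boundary cycle of $G$ reproduces the Kirchhoff conservation equation at the corresponding vertex of $G^{\dagger}$ (the paper pairs directly against face-cycle flow states, whereas you package the same computation as the well-definedness of a potential function via \cref{thm:potential-function-properties}, which is a cosmetic difference). Like the paper, you leave the $r_e \in \{0,\infty\}$ bookkeeping largely implicit, so no gap relative to the original.
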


    \begin{proof}
        Let $\ket{f} \in \mathcal{C}_{G^{\dagger},r^{\dagger}}$. Let $f^{\dagger} \in F^{\dagger}$ arbitrarily, and let $f'$ be the unit flow in $G$ around $f^{\dagger}$ in the counter-clockwise direction. Let $e_1, \dots, e_k$ and $e_1', \dots, e_{k'}'$ be the edges it traverses in the right and wrong direction, respectively. Note that $N_+(f^{\dagger}) = \{e_1^{\dagger}, \dots, e_k^{\dagger}\}$, and $N_-(f^{\dagger}) = \{(e_1')^{\dagger}, \dots (e_{k'}')^{\dagger}\}$. Thus, we observe that
        \[\braket{f'}{f} = \sum_{j=1}^k \sqrt{r_{e_j} \cdot r_{e_j^{\dagger}}^{\dagger}}f_{e_j^{\dagger}} - \sum_{j=1}^{k'} \sqrt{r_{e_j'} \cdot r_{(e_j')^{\dagger}}^{\dagger}}f_{(e_j')^{\dagger}} = \sum_{e \in N_+(f^{\dagger})} f_{e^{\dagger}} - \sum_{e \in N_-(f^{\dagger})} f_{e^{\dagger}} = 0,\]
        Next, observe that such cycle flows $f'$ span the circulation space of $G$, we obtain $\ket{f} \in \mathcal{C}_{G,r}^{\perp}$. We thus find $\mathcal{C}_{G^{\dagger},r^{\dagger}} \subseteq \mathcal{C}_{G,r}^{\perp}$, and by symmetry $\mathcal{C}_{G,r} \subseteq \mathcal{C}_{G^{\dagger},r^{\dagger}}^{\perp}$. Finally, observe that there are $|F^{\dagger}|-1$ linearly independent cycle flows in $G$, and similarly $|V|-1$ linearly independent cycle flows in $G^{\dagger}$, and so we find using Euler's formula that
        \[\dim(\mathcal{C}_{G,r}) + \dim(\mathcal{C}_{G^{\dagger},r^{\dagger}}) = |F^{\dagger}|-1 + |V|-1 = |E| = \dim(\H_G).\qedhere\]
    \end{proof}

    This result gives a nice relation between dual-graph-composed span programs and their negation. We note that this result elegantly recovers~\cite[Lemma~11]{jeffery2017quantum}.

    \begin{theorem}
        \label{thm:dual-graph-composition}
        Let $G = (V,E)$ be a connected, undirected graph with $s,t \in V$ and $s \neq t$, such that $G \cup \{(s,t)\}$ is planar. Let $G^{\dagger} = (F^{\dagger},E^{\dagger})$ be a planar dual of $G \cup \{(s,t)\}$ with the dual edge of $(s,t)$, connecting $s^{\dagger}$ and $t^{\dagger}$, removed. For all $e \in E$ we denote its dual edge by $e^{\dagger}$, we let $\mathcal{P}$ be the span program formed by taking the graph composition of $G$ with the span programs $(\mathcal{P}_e)_{e \in E}$, and we let $\mathcal{P}^{\dagger}$ be the graph composition of $G^{\dagger}$ with span programs $(\lnot\mathcal{P}_e)_{e^{\dagger} \in E^{\dagger}}$. Then, $\lnot\mathcal{P} = \mathcal{P}^{\dagger}$.
    \end{theorem}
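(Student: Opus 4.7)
The strategy is to unpack both $\lnot\mathcal{P}$ and $\mathcal{P}^{\dagger}$ and compare their three defining ingredients. The input-dependent subspaces match immediately, since both equal $\bigoplus_{e \in E} \H^e(x)^{\perp}$. The edge-wise embeddings also match: for each $e \in E$, $\ket{w_0^{\lnot e}}/\norm{\ket{w_0^{\lnot e}}} = \ket{w_0^e}/\norm{\ket{w_0^e}}$ (the two vectors differ only by a positive rescaling), so under the identification $\ket{e^{\dagger}} = \ket{e}$ from \cref{lem:planar-dual-circulation-space}, the canonical embedding $\mathcal{E}^{\dagger}$ used for $\mathcal{P}^{\dagger}$ coincides with $\mathcal{E}$ used for $\mathcal{P}$. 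The remaining work is to verify $\K' = \K^{\dagger}$ and $\ket{w_0'} = \ket{w_0^{\dagger}}$.

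The key technical ingredient is the following refinement of \cref{lem:planar-dual-circulation-space}: inside $\H_G$,
\[\mathcal{C}_{G^{\dagger},r^{\dagger}} = \left(\mathcal{C}_{G,r} \oplus \Span\{\ket{f^{\min}_{G,s,t,r}}\}\right)^{\perp}.\]
I would prove this by applying \cref{lem:planar-dual-circulation-space} to the augmented graph $\tilde{G} := G \cup \{(s,t)\}$ with auxiliary resistance $\tilde{r}_{(s,t)} = 1$, in which case the extra circulation degree of freedom on top of $\mathcal{C}_{G,r}$ is generated by $\ket{f^{\min}_{G,s,t,r}} - \ket{(s,t)}$, and symmetrically on the dual side by $\ket{f^{\min}_{G^{\dagger},s^{\dagger},t^{\dagger},r^{\dagger}}} - \ket{(s^{\dagger},t^{\dagger})}$. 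The orthogonality $\mathcal{C}_{\tilde{G},\tilde{r}} \perp \mathcal{C}_{\tilde{G}^{\dagger},\tilde{r}^{\dagger}}$ from the lemma restricts to $\H_G$ (where the $(s,t)$-coordinate vanishes on both sides) to give $\mathcal{C}_{G,r} \oplus \Span\{\ket{f^{\min}_{G,s,t,r}}\} \perp \mathcal{C}_{G^{\dagger},r^{\dagger}}$, and the dimension count from the lemma applied to $\tilde{G}$ closes the gap.

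With this identity, $\K' = \K^{\dagger}$ is a routine calculation: using the orthogonal decomposition $\H^e = \K^e \oplus \Span\{\ket{w_0^e}\} \oplus \K^{\lnot e}$ for each edge,
\[\K \oplus \Span\{\ket{w_0}\} = \bigoplus_{e \in E} \K^e \, \oplus \, \mathcal{E}\left(\mathcal{C}_{G,r} \oplus \Span\{\ket{f^{\min}_{G,s,t,r}}\}\right),\]
whose orthogonal complement in $\H$ is $\bigoplus_{e \in E} \K^{\lnot e} \oplus \mathcal{E}(\mathcal{C}_{G^{\dagger},r^{\dagger}}) = \K^{\dagger}$.

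The main obstacle is the identity of initial vectors $\ket{w_0^{\dagger}} = \ket{w_0'}$, which (after pulling out $\mathcal{E}^{\dagger} = \mathcal{E}$) reduces to the electrical-duality identity
\[\ket{f^{\min}_{G^{\dagger},s^{\dagger},t^{\dagger},r^{\dagger}}} = \frac{\ket{f^{\min}_{G,s,t,r}}}{R_{G,s,t,r}} \quad \text{in } \H_G = \H_{G^{\dagger}}.\]
Applying the key identity above with the roles of $G$ and $G^{\dagger}$ swapped gives $\mathcal{C}_{G^{\dagger},r^{\dagger}}^{\perp} = \mathcal{C}_{G,r} \oplus \Span\{\ket{f^{\min}_{G^{\dagger},s^{\dagger},t^{\dagger},r^{\dagger}}}\}$; since $\ket{f^{\min}_{G,s,t,r}}$ lies in this subspace and is orthogonal to $\mathcal{C}_{G,r}$, it must be a scalar multiple of $\ket{f^{\min}_{G^{\dagger},s^{\dagger},t^{\dagger},r^{\dagger}}}$. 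The scalar is pinned down by computing $\braket{f^{\min}_{G,s,t,r}}{f^{\min}_{G^{\dagger},s^{\dagger},t^{\dagger},r^{\dagger}}}$ via the orthogonality of $\ket{f^{\min}_{G,s,t,r}} - \ket{(s,t)}$ and $\ket{f^{\min}_{G^{\dagger},s^{\dagger},t^{\dagger},r^{\dagger}}} - \ket{(s^{\dagger},t^{\dagger})}$ inside $\H_{\tilde{G}}$, which evaluates to $\pm 1$ with sign fixed by the convention identifying $s^{\dagger}$ and $t^{\dagger}$ with the appropriate endpoints of the dual of $(s,t)$. Combined with $\norm{\ket{f^{\min}_{G,s,t,r}}}^2 = R_{G,s,t,r}$, this yields $R_{G,s,t,r} \cdot R_{G^{\dagger},s^{\dagger},t^{\dagger},r^{\dagger}} = 1$ and the claimed proportionality factor $1/R_{G,s,t,r}$.
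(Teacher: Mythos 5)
Your proposal is correct and follows essentially the same route as the paper's own proof: both apply \cref{lem:planar-dual-circulation-space} to the augmented graph $G \cup \{(s,t)\}$ with unit resistance on the new edge to obtain the key identity $\left(\mathcal{C}_{G,r} \oplus \Span\{\ket{f^{\min}_{G,s,t,r}}\}\right)^{\perp} = \mathcal{C}_{G^{\dagger},r^{\dagger}}$, and then use it once to match the input-independent subspaces and once, via the inner product of the two augmented circulations, to pin down $\ket{f^{\min}_{G^{\dagger},s^{\dagger},t^{\dagger},r^{\dagger}}} = \ket{f^{\min}_{G,s,t,r}}/R_{G,s,t,r}$. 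One cosmetic slip: literally swapping the roles of $G$ and $G^{\dagger}$ yields $\mathcal{C}_{G,r}^{\perp} = \mathcal{C}_{G^{\dagger},r^{\dagger}} \oplus \Span\{\ket{f^{\min}_{G^{\dagger},s^{\dagger},t^{\dagger},r^{\dagger}}}\}$ rather than the identity you wrote, but your proportionality argument goes through verbatim with the corrected version.
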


    \begin{proof}
        Let $\ket{st}$ be the vector with weight $1$ on the edge between $s$ and $t$, and let the default direction of this edge be from $s$ to $t$. We define $r : E \to [0,\infty]$ as $r_e = \norm{\ket{w_0^e}}^2$, and $r_{st} = 1$. Now, we observe that
        \[\ket{f^{\min}_{G,s,t,r}} \oplus -\ket{st} \in \mathcal{C}_{G \cup \{(s,t)\}, r}, \qquad \text{and} \qquad \mathcal{C}_{G \cup \{(s,t)\},r} = \mathcal{C}_{G,r} \oplus \Span\{\ket{f^{\min}_{G,s,t,r}} \oplus -\ket{st}\},\]
        where we used that $\ket{f^{\min}_{G,s,t,r}} \in \mathcal{C}_{G,r}^{\perp}$. By symmetry, we find that
        \[\mathcal{C}_{G^{\dagger} \cup \{(s^{\dagger},t^{\dagger})\}, r^{\dagger}} = \mathcal{C}_{G^{\dagger},r^{\dagger}} \oplus \Span\{\ket{f^{\min}_{G^{\dagger},s^{\dagger},t^{\dagger},r^{\dagger}}} \oplus -\ket{s^{\dagger}t^{\dagger}}\},\]
        and recall by \cref{lem:planar-dual-circulation-space} that
        \[\mathcal{C}_{G \cup \{(s,t)\},r} \oplus \mathcal{C}_{G^{\dagger} \cup \{(s^{\dagger},t^{\dagger})\}, r^{\dagger}} = \H_G \oplus \Span\{\ket{st}\},\]
        and since $\ket{st}$ and $\ket{s^{\dagger}t^{\dagger}}$ are directed oppositely, we find $\ket{f^{\min}_{G,s,t,r}} = \ket{f^{\min}_{G^{\dagger},s^{\dagger},t^{\dagger},r^{\dagger}}}/\norm{\ket{f^{\min}_{G^{\dagger},s^{\dagger},t^{\dagger},r^{\dagger}}}}^2$.

        Now, we verify that $\lnot\mathcal{P} = \mathcal{P}^{\dagger}$. To that end, note that the embedding $\mathcal{E}$ is the same for both graphs, and so we have
        \[\ket{w_0'} = \frac{\ket{w_0}}{\norm{\ket{w_0}}^2} = \frac{\mathcal{E}(\ket{f^{\min}_{G,s,t,r}})}{\norm{\ket{f^{\min}_{G,s,t,r}}}^2} = \mathcal{E}\left(\ket{f^{\min}_{G^{\dagger},s^{\dagger},t^{\dagger},r^{\dagger}}}\right) = \ket{w_0^{\dagger}},\]
        and we observe from the definition that for all $x \in \D$, we have
        \[\H(x)^{\perp} = \left[\bigoplus_{e \in E} \H^e(x)\right]^{\perp} = \bigoplus_{e \in E} \H^e(x)^{\perp} = \H^{\dagger}(x).\]
        Thus, it remains to prove that $(\K \oplus \Span\{\ket{w_0}\})^{\perp} = \K^{\dagger}$. To that end, observe that
        \begin{align*}
            (\K \oplus \Span\{\ket{w_0}\})^{\perp} &= \left[\bigoplus_{e \in E} \K_e \oplus \mathcal{E}(\mathcal{C}_{G,r}) \oplus \Span\{\mathcal{E}(\ket{f^{\min}_{G,s,t,r}})\}\right]^{\perp} \\
            &= \bigoplus_{e \in E} (\K_e \oplus \Span\{\ket{w_0^e}\})^{\perp} \oplus \mathcal{E}((\mathcal{C}_{G,r} \oplus \Span\{\ket{f^{\min}_{G,s,t,r}}\})^{\perp}) \\
            &= \bigoplus_{e \in E} (\K_e \oplus \Span\{\ket{w_0^e}\})^{\perp} \oplus \mathcal{E}(\mathcal{C}_{G^{\dagger},r^{\dagger}}) = \K^{\dagger}.\qedhere
        \end{align*}
    \end{proof}

    \subsection{Variable-time search, formula evaluation and divide and conquer}
    \label{subsec:var-time-search-formula-evaluation-divide-conquer}

    We continue by making the fundamental observation that we can encode logic in the structure of the graph that we use to compose span programs.\footnote{The idea of encoding logic in series-parellel graphs is not new -- it was already considered by Shannon~\cite{shannon1938symbolic,shannon1949synthesis}.} Indeed, if we want to compute the AND or OR of several span programs, there is an easy way to do this using the graph composition framework. We briefly introduce these constructions here, analogously to earlier works, e.g., \cite{reichardt2012span,reichardt2009span,jeffery2017quantum}, \cite[Section~5.1]{cornelissen2020span} and \cite[Section~3.3]{jeffery2024multidimensional}.

    \begin{definition}[AND- and OR-composition]
        \label{def:logical-compositions}
        Let $n \in \N$, and $\mathcal{P}_1, \dots, \mathcal{P}_n$ be span programs on $\D$. We define two special types of graph compositions:
        \begin{enumerate}[nosep]
            \item Let $G$ be a line graph with $n$ edges, and let $s$ and $t$ be the endpoints. We let $\mathcal{P}$ be the graph composition of $G$ with span programs $\mathcal{P}_1, \dots, \mathcal{P}_n$ on the edges, and we write $\mathcal{P} = \bigwedge_{j=1}^n \mathcal{P}_j$. We refer to this as the AND-composition of $\mathcal{P}_1, \dots, \mathcal{P}_n$.
            \item Let $G$ be a graph with $2$ nodes, $s$ and $t$, and $n$ parallel edges between them. We let $\mathcal{P}$ be the graph composition with span programs $\mathcal{P}_1, \dots, \mathcal{P}_n$ on the edges, and we write $\mathcal{P} = \bigvee_{j=1}^n \mathcal{P}_j$. We refer to this construction as the OR-composition of $\mathcal{P}_1, \dots, \mathcal{P}_n$.
        \end{enumerate}
    \end{definition}

    The pictorial interpretations of these constructions are provided in \cref{fig:and-or-compositions}, and we compute the witness sizes in \cref{thm:and-or-witness-sizes}.

    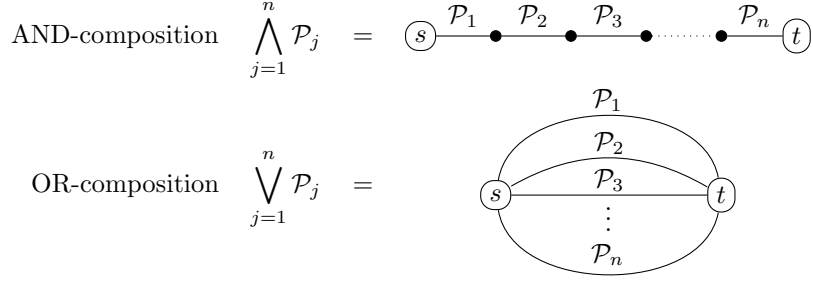
\begin{figure}[!ht]
        \centering
        \begin{tabular}{rccc}
            AND-composition & $\displaystyle\bigwedge_{j=1}^n \mathcal{P}_j$ & $=$ & \raisebox{-.4em}{\begin{tikzpicture}[label/.style = {draw, rounded corners = .5em}, bullet/.style = {draw, fill, minimum size = .4em, rounded corners = .2em, inner sep = 0pt}]
                \node[label] (s) at (0,0) {$s$};
                \node[label] (t) at (5,0) {$t$};
                \node[bullet] (1) at (1,0) {};
                \node[bullet] (2) at (2,0) {};
                \node[bullet] (3) at (3,0) {};
                \node[bullet] (4) at (4,0) {};
                \draw (s) to node[above] {$\mathcal{P}_1$} (1) to node[above] {$\mathcal{P}_2$} (2) to node[above] {$\mathcal{P}_3$} (3);
                \draw[dotted] (3) to (4);
                \draw (4) to node[above] {$\mathcal{P}_n$} (t);
            \end{tikzpicture}} \\
            OR-composition & $\displaystyle\bigvee_{j=1}^n \mathcal{P}_j$ & $=$ & \raisebox{-4em}{\begin{tikzpicture}[label/.style = {draw, rounded corners = .5em}, bullet/.style = {draw, fill, minimum size = .4em, rounded corners = .2em, inner sep = 0pt}]
                \node[label] (s) at (0,0) {$s$};
                \node[label] (t) at (3,0) {$t$};
                \draw[bend left=80] (s) to node[above=-.2em] {$\mathcal{P}_1$} (t);
                \draw[bend left=30] (s) to node[above=-.2em] {$\mathcal{P}_2$} (t);
                \draw (s) to node[above=-.2em] {$\mathcal{P}_3$} (t);
                \draw[bend right=80] (s) to node[above] {$\mathcal{P}_n$} (t);
                \node[below=-.5em] at (1.5,0) {$\vdots$};
            \end{tikzpicture}}
        \end{tabular}
        \caption{Pictorial representations of the AND- and OR-compositions of span programs.}
        \label{fig:and-or-compositions}
    \end{figure}

    \begin{theorem}[Witness sizes of AND- and OR-composition]
        \label{thm:and-or-witness-sizes}
        Let $\mathcal{P}_1, \dots, \mathcal{P}_n$ be span programs on a common domain $\D$. Then, for all $x \in \D$, we have
        \[w_+\left(x,\bigwedge_{j=1}^n \mathcal{P}_j\right) = \sum_{j=1}^n w_+(x,\mathcal{P}_j), \qquad \text{and} \qquad w_-\left(x,\bigwedge_{j=1}^n \mathcal{P}_j\right) = \left[\sum_{j=1}^n \frac{1}{w_-(x,\mathcal{P}_j)}\right]^{-1},\]
        and
        \[w_+\left(x,\bigvee_{j=1}^n \mathcal{P}_j\right) = \left[\sum_{j=1}^n \frac{1}{w_+(x,\mathcal{P}_j)}\right]^{-1}, \qquad \text{and} \qquad w_-\left(x,\bigvee_{j=1}^n \mathcal{P}_j\right) = \sum_{j=1}^n w_-(x,\mathcal{P}_j).\]
    \end{theorem}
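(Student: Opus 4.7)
The plan is to apply \cref{thm:graph-composition} directly to the two graphs underlying the AND- and OR-compositions, and then to compute the resulting effective resistances using the standard series and parallel formulas from elementary electrical network theory. In both cases, the claim reduces to an effective resistance computation on a very simple graph.

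For the AND-composition, the underlying graph $G$ is a line graph on $n+1$ vertices with $s$ and $t$ at the endpoints, so any unit $st$-flow is forced to route its full unit of current through every edge in sequence. Thus the minimum-energy unit $st$-flow is simply $f_e = 1$ for all $e$, and for any resistances $r : E \to [0,\infty]$ we obtain $R_{G,s,t,r} = \sum_{j=1}^n r_j$. Plugging in $r^+_j = w_+(x,\mathcal{P}_j)$ and $r^-_j = w_-(x,\mathcal{P}_j)^{-1}$ via \cref{thm:graph-composition} yields the claimed identities, where $w_-(x,\bigwedge_j \mathcal{P}_j) = [R_{G,s,t,r^-}]^{-1}$ gives the inverse of a sum of inverses. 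For the OR-composition, $G$ has two vertices $s,t$ connected by $n$ parallel edges, so the $n$ independent single-edge $st$-flows span the entire unit-$st$-flow space; minimizing the energy $\sum_j |f_j|^2 r_j$ under the constraint $\sum_j f_j = 1$ by Lagrange multipliers gives $R_{G,s,t,r} = [\sum_{j=1}^n r_j^{-1}]^{-1}$. Substituting $r^+$ and $r^-$ as above gives the two remaining identities.

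The only mild subtlety is making sure the conventions $a/0 = \infty$ and $a/\infty = 0$ handle the cases where $x$ is positive for some $\mathcal{P}_j$'s and negative for others. For the AND-composition on a negative input $x$, edges $e$ for which $x$ is positive for $\mathcal{P}_e$ get $r^-_e = 1/\infty = 0$ and contribute $0$ to the series sum $\sum_j 1/w_-(x,\mathcal{P}_j)$, matching the formula. Similarly, for the OR-composition on a positive input $x$, edges for which $x$ is negative for $\mathcal{P}_e$ get $r^+_e = \infty$ and contribute $0$ to $\sum_j 1/w_+(x,\mathcal{P}_j)$, again matching the formula. In both cases, if $x$ is positive (resp.\ negative) for the composed span program then the corresponding sum is finite (resp.\ a well-defined inverse), since at least one term is finite.

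The only step that requires a nontrivial argument is the derivation of the effective resistance formulas for the series and parallel graphs, and these are both one-line exercises using \cref{def:electrical-networks}: for the line graph, every unit $st$-flow equals $f_e = 1$ on every edge; for the parallel graph, the set $F_{G,s,t}$ is parametrized by $(f_1,\dots,f_n)$ with $\sum_j f_j = 1$ and the energy $\sum_j f_j^2 r_j$ is minimized at $f_j = r_j^{-1} / \sum_{j'} r_{j'}^{-1}$. I expect no real obstacle, since the main theorem \cref{thm:graph-composition} has already done all the heavy lifting and what remains is purely an exercise in computing two effective resistances.
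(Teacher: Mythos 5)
Your proposal is correct and matches the paper's proof, which likewise derives the result by applying \cref{thm:graph-composition} to the line and parallel graphs and invoking the series and parallel effective-resistance formulas. Your explicit treatment of the $0/\infty$ conventions is a welcome bit of extra care but does not change the argument.
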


    \begin{proof}
        The result follows directly from \cref{thm:graph-composition}, and the formulas for the effective resistance of series and parallel circuits.
    \end{proof}

    We can use logical compositions to recover a span-program version of variable-time search with known costs, which first appeared in \cite{ambainis2010quantum}.

    \begin{theorem}[Variable-time search]
        Let $\mathcal{P}_1, \dots, \mathcal{P}_n$ be span programs on a common domain $\D$. Then,
        \[\mathcal{P} = \bigvee_{j=1}^n \frac{\mathcal{P}_j}{W_+(\mathcal{P}_j)} \qquad \Rightarrow \qquad C\left(\mathcal{P}\right)^2 \leq \sum_{j=1}^n C(\mathcal{P}_j)^2.\]
    \end{theorem}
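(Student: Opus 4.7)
The plan is to compute the witness sizes of $\mathcal{P} = \bigvee_{j=1}^n \mathcal{P}_j/W_+(\mathcal{P}_j)$ by combining the scalar multiplication rules with the OR-composition formulas from \cref{thm:and-or-witness-sizes}, and then observe that the normalization by $W_+(\mathcal{P}_j)$ forces $W_+(\mathcal{P}) \leq 1$, so that $C(\mathcal{P})^2$ is controlled purely by $W_-(\mathcal{P})$.

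First, I would apply scalar multiplication: writing $\alpha_j = 1/W_+(\mathcal{P}_j)$, we have $w_+(x,\alpha_j\mathcal{P}_j) = w_+(x,\mathcal{P}_j)/W_+(\mathcal{P}_j)$ and $w_-(x,\alpha_j\mathcal{P}_j) = W_+(\mathcal{P}_j)\cdot w_-(x,\mathcal{P}_j)$. Next, for a positive input $x$ of $\mathcal{P}$, at least one index $j$ has $x$ positive for $\mathcal{P}_j$, and for that $j$ we have $w_+(x,\alpha_j\mathcal{P}_j) \leq 1$. By the OR-composition formula,
\[w_+(x,\mathcal{P}) = \left[\sum_{j : x \text{ pos. for } \mathcal{P}_j} \frac{1}{w_+(x,\alpha_j\mathcal{P}_j)}\right]^{-1} \leq 1,\]
so $W_+(\mathcal{P}) \leq 1$.

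For a negative input $x$ of $\mathcal{P}$, $x$ must be negative for every $\mathcal{P}_j$, so $w_-(x,\mathcal{P}_j) \leq W_-(\mathcal{P}_j)$ for all $j$. The OR-composition formula then gives
\[w_-(x,\mathcal{P}) = \sum_{j=1}^n w_-(x,\alpha_j\mathcal{P}_j) = \sum_{j=1}^n W_+(\mathcal{P}_j) \cdot w_-(x,\mathcal{P}_j) \leq \sum_{j=1}^n W_+(\mathcal{P}_j) \cdot W_-(\mathcal{P}_j) = \sum_{j=1}^n C(\mathcal{P}_j)^2,\]
whence $W_-(\mathcal{P}) \leq \sum_{j=1}^n C(\mathcal{P}_j)^2$. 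Multiplying the two bounds gives $C(\mathcal{P})^2 = W_+(\mathcal{P})\cdot W_-(\mathcal{P}) \leq \sum_{j=1}^n C(\mathcal{P}_j)^2$, as desired.

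There is no real obstacle here: the argument is a direct calculation combining the two elementary manipulations of span programs (scalar multiplication and OR-composition). The only subtlety to flag is making sure that negative inputs of $\mathcal{P}$ are negative for every $\mathcal{P}_j$ (which is immediate from the definition of OR-composition), so that the bound $w_-(x,\mathcal{P}_j) \leq W_-(\mathcal{P}_j)$ is legal for all $j$ simultaneously.
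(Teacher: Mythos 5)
Your proposal is correct and follows essentially the same route as the paper: rescale each $\mathcal{P}_j$ by $1/W_+(\mathcal{P}_j)$, use the OR-composition witness formulas to get $W_+(\mathcal{P}) \leq 1$ from a single positive branch and $W_-(\mathcal{P}) \leq \sum_j W_+(\mathcal{P}_j)W_-(\mathcal{P}_j)$ from all branches being negative, then multiply. No gaps.
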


    \begin{proof}
        Let $x \in \D$. If $x$ is positive for $\mathcal{P}$, then it must be positive for at least one span program $\mathcal{P}_{j'}$. As such, we have
        \[w_+\left(x, \mathcal{P}\right) = \left[\sum_{j=1}^n \frac{1}{w_+\left(x,\frac{\mathcal{P}_j}{W_+(\mathcal{P}_j)}\right)}\right]^{-1} = \left[\sum_{j=1}^n \frac{W_+(\mathcal{P}_j)}{w_+(x,\mathcal{P}_j)}\right]^{-1} \leq \frac{w_+(x,\mathcal{P}_{j'})}{W_+(\mathcal{P}_{j'})} \leq 1,\]
        and so $W_+(\mathcal{P}) \leq 1$. On the other hand, if $x$ is negative for $\mathcal{P}$, then it must be negative for all $\mathcal{P}_j$'s, and so we have
        \[w_-(x,\mathcal{P}) = \sum_{j=1}^n w_-\left(x, \frac{\mathcal{P}_j}{W_+(\mathcal{P}_j)}\right) = \sum_{j=1}^n W_+(\mathcal{P}_j) \cdot w_-(x,\mathcal{P}_j) \leq \sum_{j=1}^n W_+(\mathcal{P}_j)W_-(\mathcal{P}_j) = \sum_{j=1}^n C(\mathcal{P}_j)^2.\qedhere\]
    \end{proof}

    These techniques have very direct consequences for the formula evaluation problem too. After a long line of research \cite{reichardt2012span,reichardt2009span,reichardt2011span,reichardt2011faster,ambainis2010any}, the formula evaluation problem with unit costs was eventually settled by Reichardt in~\cite[Corollary~1.6]{reichardt2010span}, who proved that any boolean formula of length $\ell$ can be evaluated with $O(\sqrt{\ell})$ quantum queries to the input variables.This result was later recovered by Jeffery and Kimmel in \cite[Theorem~16]{jeffery2017quantum}, using ideas very similar to those presented here.

    An interesting generalization is the question how efficiently we can evaluate boolean formulas if its bits cannot be queried directly, but can be evaluated by some subroutines of non-unit costs. This question was first considered in the context of the divide and conquer framework, which was introduced by Childs et al.~\cite[Lemma~1]{childs2022quantum}, who gave a query-efficient algorithm for the variable-time formula evaluation problem. Subsequently, Jeffery and Pass~\cite[Theorem~4.12]{jeffery2024multidimensional} gave a similar result when the boolean formula, when expressed as an AND-OR tree, is symmetric, and each input bit is evaluated by a subspace graph. They also extended their theorem to give a time-efficient implementation of this result if the input bits are evaluated by bounded-error quantum algorithms~\cite[Theorem~4.15]{jeffery2024multidimensional}. Here, we derive a similar statement using the graph composition framework where the input bits are computed by span programs.

    \begin{theorem}[Variable-time formula evaluation]
        \label{thm:variable-time-formula-evaluation}
        Let $n \in \N$, $\D \subseteq \{0,1\}^n$, and $\varphi : \D \to \{0,1\}$ be a boolean formula. Let $J(\varphi)$ be the multiset of the indices being queried by $\varphi$, e.g., $J((x_{j_1} \land x_{j_2}) \lor (\lnot x_{j_1} \land \lnot x_{j_2})) = \{j_1,j_2,j_1,j_2\}$. Let $\mathcal{P}_1, \dots, \mathcal{P}_n$ be span programs on $\D$, such that $\mathcal{P}_j$ computes the $j$th bit of $x \in \D$. Then we can build a graph composition $\mathcal{P}$ on $\D$ of a series-parallel graph $G$ that computes $\varphi$ with complexity
        \[C(\mathcal{P})^2 \leq \sum_{j \in J(\varphi)} C(\mathcal{P}_j)^2.\]
        Moreover, in the QROM-model, we can implement the reflection through the circulation space using a QROM with $\widetilde{O}(|J(\varphi)|^{d+1})$ bits, and with $\widetilde{O}\left(d^2\log|J(\varphi)|\right)$ elementary gates, where $d$ is the depth of the AND-OR-tree representation of $\varphi$.
    \end{theorem}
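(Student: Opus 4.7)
I would proceed by structural induction on $\varphi$, iteratively applying the AND- and OR-compositions of \cref{def:logical-compositions}. First, using De Morgan's laws, I push all negations in $\varphi$ to the leaves; a leaf $\lnot x_j$ is then handled by replacing $\mathcal{P}_j$ with $\lnot \mathcal{P}_j$, whose witness sizes are unchanged by \cref{thm:negation}. The resulting AND-OR tree still has depth $d$ and the same multiset $J(\varphi)$ of queried indices.

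For every sub-formula $\psi$, I build a graph composition $\mathcal{P}_\psi$ on a series-parallel graph with exactly one edge per leaf of $\psi$, satisfying the inductive invariant $C(\mathcal{P}_\psi)^2 \leq \sum_{j \in J(\psi)} C(\mathcal{P}_j)^2$. At a leaf I take $\mathcal{P}_\psi := \mathcal{P}_j$ (possibly negated). At an OR-node $\psi = \bigvee_i \psi_i$, I set $\mathcal{P}_\psi := \bigvee_i \mathcal{P}_{\psi_i}/W_+(\mathcal{P}_{\psi_i})$ and invoke the Variable-time search theorem proved just above. At an AND-node $\psi = \bigwedge_i \psi_i$, I use the dual rescaling $\mathcal{P}_\psi := \bigwedge_i W_-(\mathcal{P}_{\psi_i}) \mathcal{P}_{\psi_i}$; then \cref{thm:and-or-witness-sizes} combined with the scaling identities gives, for positive $x$ (which must be positive in every child),
\[ w_+(x,\mathcal{P}_\psi) \;=\; \sum_i W_-(\mathcal{P}_{\psi_i}) \, w_+(x,\mathcal{P}_{\psi_i}) \;\leq\; \sum_i C(\mathcal{P}_{\psi_i})^2, \]
and for negative $x$ (which must be negative in some child $\psi_{i'}$) the harmonic sum is bounded by its $i'$-th term, yielding $w_-(x,\mathcal{P}_\psi) \leq w_-(x,\mathcal{P}_{\psi_{i'}})/W_-(\mathcal{P}_{\psi_{i'}}) \leq 1$. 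Hence $C(\mathcal{P}_\psi)^2 \leq \sum_i C(\mathcal{P}_{\psi_i})^2$, and applying the inductive hypothesis to each child closes the recursion at $\varphi$ itself.

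For the QROM implementation, the recursive assembly itself yields a tree-parallel decomposition tree of the series-parallel graph $G$ whose structure mirrors the AND-OR tree of $\varphi$: AND-nodes correspond to tree (series) decompositions and OR-nodes to parallel decompositions, the depth is $d$, and the leaves are in bijection with $J(\varphi)$, so $|E| = |J(\varphi)|$. Invoking \cref{thm:circulation-space-reflection-implementation} with this decomposition yields the claimed QROM size of $\widetilde{O}(|J(\varphi)|^{d+1})$ bits and $\widetilde{O}(d^2 \log |J(\varphi)|)$ elementary gates.

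I expect the only non-routine piece to be identifying the correct dual rescaling for AND-nodes, namely $\alpha_i = W_-(\mathcal{P}_{\psi_i})$ as opposed to the OR choice $\alpha_i = 1/W_+(\mathcal{P}_{\psi_i})$, and verifying that the harmonic-sum cancellation in the negative-witness formula produces exactly $W_-(\mathcal{P}_\psi) \leq 1$; the rest is straightforward book-keeping and a direct appeal to \cref{thm:circulation-space-reflection-implementation}.
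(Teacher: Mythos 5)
Your proposal is correct, and its overall skeleton — induction on the formula, OR-nodes handled by the variable-time-search rescaling $\mathcal{P}_{\psi_i}/W_+(\mathcal{P}_{\psi_i})$, and the QROM bounds obtained by reading the AND-OR tree as a tree-parallel decomposition of depth $d$ with $|E|=|J(\varphi)|$ and feeding it to \cref{thm:circulation-space-reflection-implementation} — matches the paper's proof. Where you genuinely diverge is the treatment of AND-nodes. The paper eliminates them entirely: it applies De Morgan to reduce to a formula built from $\lnot$ and $\lor$ only, and then handles each negation by passing to the planar dual of the series-parallel graph via \cref{thm:dual-graph-composition}, which preserves the complexity since $C(\lnot\mathcal{P})=C(\mathcal{P})$. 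You instead keep AND-nodes and handle them head-on with the dual rescaling $\bigwedge_i W_-(\mathcal{P}_{\psi_i})\mathcal{P}_{\psi_i}$, checking directly that $W_+(\mathcal{P}_\psi)\leq\sum_i C(\mathcal{P}_{\psi_i})^2$ and $W_-(\mathcal{P}_\psi)\leq 1$; your witness computations are correct (the harmonic sum in \cref{thm:and-or-witness-sizes} is indeed bounded by its $i'$-th term, and $w_-(x,\mathcal{P}_{\psi_{i'}})/W_-(\mathcal{P}_{\psi_{i'}})\leq 1$ by definition of $W_-$). Your route is more self-contained — it needs only \cref{thm:and-or-witness-sizes} and the scalar-multiplication identities, and avoids invoking planar duality — whereas the paper's route buys a shorter induction (only OR-nodes appear) at the cost of the extra duality machinery. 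One point both arguments leave implicit, which you should at least acknowledge: composing graph compositions along the edges of a series-parallel graph must be flattened into a single graph composition of one series-parallel graph, which follows from the effective-resistance characterization in \cref{thm:graph-composition} but is not spelled out.
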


    \begin{proof}
        We start with the complexity. Because of De Morgan's law, we can assume without loss of generality that the boolean formula is entirely made up of $\lnot$'s and $\lor$'s. Moreover, recall from \cref{thm:dual-graph-composition} that we can take the negation of any series-parallel graph composition by considering its dual graph, which is again series-parallel. Next, we perform induction on the recursion depth of the boolean formula. \cref{thm:and-or-witness-sizes} handles the basis for induction, i.e., where the depth is $1$. Now, suppose that our theorem is true for all boolean formulas of depth at most $k-1$, and suppose $\varphi$ is of depth $k$. Now, we can write $\varphi$ as
        \[\varphi(x) = \bigvee_{m=1}^{m'} \varphi_m'(x),\]
        where all $\varphi_m'$'s are boolean formulas of depth at most $k-1$. Hence, by our induction hypothesis, we can find a graph composition $\mathcal{P}_m'$ that evaluates $\varphi_m'$ with complexity
        \[C(\mathcal{P}_m')^2 \leq \sum_{j \in J(\varphi_m')} C(\mathcal{P}_j)^2,\]
        and so using \cref{thm:and-or-witness-sizes}, we find that that
        \[C\left(\bigvee_{m=1}^{m'} \frac{\mathcal{P}_m'}{W_+(\mathcal{P}_m')}\right)^2 \leq \sum_{m=1}^{m'} C(\mathcal{P}_m')^2 \leq \sum_{m=1}^{m'} \sum_{j \in J(\varphi_m')} C(\mathcal{P}_j)^2 = \sum_{j \in J(\varphi)} C(\mathcal{P}_j)^2.\]

        It remains to check the time and space complexity in the QROM-model. To that end, we observe that the resulting graph is a series-parallel graph of depth $d$. This gives rise to a tree-parallel decomposition of depth $d$ as well, and so by \cref{thm:circulation-space-reflection-implementation}, we can implement the reflection through the circulation space with $\widetilde{O}(d|J(\varphi)|)$ bits of QROM, and $\widetilde{O}(d\polylog|J(\varphi)|)$ elementary gates.
    \end{proof}

    We now use variable-time formula evaluation can be used to show that the graph composition framework subsumes the first strategy of the quantum divide-and-conquer framework, as introduced by Childs et al.~\cite{childs2022quantum}. To that end, we briefly recall this strategy, and introduce it slightly more formally than in the previous work. A similar treatment can be found in \cite[Theorem~4.13]{jeffery2024multidimensional}.

    \begin{definition}[Quantum divide and conquer, {\cite[Strategy~1]{childs2022quantum}}]
        \label{def:divide-and-conquer}
        Let $a,b,n \in \N$, $\Sigma$ a finite alphabet, and $\Lambda$ a finite parameter space. For all $m \in \N$ and $\lambda \in \Lambda$, let $f_m^{\lambda} : \Sigma^n \to \{0,1\}$, $f_m^{\lambda,\aux} : \Sigma^n \to \{0,1\}$, $m_1, \dots, m_a \in \Theta(m/b)$, $\lambda_1, \dots, \lambda_a \in \Lambda$, and $\varphi_m^{\lambda} : \{0,1\}^{a+1} \to \{0,1\}$ be a boolean read-once formula, i.e., where each variable appears exactly once, such that there exists a $m_0 \in \N$, such that for all $m \geq m_0$ and $x \in \Sigma^n$,
        \[f_m^{\lambda}(x) = \varphi_m^{\lambda}(f_{m_1}^{\lambda_1}(x), \dots, f_{m_a}^{\lambda_a}(x), f_m^{\lambda,\aux}(x)).\]
        This defines a \textit{quantum divide and conquer strategy for $\{f_m^{\lambda}\}_{m \in \N, \lambda \in \Lambda}$}.
    \end{definition}

    Childs et al.\ prove that the query complexity of evaluating this family of functions $\{f_m^{\lambda}\}_{m \in \N, \lambda \in \Lambda}$ satisfies a recurrence relation~\cite[Equation~(10)]{childs2022quantum}. We recover this result here, and give a handle on a time-efficient implementation.

    \begin{theorem}
        \label{thm:divide-and-conquer}
        Consider a quantum divide and conquer strategy for a family of functions $\{f_m^{\lambda}\}_{m \in \N, \lambda \in \Lambda}$, in the sense of \cref{def:divide-and-conquer}. For all $m \in \N$, let $\mathcal{P}_m^{\lambda,\aux}$ be a span program that evaluates $f_m^{\lambda,\aux}$, and for all $m < m_0$, let $\mathcal{P}_m^{\lambda}$ be a span program evaluating $f_m^{\lambda}$. Now, we can build a family of graph compositions $(\mathcal{P}_m^{\lambda})_{m \in \N, \lambda \in \Lambda}$, where for all $m \geq m_0$, $\mathcal{P}_m^{\lambda}$ evaluates $f_m^{\lambda}$, and satisfies
        \[C(\mathcal{P}_m^{\lambda})^2 \leq \sum_{j=1}^a C(\mathcal{P}_{m_j}^{\lambda_j})^2 + C(\mathcal{P}_m^{\lambda,\aux})^2.\]
        Moreover, in the QROM-model, the complexity of implementing the reflection through the circulation space, in the graph composition for $\mathcal{P}_m^{\lambda}$, can be done with a QROM of size $\widetilde{O}(\mathrm{poly}(a^{\log(m)/\log(b)}))$ and using $\widetilde{O}((\log(m)/\log(b))^2\log(a))$ elementary gates.
    \end{theorem}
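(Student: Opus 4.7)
The plan is to induct on $m$, unfolding the recursive structure of \cref{def:divide-and-conquer} and, at each recursive step, invoking \cref{thm:variable-time-formula-evaluation} on the read-once formula $\varphi_m^{\lambda}$. Concretely, for $m < m_0$, we simply take $\mathcal{P}_m^{\lambda}$ as given. For $m \geq m_0$, we assume by induction that we have already constructed graph compositions $\mathcal{P}_{m_1}^{\lambda_1}, \dots, \mathcal{P}_{m_a}^{\lambda_a}$ evaluating $f_{m_1}^{\lambda_1}, \dots, f_{m_a}^{\lambda_a}$, and we apply \cref{thm:variable-time-formula-evaluation} to the read-once formula $\varphi_m^{\lambda}$ on $a+1$ input bits, using $\mathcal{P}_{m_1}^{\lambda_1}, \dots, \mathcal{P}_{m_a}^{\lambda_a}, \mathcal{P}_m^{\lambda,\aux}$ as the span programs computing the input bits. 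This yields a graph composition $\mathcal{P}_m^{\lambda}$ on a series-parallel graph that evaluates $f_m^{\lambda}$.

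For the complexity bound, since $\varphi_m^{\lambda}$ is read-once, the multiset $J(\varphi_m^{\lambda})$ contains each of the $a+1$ input indices exactly once. Applying the bound of \cref{thm:variable-time-formula-evaluation} directly gives
\[C(\mathcal{P}_m^{\lambda})^2 \;\leq\; \sum_{j=1}^{a} C(\mathcal{P}_{m_j}^{\lambda_j})^2 + C(\mathcal{P}_m^{\lambda,\aux})^2,\]
as required. This is the desired one-step recurrence.

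For the implementation bounds, I would unroll the recursion: the graph underlying $\mathcal{P}_m^{\lambda}$ is obtained by starting with the series-parallel graph of $\varphi_m^{\lambda}$ and, for each non-auxiliary leaf edge, substituting the graph of the corresponding $\mathcal{P}_{m_j}^{\lambda_j}$. Since $m_j \in \Theta(m/b)$, the recursion has depth $R = O(\log(m)/\log(b))$, and at each level the local formula has constant size $a+1$, hence a tree-parallel decomposition of constant depth and fan-in $O(a)$. Stitching these together yields a tree-parallel decomposition of the final graph of depth $d = O(\log(m)/\log(b))$ with $k_\ell \leq O(a)$ children per internal node, so $K = \prod_\ell (k_\ell + 1) = O(\mathrm{poly}(a^{\log(m)/\log(b)}))$, and the total number of edges is likewise $O(\mathrm{poly}(a^{\log(m)/\log(b)}))$ (after absorbing the contribution of the auxiliary span programs, whose sizes are asserted to fit in this budget). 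Invoking \cref{thm:circulation-space-reflection-implementation} then yields the claimed $\widetilde{O}(\mathrm{poly}(a^{\log(m)/\log(b)}))$ QROM size and $\widetilde{O}(d \log K) = \widetilde{O}((\log(m)/\log(b))^2 \log(a))$ gate count.

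The main obstacle is bookkeeping the substitution carefully: one must verify that literally replacing edges of the outer graph by entire subgraphs preserves the tree-parallel structure (it does, because substituting a series-parallel graph into an edge of another series-parallel graph yields a series-parallel graph, and the associated tree-parallel decompositions concatenate), and one must check that the auxiliary span programs $\mathcal{P}_m^{\lambda,\aux}$ do not blow up either the total edge count or the decomposition depth beyond the asserted bounds. Once these structural points are in hand, the complexity recurrence and the QROM/gate-count bounds follow mechanically from \cref{thm:variable-time-formula-evaluation,thm:circulation-space-reflection-implementation}.
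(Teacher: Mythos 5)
Your proposal is correct and follows essentially the same route as the paper: apply \cref{thm:variable-time-formula-evaluation} to the read-once formula $\varphi_m^{\lambda}$ with the recursively built graph compositions (and the auxiliary span program) on its leaves to get the one-step recurrence, then unroll the recursion to a depth-$O(\log(m)/\log(b))$, fan-in-$O(a)$ tree-parallel decomposition and invoke \cref{thm:circulation-space-reflection-implementation}. Your extra worry about the auxiliary span programs is harmless but unnecessary, since they occupy single edges of the composed graph and their internals do not enter its circulation space.
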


    \begin{proof}
        The complexity follows directly from plugging the span programs into \cref{thm:variable-time-formula-evaluation}. For the time complexity, we use the time characterization from \cref{thm:variable-time-formula-evaluation} for the reflection through the circulation space of the composed graph. We observe that the depth of the resulting formula is $\Theta(\log(m)/\log(b))$, since in every iteration we divide $m$ by $\Theta(b)$. Moreover, in every division step, we divide into $a+1$ different parts. We compute the quantities in \cref{thm:implementation-circuit-model}, and observe that $d \in \Theta(\log(m)/\log(b))$ and $K \in \widetilde{\Theta}(\mathrm{poly}(a^{\log(m)/\log(b)}))$. Plugging these into the expressions in \cref{thm:implementation-circuit-model} yields the result.
    \end{proof}

    We remark that a similar result to \cref{thm:divide-and-conquer} is obtained in \cite[Theorem~4.13]{jeffery2024multidimensional}. We state the resulting time complexity for general formulas $\varphi_m^{\lambda}$, and thereby incur extra overhead in the depth of the AND-OR-tree representation of the formula. Jeffery and Pass state the theorem for the more restricted class of symmetric formulas, which they can evaluate more efficiently.

    Finally, in \cite[Section~5]{jeffery2024multidimensional}, Jeffery and Pass apply the quantum divide and conquer technique to obtain a time-efficient implementation of Savitch's algorithm for the directed $st$-connectivity problem. We slightly improve on \cite[Theorem~5.3]{jeffery2024multidimensional}, by improving the time bound from $2^{\frac12\log^2(n) + O(\log(n))}$ to $O(2^{\frac12\log^2(n)} \cdot \polylog(n)) = \widetilde{O}(\sqrt{2n}^{\log(n)})$.

    \begin{theorem}
        \label{thm:savitch}
        There is an $st$-connectivity algorithm that solves the directed $st$-connectivity problem making $O(\sqrt{2n}^{\log(n)})$ queries, with $\widetilde{O}(\sqrt{2n}^{\log(n)})$ elementary gates, and using $O(\log^2(n))$ space.
    \end{theorem}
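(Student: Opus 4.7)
The plan is to realize Savitch's classical recurrence inside the graph composition framework, via repeated application of \cref{thm:divide-and-conquer}. For $k \in \{0, 1, \dots, \log n\}$ and $s, t \in V$, let $R_k(s,t) = 1$ iff there is a directed path in the input graph from $s$ to $t$ of length at most $2^k$. Then $R_0(s,t)$ is decided by a single query (test $s = t$ or query the directed edge $(s,t)$), and for $k \geq 1$ Savitch's identity reads
\[R_k(s,t) \;=\; \bigvee_{v \in V} \bigl(R_{k-1}(s,v) \;\wedge\; R_{k-1}(v,t)\bigr).\]

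First I would build a family of span programs $\mathcal{P}_k(s,t)$ computing $R_k(s,t)$ by recursion on $k$: take $\mathcal{P}_0(s,t)$ to be the trivial span program corresponding to the edge query, so $C(\mathcal{P}_0) = 1$, and for $k \geq 1$ let $\mathcal{P}_k(s,t)$ be the graph composition on the series-parallel graph encoding the read-once formula $\bigvee_{v \in V}(x_v \wedge y_v)$, with the $2n$ leaves supplied by the recursive span programs $\mathcal{P}_{k-1}(s,v)$ and $\mathcal{P}_{k-1}(v,t)$. The AND/OR witness-size formulas in \cref{thm:and-or-witness-sizes} (equivalently, \cref{thm:variable-time-formula-evaluation} applied to this read-once formula) yield $C(\mathcal{P}_k)^2 \leq 2n \cdot C(\mathcal{P}_{k-1})^2$, and hence by induction $C(\mathcal{P}_{\log n}) \leq \sqrt{2n}^{\log n}$. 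Feeding $\mathcal{P}_{\log n}(s,t)$ into \cref{alg:span-program-algorithm} yields the claimed query complexity.

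For the time and space bounds, the key observation is that the graph $G_k$ underlying $\mathcal{P}_k(s,t)$ has a shape depending only on $k$: it is a series-parallel graph of depth $2k$ whose tree-parallel decomposition alternates OR-levels of $n$ parallel branches with AND-levels of two edges in series. The labels on the $2n$ children of each OR-node are completely determined by $(s,t,k)$ together with the branching vertex $v$. I would implement the operations $\mathcal{E}_\ell$, $U_\ell$ from \cref{thm:implementation-circuit-model}, the reflection $R_{\H(x)}$ and the preparation $C_{\ket{w_0}}$ recursively, computing the relevant labels on the fly: at each recursion level we only need to store the current tuple $(s,t,v,k)$ using $O(\log n)$ bits. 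Crucially, at any OR-node the $n$ children are isomorphic as weighted graphs (each being an AND of two level-$(k-1)$ subproblems with the same recursive structure), so the minimum-energy flows $\ket{f_{u,j}}$ and the state $\ket{\psi_u}$ of \cref{thm:implementation-circuit-model} all have the same norm, and $U_\ell$ reduces to preparing the uniform superposition on $[n]$, implementable in $\polylog(n)$ time via \cref{thm:uniform-state-preparation} with no QROM. The $\log n$ levels of recursion stack $O(\log n)$ bits each, for $O(\log^2 n)$ space total; each query of \cref{alg:span-program-algorithm} incurs $\polylog(n)$ auxiliary gates, giving the stated $\widetilde{O}(\sqrt{2n}^{\log n})$ time.

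The main obstacle is the space bound. A black-box application of \cref{thm:circulation-space-reflection-implementation} to $G_{\log n}$ would use QROM of size at least $\mathrm{poly}(n^{\log n})$, since $|E|$ and $K$ are both super-polynomial. The improvement over \cite[Theorem~5.3]{jeffery2024multidimensional} comes precisely from exploiting the vertex-transitivity at each OR-node to avoid any per-child table: the same code evaluates every child and every recursion level, parameterized by a short label $(s,t,v,k)$, and only the labels ever occupy space. Verifying that this recursive implementation really costs only $\polylog(n)$ per level -- in particular, that the symmetry between children survives all the way down and does not force us to store the min-energy flows explicitly -- is the step I would check most carefully.
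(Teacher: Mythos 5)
Your proposal is correct and follows essentially the same route as the paper: both encode Savitch's recurrence as a depth-$\log(n)$ read-once formula, apply the variable-time formula evaluation of \cref{thm:variable-time-formula-evaluation} to get the $O(\sqrt{2n}^{\log(n)})$ query bound, and obtain the $O(\log^2(n))$ space bound by observing that the state space embeds in $(\C^V\otimes\C^2)^{\otimes\log(n)}$ and that all initial vectors are uniform superpositions over $\ket{v,b}$, preparable via \cref{thm:uniform-state-preparation} without any QROM. The symmetry you flag as the step to check does survive every level, since all subformulas at a given depth are isomorphic and hence have equal $\norm{\ket{w_0}}$ and equal normalizing weights $W_+$.
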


    \begin{proof}
        Without loss of generality, suppose that $n$ is a power of $2$. We use the divide and conquer framework to turn Savitch's algorithm~\cite{savitch1970relationships} into a boolean formula evaluation problem. To that end, we let $\varphi_{s,t}^{\ell} : \{0,1\}^E \to \{0,1\}$ be a boolean formula that evaluates whether there is a directed path from $s$ to $t$ of length at most $\ell$. We recursively define it as
        \[\varphi_{s,t}^1(x) = \begin{cases}
            1, & \text{if } s = t, \\
            x_{(s,t)}, & \text{otherwise},
        \end{cases} \qquad \text{and} \qquad \varphi_{s,t}^{\ell}(x) = \bigvee_{v \in V} \varphi_{s,v}^{\ell/2}(x) \land \varphi_{v,t}^{\ell/2}(x).\]
        The resulting formula $\varphi_{s,t}^n$ is of depth $\log(n)$, length $(2n)^{\log(n)}$, and all the input span programs are trivial. Now, using \cref{thm:implementation-graph-composition,thm:variable-time-formula-evaluation}, we obtain an implementation in the claimed number of queries, and time complexity
        \[\widetilde{O}\left(\sqrt{|J(\varphi_{s,t}^n)|} \cdot \left[1 + \log(n) \cdot \log|J(\varphi_{s,t}^n)|\right]\right) \subseteq \widetilde{O}\left(\sqrt{2n}^{\log(n)}\right).\]
        It remains to consider the space complexity. To that end, observe that in every level of the recursion, we have to keep track of which $v \in V$ we choose, and whether we check the existence of a path from $s$ to $v$ or from $v$ to $t$. Thus, we can embed our state space in $\H = (\C^V \otimes \C^2)^{\otimes \log(n)}$. Moreover, at each level of the recursion, all the initial states $\ket{w_0}$ are uniform superpositions over $\ket{v,b}$, for all $v \in V$ and $b \in \{0,1\}$. These can be implemented without the need for any data structure, cf.\ \cref{thm:uniform-state-preparation}, and so we don't need any QROM-overhead. Thus, the total space complexity is $O(\log(\dim(\H))) \subseteq O(\log^2(n))$.
    \end{proof}

    \subsection{Learning graphs}
    \label{subsec:learning-graphs}

    Belovs introduced the \textit{learning graph framework}, and used it to design a quantum algorithm for triangle finding~\cite{belovs2012span-learning-graphs}. This was later improved using a different learning graph in \cite{lee2017improved}, which was subsequently proven to be optimal in the learning graph framework~\cite{belovs2014power}. The learning graph framework was also used to find constant-sized subgraphs \cite{lee2012learning}. Subsequently, Belovs introduced a generalization, called the \textit{adaptive learning graph framework}, and used it to construct algorithms for the graph collision problem and the $k$-distinctness problem~\cite{belovs2012learning}. Adaptive learning graphs were recently also used to find hypergraph simplices \cite{yu2024quantum}. Finally, the framework was generalized again by Carette, Lauri\`ere and Magniez to the \textit{extended learning graph framework}, in \cite{carette2020extended}.

    Here, we prove that extended learning graphs can be turned into $st$-connectivity instances. To that end, we first recall the definition of extended learning graphs, as introduced in \cite{carette2020extended}.

    \begin{definition}[{Extended learning graphs~\cite[Definition~3]{carette2020extended}}]
        Let $n \in \N$, $\D \subseteq \{0,1\}^n$, and $f : \D \to \{0,1\}$. We need the following ingredients:
        \begin{enumerate}[nosep]
            \item A directed acyclic graph $G = (V,E)$.
            \item Labels $S : V \to 2^{[n]}$, such that:
            \begin{enumerate}[nosep]
                \item There is a unique $r(G) \in V$ with $S(r(G)) = \varnothing$.
                \item For all $(u,v) = e \in E$, there exists a $j \in [n]$ such that $j \not\in S(u)$ and $S(v) = S(u) \cup \{j\}$.
            \end{enumerate}
            \item A weight function $w : \D \times E \times \{0,1\} \to \R_{\geq 0}$, such that for all $(u,v) = e \in E$, $j \in [n]$ such that $S(v) = S(u) \cup \{j\}$ and $b \in \{0,1\}$:
            \begin{enumerate}[nosep]
                \item For all $x,y \in \D$, we have $x_{S(v)} = y_{S(v)}$ implies $w(x,e,b) = w(y,e,b)$.
                \item For all $(x,y) \in f^{-1}(1) \times f^{-1}(0)$, $x_{S(u)} = y_{S(u)}$ and $x_j \neq y_j$ implies $w(x,e,0) = w(y,e,1)$.
            \end{enumerate}
            \item Finally, for all $y \in f^{-1}(1)$, let $p_y : E \to \R_{\geq 0}$ be a unit flow on $G$, with:
            \begin{enumerate}[nosep]
                \item A single source node $r(G)$.
                \item The sink nodes are all the nodes $v \in V$ for which $S(v)$ contains a $1$-certificate for $y$.
                \item for all $e \in E$, $w(y,e,1) = 0$ implies $p_y(e) = 0$.
            \end{enumerate}
        \end{enumerate}
        Then $L = (G,S,w,\{p_y\}_{y \in f^{-1}(1)})$ is an \textit{adaptive learning graph} that computes $f$. For all $x \in f^{-1}(0)$ and $y \in f^{-1}(1)$, respectively, we define
        \[\ell_-(x,L) = \sum_{e \in E} w(x,e,0), \qquad \text{and} \qquad \ell_+(x,L) = \sum_{e \in E} \frac{p_y(e)^2}{w(y,e,1)},\]
        and we define
        \[\mathcal{L}_-(L) = \max_{x \in f^{-1}(0)} \ell_-(x,L), \qquad \mathcal{L}_+(L) = \max_{x \in f^{-1}(1)} \ell_+(x,L), \qquad \text{and} \qquad \mathcal{L}(L) = \sqrt{\mathcal{L}_-(L) \cdot \mathcal{L}_+(L)}.\]
        Finally, we write $\mathsf{XLG}(f)$ as the minimum of $\mathcal{L}(L)$ over all adaptive learning graphs $L$ that compute $f$.
    \end{definition}

    Now, we prove that all extended learning graphs are instantiations of the $st$-connectivity framework.

    \begin{theorem}
        \label{thm:extended-learning-graphs}
        Let $L$ be an extended learning graph. Then, we can construct an instance of the $st$-connectivity framework that evaluates the learning graph, with complexity at most $\mathcal{L}(L)$. As such, for all $f : \{0,1\}^n \supseteq \D \to \{0,1\}$, $\st(f) \leq \mathcal{L}(f)$.
    \end{theorem}

    \begin{proof}
        We write $L = (G,S,w,\{p_y\}_{y \in f^{-1}(1)})$, where $G = (V,E)$, $n \in \N$, $\D \subseteq \{0,1\}^n$ and $f : \D \to \{0,1\}$ is the function computed by $L$. Now, we construct the graph composition that evaluates $L$. To that end, let $\mathcal{P}_j$ be the trivial span program that evaluates the $j$th bit of an $n$-bit string.

        Next, we define an undirected graph $G' = (V',E')$, with
        \[V' = \{(v,z) : v \in V, z \in \{0,1\}^{S(v)}\}, \qquad \text{and} \qquad E' = \{\{(u,z),(v,z')\} : (u,v) \in E, z'|_{S(u)} = z\},\]
        where we use that $\{0,1\}^{\varnothing} = \{\varnothing\}$. Now, we let $s$ be the node $(\varnothing,\varnothing) \in V'$. We prune all the nodes $(v,z) \in V$ for which $z$ is a negative certificate for $f$ or for which there is no valid input $x \in \D$ that satisfies $z$. Furthermore, we contract all the nodes $(v,z) \in V'$ for which $z$ is a positive certificate for $f$ into a single node $t$, and we remove all self-loops that are created in the process. Finally, let $\{(u,z),(v,z')\} = e \in E'$, and let $j \in [n]$ be such that $S(v) = S(u) \cup \{j\}$. Now, take any input $x \in \D$ that satisfies $z'$. We label the edge with $\mathcal{P}_e = w(x,(u,v),1)^{-1}\mathcal{P}_j$ if $z'(j) = 1$, and $\mathcal{P}_e = w(x,(u,v),1)^{-1}(\lnot\mathcal{P}_j)$ if $z'(j) = 0$. We refer to the resulting graph composition as $\mathcal{P}$. See \cref{fig:learning-graph} for an example of this conversion.

        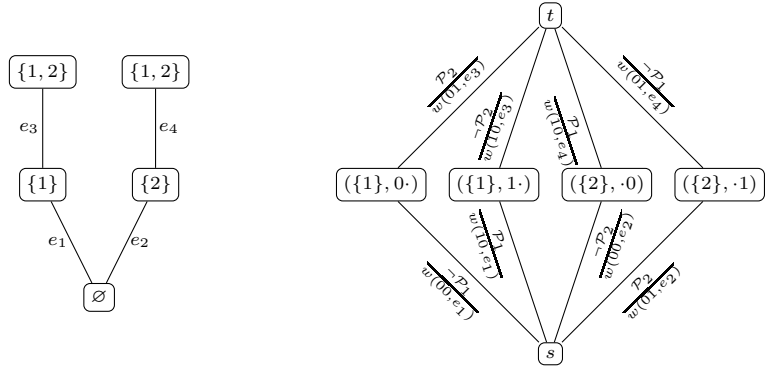
\begin{figure}[!ht]
            \centering\scriptsize
            \begin{tikzpicture}[vertex/.style = {draw, rounded corners = .3em}, scale = 1.5]
                \begin{scope}
                    \node[vertex] (s) at (0,0) {$\varnothing$};
                    \node[vertex] (1) at (-.5,1) {$\{1\}$};
                    \node[vertex] (2) at (.5,1) {$\{2\}$};
                    \node[vertex] (12) at (-.5,2) {$\{1,2\}$};
                    \node[vertex] (21) at (.5,2) {$\{1,2\}$};

                    \draw (s) to node[left=-.2em] {$e_1$} (1);
                    \draw (s) to node[right=-.2em] {$e_2$} (2);
                    \draw (1) to node[left=-.2em] {$e_3$} (12);
                    \draw (2) to node[right=-.2em] {$e_4$} (21);
                \end{scope}
                \begin{scope}[shift={(4,-.5)}]
                    \node[vertex] (s) at (0,0) {$s$};
                    \node[vertex] (10) at (-1.5,1.5) {$(\{1\},0\cdot)$};
                    \node[vertex] (11) at (-.5,1.5) {$(\{1\},1\cdot)$};
                    \node[vertex] (20) at (.5,1.5) {$(\{2\}, \cdot0)$};
                    \node[vertex] (21) at (1.5,1.5) {$(\{2\}, \cdot1)$};
                    \node[vertex] (t) at (0,3) {$t$};

                    \draw (s) to node[below, rotate = -45] {$\frac{\lnot\mathcal{P}_1}{w(00,e_1,1)}$} (10) to node[above, rotate = 45] {$\frac{\mathcal{P}_2}{w(01,e_3,1)}$} (t);
                    \draw (s) to node[below, near end, rotate = -{atan(3)}] {$\frac{\mathcal{P}_1}{w(10,e_1,1)}$} (11) to node[above, pos=.3, rotate = {atan(3)}] {$\frac{\lnot\mathcal{P}_2}{w(10,e_3,1)}$} (t);
                    \draw (s) to node[below, rotate = {atan(3)}, near end] {$\frac{\lnot\mathcal{P}_2}{w(00,e_2,1)}$} (20) to node[above, rotate = -{atan(3)}, pos=.3] {$\frac{\mathcal{P}_1}{w(10,e_4,1)}\hspace{1em}$} (t);
                    \draw (s) to node[below, rotate = 45] {$\frac{\mathcal{P}_2}{w(01,e_2,1)}$} (21) to node[above, rotate = -45] {$\frac{\lnot\mathcal{P}_1}{w(01,e_4,1)}$} (t);
                \end{scope}
            \end{tikzpicture}
            \caption{Conversion of an extended learning graph computing the parity function on $2$ bits to a graph composition.}
            \label{fig:learning-graph}
        \end{figure}

        It now remains to compute the witness sizes of this construction. To that end, suppose that $x \in \D$ is a positive input for the learning graph $L$. Let $p_x : E \to \R_{\geq 0}$ be the corresponding flow. We show how this flow naturally extends to a flow in the graph composition. To that end, let $(u,v) = e \in E$. We define $e_x = \{(u,z),(v,z')\}$, where $z$ and $z'$ are the partial assignments with support on $S(u)$ and $S(v)$ that agree with $x$. We let $p_x'(e_x) = p_x(e)$, and $p_x'(e') = 0$ for all other $e' \in E'$. This flow $p_x'$ is a unit $st$-flow in $G'$. We write $r_+(e') = w_+(x,\mathcal{P}_{e'})$, and upper bound the positive witness size by
        \[w_+(x,\mathcal{P}) \leq \min_{f \in F_{G',s,t}} \norm{\ket{f_{G,r^+}}}^2 \leq \sum_{e' \in E'} p_x'(e')^2w_+(x,\mathcal{P}_{e'}) = \sum_{e \in E} \frac{p_x(e)^2}{w(x,e,1)} = \ell_+(x,L).\]

        On the other hand, suppose that $x \in \D$ is a negative input to $L$. Now, we define a potential function $U_x : V' \to \R$ on $G'$, with $U_x((v,z)) = 1$ if $z$ agrees with $x$, and $0$ otherwise. We also take $U_x(t) = 0$, and we observe that $U_x(s) = 1$. Next, let $\{(u,z),(v,z')\} = e' \in E'$. We observe that $U_x((u,z)) - U_x((v,z'))$ is only non-zero if $z$ agrees with $x$ and $z'$ does not. For every edge $(u,v) = e \in E$, at most one such edge $\{(u,z),(v,z')\} \in E'$ exists. Moreover, we have that $x|_{S(u)} = z$ and $x_j \neq z'_j$, where $S(v) = S(u) \cup \{j\}$. Observe that there must exist an instance $y_{e'} \in f^{-1}(1)$ for which $z|_{S(v)} = z'$, because otherwise $e'$ would have been pruned earlier. Hence, for this pair $(x,y_{e'}) \in f^{-1}(0) \times f^{-1}(1)$, we have $x_{S(u)} = (y_{e'})_{S(u)}$ and $x_j \neq (y_{e'})_j$, and so $w(x,e,0) = w(y_{e'},e,1)$. Thus, we can write $r^-(e') = w_-(x,\mathcal{P}_{e'})^{-1} = w(y_{e'},e,1)^{-1} = w(x,e,0)^{-1}$, and upper bound the negative witness size by
        \[w_-(x,\mathcal{P}) \leq \norm{\ket{f_{G',U_x,r^-}}}^2 = \sum_{\{(u,z),(v,z')\} = e' \in E'} \frac{(U_x((u,z)) - U_x((v,z')))^2}{r^-(e')} \leq \sum_{e \in E} w(x,e,0) = \ell_-(x,L).\]

        We conclude that
        \[W_+(\mathcal{P}) = \max_{x \in f^{-1}(1)} w_+(x,\mathcal{P}) \leq \mathcal{L}_+(L), \qquad \text{and} \qquad W_-(\mathcal{P}) = \max_{x \in f^{-1}(0)} w_-(x,\mathcal{P}) \leq \mathcal{L}_-(L),\]
        and so $C(\mathcal{P}) \leq \mathcal{L}(L)$.
    \end{proof}

    \subsection{The bomb-testing model, and weighted decision trees}

    Another technique to develop quantum algorithms is through the conversion of classical decision trees into quantum algorithms. The first\footnote{Here, we mean ``first'' with respect to the quantum setting. In the classical literature, modeling decision trees as $st$-connectivity problems has long been considered~\cite{lee1959representation}.} such construction was introduced by Lin and Lin~\cite[Theorem~5.5]{lin2016upper}, in a framework that they referred to as the bomb-testing model. They constructed a quantum algorithm that evaluates a given decision tree with boolean queries using $O(\sqrt{\mathsf{GT}})$ queries, where $\mathsf{G}$ is the ``guessing complexity'' of the decision tree, and $\mathsf{T}$ is its depth. This construction was later generalized to the non-boolean case by Beigi and Taghavi~\cite[Theorem~4]{beigi2020quantum}. A time-efficient implementation was later provided in \cite{beigi2022time}. Finally, the construction in the boolean case was improved in \cite[Theorem~8]{cornelissen2022improved}.

    Here, we show that both constructions are a specific instantiation of the $st$-connectivity framework. We start by formalizing both as complexity measures.

    \begin{definition}[Decision tree + guessing algorithm complexity~{\cite[Sections~5.2 and 5.3]{lin2016upper}}]
        Let $n \in \N$, $\D \subseteq \{0,1\}^n$ and $f : \D \to \{0,1\}$. Let $T = (V,E)$ be a decision tree computing $f$, and let $G : E \to \{0,1\}$ be a guessing algorithm, such that for every internal node $v \in V$, exactly one of its children is connected with an edge $e \in E$ for which $G(e) = 1$. Then, we write $\mathsf{T}(T)$ for the depth of $T$, and the $\mathsf{G}(T)$ for the maximum number of edges $e \in E$ along a path from the root to a leaf that satisfy $G(e) = 0$. We call $\mathsf{G}(T)$ the guessing complexity of $T$. Finally, we write
        \[\sqrt{\mathsf{GT}}(f) = \min_{T \text{ computing } f} \sqrt{\mathsf{G}(T) \cdot \mathsf{T}(T)}.\]
    \end{definition}

    \begin{definition}[{Weighted-decision-tree complexity~\cite[Definition~1.5]{cornelissen2022improved}}]
        \label{def:weighted-decision-tree}
        Let $n \in \N$, $\D \subseteq \{0,1\}^n$ and $f : \D \to \{0,1\}$. Let $\T = (V,E)$ be a decision tree computing $f$, and $w : E \to \R_{\geq 0}$ a weight functions on the edges. For every input $x \in \D$, let $\mathcal{P}_x$ be the path taken on input $x$, and let $\overline{\mathcal{P}}_x$ be the set of edges that have exactly one node in common with the vertices traversed in $\mathcal{P}_x$. Then, we define
        \[\WDT_+(\T,w) = \max_{x \in f^{-1}(1)} \sum_{e \in \mathcal{P}_x} w_e, \qquad \WDT_-(\T,w) = \max_{x \in f^{-1}(0)} \sum_{e \in \overline{\mathcal{P}}_x} \frac{1}{w_e},\]
        and $\WDT(\T,w) = \sqrt{\WDT_+(\T,w) \cdot \WDT_-(\T,w)}$. Moreover, we let
        \[\WDT(\T) = \min_{w : E \to \R_{\geq0}} \WDT(\T,w), \qquad \text{and} \qquad \WDT(f) = \min_{\T \text{ computing } f} \WDT(\T).\]
    \end{definition}

    We can also phrase these complexity measures in the zero-error setting, as follows.

    \begin{definition}[Zero-error version of the complexity measures]
        \label{def:zero-error}
        Let $n \in \N$, $\D \subseteq \{0,1\}^n$, and $f : \D \to \{0,1\}$. Let $\mathcal{T} = \{T_1, \dots, T_k\}$ be a set of decision trees, where every leaf is labeled with $0$, $1$, or \texttt{?}. Let $ p= (p_1, \dots, p_k)$ be a probability distribution on $[k]$, such that for all $x \in \D$, if we sample $T_j$ with $j \sim p$, then the probability that it outputs $\lnot f(x)$ is $0$, and the probability that it outputs \texttt{?} is at most $1/2$. Then, we say that $\mathcal{T}$ computes $f$ with zero error, and we write
        \[\sqrt{\mathsf{GT}}(\mathcal{T}) = \max_{j \in [k]} \sqrt{\mathsf{G}(T_j) \cdot \mathsf{T}(T_j)}, \qquad \text{and} \qquad \sqrt{\mathsf{GT}}_0(f) = \min_{\mathcal{T} \text{ computes } f \text{ with zero error}} \sqrt{\mathsf{GT}}(\mathcal{T}).\]
        and
        \[\WDT(\mathcal{T}) = \max_{j \in [k]} \WDT(T_j), \qquad \text{and} \qquad \WDT_0(f) = \min_{\mathcal{T} \text{ computes } f \text{ with zero error}} \WDT(\mathcal{T}).\]
    \end{definition}

    It is immediate from the definitions above that $\mathsf{WDT}_0(f) \leq \mathsf{WDT}(f) \leq \sqrt{\mathsf{GT}}(f) \leq \mathsf{D}(f)$, and similarly that $\mathsf{WDT}_0(f) \leq \sqrt{\mathsf{GT}}_0(f) \leq \sqrt{\mathsf{GT}}(f)$, for all possibly partial functions $f : \D \to \{0,1\}$. We prove here that the smallest of all these measures, i.e., $\mathsf{WDT}_0(f)$, is lower bounded by the $st$-connectivity complexity of $f$. We start by showing that we can turn any weighted decision tree in an $st$-connectivity instance, in~\cref{thm:weighted-decision-trees}, and then we use the graph composition framework to compose the family of decision trees together, in~\cref{thm:st-to-WDT0}.

    \begin{theorem}
        \label{thm:weighted-decision-trees}
        Let $\T = (V,E)$ be a decision tree, making queries to an $n$-bit string, having outputs in $\{0,1,\texttt{?}\}$, and with a weight function $w : E \to \R_{>0}$ on the edges. Then, we can construct an instance of the $st$-connectivity framework that distinguishes whether the output of the tree is $1$ or an element of $\{0,\texttt{?}\}$ with complexity at most $\WDT(\T,w)$.
    \end{theorem}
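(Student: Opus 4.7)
The plan is to construct $\mathcal{G}$ directly from the structure of $\T$. I would identify the root of $\T$ with the source $s$, merge every leaf labeled $1$ into a single sink $t$, and leave every leaf labeled $0$ or \texttt{?} as a dead-end. For every edge $e$ of $\T$ going from an internal vertex that queries some bit $j_e$ to a child recording query outcome $b_e \in \{0,1\}$, set $j(e) = j_e$, $b(e) = b_e$, and resistance $r(e) = w(e)$. The resulting underlying multigraph is a tree once the $1$-leaves have been identified into $t$.

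Correctness follows from the tree structure: the unique root-to-leaf branch of $\T$ explored on input $x$ is the computation path $\mathcal{P}_x$, and by construction every edge along it satisfies $x_{j(e)} = b(e)$ and so is present in $G(x)$. Conversely, any $s$-$t$ path in $G(x)$ must coincide with some root-to-$1$-leaf branch of $\T$, and such a branch exists with every edge present iff $\T(x) = 1$. Hence $s$ and $t$ are connected in $G(x)$ precisely when $\T(x) = 1$, so $\mathcal{G}$ computes the intended function.

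For the witness-size bounds I would pass through the equivalent graph composition provided by \cref{thm:st-connectivity} and apply \cref{thm:witness-upper-bounds}. When $\T(x) = 1$, the path $\mathcal{P}_x$ is an $s$-$t$ path of present edges in $\mathcal{G}$, so part~1 gives $w_+(x,\mathcal{G}) \leq \sum_{e \in \mathcal{P}_x} w_e \leq \WDT_+(\T,w)$. When $\T(x) \in \{0,\texttt{?}\}$, the key combinatorial claim is that the sibling set $\overline{\mathcal{P}}_x$ is an $st$-cut of edges absent from $G(x)$: any root-to-$1$-leaf branch of $\T$ must, at some point, branch off of $\mathcal{P}_x$, and the branching edge is a sibling of $\mathcal{P}_x$ whose label disagrees with the value of $x_{j_e}$ observed along $\mathcal{P}_x$, so it is absent from $G(x)$. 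Since each edge's span program in the equivalent graph composition is a scalar multiple of a trivial span program with negative-witness size $1/w_e$, part~2 gives $w_-(x,\mathcal{G}) \leq \sum_{e \in \overline{\mathcal{P}}_x} 1/w_e \leq \WDT_-(\T,w)$. Multiplying these maxima yields $C(\mathcal{G}) \leq \WDT(\T,w)$. The main subtlety to verify carefully is that the cut argument goes through uniformly for both $0$-leaves and \texttt{?}-leaves: it only requires that $\mathcal{P}_x$ does not terminate at $t$, which is exactly the condition $x \in f^{-1}(0)$ and is insensitive to the label of the terminating leaf.
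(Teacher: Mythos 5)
Your proposal is correct and follows essentially the same route as the paper: identify the root with $s$, contract the $1$-leaves into $t$, place the (scalar-multiplied, possibly negated) trivial span programs with weights $w_e$ on the edges, and bound the witness sizes via the computation path $\mathcal{P}_x$ and the sibling cut $\overline{\mathcal{P}}_x$ using \cref{thm:witness-upper-bounds}. The only difference is presentational — you phrase the construction in the $st$-connectivity-graph language and invoke \cref{thm:st-connectivity}, while the paper builds the graph composition directly — and your explicit verification that $\overline{\mathcal{P}}_x$ is a cut of absent edges, uniformly for $0$- and \texttt{?}-leaves, is exactly the observation the paper relies on.
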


    \begin{proof}
        Let $\mathcal{P}_j$ be the trivial span program that evaluates the $j$th bit of the bitstring. We construct our graph composition by modifying the decision tree as follows. First, for every non-leaf node in the decision tree, querying the $j$th bit of the bitstring, we label its outgoing legs by $w_e\mathcal{P}_j$ for the outcome-$1$ leg, and $w_e(\lnot\mathcal{P}_j)$ for the outcome-$0$ leg. Next, we label the root vertex of the decision tree by $s$, we prune all the $0$-leaves and \texttt{?}-leaves of the tree, and we contract all the $1$-leaves into a single vertex labeled by $t$. An example of this conversion is shown in \cref{fig:decision-tree-conversion}.

        \begin{figure}[!ht]
            \centering
            \begin{tikzpicture}[vertex/.style = {draw, rounded corners = .3em}, yscale=-1]
                \begin{scope}
                    \node[vertex] (1) at (0,0) {$x_1$};
                    \node[vertex] (2) at (1,1) {$x_2$};
                    \node[vertex] (3) at (2,2) {$x_3$};

                    \draw (1) to node[above left=-.2em, near start] {$1$} node[below right=-.2em] {$w_1$} (-1,1) node[below left=-.2em] {$1$};
                    \draw (2) to node[above left=-.2em, near start] {$1$} node[below right=-.2em] {$w_3$} (0,2) node[below left=-.2em] {$1$};
                    \draw (3) to node[above left=-.2em, near start] {$1$} node[below right=-.2em] {$w_5$} (1,3) node[below left=-.2em] {$1$};
                    \draw (1) to node[above right=-.2em, near start] {$0$} node[below left=-.2em] {$w_2$} (2) to node[above right=-.2em, near start] {$0$} node[below left=-.2em] {$w_4$} (3) to node[above right=-.2em, near start] {$0$} node[below left=-.2em] {$w_6$} (3,3) node[below right=-.2em] {$0$};
                \end{scope}

                \node at (4,1.5) {$\mapsto$};

                \begin{scope}[shift={(6,0)}]
                    \node[vertex] (s) at (0,0) {$s$};
                    \node[vertex] (1) at (1,1) {};
                    \node[vertex] (2) at (2,2) {};
                    \node[vertex] (t) at (0,3) {$t$};

                    \draw (s) to node[above right=-.2em] {$w_2(\lnot\mathcal{P}_1)$} (1) to node[above right=-.2em] {$w_4(\lnot\mathcal{P}_2)$} (2);
                    \draw (s) to node[left=-.2em] {$w_1\mathcal{P}_1$} (t);
                    \draw (1) to node[right=-.2em] {$w_3\mathcal{P}_2$} (t);
                    \draw (2) to node[below right=-.2em] {$w_5\mathcal{P}_3$} (t);
                \end{scope}
            \end{tikzpicture}
            \caption{An example of the conversion of a weighted decision tree into a graph composition.}
            \label{fig:decision-tree-conversion}
        \end{figure}
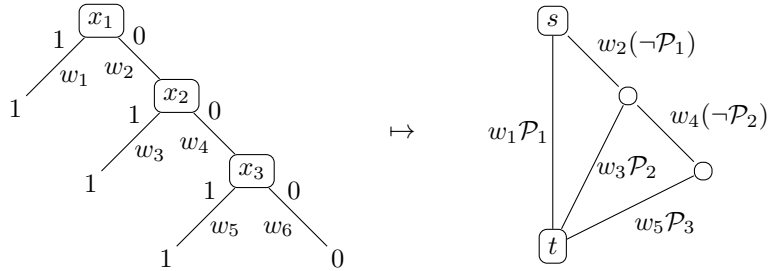

        It remains to compute the witness sizes of the resulting graph composition. To that end, suppose that $x$ is a positive input. Then, there is a unique path $\mathcal{P}_x \subseteq E$ that connects $s$ and $t$, corresponding to the path followed in the decision tree, and as such the positive witness size is $w_+(x,\mathcal{P}) = \sum_{e \in \mathcal{P}_x} w_e$.

        On the other hand, suppose that $x$ is a negative input. Then, for every non-leaf vertex in the decision tree that $x$'s path traverses, we consider the set $\overline{\mathcal{P}}_x$ of legs of these vertices that the path is not taking. We now observe that $\overline{\mathcal{P}}_x$ is a cut in the graph, and thus from \cref{thm:witness-upper-bounds}, we find that $w_-(x,\mathcal{P}) \leq \sum_{e \in \overline{\mathcal{P}}_x} \frac{1}{w_e}$.

        Combining these observations yields that
        \[W_+(\mathcal{P}) = \max_{x \in f^{-1}(1)} w_+(x,\mathcal{P}) = \WDT_+(\T,w), \qquad \text{and} \qquad W_-(\mathcal{P}) = \max_{x \in f^{-1}(0)} w_-(x,\mathcal{P}) \leq \WDT_-(\T,w),\]
        and so $C(\mathcal{P}) \leq \WDT(\T,w)$.
    \end{proof}

    The above theorem immediately proves that $\st(f) \leq \WDT(f)$. However, we can do slightly better and prove that $\st(f) \leq \WDT_0(f)$. This is the objective of the following theorem.

    \begin{theorem}
        \label{thm:st-to-WDT0}
        Let $f : \D \to \{0,1\}$ with $\D \subseteq \{0,1\}^n$. We have $\st(f) \leq \sqrt{2}\WDT_0(f)$.
    \end{theorem}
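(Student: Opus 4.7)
The plan is to take an optimal family $\mathcal{T} = \{T_1,\ldots,T_k\}$ with distribution $p$ realizing $\WDT_0(f) = W$, convert each $T_j$ into an $st$-connectivity instance via the preceding theorem, and glue them in parallel between a common $s$ and $t$ with the resistances of the $j$th subgraph rescaled by $1/p_j$. The key insight, without which the proof does not close, is precisely this choice of scalar weight $\alpha_j = 1/p_j$.

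Concretely, I would first pick edge weights $w_j$ on $T_j$ achieving $\WDT(T_j, w_j) \le W$ and uniformly rescale each $w_j$ so that $\WDT_+(T_j,w_j) = \WDT_-(T_j,w_j) \le W$ (multiplying $w_j$ by a positive constant $c$ sends $\WDT_+ \to c\WDT_+$ and $\WDT_- \to \WDT_-/c$, so $c = \sqrt{\WDT_-/\WDT_+}$ balances them). \cref{thm:weighted-decision-trees} then produces an $st$-connectivity instance $\mathcal{P}_j$ distinguishing $T_j(x) = 1$ from $T_j(x) \in \{0,\texttt{?}\}$ with $W_+(\mathcal{P}_j), W_-(\mathcal{P}_j) \le W$. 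I then form $\mathcal{P} := \bigvee_{j:\, p_j > 0} \frac{1}{p_j}\mathcal{P}_j$ via the OR-composition of \cref{def:logical-compositions} with scalars $\alpha_j = 1/p_j$. Since every $\mathcal{P}_j$ carries only trivial (scaled) span programs on its edges, $\mathcal{P}$ is still an $st$-connectivity instance, obtained by identifying all the sources $s_j$ into $s$, all the sinks $t_j$ into $t$, and rescaling resistances in the $j$th subgraph. Correctness follows immediately from the zero-error hypothesis: if $f(x) = 1$, the \texttt{?}-probability is at most $1/2$ and no tree outputs $0$, so at least one $T_j(x) = 1$ and its $\mathcal{P}_j$ is positive; if $f(x) = 0$, no tree outputs $1$ and every $\mathcal{P}_j$ is negative.

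For the witness sizes, when $f(x) = 1$, set $S_+(x) = \{j : T_j(x) = 1\}$, so $\sum_{j \in S_+(x)} p_j \ge 1/2$. By \cref{thm:and-or-witness-sizes} together with the scalar multiplication rules,
\[w_+(x,\mathcal{P}) = \left[\sum_{j \in S_+(x)} \frac{p_j}{w_+(x,\mathcal{P}_j)}\right]^{-1} \le \frac{W}{\sum_{j \in S_+(x)} p_j} \le 2W,\]
while for $f(x) = 0$ every $\mathcal{P}_j$ is negative on $x$, so
\[w_-(x,\mathcal{P}) = \sum_j p_j\, w_-(x,\mathcal{P}_j) \le W \sum_j p_j \le W.\]
Hence $C(\mathcal{P}) \le \sqrt{2W \cdot W} = \sqrt{2}\,\WDT_0(f)$. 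The hard part is not the calculation itself but spotting the right weighting: setting $\alpha_j = 1/p_j$ converts the harmonic sum for $w_+$ into a $p_j$-weighted sum, so that the mass bound $\sum_{S_+} p_j \ge 1/2$ costs exactly a factor of $2$, and simultaneously turns $w_-$ into a $p_j$-weighted average bounded by $W$ because $\sum_j p_j = 1$. Other natural weightings either introduce a $k$-dependence in $w_-$ (e.g.\ $\alpha_j$ constant) or fail to exploit the $1/2$-probability condition at all (e.g.\ $\alpha_j = p_j$).
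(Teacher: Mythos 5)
Your proposal is correct and follows essentially the same route as the paper: convert each tree $T_j$ into an $st$-connectivity instance via \cref{thm:weighted-decision-trees}, rescale so that $W_+(\mathcal{P}_j)=W_-(\mathcal{P}_j)=C(\mathcal{P}_j)$, and take the parallel (OR) composition with scalars $1/p_j$, after which the positive-witness bound uses $\sum_{j\in S_+(x)}p_j\ge 1/2$ and the negative-witness bound uses $\sum_j p_j=1$, exactly as in the paper. The weighting $\alpha_j=1/p_j$ you identify as the key step is indeed the one the paper uses.
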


    \begin{proof}
        Let $\mathcal{T} = \{T_1, \dots, T_k\}$ be a family of decision trees that computes $f$ with zero-error, and that minimizes $\WDT(\mathcal{T})$. Then, we find that $\WDT(\mathcal{T}) = \WDT_0(f)$. Next, for all $j \in [k]$, we let $\mathcal{P}_j$ be the $st$-connectivity that is constructed from $T_j$ by \cref{thm:weighted-decision-trees}. We observe that $C(\mathcal{P}_j) = \WDT(T_j) \leq \WDT_0(f)$. Without loss of generality, we assume that $W_-(\mathcal{P}_j) = C(\mathcal{P}_j)$ and $W_+(\mathcal{P}_k) = C(\mathcal{P}_j)$, since we can always rescale the resistances accordingly.

        Next, we consider a parallel graph composition $\mathcal{P}$ with span programs $(\mathcal{P}_j/p_j)_{j=1}^k$ on the edges. See also \cref{fig:st-to-WDT}. We compute the witness sizes of the resulting instance of the $st$-connectivity framework.

        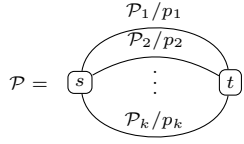
\begin{figure}[!ht]
            \centering\scriptsize
            \begin{tikzpicture}[vertex/.style = {draw, rounded corners = .3em}, scale = 2]
                \node[vertex] (s) at (0,0) {$s$};
                \node[left] at (s.west) {$\mathcal{P} = \;\;$};
                \node[vertex] (t) at (1,0) {$t$};
                \draw (s) to[bend left=80] node[above] {$\mathcal{P}_1/p_1$} (t);
                \draw (s) to[bend left=30] node[above] {$\mathcal{P}_2/p_2$} (t);
                \node at (.5,.05) {$\vdots$};
                \draw (s) to[bend right=80] node[above] {$\mathcal{P}_k/p_k$} (t);
            \end{tikzpicture}
            \caption{The reduction from any zero-error randomized algorithm to an $st$-connectivity instance.}
            \label{fig:st-to-WDT}
        \end{figure}

        To that end, let $x \in \{0,1\}^n$ be a positive instance, and let $J(x) \subseteq [k]$ be the subset of indices such that $j \in J(x)$ if and only if $T_j$ accepts $x$. Since this happens with probability at least $1/2$, we have $\sum_{j \in J(x)} p_j \geq 1/2$. Thus, we find that
        \[w_+(x,\mathcal{P}) = \frac{1}{\sum_{j \in J(x)} \frac{p_j}{w_+(x,\mathcal{P}_j)}} \leq \max_{j \in J(x)} C(\mathcal{P}_j) \cdot \left[\sum_{j \in J(x)} p_j\right]^{-1} \leq 2\WDT_0(f).\]

        On the other hand, let $x \in \{0,1\}^n$ be a negative instance. Then, all trees in $\mathcal{T}$ do not accept $x$, and so we obtain that
        \[w_-(x,\mathcal{P}) = \sum_{j=1}^k w_-\left(x,\frac{\mathcal{P}_j}{p_j}\right) = \sum_{j=1}^k p_jw_-(x,\mathcal{P}_j) \leq \WDT_0(f) \cdot \sum_{j=1}^k p_j = \WDT_0(f).\]

        We conclude that $\st(f) \leq C(\mathcal{P}) \leq \sqrt{2} \cdot \WDT_0(f)$.
    \end{proof}

    \subsection{Phase estimation algorithms and multidimensional quantum walks}

    We relate the graph composition framework to the framework of multidimensional walks. The latter framework was introduced in \cite{jeffery2025multidimensional}, and further formalized in \cite{jeffery2024multidimensional}.

    First, Jeffery and Zur define the notion of a phase estimation algorithm in~\cite[Definition~3.1]{jeffery2025multidimensional}. It was shown by Reichardt~\cite{reichardt2009span} that span programs can always be evaluated using the phase estimation algorithm. We reprove that result in the span program notation used here, and the phase estimation algorithm notation used in \cite{jeffery2025multidimensional}.

    \begin{theorem}
        \label{thm:phase-estimation-algo}
        Every span program, evaluated on an input $x$, is also a phase estimation algorithm, and the span program's positive and negative witnesses are negative and positive witnesses of the phase estimation algorithm, respectively.
    \end{theorem}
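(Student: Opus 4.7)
The plan is to unpack the definition of a phase estimation algorithm from~\cite[Definition~3.1]{jeffery2025multidimensional} and show that the data of a span program evaluated on $x$ already fits this template, with essentially no added structure. Concretely, given $\mathcal{P} = (\H, x \mapsto \H(x), \K, \ket{w_0})$ and an input $x \in \D$, I would package the PEA as follows: take $\H$ as the ambient Hilbert space, $\ket{w_0}/\norm{\ket{w_0}}$ as the initial state, and $R_{\K}$ and $R_{\H(x)}$ as the two reflections whose product drives the phase estimation. The bulk of the statement is then the assertion that the two families of witnesses (positive/negative SP witnesses vs.\ negative/positive PEA witnesses) are literally the same vectors, so what remains is a direct algebraic verification of this bijection.

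For the witness correspondence, I would argue the two halves in parallel. A positive span program witness $\ket{w_x}$ lies in $\H(x)$ (so $R_{\H(x)}\ket{w_x} = \ket{w_x}$) and satisfies $\ket{w_x} - \ket{w_0} \in \K$ (so $R_\K(\ket{w_x} - \ket{w_0}) = \ket{w_x} - \ket{w_0}$); these are exactly the two linear conditions that, in the two-reflection calculus used by Jeffery and Zur, certify that $\ket{w_0}$'s overlap with the $1$-eigenspace of $R_\K R_{\H(x)}$ can be removed — i.e.\ the data of a PEA \emph{negative} witness. Symmetrically, a negative span program witness $\ket{w_x} \in \K^\perp \cap \H(x)^\perp$ satisfies $R_\K\ket{w_x} = R_{\H(x)}\ket{w_x} = -\ket{w_x}$, so it is a fixed point of $R_\K R_{\H(x)}$, and the normalization $\braket{w_x}{w_0} = 1$ is exactly what is required for it to certify that the initial state has non-trivial overlap with the phase-$0$ subspace, i.e.\ the data of a PEA \emph{positive} witness. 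In each direction the same vector serves in both frameworks, with the same norm, so the witness sizes are preserved.

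The main obstacle will be pedantic rather than conceptual: reconciling the exact normalization conventions of \cref{def:witnesses} with those in \cite[Definition~3.1]{jeffery2025multidimensional}, and tracking any ancillary registers that Jeffery and Zur fold into their definition (for instance, the flag register used to distinguish the two reflections). I expect these to be absorbed by scaling $\ket{w_x}$ by an appropriate power of $\norm{\ket{w_0}}$ in one direction and by direct inclusion in the other, but I would want to carry this through carefully to confirm that the polarity swap (positive span program $\leftrightarrow$ negative PEA, and vice versa) is not an artifact of convention but the genuine statement that Reichardt's algorithm accepts precisely when phase estimation \emph{fails} to lock onto the $1$-eigenspace of $R_\K R_{\H(x)}$.
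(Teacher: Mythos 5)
Your proposal is correct and matches the paper's proof: the paper likewise identifies $\H(x)$ and $\K$ with the spans of the two vector sets $\Psi_A,\Psi_B$ in the Jeffery--Zur definition, takes $\ket{w_0}$ as the initial vector, and observes that a span program positive witness is exactly a decomposition of $\ket{w_0}$ into a vector of $\Span(\Psi_A)$ plus one of $\Span(\Psi_B)$ (a PEA negative witness), while a span program negative witness lies in $(\Span(\Psi_A)+\Span(\Psi_B))^{\perp}$ with unit overlap with $\ket{w_0}$ (a PEA positive witness). The normalization and ancilla concerns you flag do not arise in the paper's treatment, which works directly with the subspaces rather than the reflections.
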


    \begin{proof}
        Let $\mathcal{P} = (\H, x \mapsto \H(x), \K, \ket{w_0})$ be a span program. Fix $x \in \D$, and let $\Psi_A$ and $\Psi_B$ be bases for $\H(x)$ and $\K$, respectively. Then, $(\H, \Psi_A, \Psi_B, \ket{w_0})$ defines a phase estimation algorithm according to \cite[Definition~3.1]{jeffery2025multidimensional}. It remains to check the witness sizes.

        To that end, according to \cref{def:span-program}, a negative witness for $x$ is a vector in $\K^{\perp} \cap \H(x)^{\perp} = (\Span(\Psi_A) + \Span(\Psi_B))^{\perp}$, which fits in the definition of a positive witness in \cite[Definition~3.5]{jeffery2025multidimensional}. On the other hand, according to \cref{def:span-program}, a positive witness for $x$ is a vector in $\ket{w_x} \in \H(x)$, such that $\ket{k} := \ket{w_x} - \ket{w_0} \in \K$. This is equivalent of finding a pair of vectors in $\Span(\Psi_A)$ and $\Span(\Psi_B)$, such that $\ket{w_0}$ is the sum of the two, which fits the definition of a negative witness in \cite[Definition~3.2]{jeffery2025multidimensional}.
    \end{proof}

    Jeffery and Zur then go on to evaluate the welded-trees problem and the $k$-distinctness problem using phase estimation algorithms. The algorithm for welded trees inherently relies on the alternative-neighborhood technique presented in \cite[Section~3.2.2]{jeffery2025multidimensional}, which seems not to be available in the graph composition framework. For the $k$-distinctness problem, we leave recovering a time-efficient algorithm using the graph composition framework for future work.

    Recently, Jeffery and Pass defined a formal version of the multidimensional quantum walk, referred to as a subspace graph, in~\cite[Definition~3.1]{jeffery2024multidimensional}. We show that every graph composition is also a subspace graph.

    \begin{theorem}
        \label{thm:subspace-graph}
        Every graph composition can be turned into a subspace graph with at most the same witness sizes.
    \end{theorem}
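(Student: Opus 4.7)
The plan is to give a direct construction of a subspace graph from a graph composition, where each edge of the underlying graph $G$ contributes its span program's Hilbert space and the graph structure itself plays the role of the switching network in the sense of \cite[Section~3.3]{jeffery2024multidimensional}. Concretely, I would fix a graph composition $\mathcal{P} = (\H, x \mapsto \H(x), \K, \ket{w_0})$ arising from $G = (V,E)$ with source $s$, sink $t$, and edge span programs $\mathcal{P}^e = (\H^e, x \mapsto \H^e(x), \K^e, \ket{w_0^e})$, and build a subspace graph whose total Hilbert space is $\H = \bigoplus_{e \in E} \H^e$, whose edge-$e$ local subspace is $\H^e$, and whose ``accept'' and ``reject'' subspaces are exactly $\H(x) = \bigoplus_e \H^e(x)$ and $\K = \bigoplus_e \K^e \oplus \mathcal{E}(\mathcal{C}_{G,r})$ from~\cref{def:graph-composition}. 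The initial vector $\mathcal{E}(\ket{f^{\min}_{G,s,t,r}})$ will serve as the subspace graph's source vector.

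The key observation that makes this natural is that the subspace-graph formalism of~\cite{jeffery2024multidimensional} encodes Kirchhoff's current law at internal vertices via a ``star'' or ``vertex'' subspace, which is exactly what the circulation space $\mathcal{C}_{G,r}$ enforces after being embedded through $\mathcal{E}$. Thus the input-independent part of the subspace graph coming from the graph skeleton coincides, up to the isometry $\mathcal{E}$, with the $\mathcal{E}(\mathcal{C}_{G,r})$ summand already present in $\K$. Once this identification is in place, the positive- and negative-witness conditions of the subspace graph translate directly into those of a span program on the composed Hilbert space.

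To finish, I would check witness sizes in both directions. By~\cref{thm:graph-composition}, the optimal positive witness for $x \in f^{-1}(1)$ is $\mathcal{E}(\ket{f^{\min}_{G,s,t,r^+}})$ lifted via the local positive witnesses of each $\mathcal{P}^e$, with norm squared $R_{G,s,t,r^+}$; this vector satisfies the subspace-graph positive-witness conditions essentially by construction and contributes the same norm. The negative-witness direction is handled identically using the potential-function characterization from \cref{thm:potential-function-properties}, which is the dual, Kirchhoff-voltage-law side of the same electric network picture.

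The main obstacle I anticipate is not conceptual but notational: verifying that the specific decomposition of $\K$ used in~\cref{def:graph-composition} matches the specific assignment of ``edge subspaces'' and ``vertex subspaces'' that~\cite[Definition~3.1]{jeffery2024multidimensional} demands, rather than some mildly different but equivalent decomposition. I expect this to come down to rewriting $\mathcal{C}_{G,r}$ as a sum of per-vertex flow-conservation constraints (exactly the orthogonal complement of the span of indicator flows out of each vertex), and then matching this term by term with the star subspaces appearing in the subspace graph definition. Once this bookkeeping is carried out, the witness-size inequality becomes an equality with no slack, directly from~\cref{thm:graph-composition}.
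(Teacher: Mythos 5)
Your plan treats the reduction as a relabeling: take the composed span program's $\H(x)$, $\K$, and $\ket{w_0}$ and declare them to be the subspace graph's $\mathcal{A}_G$, $\mathcal{B}_G$, and initial vector. That would suffice if the target were the unstructured notion of a phase estimation algorithm (cf.\ \cref{thm:phase-estimation-algo}), but a subspace graph in the sense of \cite[Definition~3.1]{jeffery2024multidimensional} carries mandatory local structure: $\mathcal{A}_G$ must decompose edge-by-edge as $\bigoplus_e \Xi_e^{\mathcal{A}}$, and $\mathcal{B}_G$ must be generated by per-edge spaces $\Xi_e^{\mathcal{B}}$ together with one-dimensional vertex spaces $\mathcal{V}_v$ spanned by ``star'' states, with the initial vector supported on the boundary $B = \{s,t\}$. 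Your hoped-for term-by-term identification of $\mathcal{E}(\mathcal{C}_{G,r})$ with the vertex spaces fails for a concrete reason: the span of the star states is (the embedding of) the \emph{orthogonal complement} of the circulation space inside the flow space --- a circulation is a flow orthogonal to every star/potential state, per \cref{thm:potential-function-properties} --- so putting the circulation space itself into $\mathcal{B}_G$ gets the complementation backwards. This is the same duality by which span-program positive witnesses become subspace-graph \emph{negative} witnesses and vice versa, so the naive identification would either compute $\lnot f$ or swap the two witness sizes rather than preserve them.

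The paper's proof resolves exactly this by negating each edge span program ($\mathcal{P}_e \mapsto \lnot\mathcal{P}_e$) before embedding, so that an edge is ``blocked'' precisely when $x$ is negative for $\mathcal{P}_e$, and by doubling each initial-vector direction into two half-edge states $\ket{(w_0^e)',\rightarrow}, \ket{(w_0^e)',\leftarrow}$ so that the star states at both endpoints of an edge can be defined; the initial vector is then $(\ket{s}-\ket{t})/\sqrt{2}$ on the boundary, not $\mathcal{E}(\ket{f^{\min}_{G,s,t,r}})$. None of these three ingredients (negation, half-edge doubling, boundary-supported initial vector) appears in your proposal, and they are not bookkeeping: without them the object you construct is not a subspace graph, and the witness-size correspondence you invoke from \cref{thm:graph-composition} does not transfer. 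To repair the argument you would need to either carry out the paper's construction or prove a separate lemma showing that the specific $\K$ of \cref{def:graph-composition} admits the required local generation by stars, which is exactly where the negation becomes forced.
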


    \begin{proof}
        We embed span programs in subspace graphs in a similar way to the switching networks, as presented in \cite[Section~3.3]{jeffery2024multidimensional}. Let $G = (V,E)$ with $s,t \in V$ connected in $G$, and $s \neq t$, and let $\mathcal{P}$ be the graph composition of $G$ with span programs $(\mathcal{P}_e)_{e \in E}$ on $\D$. For all $e \in E$, we write $\lnot\mathcal{P}_e =: \mathcal{P}_e' = (\H_e', x \mapsto \H_e'(x), \K_e', \ket{(w_0^e)'})$, and we define
        \[\Xi_e = (\H_e' \cap \Span\{\ket{(w_0^e)'}\}^{\perp}) \oplus \Span\{\ket{(w_0^e)',\rightarrow},\ket{(w_0^e)',\leftarrow}\},\]
        and we define the isometric embedding $\mathcal{Z}_e : \H_e' \mapsto \Xi_e$ that acts as identity on $\H_e' \cap \Span\{\ket{(w_0^e)'}\}$, and as $\mathcal{Z}_e : \ket{(w_0^e)'} \mapsto (\ket{(w_0^e)',\rightarrow} - \ket{(w_0^e)',\leftarrow})/\sqrt{2}$. Inspired by \cite[Definition~3.5]{jeffery2024multidimensional}, we define
        \[\Xi_e^{\mathcal{A}} = \mathcal{Z}_e(\H_e'(x)), \qquad \text{and} \qquad \Xi_e^{\mathcal{B}} = \K_e' \oplus \Span\{\ket{(w_0^e)',\rightarrow} + \ket{(w_0^e)',\leftarrow}\}.\]
        Next, we take $B = \{s,t\}$, and we define
        \begin{align*}
            \Xi_s &= \Span\{\ket{s},\ket{s,\leftarrow}\}, \qquad \Xi_s^{\mathcal{A}} = \Span\{\ket{s} + \ket{s,\leftarrow}\}, \qquad \text{and} \qquad \Xi_s^{\mathcal{B}} = \{0\}, \\
            \Xi_t &= \Span\{\ket{t},\ket{t,\rightarrow}\}, \qquad \Xi_t^{\mathcal{A}} = \Span\{\ket{t} + \ket{t,\rightarrow}\}, \qquad \text{and} \qquad \Xi_t^{\mathcal{B}} = \{0\},
        \end{align*}
        and $\mathcal{V}_B = \Span\{\ket{s,\leftarrow} + \ket{t,\rightarrow}\}$. Next, in accordance with \cite[Definition~3.3]{jeffery2024multidimensional}, we define for all $v \in V$,
        \[\mathcal{V}_v = \begin{cases}
            \Span\left\{\ket{\psi_*(s)} + \ket{s,\leftarrow}\right\}, & \text{if } v = s, \\
            \Span\left\{\ket{\psi_*(t)} + \ket{t,\rightarrow}\right\}, & \text{if } v = t, \\
            \Span\left\{\ket{\psi_*(v)}\right\}, & \text{otherwise},
        \end{cases} \quad \text{where} \quad \ket{\psi_*(v)} = \sum_{e \in N_+(v)} \frac{\ket{(w_0^e)',\rightarrow}}{\norm{\ket{(w_0^e)'}}^2} + \sum_{e \in N_-(v)} \frac{\ket{(w_0^e)',\leftarrow}}{\norm{\ket{(w_0^e)'}}^2}.\]
        This defines a subspace graph, and we take the initial vector $\ket{\psi_0} = (\ket{s}-\ket{t})/\sqrt{2}$. We now observe that if $x \in \D$ is a positive input for $\mathcal{P}_e'$, then we can find vectors $\ket{(w_x^e)'} \in \H_e'(x)$ and $\ket{k_e'} \in \K_e'$ such that $\ket{(w_0^e)'} = \ket{(w_x^e)'} + \ket{k_e'}$. Thus, we find that
        \[\frac{\ket{(w_0^e)',\rightarrow} - \ket{(w_0^e)',\leftarrow}}{\sqrt{2}} = \mathcal{Z}_e(\ket{(w_0^e)'}) = \underbrace{\mathcal{Z}_e(\ket{(w_x^e)'})}_{\in \mathcal{A}_G} + \underbrace{\ket{k_e'}}_{\in \mathcal{B}_G},\]
        and thus we obtain that $(\mathcal{A}_G + \mathcal{B}_G)^{\perp} \cap \Xi_e = \{0\}$. Thus, intuitively, if $x \in \D$ is positive for $\mathcal{P}_e' = \lnot\mathcal{P}_e$, i.e., if $x \in \D$ is negative for $\mathcal{P}_e$, it turns off the edge $e \in E$ and bars any flow from flowing through it. Hence, analogously as in \cite[Section~3.3]{jeffery2024multidimensional}, the subspace graph checks whether there exists an $st$-path along edges $e \in E$ for which $x$ is a positive input to $\mathcal{P}_e$, and the witness analysis follows directly from the analysis in \cite[Section~3.3]{jeffery2024multidimensional}.
    \end{proof}

    Note that the definition of all the subspaces in the above proof do not make any reference to the circulation space. However, recall that the space $\mathcal{B}_G$ is defined on \cite[Page~8]{jeffery2024multidimensional} as
    \[\mathcal{B}_G = \left[\bigoplus_{e \in E} \Xi_e^{\mathcal{B}} \oplus \bigoplus_{u \in B} \Xi_u^{\mathcal{B}}\right] + \mathcal{V}_B + \bigoplus_{v \in V} \mathcal{V}_v,\]
    and that these constituent spaces are not necessarily orthogonal to one another. Thus, reflecting through $\mathcal{B}_G$ still requires some non-trivial amount of work. In the setting considered in \cite[Section~3.3]{jeffery2024multidimensional}, it seems to be similar in hardness to reflecting through the circulation space, so the time-complexity considerations of implementing the graph composition framework, in \cref{thm:implementation-graph-composition,thm:circulation-space-reflection-implementation}, can be used to understand the time complexity of implementing subspace graphs too.

    Finally, since the graph composition framework outputs a span program, we can use \cref{thm:feasible-sln-adv-bound} to generate a feasible solution to the dual adversary bound in a black-box manner. This sets the graph composition framework apart from the multidimensional quantum walk framework. We leave it for future work to figure out whether a similar conversion technique exists that turns instances of the multidimensional quantum walk framework into feasible solutions to the dual adversary bound.

    \subsection{Randomized algorithms}

    In this section, we prove that every bounded-error randomized query algorithm can be turned into an instance of the $st$-connectivity framework, with the same complexity up to a multiplicative constant. To that end, we first introduce an optimal $st$-connectivity construction for the threshold function. This recovers the results in \cite[Proposition~A.4, Claim~A.5]{reichardt2009span} and \cite[Proposition~3.32]{belovs2014applications}.

    \begin{theorem}[Threshold functions through the $st$-connectivity framework]
        \label{thm:threshold}
        Let $n \in \N$, and $k \in \{1, \dots, n\}$. Let $f_n^k$ be the threshold function on $n$ bits with threshold $k$, i.e., $f_n^k(x) = 1$ if and only if $|x| \geq k$. Let $x_j$ be a trivial span program computing the $j$th bit of the input. Then, for all $\S \subseteq [n]$, we recursively define
        \[\Th_\S^{k+1} = \bigvee_{j \in \S} (x_j \land k\Th_{\S \setminus \{j\}}^k), \qquad \text{and} \qquad \Th_\S^1 = \bigvee_{j \in \S} x_j.\]
        Then, $\Th_n^k := \Th_{[n]}^k$ computes $f_n^k$, with witness sizes
        \[w_+(x,\Th_n^k) = \begin{cases}
            \frac{1}{|x|-k+1}, & \text{if } |x| \geq k, \\
            \infty, & \text{otherwise},
        \end{cases} \qquad \text{and} \qquad w_-(x,\Th_n^k) = \begin{cases}
            \infty, & \text{if } |x| \geq k, \\
            \frac{k(n-k+1)}{k-|x|}, & \text{otherwise}.
        \end{cases}\]
        Consequently, $C(\Th_n^k) = \sqrt{k(n+k-1)}$, and this is optimal.
    \end{theorem}

    \begin{proof}
        The optimality was already proven in \cite[Proposition~3.32]{belovs2014applications}, so it remains to compute the witness sizes. We use \cref{thm:and-or-witness-sizes}, and perform induction on the size of $\S$. Suppose that $x \in \{0,1\}^n$, with $|x| \geq k+1$. Then,
        \begin{align*}
            w_+(x,\Th_\S^{k+1}) &= \left[\sum_{j \in \S} w_+\left(x, x_j \land k\Th_{\S \setminus \{j\}}^k\right)^{-1}\right]^{-1} = \left[\sum_{\substack{j \in \S \\ x_j = 1}} \left(1 + kw_+\left(x, \Th_{\S \setminus \{j\}}^k\right)\right)^{-1}\right]^{-1} \\
            &= \left[\sum_{\substack{j \in \S \\ x_j = 1}} \left(1 + \frac{k}{|x|-k}\right)^{-1}\right]^{-1} = \left[|x| \cdot \frac{|x|-k}{|x|}\right]^{-1} = \frac{1}{|x|-k}.
        \end{align*}
        On the other hand, suppose that $|x| < k+1$. Then, we have
        \begin{align*}
            w_-(x,\Th_\S^{k+1}) &= \sum_{j \in \S} w_-\left(x, x_j \land k\Th_{\S \setminus \{j\}}^k\right) = \sum_{j \in \S} \left[w_-(x,x_j)^{-1} + kw_-\left(x,\Th_{\S \setminus \{j\}}^k\right)^{-1}\right]^{-1} \\
            &= \frac1k\sum_{\substack{j \in \S \\ x_j = 1}} w_-\left(x,\Th_{\S \setminus \{j\}}^k\right) + \sum_{\substack{j \in \S \\ x_j = 0}} \left[1 + kw_-\left(x, \Th_{\S \setminus \{j\}}^k\right)^{-1}\right]^{-1} \\
            &= \frac{|x|}{k} \cdot \frac{k(n-k)}{k-(|x|-1)} + \frac{n-|x|}{1 + k \cdot \frac{k-|x|}{k(n-k)}} = \frac{|x|(n-k)}{k-|x|+1} + \frac{(n-|x|)(n-k)}{n-k+k-|x|} \\
            &= (n-k)\left[\frac{|x|}{k+1-|x|} + 1\right] = \frac{(n-k)(k+1)}{k+1-|x|}.\qedhere
        \end{align*}
    \end{proof}

    We provide a pictorial representation of the resulting graph composition construction for $n = 4$ and $k = 3$ in \cref{fig:threshold}.

    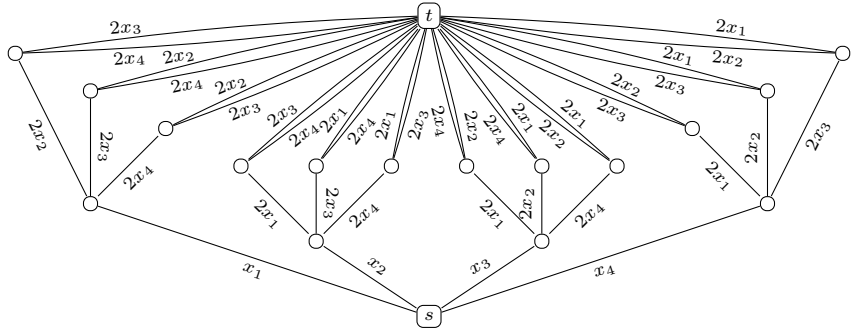
\begin{figure}[!ht]
        \centering\scriptsize
        \begin{tikzpicture}[vertex/.style = {draw, rounded corners = .3em}]
            \node[vertex] (s) at (0,0) {$s$};
            \node[vertex] (1) at (-4.5,1.5) {};
            \node[vertex] (2) at (-1.5,1) {};
            \node[vertex] (3) at (1.5,1) {};
            \node[vertex] (4) at (4.5,1.5) {};
            \node[vertex] (12) at (-5.5,3.5) {};
            \node[vertex] (13) at (-4.5,3) {};
            \node[vertex] (14) at (-3.5,2.5) {};
            \node[vertex] (21) at (-2.5,2) {};
            \node[vertex] (23) at (-1.5,2) {};
            \node[vertex] (24) at (-.5,2) {};
            \node[vertex] (31) at (.5,2) {};
            \node[vertex] (32) at (1.5,2) {};
            \node[vertex] (34) at (2.5,2) {};
            \node[vertex] (41) at (3.5,2.5) {};
            \node[vertex] (42) at (4.5,3) {};
            \node[vertex] (43) at (5.5,3.5) {};
            \node[vertex] (t) at (0,4) {$t$};

            \draw (s) to node[below, rotate = -{atan(1/3)}] {$x_1$} (1);
            \draw (s) to node[above, rotate = -{atan(2/3)}] {$x_2$} (2);
            \draw (s) to node[above, rotate = {atan(2/3)}] {$x_3$} (3);
            \draw (s) to node[below, rotate = {atan(1/3)}] {$x_4$} (4);

            \draw (1) to node[below, rotate = -{atan(2)}] {$2x_2$} (12);
            \draw (1) to node[above, rotate = -90] {$2x_3$} (13);
            \draw (1) to node[below, rotate = 45] {$2x_4$} (14);
            \draw (2) to node[below, rotate = -45] {$2x_1$} (21);
            \draw (2) to node[above, rotate = -90] {$2x_3$} (23);
            \draw (2) to node[below, rotate = 45] {$2x_4$} (24);
            \draw (3) to node[below, rotate = -45] {$2x_1$} (31);
            \draw (3) to node[above, rotate = 90] {$2x_2$} (32);
            \draw (3) to node[below, rotate = 45] {$2x_4$} (34);
            \draw (4) to node[below, rotate = -45] {$2x_1$} (41);
            \draw (4) to node[above, rotate = 90] {$2x_2$} (42);
            \draw (4) to node[below, rotate = {atan(2)}] {$2x_3$} (43);

            \draw (12) to[bend left=3] node[above=-.2em, rotate = {atan(1/11)}, near start] {$2x_3$} (t);
            \draw (12) to[bend right=3] node[below=-.2em, rotate = {atan(1/11)}, near start] {$2x_4$} (t);
            \draw (13) to[bend left=3] node[above=-.2em, rotate = {atan(2/9)}, near start] {$2x_2$} (t);
            \draw (13) to[bend right=3] node[below=-.2em, rotate = {atan(2/9)}, near start] {$2x_4$} (t);
            \draw (14) to[bend left=3] node[above=-.2em, rotate = {atan(3/7)}, near start] {$2x_2$} (t);
            \draw (14) to[bend right=3] node[below=-.2em, rotate = {atan(3/7)}, near start] {$2x_3$} (t);
            \draw (21) to[bend left=3] node[above=-.2em, rotate = {atan(4/5)}, near start] {$2x_3$} (t);
            \draw (21) to[bend right=3] node[below=-.2em, rotate = {atan(4/5)}, near start] {$2x_4$} (t);
            \draw (23) to[bend left=3] node[above=-.2em, rotate = {atan(4/3)}, near start] {$2x_1$} (t);
            \draw (23) to[bend right=3] node[below=-.2em, rotate = {atan(4/3)}, near start] {$2x_4$} (t);
            \draw (24) to[bend left=3] node[above=-.2em, rotate = {atan(4)}, near start] {$2x_1$} (t);
            \draw (24) to[bend right=3] node[below=-.2em, rotate = {atan(4)}, near start] {$2x_3$} (t);
            \draw (31) to[bend left=3] node[below=-.3em, rotate = -{atan(4)}, near start] {$2x_4$} (t);
            \draw (31) to[bend right=3] node[above=-.3em, rotate = -{atan(4)}, near start] {$2x_2$} (t);
            \draw (32) to[bend left=3] node[below=-.2em, rotate = -{atan(4/3)}, near start] {$2x_4$} (t);
            \draw (32) to[bend right=3] node[above=-.2em, rotate = -{atan(4/3)}, near start] {$2x_1$} (t);
            \draw (34) to[bend left=3] node[below=-.2em, rotate = -{atan(4/5)}, near start] {$2x_2$} (t);
            \draw (34) to[bend right=3] node[above=-.2em, rotate = -{atan(4/5)}, near start] {$2x_1$} (t);
            \draw (41) to[bend left=3] node[below=-.2em, rotate = -{atan(3/7)}, near start] {$2x_3$} (t);
            \draw (41) to[bend right=3] node[above=-.2em, rotate = -{atan(3/7)}, near start] {$2x_2$} (t);
            \draw (42) to[bend left=3] node[below=-.2em, rotate = -{atan(2/9)}, near start] {$2x_3$} (t);
            \draw (42) to[bend right=3] node[above=-.2em, rotate = -{atan(2/9)}, near start] {$2x_1$} (t);
            \draw (43) to[bend left=3] node[below=-.2em, rotate = -{atan(1/11)}, near start] {$2x_2$} (t);
            \draw (43) to[bend right=3] node[above=-.2em, rotate = -{atan(1/11)}, near start] {$2x_1$} (t);
        \end{tikzpicture}
        \caption{Graph composition for the threshold function on $4$ bits, with threshold $3$.}
        \label{fig:threshold}
    \end{figure}

    We similarly find an optimal construction for the exact-weight function, which matches the result from \cite[Propositions~A.6 and A.7]{reichardt2009span}.

    \begin{theorem}[Graph-composition for the exact-weight function]
        \label{thm:exact-weight}
        Let $n \in \N$, $k \in \{1, \dots, n-1\}$, and let $f_n^k$ be the exact-weight function on $n$ bits with weight $k$, i.e., $f_n^k(x) = 1$ if and only if $|x| = k$. We let
        \[\EW_n^k = k(n+k-1)\Th_n^k \land \lnot \Th_n^{k+1}.\]
        Then, $\EW_n^k$ computes $f_n^k$, with witness sizes
        \[w_+(x,\EW_n^k) = \begin{cases}
            n + 2k(n-k), & \text{if } |x| = k, \\
            \infty, & \text{otherwise},
        \end{cases} \qquad \text{and} \qquad w_-(x,\EW_n^k) = \begin{cases}
            \infty, & \text{if } |x| = k, \\
            \frac{1}{|k-|x||}, & \text{otherwise}.
        \end{cases}\]
        Consequently, $C(\EW_n^k) = \sqrt{n + 2k(n-k)}$, and this is optimal.
    \end{theorem}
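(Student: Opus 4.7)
The plan is to derive all witness sizes mechanically from the elementary manipulation theorems—scalar multiplication, negation, and AND-composition—together with the witness sizes of $\Th_n^k$ that were just computed in \cref{thm:threshold}. Since $\EW_n^k$ is built by a single AND of a scalar multiple of $\Th_n^k$ with the negation of $\Th_n^{k+1}$, no further graph-composition analysis is needed.

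First I would check which inputs are positive. By \cref{thm:and-or-witness-sizes}, $\EW_n^k$ accepts $x$ iff both conjuncts accept $x$; by \cref{thm:negation} and \cref{thm:threshold}, the first accepts iff $|x|\geq k$ and the second iff $|x|\leq k$. So $\EW_n^k$ indeed computes $f_n^k$.

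Next, for a positive $x$ (i.e.\ $|x|=k$), \cref{thm:and-or-witness-sizes} gives
\[w_+(x,\EW_n^k) = w_+\bigl(x, k(n+k-1)\Th_n^k\bigr) + w_+(x, \lnot\Th_n^{k+1}).\]
The first summand equals $k(n+k-1)\cdot w_+(x,\Th_n^k)$ by the scalar multiplication theorem, and the second equals $w_-(x,\Th_n^{k+1})$ by \cref{thm:negation}. Substituting the threshold formulas at $|x|=k$ and simplifying yields the claimed value $n+2k(n-k)$. For a negative $x$, exactly one of the two conjuncts is positive at $x$, so its negative witness is $\infty$ and the harmonic-sum formula for $w_-$ in \cref{thm:and-or-witness-sizes} collapses to the finite term. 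Splitting into $|x|<k$ and $|x|>k$ and substituting again from \cref{thm:threshold} (with a factor of $1/\alpha$ from scalar multiplication in the $|x|<k$ branch, and the negation swap in the $|x|>k$ branch) yields $w_-(x,\EW_n^k)=1/|k-|x||$ in both cases.

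Multiplying $W_+=n+2k(n-k)$ by $W_-=1$ (achieved at $|k-|x||=1$) and taking the square root gives $C(\EW_n^k)=\sqrt{n+2k(n-k)}$. Optimality is then inherited from the matching adversary lower bound in \cite[Propositions~A.6 and A.7]{reichardt2009span}. The only slightly delicate point is tracking the scalar factor $k(n+k-1)$ through both witness computations and verifying that it is precisely the one that balances the two sides of the AND so that $w_-$ works out to $1/|k-|x||$ uniformly; this is really just bookkeeping, so I do not expect any genuine obstacle.
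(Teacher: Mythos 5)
Your proposal is correct and follows essentially the same route as the paper: the paper likewise reads off the positive witness as $k(n+k-1)w_+(x,\Th_n^k)+w_-(x,\Th_n^{k+1})$ for $|x|=k$, and for negative inputs observes that exactly one conjunct is negative so the harmonic sum in the AND-formula collapses to the single finite term, split into the cases $|x|<k$ and $|x|>k$, with optimality cited from Reichardt. No substantive difference.
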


    \begin{proof}
        The lower bound is shown in \cite[Proposition~A.7]{reichardt2009span}, so it remains to compute the witness sizes. If $|x| = k$, then
        \[w_+(x,\EW_n^k) = k(n+k-1)w_+(x,\Th_n^k) + w_-(x,\Th_n^{k+1}) = k(n-k+1) + (n-k)(k+1) = n + 2k(n-k).\]
        If $|x| < k$, we find
        \[w_-(x,\EW_n^k) = w_-(x,k(n+k-1)\Th_n^k) = \frac{1}{k - |x|},\]
        and if $|x| > k$, we have
        \[w_-(x,\EW_n^k) = w_+(x,\Th_n^{k+1}) = \frac{1}{|x|-k}.\qedhere\]
    \end{proof}

    It is possible to expand on these techniques to derive graph composition construction for some other symmetric functions that are exactly tight as well. We remark without proof that the threshold-not-all function, i.e., that the function on $n$ bits that accepts if and only if $k \leq |x| \leq n-1$, can be optimally graph-constructed by $\Th_n^k \land \lnot k(n-k+1)\Th_n^n$, with complexity $\sqrt{k(n-k+1) + n/(n-k)}$. We also note that the adversary value for the parity function on $n$ bits is exactly $n$. The proofs follow along very similar lines as for the exact-weight function.

    Next, we use the above construction for the threshold function that $\st$ can be much smaller than $\mathsf{R}_0$, i.e., zero-error randomized query complexity. In particular we show that $\st(f) = O(1)$ whereas $\mathsf{R}_0(f) \in \Omega(n)$, when $f$ is the gapped majority function.

    \begin{lemma}
        \label{lem:gapped-majority}
        Let $n \in \N$, $\D = \{x \in \{0,1\}^n : |x| < n/3 \lor |x| > 2n/3\}$, and $f_n : \D \to \{0,1\}$ be defined as $f_n(x) = 1$ if and only if $|x| > 2n/3$. That is, $f_n$ is the gapped majority function on $n$ inputs. Then, $\mathsf{st}(f_n) \in O(1)$, and $\mathsf{R}_0(f_n) \in \Omega(n)$.
    \end{lemma}

    \begin{proof}
        It is clear that $\mathsf{R}_0(f) \in \Omega(n)$, since we need to query at least $n/3$ bits to be convinced of either outcome. On the other hand, we use the construction from \cref{thm:threshold} with $k = n/2$. Then, we find that
        \[W_+ = \max_{\substack{x \in \{0,1\}^n \\ |x| > 2n/3}} \frac{1}{|x| - \frac{n}{2} + 1} < \frac{6}{n} \in O\left(\frac1n\right), \quad \text{and} \quad W_- = \max_{\substack{x \in \{0,1\}^n \\ |x| < n/3}} \frac{\frac{n}{2}(n-\frac{n}{2}+1)}{\frac{n}{2} - |x|} < 3\left(\frac{n}{2}+1\right) \in O(n),\]
        from which we deduce that $\st(f) \in O(1)$.
    \end{proof}

    The observation that the $\st$-complexity of the gapped-majority function is constant has profound consequences. Indeed, we can use it to show that $\st$ is upper bounded by $\mathsf{R}$, which is the objective of the following theorem.

    \begin{theorem}
        \label{thm:st-vs-R}
        Let $n \in \N$, $\D \subseteq \{0,1\}^n$ and $f : \D \to \{0,1\}$. Then, $\st(f) \in O(\mathsf{R}(f))$.
    \end{theorem}

    \begin{proof}
        Any randomized algorithm that computes $f$ can be thought of as a family of decision trees $(T_j)_{j=1}^k$ with depth at most $\mathsf{T} := \mathsf{R}(f)$, each of which is being evaluated according to the corresponding probability distribution $(p_j)_{j=1}^k$. First, we observe that we can assume without loss of generality that the probability distribution is uniform, i.e., all $p_j$'s are $1/k$, because we can freely duplicate decision trees and let $k$ grow arbitrarily, so that we can approximate this up to arbitrarily small error. Next, for every input $x \in \D$, we define a bit string $y \in \{0,1\}^k$, such that every $y_j$ corresponds to the outcome of $T_j$ evaluated on input $x$. Now, we observe that evaluating $f$ reduces to evaluating the gapped-majority function on $y$, and every bit evaluation of $y$ can be represented with an $st$-connectivity graph with the construction from \cref{fig:decision-tree-conversion}. Thus, composing both constructions together, we observe that the total $\st$-complexity satisfies $\st(f) \in O(\mathsf{T}) = O(\mathsf{R}(f))$.
    \end{proof}

    \subsection{Other complexity-measure relations}

    Finally, we investigate the relationship between the frameworks considered in the previous subsections, and well-known complexity measures for total boolean functions, like deterministic, randomized, and quantum query complexity.

    The first thing to note is that the graph composition framework always admits a query-optimal quantum algorithm, since we can embed an arbitrary span program on a single edge between $s$ and $t$, and we can always generate a span program with optimal complexity by generating one from an optimal solution to the dual adversary bound. As such, it doesn't make sense to define a ``graph composition'' complexity measure, since it would simply equal quantum query complexity up to constants.

    The same argument can be made for the divide and conquer framework. Indeed, one can always embed any boolean function $f : \{0,1\}^n \to \{0,1\}$ into the top-level auxiliary function $f_n^{\lambda,\aux}$, and then provide a span program with optimal complexity as $\mathcal{P}_n^{\lambda,\aux}$. Thus, if we were to define a ``divide and conquer complexity'', it would equal the quantum query complexity up to constants as well. Similarly, from the relations displayed in \cref{fig:framework-relations}, we observe that defining a ``multidimensional quantum walk complexity'' would also coincide with quantum query complexity.

    However, for the other frameworks, i.e., $st$-connectivity, (adaptive) learning graphs and (weighted) decision trees, it does make sense to define complexity measures, as we did in the previous sections. We prove several relations between these complexity measures here that were not present in the literature before.

    \begin{theorem}
        \label{thm:weighted-decision-tree-complexity}
        Let $f : \D \to \{0,1\}$ with $\D \subseteq \{0,1\}^n$. Then $\mathsf{D}(f) \leq \mathsf{WDT}(f)^2$.
    \end{theorem}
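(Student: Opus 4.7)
The plan is to bound the depth of $\T$ itself, where $\T$ with weights $w$ attains $\WDT(f) = W$, by analyzing a maximum-depth root-to-leaf path. First I would rescale the weights so that $\WDT_+(\T,w) = \WDT_-(\T,w) = W$; analogously to the scalar multiplication of span programs, this leaves the product $\WDT(\T,w)$ unchanged. Then I would preprocess $\T$ by iteratively (i) replacing every subtree whose reachable leaves all share a single label with a single leaf of that label, and (ii) replacing every subtree unreachable from the root by any input in $\D$ with an arbitrary leaf. Neither operation affects the function computed by $\T$, and since only path shortening occurs, $\WDT(\T,w)$ does not increase. After preprocessing, every internal node of $\T$ admits reachable inputs of both labels in its subtree.

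Next, let $\ell$ be a leaf of maximum depth $k$, say labeled $1$, with root-to-leaf path $(e_1,\ldots,e_k)$ and sibling edges $(e_1',\ldots, e_k')$ of weights $w_i, w_i'$. I classify each index $i \in [k]$ as case $A$ if the subtree under $e_i'$ contains a reachable negative input, and case $B$ otherwise. In case $B$, by preprocessing the sibling subtree must be a single positive leaf. For case $A$, a negative input $y_i$ going down $e_i'$ satisfies $\sum_{e \in \overline{\mathcal{P}}_{y_i}} 1/w_e \leq W$, and since $e_i \in \overline{\mathcal{P}}_{y_i}$ this gives $w_i \geq 1/W$. Summing yields $|A|/W \leq \sum_{i\in A} w_i \leq \sum_{i=1}^k w_i \leq W$, hence $|A| \leq W^2$.

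For case $B$, by preprocessing some negative input $z^{(i)}$ is compatible with the prefix $(e_1,\ldots,e_{i-1})$, and since the sibling branch contains only positive inputs, $z^{(i)}$ must go down $e_i$. Its sibling set contains $e_1',\ldots,e_i'$, so $\sum_{j \leq i} 1/w_j' \leq W$. Simultaneously, the positive input reaching the sibling positive leaf via $e_i'$ has $w_i' \leq W$ by the $\WDT_+$ bound, hence $1/w_i' \geq 1/W$. Taking $i = \max B$ gives $|B|/W \leq \sum_{j \in B} 1/w_j' \leq \sum_{j \leq \max B} 1/w_j' \leq W$, so $|B| \leq W^2$. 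Combining, $k = |A| + |B| \leq 2 W^2$, yielding $\mathsf{D}(f) \leq \mathsf{D}(\T) \leq 2 \WDT(f)^2$.

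The above analysis recovers the claimed polynomial relationship but with a factor of $2$ in the constant. To obtain the stated bound $\mathsf{D}(f) \leq \WDT(f)^2$ exactly, I would attempt a joint analysis that does not split the path into $A$ and $B$ disjointly, perhaps by deriving a single inequality combining the positive witness bound $\sum_i w_i \leq W$ with the deepest negative witness bound $\sum_j 1/w_j' \leq W$ via a Cauchy--Schwarz-type estimate, or by constructing a new decision tree (possibly different from $\T$) whose depth can be bounded more tightly using the weights as a guide. The main obstacle I expect is the asymmetry between the two cases: case $A$ bounds $w_i$ from below while case $B$ bounds $w_i'$ from below, and naively summing gives the factor of $2$; a tight bound likely requires witnesses chosen to control both weights simultaneously along the same path, or an entirely different tree construction.
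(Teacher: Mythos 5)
Your route is genuinely different from the paper's. The paper takes the $\WDT$-optimal tree, invokes the exact recursion $\WDT(T) = \bigl(\WDT(T_L)+\WDT(T_R)+\sqrt{(\WDT(T_L)-\WDT(T_R))^2+4}\,\bigr)/2$ for the optimal weighting from \cite{cornelissen2022improved}, uses its monotonicity to prune the tree down to a single root-to-leaf path of length $\mathrm{depth}(T)$ with leaves attached to every internal node, and then applies the known bound $\WDT \geq \sqrt{\mathsf{DTSize}}$ for such trees; this couples the positive and negative contributions along the spine and yields the clean constant. Your argument is self-contained and works with an arbitrary rescaled weighting, which is a virtue, but as you acknowledge it only delivers $\mathsf{D}(f) \leq 2\,\WDT(f)^2$, so the theorem as stated is not established by your write-up; the closing speculation about removing the factor $2$ is not carried out, and the recursion-based argument is precisely the device that avoids bounding $|A|$ and $|B|$ separately.

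Beyond the constant, there is one genuine hole in Case B. If the sibling subtree under $e_i'$ is unreachable (rather than containing reachable positive inputs), your preprocessing collapses it to a leaf but leaves the query at its parent in place, and then there is no positive input traversing $e_i'$, so the step $w_i' \leq W$ (hence $1/w_i' \geq 1/W$) has no justification. This is not cosmetic: for such a ``forced'' query one may take $w_i'$ arbitrarily large and $w_i$ arbitrarily small at essentially no cost to either $\WDT_+$ or $\WDT_-$, so a tree of the kind you analyze can have depth far exceeding $W^2$, and your bound on $|B|$ fails. The repair is a third preprocessing rule that splices out any internal node one of whose children is unreachable, connecting its parent directly to the reachable child; this preserves the computed function and can only decrease both $\WDT_+$ and $\WDT_-$. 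With that rule every Case-B sibling is a reachable positive leaf and your estimates go through. You should also make the ``say labeled $1$'' reduction explicit: the two witness notions are not symmetric, so a deepest leaf labeled $0$ needs the mirrored argument (using the negative witness of the spine input to bound $\sum_j 1/w_j'$ and the positive witnesses branching off to lower-bound the $w_j$), which does work but is not literally the same computation.
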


    \begin{proof}
        Let $T$ be a decision tree that minimizes $\WDT(f)$. We present a deterministic algorithm that makes at most $\WDT(f)^2$ queries. To that end, we observe from \cite[Theorem~5.4]{cornelissen2022improved} that if we have a tree $T$, where the two children of the root node have subtrees $T_L$ and $T_R$ rooted at it, then we have
        \[\WDT(T) = \frac{\WDT(T_L) + \WDT(T_R) + \sqrt{(\WDT(T_L) - \WDT(T_R))^2 + 4}}{2}.\]
        Observe that the above expression is increasing in both $\WDT(T_L)$ and $\WDT(T_R)$, since for $f(x,y) = (x+y+\sqrt{(x-y)^2+4})/2$, we have
        \[\frac{\partial f}{\partial x} = \frac12 + \frac{2(x-y)}{4\sqrt{(x-y)^2+4}} \geq \frac12\left[1 + \frac{1}{\sqrt{1 - \frac{4}{(x-y)^2}}}\right] > 0,\]
        and similarly for $y$. Now, let $P$ be the longest path in $T$ from the root to a leaf, with length $\mathrm{depth}(T)$. If for each vertex $v$ in $P$, we replace the other child with a leaf node to obtain a new tree $T'$, then we find that $\WDT(T) \geq \WDT(T')$. The resulting tree $T'$ has just one path with length $\mathrm{depth}(T)$, and leaves on all its other edges. We modify the proof of \cite[Corollary~1.7]{cornelissen2022improved}, to show that
        \[\WDT(f) = \WDT(T) \geq \WDT(T') \geq \sqrt{\mathsf{DTSize}(T')} = \sqrt{\mathrm{depth}(T)} \geq \sqrt{\mathsf{D}(f)},\]
        where the decision-tree size of $T'$ is the number of internal nodes it contains, which is $\mathrm{depth}(T)$ in this case. Finally, since $T$ computes $f$, $\mathrm{depth}(T)$ is an upper bound for $\mathsf{D}(f)$.
    \end{proof}

    \begin{corollary}
        \label{cor:R0-WDT02}
        Let $n \in \N$, $\D \subseteq \{0,1\}^n$, and $f : \D \to \{0,1\}$. Then, $\mathsf{R}_0(f) \leq \mathsf{\WDT}_0(f)^2$.
    \end{corollary}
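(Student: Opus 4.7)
The plan is to observe that the optimal zero-error family of weighted decision trees already is, essentially unchanged, a zero-error randomized algorithm whose worst-case cost is controlled by the same argument as in \cref{thm:weighted-decision-tree-complexity}. Let $\mathcal{T} = \{T_1,\dots,T_k\}$ together with a distribution $p$ on $[k]$ achieve $\WDT_0(f)$, so that $\WDT(T_j) \le \WDT_0(f)$ for every $j \in [k]$.

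The key step is to lift the inequality proved inside \cref{thm:weighted-decision-tree-complexity} from the single tree minimizing $\WDT(f)$ to each individual $T_j$. That proof really establishes a purely structural fact about a weighted decision tree $T$: the WDT-value of a tree is monotone in the WDT-values of the two subtrees of the root, so replacing, along a longest root-to-leaf path in $T$, every off-path child by a leaf yields a ``spine'' tree $T'$ with $\WDT(T') \le \WDT(T)$ and $\mathsf{DTSize}(T') = \operatorname{depth}(T)$; combined with the inequality $\WDT(T') \ge \sqrt{\mathsf{DTSize}(T')}$ from \cite[Corollary~1.7]{cornelissen2022improved}, this gives $\operatorname{depth}(T) \le \WDT(T)^2$. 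Crucially, this chain does not look at what the leaves are labeled with and so applies verbatim to trees whose leaves carry values in $\{0,1,\texttt{?}\}$. Applying it to each $T_j$ yields $\operatorname{depth}(T_j) \le \WDT(T_j)^2 \le \WDT_0(f)^2$.

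The zero-error randomized algorithm for $f$ is then the obvious one: sample $j \sim p$ and simulate $T_j$ on the input, outputting its leaf label. The worst-case query count is $\max_j \operatorname{depth}(T_j) \le \WDT_0(f)^2$. By \cref{def:zero-error}, this algorithm outputs $\lnot f(x)$ with probability $0$ and outputs \texttt{?} with probability at most $1/2$, which is exactly the definition of a zero-error randomized algorithm for $f$. The only point that needs care is the extension of the depth-vs-$\WDT$ inequality to three-valued leaves, but since that inequality is a structural statement about the tree and its edge weights and does not depend on what the leaves mean, this is immediate.
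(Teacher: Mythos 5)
Your proposal is correct and follows essentially the same route as the paper: sample a tree from the optimal zero-error family and evaluate it, bounding its depth by $\WDT(T_j)^2 \leq \WDT_0(f)^2$ via the structural argument of \cref{thm:weighted-decision-tree-complexity}. The only difference is that you explicitly note the depth-vs-$\WDT$ inequality is insensitive to leaf labels (so it applies to trees with \texttt{?}-leaves), a point the paper leaves implicit.
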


    \begin{proof}
        Let $\mathcal{T}$ be a family of zero-error decision-trees that minimizes $\WDT_0(f)$. Draw $T \sim \mathcal{T}$, and evaluate $T$. Observe that evaluating it can be done in at most $\WDT_0(T)^2$ queries, by \cref{thm:weighted-decision-tree-complexity}. Thus, we obtain a zero-error randomized algorithm that runs in at most $\WDT_0(f)^2$ queries.
    \end{proof}

    \begin{theorem}
        \label{thm:GT-ub-WDT}
        Let $n \in \N$, $\D \subseteq \{0,1\}^n$, and $f : \D \to \{0,1\}$. Then, $\sqrt{\GT}(f) \leq \WDT(f)^{3/2}$ and $\sqrt{\GT}_0(f) \leq \WDT_0(f)^{3/2}$.
    \end{theorem}

    \begin{proof}
        Let $\mathcal{T}$ be a decision tree computing $f$, with guessing complexity $G$ and depth $T$. Analogous to \cref{thm:weighted-decision-tree-complexity}, we can prune the tree to either a complete binary tree with depth $G$, or a single-path tree of length $T$, and so $\WDT(\mathcal{T}) \geq G$, and $\WDT(\mathcal{T}) \in \Omega(\sqrt{T})$. Thus, we obtain that
        \[\WDT(\mathcal{T}) = \WDT(\mathcal{T})^{1/3} \cdot \WDT(\mathcal{T})^{2/3} \in \Omega(G^{1/3} \cdot \sqrt{T}^{2/3}) = \Omega(\sqrt{GT}^{2/3}),\]
        and since this holds for every decision tree $\mathcal{T}$, the relation also holds for the complexity measures $\WDT$ (resp.\ $\WDT_0$) and $\sqrt{\GT}$ (resp.\ $\sqrt{\GT_0}$).
    \end{proof}

    We also provide a separation that matches the above bound.

    \begin{theorem}
        \label{thm:GT-sep-WDT}
        Let $n \in \N$ and $f : \{0,1\}^{2n+1} \to \{0,1\}$ be defined as
        \[f(x) = \begin{cases}
            x_2 \oplus x_3 \oplus \cdots \oplus x_{n+1}, & \text{if } x_1 = 1, \\
            x_2 \lor x_3 \lor \cdots \lor x_{n^2+1}, & \text{if } x_1 = 0.
        \end{cases}.\]
        Then, $\WDT(f) \in O(n)$ and $\sqrt{\GT}_0(f) \in \Omega(n^{3/2})$.
    \end{theorem}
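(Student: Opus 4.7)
The plan is to build a weighted decision tree $\T$ that queries $x_1$ at the root and then recurses. On the $x_1 = 1$ branch I will use the complete binary decision tree for $x_2 \oplus \cdots \oplus x_{n+1}$ with every edge weight set to $1$: a direct check from \cref{def:weighted-decision-tree} shows that every positive path has length exactly $n$ and every negative cut $\overline{\mathcal{P}}_x$ has exactly $n$ sibling edges, giving $\WDT_+ = \WDT_- = n$, hence $\WDT = n$. On the $x_1 = 0$ branch I will use the line decision tree for $x_2 \lor \cdots \lor x_{n^2+1}$ with uniform weights on the ``output-$1$'' edges tuned so that $\WDT_+ = \WDT_- = n$ (the standard OR analysis trading depth $n^2$ against the $n^2$-size negative cut). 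Plugging the two subtree complexities into the root-combination formula of \cite[Theorem~5.4]{cornelissen2022improved} already invoked in the proof of \cref{thm:weighted-decision-tree-complexity} yields $\WDT(\T) = n + O(1)$.

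\paragraph{Lower bound $\sqrt{\GT}_0(f) = \Omega(n^{3/2})$.} Fix any family $\mathcal{T}$ computing $f$ with zero error. I will exhibit a single tree $T^\star \in \mathcal{T}$ satisfying both $\mathsf{T}(T^\star) \geq n^2$ and $\mathsf{G}(T^\star) = \Omega(n)$, which gives $\sqrt{\GT}(T^\star) \geq \Omega(n^{3/2})$. The two ingredients are:
\begin{enumerate}[nosep]
    \item[(a)] If a tree $T$ correctly outputs on $y_0 = 0^{n^2+1}$, then $\mathsf{T}(T) \geq n^2$: flipping any single bit $x_j$ with $j \in \{2,\dots,n^2+1\}$ changes the value of $f$ (via the OR branch), so the $y_0$-path must query every such $j$.
    \item[(b)] Let $Y := \{(1,y,0^{n^2-n}) : y \in \{0,1\}^n\}$. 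If $T$ correctly outputs on at least $2^{n-1}$ inputs in $Y$, then $\mathsf{G}(T) \geq n/2$. Indeed, correctly deciding XOR forces the $y$-path to query all of $x_2,\dots,x_{n+1}$; for any fixed guessing algorithm on $T$, the number of wrong guesses on that path equals the Hamming distance between $y|_{2:n+1}$ and the guess vector $g$ read off the path, so if every path had $< n/2$ wrong guesses the number of correctly handled $y \in Y$ would be at most $\sum_{k<n/2}\binom{n}{k} < 2^{n-1}$.
\end{enumerate}
The zero-error condition gives $\mathbb{E}_{T \sim \mathcal{T}}[\mathbf{1}_{y_0 \in Z(T)}] \geq 1/2$ and $\mathbb{E}_{T \sim \mathcal{T}}[|Z(T) \cap Y|] \geq 2^{n-1}$, where $Z(T)$ is the set of inputs on which $T$ does not output \texttt{?}. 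I will combine these via an averaging / reverse-Markov argument to produce a single $T^\star$ handling $y_0$ together with a constant fraction of $Y$, and then invoke (a) and (b).

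\paragraph{Main obstacle.} Step (iii), the averaging, is the delicate one: a naive union bound between the $\geq 1/2$-weight set of trees handling $y_0$ and the $\Omega(1)$-weight set of trees handling many $Y$-inputs does not by itself force a non-empty intersection, since each ``witness set'' can have weight just above $1/2$. The plan to get around this is to enlarge the OR-hard witness from the single point $y_0$ to a richer family (for instance, inputs with $x_1 = 0$ and a single $1$ in the OR block, forcing essentially the same $n^2$-depth obligation) and then run a two-dimensional averaging across $Y$ and this enlarged family; equivalently, one can charge a tree that ``specializes'' to XOR inputs (and therefore outputs \texttt{?} on $y_0$) against the zero-error budget on $y_0$, so that specialization trees must be compensated by non-specialized trees which in turn provably cover both sides.
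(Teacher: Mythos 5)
Your upper bound is essentially the paper's proof and is fine, and your ingredients (a) and (b) are individually sound; in fact your Hamming-ball count in (b) is a cleaner route to $\mathsf{G}(T) \geq \Omega(n)$ than the paper's rank-versus-size argument, although you still need the paper's pruning step for paths that interleave parity queries with queries into the all-zero block (otherwise the binomial sum in your counting bound runs over the full depth of the path rather than over $n$).

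The obstacle you flag in the averaging step, however, is fatal rather than technical, and none of your proposed workarounds can close it. Under \cref{def:zero-error} the probability of outputting \texttt{?} need only be \emph{at most} $1/2$, so consider the family $\mathcal{T} = \{A,B\}$ with $p = (1/2,1/2)$, where $A$ queries $x_1$, outputs \texttt{?} if $x_1 = 1$, and otherwise evaluates the OR block along a single path (so $\mathsf{T}(A) = n^2+1$ and $\mathsf{G}(A) = 1$, by always guessing the next bit is $0$), while $B$ queries $x_1$, outputs \texttt{?} if $x_1 = 0$, and otherwise evaluates the parity block by a complete binary tree (so $\mathsf{T}(B) = n+1$ and $\mathsf{G}(B) \leq n+1$). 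Every input is answered correctly with probability exactly $1/2$ and incorrectly with probability $0$, so $\mathcal{T}$ computes $f$ with zero error, yet $\max_j \sqrt{\mathsf{G}(T_j)\mathsf{T}(T_j)} \in O(n)$. In particular the witness sets for your (a) and (b) each have mass exactly $1/2$ and are disjoint here, so no averaging, witness enlargement, or charging scheme can force a single tree to be simultaneously deep and of high guessing complexity. Note that the paper's own proof has the same gap: it derives $\mathsf{T} \geq n^2$ and $\mathsf{G} \in \Omega(n)$ from (possibly) \emph{different} trees in the family and then multiplies them, which the definition of $\sqrt{\GT}_0$ as a $\min$--$\max$ does not license. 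Both arguments do go through verbatim if the \texttt{?}-probability threshold in \cref{def:zero-error} is any constant strictly below $1/2$: then the set of trees deciding $0^{n^2+1}$ and the set of trees deciding a small constant fraction of the parity inputs have total mass exceeding $1$, hence share a tree, to which your (a) and (b) both apply. As stated, with threshold exactly $1/2$, the separation fails.
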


    \begin{proof}
        For the upper bound on $\WDT$, we take a decision tree that first queries $x_1$, and depending on the outcome either uses a full binary tree to evaluate the parity function on $n$ bits, or a single-path tree to evaluate the or function on $n^2$ bits. Both have a weighted-decision-tree complexity $O(n)$, and so the total weighted-decision-tree complexity of $f$ is also upper bound by $O(n)$.

        For the lower bound on $\sqrt{\GT}_0$, observe that we can obtain the or-function and the parity-function from restrictions of $f$. Thus, in any family of decision trees evaluating $f$ with zero error, there must be a decision tree $T$ that evaluates to $1$ on input $0(1)^{n^2}$. This decision tree generates a cover of the $(n^2+1)$-dimensional hypercube, where the subcube containing $0(1)^{n^2}$ must be monochromatic. This means that it can only contain $0(1)^{n^2}$, and thus its codimension must be $n^2$. That means the depth of the decision tree must be at least $n^2$, and so $\mathsf{T}_0(f) \geq n^2$, and hence it remains to prove that $\mathsf{G}_0(f) \in \Omega(n)$.

        On the other hand, any family of decision trees evaluating $f$ with zero-error must contain a decision tree that evaluates at least half of the inputs $1x0^n$ with $x \in \{0,1\}^n$ correctly, by the pigeonhole principle. If we now prune the subtrees of all the internal nodes that query outside of the parity function's inputs and that don't match with $1x0^{n(n-1)}$, the resulting pruned decision tree still evaluates $1x0^{n(n-1)}$ correctly for at least half of the $x \in \{0,1\}^n$. Thus, there must be at least $2^n/4$ $0$ or $1$-inputs that are correctly evaluated, and hence the monochromatic cover on $\{0,1\}^n$ generated by this decision tree must contain at least $2^n/4$ singletons. From this, we observe that the number of leaves at depth $n$ of the decision tree must be at least $2^n/4$, and so the decision tree size must be at least $2^n/2-1$. Thus, we find from \cite[Lemma~1]{ehrenfeucht1989learning} that the rank $G$ satisfies
        \[2^{n-1}-1 \leq 2\sum_{j=0}^{G}\binom{n}{j} - 1 \qquad \Leftrightarrow \qquad 2^{n-2} \leq \sum_{j=0}^G \binom{n}{j} \leq 2^{nH(G/n)},\]
        where $H(x) = -x\log(x) - (1-x)\log(1-x)$ is the binary entropy function, and the final inequality is \cite[Lemma~16.19]{flum2006parameterized}. Thus, using the inequality $H(x) \leq 1 - 2(x-1/2)^2$, we find that
        \[1 - \frac2n = \frac{n-2}{n} \leq H\left(\frac{G}{n}\right) \leq 1 - 2\left(\frac{G}{n} - \frac12\right)^2,\]
        and so
        \[\left|\frac{G}{n} - \frac12\right| \leq \frac{1}{\sqrt{n}},\]
        from which we find that $G \in n/2 - O(\sqrt{n}) \subseteq \Omega(n)$.
    \end{proof}

    We present an overview of the relations between the different complexity measures in \cref{fig:compl-meas}, where $\mathsf{FS}$ denotes the formula size, i.e., the minimum number of variables required to generate a read-once formula that computes $f$. The lower bound on formula size follows from a counting argument on the number of formulas of a given size. The lower bound on the learning graph for the threshold function follows from the lower bound on the learning graph complexity for the $k$-subset certificate structure, as discussed in \cite[Proposition~11]{belovs2014power}. The relations between the complexity measures $\mathsf{Q}$, $\mathsf{Q}_0$, $\mathsf{Q}_E$, $\mathsf{R}$, $\mathsf{R}_0$ and $\mathsf{D}$ follow from known results, see e.g.~\cite[Table~1]{aaronson2021degree} for an excellent overview, and the proper references.

    It is unknown whether $\st$ and $\mathsf{Q}$ can be separated. They are equal for all the ``usual suspects'', i.e., for the or, majority, parity, tribes, and indexing functions. Similarly, it is unknown how $\st$ and $\mathsf{R}$ can be maximally separated. Currently, the biggest separation we have is quadratic, for the OR-function, since $\st(\OR_n) \in O(\sqrt{n})$ and $R(\OR_n) \in \Omega(n)$. We leave figuring out tighter relations between these complexity measures for future research.

    \section{Applications}
    \label{sec:applications}

    In this section, we apply the $st$-connectivity framework to various string problems. We emphasize the simplicity of our constructions, and we obtain modest speed-ups in terms of time complexity over the state-of-the-art in the existing literature.

    \subsection{Pattern matching}

    In this section, we develop a quantum algorithm for (a version of) the pattern matching problem. In general, in the pattern matching problem, one is given access to two strings $x \in \Sigma^n$ and $y \in \Sigma^m$, where $m \leq n$, and $\Sigma$ is some finite alphabet of constant size. The pattern matching problem asks whether $y$ appears in $x$ as a substring. This problem was first studied in the quantum setting by Ramesh and Vinay~\cite{ramesh2003string}, who gave a quantum algorithm with query and time complexity $O(\sqrt{n}\log(n/m)\log(m) + \sqrt{m}\log^2(m))$. This was later improved to $O(\sqrt{n\log(m)} + \sqrt{m\log^3(m)\log\log(m)})$ by Wang and Ying~\cite[Section~4.3.2]{wang2024quantum}.

    Here, we consider a much easier setting, where we assume to be given a full description of the pattern $y \in \Sigma^m$, i.e., we only count the queries we make to $x$, but not to $y$. In this setting, Ramesh and Vinay's algorithm makes $O(\sqrt{n}\log(n/m)\log(m))$ queries, and Wang and Yang's algorithm makes $O(\sqrt{n\log(m)})$ queries. We match the latter's query complexity with the graph composition framework in the aperiodic case, and improve slightly in the case where the pattern is periodic.

    The core idea of the algorithm is to first find a match between the original string $x$ and a special subset of the characters in the pattern string $y$, known as the deterministic sample. We recall the core lemma that dates back to Vishkin~\cite{vishkin1991deterministic}, where we use the notation that for all $x \in \Sigma^n$ and $1 \leq k \leq \ell \leq n$, $x[k:\ell]$ denotes the substring starting at position $k$ and ending at $\ell$, inclusive.

    \begin{lemma}[Deterministic sample {\cite{vishkin1991deterministic}}]
        \label{lem:deterministic-sample}
        Let $\Sigma$ a finite alphabet of constant size, $m,n \in \N$ and $y \in \Sigma^m$ aperiodic, i.e., there does not exist a $p \leq m/2$ such that $y_{kp+\ell} = y_\ell$ for all $\ell \in [p]$ and $0 \leq k \leq \lfloor(n-\ell)/p\rfloor$. Then, there exists a subset $J \subseteq [m]$ and integer $0 \leq k \leq m/2$ such that $|J| \leq \log(m)$, and such that if $x[j:j+m-1]$ agrees with $y$ on $J$, with $j \in [n-m+1]$, then for any non-zero integer $\max\{0,-k\} \leq \ell \leq \min\{m/2-k,n-m+1\}$, $x[j+\ell:j+m-1+\ell]$ does not agree with $y$.
    \end{lemma}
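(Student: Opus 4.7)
The plan is to construct the pair $(J,k)$ via a greedy witness-covering procedure. For each shift $\ell$ in the allowed range, define the witness set
\[W_\ell := \{p \in [m] : p - \ell \in [m] \text{ and } y_p \neq y_{p-\ell}\}.\]
The key observation is that if $p \in J \cap W_\ell$ and $x[j:j+m-1]$ agrees with $y$ on $J$, then the assumption $x[j+\ell : j+m-1+\ell] = y$ would force $y_p = x_{j+p-1} = y_{p-\ell}$, contradicting $p \in W_\ell$. Hence it suffices to exhibit a small $J \subseteq [m]$ hitting $W_\ell$ for every non-zero $\ell$ in the allowed shift range. I would choose $k$ so that the range fits within $\{\ell : |\ell| \leq m/2\}$, and for each such $\ell \neq 0$ the aperiodicity hypothesis implies $W_\ell \neq \emptyset$: if $W_\ell$ were empty then telescoping $y_p = y_{p-|\ell|}$ along the length-$(m-|\ell|)$ self-overlap would exhibit $|\ell| \leq m/2$ as a period of $y$, contradicting aperiodicity.

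Next I would build $J$ greedily: initialise $S_0$ to the set of all non-zero shifts $\ell$ in the allowed range (so $|S_0| \leq m/2$), and at step $i$ pick any position $p_i \in [m]$ lying in $W_\ell$ for the maximum number of $\ell \in S_{i-1}$, add $p_i$ to $J$, and set $S_i := S_{i-1} \setminus \{\ell : p_i \in W_\ell\}$. The procedure stops once $S_i = \emptyset$. To obtain the bound $|J| \leq \log m$, the argument hinges on a \emph{halving property}: at each iteration some position is a witness for at least half the remaining shifts in $S_{i-1}$, so $|S_i| \leq |S_{i-1}|/2$ and hence $\lceil \log_2 |S_0|\rceil \leq \log m$ iterations suffice.

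The main obstacle is establishing this halving property. Naive double-counting is insufficient because individual $W_\ell$ can have size as small as $1$, so a priori most shifts might be witnessed only by idiosyncratic positions that collectively fail to concentrate. The classical resolution, which is the technical core of Vishkin's original argument, combines an averaging step with the Fine--Wilf periodicity lemma: if no single position in $[m]$ witnessed at least half of $S_{i-1}$, then averaging would force many remaining shifts $\ell_1, \ell_2 \in S_{i-1}$ to have nearly full self-agreement of $y$ on their overlap regions; choosing two such shifts whose overlaps jointly cover a long enough prefix of $y$, Fine--Wilf then forces a period of $y$ of length at most $\gcd(\ell_1, \ell_2) \leq m/2$, contradicting aperiodicity. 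Formalising this step---and, in particular, picking $k$ so that the shift range is structurally compatible with the Fine--Wilf bound---is where the real work lies.
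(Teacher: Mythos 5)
The paper does not actually prove this lemma; it imports it from Vishkin's work on deterministic sampling, so the relevant comparison is with the standard argument rather than with a proof in the text.

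Your reduction of the conclusion to a hitting-set statement (a small $J$ meeting every witness set $W_\ell$) is fine, and your observation that each $W_\ell$ is nonempty by aperiodicity is correct. But the proposal has a genuine gap exactly where you flag it: the halving property for your greedy procedure is asserted, not proved, and in the form you state it --- a single position of $[m]$ lying in $W_\ell$ for at least half of the surviving shifts, with all $W_\ell$ measured against the \emph{fixed} reference copy at shift $0$ --- it is not the right claim and is not what Vishkin establishes. The classical argument does not fix the reference in advance. One stacks the $\lfloor m/2\rfloor+1$ copies of $y$ at shifts $0,1,\dots,\lfloor m/2\rfloor$ and maintains a set $S$ of surviving \emph{copies}. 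At each step one finds a column of the common overlap on which the surviving copies do not all carry the same symbol (such a column exists because two copies agreeing on their entire overlap would exhibit a period of $y$ of length at most $m/2$ --- no Fine--Wilf needed), and one retains the copies carrying a minimum-cardinality symbol class at that column. Since there are at least two classes, $|S|$ at least halves, so after at most $\log_2(m/2+1)\le\log m$ steps a single copy survives; its shift is the integer $k$ of the lemma, and the chosen columns, re-indexed relative to copy $k$, form $J$. Every eliminated copy $i$ disagrees with copy $k$ at the column where it was eliminated, which is precisely the non-candidacy property for all nonzero $\ell$ in the stated window. Thus $k$ is an output of the construction, not a parameter you may "choose so that the range fits"; with $k$ fixed a priori (e.g.\ $k=0$) a logarithmic hitting set for all the $W_\ell$ need not exist, which is why your averaging/Fine--Wilf sketch does not close the argument. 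The halving comes for free from the minority-class selection over symbol values at a single column, not from a concentration claim about witness positions.
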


    Intuitively, the above lemma can be interpreted as follows. For any interval of $m/2$ integers in $[n]$, we can have at most one index $j$ such that $x[j:j+m-1]$ that matches $y$ on the deterministic sample $J$. That means that we can first search for a $j \in [n-m+1]$ that matches $y$ on the deterministic sample, which requires only $O(\sqrt{n|J|}) \subseteq O(\sqrt{n\log(m)})$ queries. Now, if we find such an index $j$, we still have to check if $x[j:j+m-1]$ matches all of $y$, and the above lemma tells us that this shouldn't happen for too many $j$'s.

    We embed the above reasoning in a graph composition for this problem. We then prove the properties of the resulting quantum algorithm in the following theorem statement.

    \begin{theorem}[Quantum algorithm for aperiodic pattern matching]
        \label{thm:aperiodic-pattern-matching}
        Let $\Sigma$ be a finite alphabet of constant size, and $y \in \Sigma^m$ an aperiodic pattern. We can solve the aperiodic pattern matching problem with $O(\sqrt{n\log(m)})$ queries to the input string $x \in \Sigma^n$, $\widetilde{O}(\sqrt{n})$ elementary gates, and a QROM of size $O(m)$.
    \end{theorem}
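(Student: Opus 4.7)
The plan is to build a graph composition that checks for an occurrence of $y$ at each starting position $j \in [n-m+1]$, combining them in parallel via a variable-time OR, and to exploit the deterministic sample from \cref{lem:deterministic-sample} to skew the cost of each position check. Inside each position, I will weight the sample characters and the remaining characters differently, so that positions failing the sample test contribute only $O(1)$ to the negative witness, while the few positions that pass the sample still contribute a controlled amount.

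Concretely, fix $J$ and $k$ from \cref{lem:deterministic-sample} and for each $j \in [n-m+1]$ set $\mathcal{P}_j^{\mathrm{DS}} = \bigwedge_{i \in J} [x_{j+i-1} = y_i]$ and $\mathcal{P}_j^{\mathrm{rest}} = \bigwedge_{i \in [m]\setminus J} [x_{j+i-1} = y_i]$, where each bracketed factor is a trivial span program comparing one character of $x$ to a fixed symbol of $y$. Define $\mathcal{P}_j = \mathcal{P}_j^{\mathrm{DS}} \wedge \alpha\,\mathcal{P}_j^{\mathrm{rest}}$ for a scalar $\alpha > 0$ to be optimized, and let $\mathcal{P} = \bigvee_j \mathcal{P}_j / W_+(\mathcal{P}_j)$. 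By \cref{thm:and-or-witness-sizes}, $W_+(\mathcal{P}_j) = |J| + \alpha(m-|J|)$; on a negative input $x$, if the sample fails at $j$ then $w_-(\mathcal{P}_j) \leq 1$, while if the sample matches at $j$ but some remaining character fails then $w_-(\mathcal{P}_j) \leq 1/\alpha$.

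For the query complexity, a positive $x$ immediately yields $w_+(\mathcal{P}) \leq 1$ since some $\mathcal{P}_{j^*}$ is positive with $w_+ = W_+(\mathcal{P}_{j^*})$. On a negative input, the uniqueness property implicit in \cref{lem:deterministic-sample} bounds the number of sample-matching positions by $O(n/m)$, so $\sum_j w_-(\mathcal{P}_j) \leq n + O(n/m)/\alpha$, giving
\[
C(\mathcal{P})^2 \leq \bigl[|J| + \alpha(m-|J|)\bigr]\cdot\bigl[n + O(n/m)/\alpha\bigr].
\]
Choosing $\alpha = \sqrt{\log m}/m$ and using $|J| \leq \log m$ makes the right-hand side $O(n\log m)$, so \cref{alg:span-program-algorithm} produces an algorithm of query complexity $O(\sqrt{n\log m})$.

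For the time and space complexity, the underlying graph is the parallel composition of $n-m+1$ identical constant-depth series-parallel gadgets, one per starting position $j$, all built uniformly from $J$ and $y$. The hard part will be beating the generic $\widetilde{O}(|E|^{d+1})$ QROM bound of \cref{thm:circulation-space-reflection-implementation}: by exploiting this uniformity, the state-preparation unitaries $\mathcal{E}_\ell$ and $U_\ell$ of \cref{thm:implementation-circuit-model} need only coherent access to $y$ and to $J$, a total of $O(m)$ bits of QROM, together with index arithmetic on $j$ that takes $\widetilde{O}(\log n)$ gates per reflection. Combined with \cref{alg:span-program-algorithm}, this yields the claimed $\widetilde{O}(\sqrt{n})$ elementary gates and $O(m)$ bits of QROM.
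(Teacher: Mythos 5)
Your construction is essentially the paper's: a deterministic-sample gadget in series with a full verification gadget at each position, OR-composed over all $n-m+1$ alignments, with the sample's uniqueness property bounding the number of expensive cut edges to $O(n/m)$; your free parameter $\alpha$ and the $1/W_+$ normalization of the OR are only cosmetic variants of the paper's fixed $1/m$ weights. The witness bounds, the choice of what to store in QROM ($y$ and $J$ only, exploiting the uniformity over $j$), and the resulting complexities all match the paper's proof.
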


    \begin{proof}
        For every $i \in [n]$ and $j \in [m]$, we generate a span program $[x_i = y_j]$, that computes whether $x_i$ equals $y_j$. Since we can do this exactly, i.e., without any error, using just one query, we can generate a span program with complexity $1$ for this problem as well. We assume without loss of generality that $W_+([x_i = y_j]) = W_-([x_i = y_j]) = 1$.

        Next, we construct a graph composition for the pattern matching problem. To that end, we first classically precompute a deterministic sample $\{j_1, \dots, j_k\} \subseteq [m]$ for $y$. Now, for every $i \in [n-m+1]$, we define graph compositions $\mathcal{P}_i^{\text{DS}}$ and $\mathcal{P}_i$, which check whether $x[i:i+m-1]$ matches $y$ on the deterministic sample, and in every position, respectively. Then, we make a bigger graph composition where we make $n-m+1$ parallel connections between $s$ and $t$, each composing of $\mathcal{P}_i^{\textrm{DS}}$ and $\mathcal{P}_i$ in series, for all $i \in [n-m+1]$. We refer to the resulting graph composition as $\mathcal{P}$. See also \cref{fig:pattern-matching}, where we equipped the above description with a suitable weighting scheme.

        \begin{figure}[!ht]
            \centering\scriptsize
            \begin{tikzpicture}[vertex/.style = {draw, rounded corners = .3em}, scale = 2]
                \begin{scope}
                    \node[vertex] (s) at (0,0) {$s$};
                    \node[left] at (s.west) {$\mathcal{P}_i^{\mathrm{DS}} = \;\;$};
                    \node[vertex] (1) at (1,0) {};
                    \node[vertex] (2) at (2,0) {};
                    \node[vertex] (3) at (3,0) {};
                    \node[vertex] (t) at (4,0) {$t$};

                    \draw (s) to node[above] {$[x_{i+j_1-1} = y_{j_1}]$} (1) to node[above] {$[x_{i+j_2-1} = y_{j_2}]$} (2);
                    \draw[dotted] (2) to (3);
                    \draw (3) to node[above] {$[x_{i+j_k-1} = y_{j_k}]$} (t);
                \end{scope}
                \begin{scope}[shift={(0,-.4)}]
                    \node[vertex] (s) at (0,0) {$s$};
                    \node[left] at (s.west) {$\mathcal{P}_i = \;\;$};
                    \node[vertex] (1) at (1,0) {};
                    \node[vertex] (2) at (2,0) {};
                    \node[vertex] (3) at (3,0) {};
                    \node[vertex] (t) at (4,0) {$t$};

                    \draw (s) to node[above] {$\frac1m[x_i = y_1]$} (1) to node[above] {$\frac1m[x_{i+1} = y_2]$} (2);
                    \draw[dotted] (2) to (3);
                    \draw (3) to node[above] {$\frac1m[x_{i+m-1} = y_m]$} (t);
                \end{scope}

                \begin{scope}[shift = {(5,-.2)}]
                    \node[vertex] (s) at (0,0) {$s$};
                    \node[left] at (s.west) {$\mathcal{P} = \;\;$};
                    \node[vertex] (1) at (.5,.4) {};
                    \node[vertex] (2) at (.5,.2) {};
                    \node[vertex] (3) at (.5,0) {};
                    \node[vertex] (4) at (.5,-.2) {};
                    \node[vertex] (5) at (.5,-.4) {};
                    \node[vertex] (t) at (1,0) {$t$};

                    \draw (s) to[bend left = 30] node[left] {$\mathcal{P}_1^{\mathrm{DS}}$} (1);
                    \draw (1) to[bend left = 30] node[right] {$\mathcal{P}_1$} (t);

                    \draw[dotted] (s) to[bend left = 20] (2);
                    \draw[dotted] (2) to[bend left = 20] (t);

                    \draw (s) to node[above] {$\mathcal{P}_i^{\mathrm{DS}}$} (3);
                    \draw (3) to node[above] {$\mathcal{P}_i$} (t);

                    \draw[dotted] (s) to[bend right = 20] (4);
                    \draw[dotted] (4) to[bend right = 20] (t);

                    \draw (s) to[bend right = 30] node[left] {$\mathcal{P}_{n-m+1}^{\mathrm{DS}}$} (5);
                    \draw (5) to[bend right = 30] node[right] {$\mathcal{P}_{n-m+1}$} (t);
                \end{scope}
            \end{tikzpicture}
            \caption{Graph composition construction for aperiodic pattern matching. $J = \{j_1, \dots, j_k\} \subseteq [m]$ is the deterministic sample. The dashed lines represent all the possible choices for $i \in [n-m+1]$.}
            \label{fig:pattern-matching}
        \end{figure}
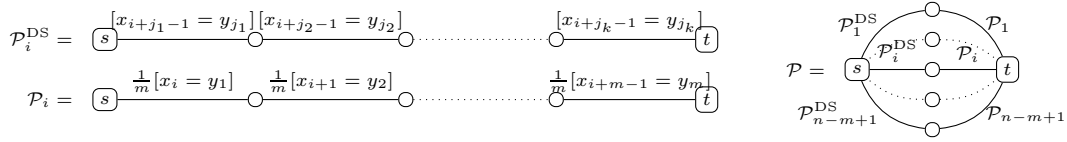

        For any positive instance $x$, suppose that $x[i:i+m-1]$ matches $y$, with $i \in [n-m+1]$. We observe that $x$ is a positive instance for $\mathcal{P}_i$, and
        \[w_+(x,\mathcal{P}) \leq w_+(x,\mathcal{P}_i) = |J| + m \cdot \frac1m \in O(\log(m)) \qquad \Rightarrow \qquad W_+(\mathcal{P}) \in O(\log(p)).\]

        On the other hand, let $x \in \Sigma^n$ be a negative instance. We observe from \cref{lem:deterministic-sample} that for every block of length $m/2$, there is at most one position $i$ for which $x$ matches $y$ on the deterministic sample. Thus, we can find a cut through the graph that cuts at most $n$ edges in the left half of the graph, and at most $O(n/m)$ edges in the right half of the graph. As such, we find that
        \[w_-(x,\mathcal{P}) \in O\left(n + \frac{n}{m} \cdot m\right) \subseteq O(n) \qquad \Rightarrow \qquad W_-(\mathcal{P}) \in O(n).\]

        We complete the analysis of the query complexity by observing that
        \[C(\mathcal{P}) = \sqrt{W_-(\mathcal{P}) \cdot W_+(\mathcal{P})} \in O\left(\sqrt{n\log(m)}\right).\]

        For the time complexity, observe that we have a symmetric series-parallel graph of constant depth, where the initial vectors are weighted combinations of a constant number of uniform superpositions. Thus, we can use \cref{thm:uniform-state-preparation,thm:circulation-space-reflection-implementation} to implement the reflection through the circulation space in time $\widetilde{O}(\log|E|) \subseteq \widetilde{O}(\log(n) + \log(m))$. For every edge, computing which bit in $x$ to query takes time $\widetilde{O}(\log(n) + \log(m))$, as long as we store the deterministic sample and $y$ in QROM, which requires size $O(m)$. Multiplying with the span program complexity results in the claimed bound.
    \end{proof}

    Finally, we generalize our results to the setting where the pattern is periodic.

    \begin{theorem}[Quantum algorithm for pattern matching]
        \label{thm:pattern-matching}
        Let $\Sigma$ be a finite alphabet of constant size, $n,m \in \N$ and $y \in \Sigma^m$ be a periodic pattern with period $p$, and $k = \lceil m/p\rceil \geq 2$ (partial) repetitions. Then, we can solve the pattern matching problem with $O(\sqrt{n\log(p)})$ queries to the input string $x \in \Sigma^n$.
    \end{theorem}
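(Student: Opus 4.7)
The plan is to extend the construction from \cref{thm:aperiodic-pattern-matching}. The key observation is that since $p$ is the smallest period of $y$, the base period $\tilde{y} := y[1:p]$ is itself aperiodic, so \cref{lem:deterministic-sample} provides a deterministic sample $J \subseteq [p]$ with $|J| \in O(\log p)$.

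I would reuse the overall shape of the graph from \cref{thm:aperiodic-pattern-matching}: for each candidate start $i \in [n-m+1]$, a branch $\mathcal{Q}_i$ given by the series composition of a DS-check gadget $\mathcal{Q}_i^{\mathrm{DS}}$ for $\tilde{y}$ at position $i$ (an AND of $|J|$ unit-weight trivial span programs of the form $[x_{i+j-1} = \tilde{y}_j]$ for $j \in J$) and a full-match gadget $\mathcal{Q}_i^{\mathrm{full}}$ verifying $x[i:i+m-1] = y$ character-by-character, with each character edge carefully weighted. The global span program $\mathcal{P}$ is the OR over these branches. The positive-witness analysis goes through essentially as in the aperiodic case: at a $y$-matching position $i^*$, picking the path through $\mathcal{Q}_{i^*}$ and invoking \cref{thm:witness-upper-bounds} yields $w_+(x,\mathcal{P}) \in O(|J|) = O(\log p)$.

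The main obstacle is showing $W_-(\mathcal{P}) \in O(n)$. Because $\tilde{y}$ has length $p$ rather than $m$, \cref{lem:deterministic-sample} only guarantees $O(n/p)$ DS-passing positions instead of $O(n/m)$; naively cutting the full-match gadget at each such position would yield a contribution of order $O(nm/p) = O(nk)$, larger than desired by a factor of $k$. To overcome this, I would exploit the periodicity of $y$ by structuring each $\mathcal{Q}_i^{\mathrm{full}}$ as an AND of $k$ base-period verifiers $\mathcal{B}_{i}, \mathcal{B}_{i+p}, \ldots, \mathcal{B}_{i+(k-1)p}$ and arranging the graph so that these base-period verifiers are physically shared between the branches corresponding to different candidate starts at step $p$. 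The cut certifying a negative instance, constructed via \cref{thm:witness-upper-bounds}, then consists of (i) DS-failure cuts at non-DS-passing positions, contributing $O(n)$, and (ii) a single base-period cut per chain of DS-passing positions, where the shared structure lets one $O(p)$-cost cut simultaneously break the $O(k)$ branches that reference that base period, bringing the DS-passing contribution back down to $O(n/p \cdot p) = O(n)$.

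Once the witness bounds $W_+(\mathcal{P}) \in O(\log p)$ and $W_-(\mathcal{P}) \in O(n)$ are in place, the query complexity $C(\mathcal{P}) = \sqrt{W_+(\mathcal{P}) W_-(\mathcal{P})} \in O(\sqrt{n\log p})$ follows from \cref{alg:span-program-algorithm}. The time and space bounds follow from \cref{thm:circulation-space-reflection-implementation} applied to the resulting series-parallel graph of constant depth, with $\tilde{y}$, $y$ and $J$ stored in $O(m)$ bits of QROM; these yield the $\widetilde{O}(\sqrt{n})$ elementary-gate count claimed (but omitted) in the theorem statement, in complete analogy with the corresponding part of \cref{thm:aperiodic-pattern-matching}.
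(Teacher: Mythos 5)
Your proposal correctly isolates the quantitative obstacle (the naive $O(nm/p)=O(nk)$ negative cut), but the fix you propose does not work as described, and it misses the idea the paper actually uses. The sharing of base-period verifiers across branches is unsound in a static $st$-graph: for a fixed residue class mod $p$ you would have a chain of shared verifiers $\mathcal{B}_{r},\mathcal{B}_{r+p},\dots$ with entry points (from the DS gadgets) and exit points (to $t$) at many nodes, and the graph has no way to remember where a path entered, so a path can enter at position $i$ and leave through an exit intended for an earlier entry, having traversed fewer than $k$ verifiers --- a false positive. Preventing this forces you to duplicate the chain per entry point, which is exactly the $O(nk)$ situation you were trying to escape. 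Moreover, even granting the sharing, the accounting does not close: keeping $W_+(\mathcal{P})\in O(\log p)$ forces the $m$ character-edges of each full-match gadget to carry total weight $O(\log p)$, so cutting an adversarially placed mismatch edge costs $\Omega(m/\log p)$; since consecutive DS-passing positions need only be $p/2$ apart and need not lie in the same residue class mod $p$, you can have $\Theta(n/p)$ of them with no sharing between their failing verifiers, and the cut is again $\Omega(nk/\log p)$.

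The paper resolves this differently, and the missing ingredient is a second use of the periodicity on the \emph{search} side rather than on the \emph{verification} side. Since $y$ consists of $k\ge 2$ repetitions of the aperiodic $\overline y=y[1:p]$, any occurrence of $y$ contains an occurrence of $\overline y$ starting inside one of the $O(n/m)$ windows $x[jm+1:jm+2p]$; so the top-level OR ranges only over these $O(np/m)=O(n/k)$ positions, and aperiodicity of $\overline y$ guarantees at most $O(1)$ DS-matches per window, i.e.\ $O(n/m)$ surviving branches in total. Each surviving branch is then handled not by a monolithic full match but by anchoring the found period and extending backwards period-by-period, with an explicit break-detection branch $\lnot(p\,\mathcal{P}^{\text{period}})$ at each step feeding into a suffix check; each such branch costs $O(kp+m)=O(m)$ to cut, giving $O(n/m)\cdot O(m)=O(n)$ for the negative witness while the accepting path still has weight $O(\log p)$. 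Without the restriction to $O(n/m)$ anchor windows, no reweighting or gadget-sharing of the kind you describe appears to recover the $O(n)$ bound.
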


    \begin{proof}
        Let $\overline{y} := y[1:p]$ be one period of $y$. In particular, this means that $\overline{y}$ is aperiodic. We assume without loss of generality that $m|n$, because we can always enlarge the alphabet by a new character and pad the input string with these characters. Next, for all $j \in \{0, \dots, n/m-1\}$, let $x_j = x[jm+1:jm+2p] \in \Sigma^{2p}$. We observe that if $y$ appears in $x$, then there must exist a $j \in \{0, \dots, n/m-1\}$ for which $\overline{y}$ appears in $x_j$. Moreover, in each $x_j$, $\overline{y}$ can only appear at most twice, because otherwise it would contradict the aperiodicity of $\overline{y}$.

        Now, we construct the graph composition. We first look for a single period in one of the $x_j$'s. If we find it, we simply check in both directions for the first position where the pattern breaks, until we reach a total length of $m$. Based on this, we can figure out if the pattern can be found in a part of $x$ that contains $x_j$. The corresponding graph composition that achieves this, with a suitably chosen weighting scheme, is displayed in \cref{fig:pattern-matching-periodic}.

        \begin{figure}[!ht]
            \centering\scriptsize
            \begin{tikzpicture}[vertex/.style = {draw, rounded corners = .3em}, scale = 2]
                \clip (-1,.25) rectangle (7,-2.75);
                \begin{scope}
                    \node[vertex] (s) at (0,0) {$s$};
                    \node[left] at (s.west) {$\mathcal{P}_i^{\mathrm{DS}} = \;\;$};
                    \node[vertex] (1) at (1,0) {};
                    \node[vertex] (2) at (2,0) {};
                    \node[vertex] (3) at (3,0) {};
                    \node[vertex] (t) at (4,0) {$t$};

                    \draw (s) to node[above] {$[x_{i+j_1-1} = \overline{y}_{j_1}]$} (1) to node[above] {$[x_{i+j_2-1} = \overline{y}_{j_2}]$} (2);
                    \draw[dotted] (2) to (3);
                    \draw (3) to node[above] {$[x_{i+j_{k'}-1} = \overline{y}_{j_{k'}}]$} (t);
                \end{scope}
                \begin{scope}[shift={(0,-.4)}]
                    \node[vertex] (s) at (0,0) {$s$};
                    \node[left] at (s.west) {$\mathcal{P}_i^{\text{period}} = \;\;$};
                    \node[vertex] (1) at (1,0) {};
                    \node[vertex] (2) at (2,0) {};
                    \node[vertex] (3) at (3,0) {};
                    \node[vertex] (t) at (4,0) {$t$};

                    \draw (s) to node[above] {$\frac1p[x_i = \overline{y}_1]$} (1) to node[above] {$\frac1p[x_{i+1} = \overline{y}_2]$} (2);
                    \draw[dotted] (2) to (3);
                    \draw (3) to node[above] {$\frac1p[x_{i+p-1} = \overline{y}_p]$} (t);
                \end{scope}
                \begin{scope}[shift = {(0,-.8)}]
                    \node[vertex] (s) at (0,0) {$s$};
                    \node[left] at (s.west) {$\mathcal{P}^{\textrm{suff}}_{i,\ell} = \;\;$};
                    \node[vertex] (1) at (1,0) {};
                    \node[vertex] (2) at (2,0) {};
                    \node[vertex] (3) at (3,0) {};
                    \node[vertex] (t) at (4,0) {$t$};

                    \draw (s) to node[above] {$\frac1m[x_{i+m-\ell} = \overline{y}_{m-\ell+1}]$} (1) to node[below] {$\frac1m[x_{i+m-\ell+1} = \overline{y}_{m-\ell+2}]$} (2);
                    \draw[dotted] (2) to (3);
                    \draw (3) to node[above] {$\frac1m[x_{i+m-1} = \overline{y}_m]$} (t);
                \end{scope}
                \begin{scope}[shift = {(0,-1.5)}]
                    \node[vertex] (s) at (0,0) {$s$};
                    \node[left] at (s.west) {$\mathcal{P}_i = \;\;$};
                    \node[vertex] (1) at (.75,0) {};
                    \node[vertex] (2) at (1.5,0) {};
                    \node[vertex] (3) at (2.25,0) {};
                    \node[vertex] (4) at (3,0) {};
                    \node[vertex] (5) at (3.75,0) {};
                    \node[vertex] (t) at (4.5,0) {$t$};
                    \node[vertex] (21) at (1.5,-.5) {};
                    \node[vertex] (31) at (2.25,-.5) {};
                    \node[vertex] (41) at (3,-.5) {};
                    \node[vertex] (51) at (3.75,-.5) {};

                    \draw (s) to node[above] {$\mathcal{P}_i^{\textrm{DS}}$} (1) to node[above] {$\mathcal{P}_i^{\textrm{period}}$} (2) to node[above] {$\frac1k\mathcal{P}_{i-p}^{\text{period}}$} (3) to node[above] {$\frac1k\mathcal{P}_{i-2p}^{\text{period}}$} (4);
                    \draw[dotted] (4) to (5);
                    \draw (5) to node[above] {$\frac1k\mathcal{P}_{i-(k-1)p}^{\textrm{period}}$} (t);

                    \draw (2) to node[left] {$\lnot(p\mathcal{P}_{i-p}^{\textrm{period}})$} (21);
                    \draw (3) to node[left] {$\lnot(p\mathcal{P}_{i-2p}^{\textrm{period}})$} (31);
                    \draw[dotted] (4) to (41);
                    \draw (5) to node[left] {$\lnot(p\mathcal{P}_{i-(k-1)p}^{\textrm{period}})$} (51);

                    \draw (21) to[bend right = 80] node[near start, left] {$\mathcal{P}_{i+p,m-p}^{\text{suff}}$} (t);
                    \draw (31) to[bend right = 70] node[near start, left] {$\mathcal{P}_{i+p,m-2p}^{\text{suff}}$} (t);
                    \draw[dotted] (41) to[bend right = 50] (t);
                    \draw (51) to node[below right=-.4em] (x) {} (t);
                    \draw[dashed,<-] (x) to (4.5,-.5) node[right] {$\mathcal{P}_{i+p,m-(k-1)p}^{\text{suff}}$};
                \end{scope}
                \begin{scope}[shift = {(5,-.4)}]
                    \node[vertex] (s) at (0,0) {$s$};
                    \node[left] at (s.west) {$\mathcal{P} = \;\;$};
                    \node[vertex] (t) at (1,0) {$t$};

                    \draw (s) to[bend left = 80] node[above] {$\mathcal{P}_1$} (t);
                    \draw[dotted] (s) to[bend left = 40] (t);
                    \draw (s) to node[above] {$\mathcal{P}_i$} (t);
                    \draw[dotted] (s) to[bend right = 40] (t);
                    \draw (s) to[bend right = 80] node[below] {$\mathcal{P}_{n-m+1}$} (t);
                \end{scope}
            \end{tikzpicture}
            \caption{Graph composition for the periodic pattern matching problem. Here, $\{j_1, \dots, j_{k'}\} \subseteq [p]$ is a deterministic sample for $\overline{y}$, where we know by \cref{lem:deterministic-sample} that $k' \in O(\log(p))$. In the composition for $\mathcal{P}$, we let $i$ range over all intervals $\{jm+1,\dots,jm+2p\}$, with $j \in \{0, \dots, n/m-1\}$.}
            \label{fig:pattern-matching-periodic}
        \end{figure}
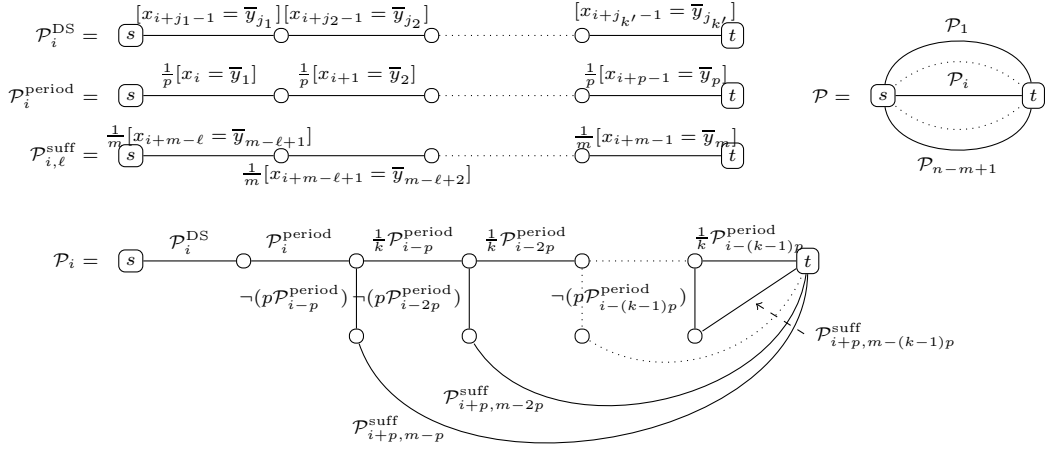

        Next, we analyze the witness sizes. To that end, suppose that $x \in \Sigma^n$ is a positive instance, and that $x[i:i+m-1]$ matches $y$ for $i \in [n-m+1]$. Then, $x$ is a positive instance for $\mathcal{P}_{i+jp}$ for some $j \in [k-1]$, and we find
        \[w_+(x,\mathcal{P}) \leq w_+(x,\mathcal{P}_{i+jp}) \leq k' + 1 + k \cdot \frac1k + \frac1p \cdot p + 1 \in O(\log(p)).\]
        On the other hand, suppose that $x \in \Sigma^n$ is a negative instance. Then, it suffices to find a cut through $\mathcal{P}$. To that end, recall that we can only match the period $\overline{y}$ on the deterministic sample for at most $2n/m$ choices of $i$. As such, we can find a cut through the graph with total weight
        \[w_-(x,\mathcal{P}) \leq n + \frac{2n}{m}(kp + m + kp) \in O(n).\]
        Thus, we conclude that
        \[C(\mathcal{P}) = \sqrt{W_+(\mathcal{P}) \cdot W_-(\mathcal{P})} \in O(\sqrt{n\log(p)}).\qedhere\]
    \end{proof}

    The time complexity in the periodic case can be analyzed in much the same way as in the aperiodic case. However, the construction is significantly more involved, and so the time complexity analysis is rather cumbersome. Therefore, we leave it for future work.

    \subsection{Various related string-search problems}

    Finally, we consider several other string problems, namely the $\OR \circ \pSEARCH$-problem, the $\Sigma^*20^*2\Sigma^*$-problem, the Dyck language recognition problem with depth $3$, and the $3$-increasing subsequence problem. It turns out that all these problems can be solved using very similar graph compositions, but each with a slightly more complicated construction than the last.

    \subsubsection{The $\OR \circ \pSEARCH$-problem}

    The $\pSEARCH$-problem first appeared in \cite[Definition~5]{belovs2018provably}, and its $\OR$-composition was subsequently considered in \cite[Section~4]{ambainis2023improved}. We start by defining it formally.

    \begin{definition}[The $\OR \circ \pSEARCH$-problem]
        Let $n \in \N$, and let $\D_n \subseteq \{0,1,*\}^n$, such that $x \in \D_n$ if and only if $x$ has exactly one position $j(x) \in [n]$ for which $x_{j(x)} \neq *$. The $\pSEARCH$-problem asks to output the value of $x_{j(x)}$, given query access to $x$.

        Next, let $m \in \N$, $T \in \{m,m+1,\dots,nm\}$ and let $\D_T \subseteq \D_n^m$, such that $(x^{(1)}, \dots, x^{(m)}) \in \D_T$ if and only if $j(x^{(1)}) + \cdots + j(x^{(m)}) = T$ and $|\overline{x}| \in \{0,1\}$, where
        \[\overline{x} = x^{(1)}_{j(x^{(1)})} \cdots x^{(m)}_{j(x^{(m)})} \in \{0,1\}^m.\]
        The $\OR \circ \pSEARCH$-problem asks to output $|\overline{x}|$, given query-access to $x \in \D_T$.
    \end{definition}

    It was proven in \cite[Theorem~2]{ambainis2023improved} that the query complexity of the $\OR \circ \pSEARCH$-problem is $\Omega(\sqrt{T\log(T)})$. Here, we prove that this is tight by constructing a graph composition that computes this problem.

    \begin{theorem}
        \label{thm:or-psearch}
        There is a quantum algorithm that solves the $\OR \circ \pSEARCH$-problem with $O(\sqrt{T\log(T)})$ queries, which is tight up to constants, $\widetilde{O}(\sqrt{T})$ elementary gates, and a QROM of size $\widetilde{O}(T)$.
    \end{theorem}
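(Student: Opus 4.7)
I would mirror the construction used for the $\Sigma^*20^*2\Sigma^*$-problem in the introduction. For each $i \in [m]$, place a ``ladder'' graph composition $\mathcal{P}_i$ on a private chain of vertices $s = v_0^{(i)}, v_1^{(i)}, \dots, v_n^{(i)}$ together with the common sink $t$, whose edges are the trivial one-query span programs $\tfrac{1}{k}[x^{(i)}_k = *]$ on the forward edge $(v_{k-1}^{(i)}, v_k^{(i)})$ and $[x^{(i)}_k = 1]$ on the exit edge $(v_{k-1}^{(i)}, t)$, for each $k \in [n]$. The full span program is the $\OR$-composition $\mathcal{P} = \bigvee_{i=1}^m \mathcal{P}_i$ of these ladders; note that the exit edge is anchored at $v_{k-1}^{(i)}$ rather than $v_k^{(i)}$ precisely so that reaching the exit checking $x^{(i)}_k = 1$ does not require traversing the (negative) forward edge that checks $x^{(i)}_k = *$.

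Applying \cref{thm:witness-upper-bounds,thm:and-or-witness-sizes}: on any positive instance $x \in \D_T$ there is a unique index $i^*$ with $x^{(i^*)}_{j_{i^*}} = 1$, and the $s$-to-$t$ path $s \to v_1^{(i^*)} \to \dots \to v_{j_{i^*}-1}^{(i^*)} \to t$ uses the forward edges at positions $1, \dots, j_{i^*}-1$ (all querying $*$) together with the exit at $v_{j_{i^*}-1}^{(i^*)}$, yielding $w_+(x, \mathcal{P}) \leq \sum_{k=1}^{j_{i^*}-1} \tfrac{1}{k} + 1 \in O(\log T)$. On a negative instance every $\pSEARCH_i$ rejects, and cutting $\mathcal{P}_i$ at the forward edge at position $j_i$ (negative witness $j_i$ from the $1/j_i$ scaling) together with the exit edges at positions $1,\dots,j_i$ (each with negative witness $1$) gives $w_-(x, \mathcal{P}_i) \leq 2 j_i$. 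Summing across the parallel $\OR$-composition yields $w_-(x, \mathcal{P}) \leq 2T$, so $C(\mathcal{P}) \in O(\sqrt{T \log T})$, which matches the lower bound of \cite{ambainis2023improved} and establishes tightness.

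For the time-complexity part I would invoke \cref{thm:circulation-space-reflection-implementation} after producing a tree-parallel decomposition mimicking the one used for the $\Sigma^*20^*2\Sigma^*$-graph: the outer $\OR$ contributes a single parallel-decomposition layer, and each individual ladder is decomposed recursively with $O(\log n)$ levels, reusing the decomposition announced in the introduction for the identically-shaped $\Sigma^*20^*2\Sigma^*$ ladder. Because all $m$ ladders share the same combinatorial skeleton, a single decomposition template addressed by $i \in [m]$ suffices in QROM, bringing the footprint to $\widetilde{O}(n+m) \subseteq \widetilde{O}(T)$ bits and the per-query overhead to $\polylog(T)$, which gives the claimed $\widetilde{O}(\sqrt T)$ elementary-gate count. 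The witness-size analysis is routine; the principal obstacle is this time-complexity argument, since a naive implementation would store each of the $m$ ladders separately and pay $\widetilde{O}(mn)$, potentially $\widetilde{\Theta}(T^2)$, for the QROM. Collapsing this down to $\widetilde{O}(T)$ relies essentially on factoring the per-ladder decomposition out and indexing by $i$, which is made possible by the complete symmetry across the $m$ $\pSEARCH$-instances.
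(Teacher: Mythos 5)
Your construction is the same as the paper's: one harmonically-weighted ladder per $\pSEARCH$-instance (forward edges $\tfrac1k[x^{(i)}_k=*]$, exit edges $[x^{(i)}_k=1]$) composed in parallel, with the identical path/cut analysis giving $W_+ \in O(\log T)$, $W_- \in O(T)$, and tightness from \cite{ambainis2023improved}. The time-complexity argument also coincides with the paper's: a recursive parallel decomposition of the ladder with $O(\log n)$ levels (the paper makes this explicit via an isomorphic wheel-shaped graph in \cref{fig:ladder-decomposition}), and storing a single decomposition template shared across the $m$ identical ladders to reduce the QROM from $\widetilde{O}(nm)$ to $\widetilde{O}(T)$.
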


    \begin{proof}
        The lower bound follows from \cite[Theorem~2]{ambainis2023improved}, so it remains to prove the upper bound.

        For every $j \in [n]$ and $b \in \{0,1,*\}$, we let $[x_j = b]$ be the trivial span program that computes whether $x_j$ equals $b$. Since we can do this exactly, i.e., with zero error probability, using just $\Theta(1)$ queries, we can implement the resulting operations in $\Theta(1)$ queries and time as well.

        Now, we construct a graph composition that computes the $\OR \circ \pSEARCH$-problem. To that end, for each $i \in [m]$, we attach a sequence of edges to $s$ labeled by $(1/j)[x^{(i)}_j \neq *]$, with $j$ ranging from $1$ to $n$. At every node in this sequence with index $j$, we attach a connection to $t$ labeled by $[x^{(i)}_j = 1]$. The resulting graph is $G$, and the resulting graph composition is $\mathcal{P}$. See also \cref{fig:psearch}.

        \begin{figure}[!ht]
            \centering\scriptsize
            \begin{tikzpicture}[vertex/.style = {draw, rounded corners = .3em}, scale = 1.5]
                \begin{scope}
                    \node[vertex] (s) at (0,0) {$s$};
                    \node[left] at (s.west) {$\mathcal{P}_j' = \;\;$};
                    \node[vertex] (1) at (1,0) {};
                    \node[vertex] (2) at (2,0) {};
                    \node[vertex] (3) at (3,0) {};
                    \node[vertex] (4) at (4,0) {};
                    \node[vertex] (5) at (5,0) {};
                    \node[vertex] (t) at (5,-.5) {$t$};

                    \draw (s) to node[above] {$[x^{(j)}_1 = *]$} (1);
                    \draw (1) to node[above] {$\frac12[x^{(j)}_2 = *]$} (2);
                    \draw (2) to node[above] {$\frac13[x^{(j)}_3 = *]$} (3);
                    \draw[dotted] (3) to (4);
                    \draw (4) to node[above] {$\frac{1}{n-1}[x^{(j)}_{n-1} = *]$} (5);

                    \draw (s) to[bend right = 20] node[near start, below left, rotate = -20] {$[x^{(j)}_1 = 1]$} (t);
                    \draw (1) to[bend right = 15] node[near start, below left, rotate = -20] {$[x^{(j)}_2 = 1]$} (t);
                    \draw (2) to[bend right = 10] node[near start, below left, rotate = -20] {$[x^{(j)}_3 = 1]$} (t);
                    \draw[dotted] (3) to[bend right = 10] (t);
                    \draw[dotted] (4) to[bend right = 10] (t);
                    \draw (5) to node[right] {$[x^{(j)}_n = 1]$} (t);
                \end{scope}
                \begin{scope}[shift={(7,-.35)}]
                    \node[vertex] (s) at (0,0) {$s$};
                    \node[left] at (s.west) {$\mathcal{P} = \;\;$};
                    \node[vertex] (t) at (1,0) {$t$};

                    \draw (s) to[bend left = 80] node[above] {$\mathcal{P}_1'$} (t);
                    \draw[dotted] (s) to[bend left = 40] (t);
                    \draw (s) to node[above] {$\mathcal{P}_j'$} (t);
                    \draw[dotted] (s) to[bend right = 40] (t);
                    \draw (s) to[bend right = 80] node[above] {$\mathcal{P}_m'$} (t);
                \end{scope}
            \end{tikzpicture}
            \caption{The graph composition for the $\OR \circ \pSEARCH$-problem. On the right-hand side, $j$ runs over $[m]$.}
            \label{fig:psearch}
        \end{figure}
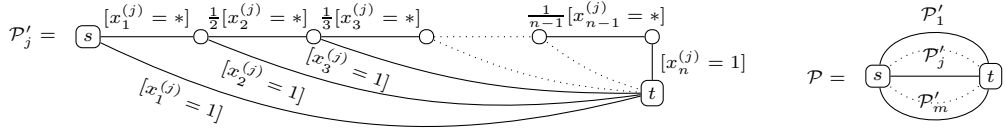

        For a positive instance $x$, now, we see that there exists a path from $s$ to $t$ along the positively-labeled edges. Moreover, if $x^{(i)}_{j(x^{(i)})} = 1$, then the resistance along the path is
        \[w_+(x,\mathcal{P}) = \sum_{j=1}^{j(x^{(i)})-1} \frac{1}{j} + 1 \in O(\log(j(x^{(i)}))) \subseteq O(\log(n)) \subseteq O(\log(T)).\]

        On the other hand, if $x$ is a negative instance, we find a cut through the graph. We can cut through the $i$th block at the $j(x^{(i)})$th position, and the effective resistance along the cut is $2j(x^{(i)})$. Thus, the total effective resistance becomes
        \[w_-(x,\mathcal{P}) \leq \sum_{i=1}^m 2j(x^{(i)}) = 2T.\]

        We conclude by observing that
        \[C(\mathcal{P}) = \sqrt{W_-(\mathcal{P}) \cdot W_+(\mathcal{P})} \in O\left(\sqrt{T\log(T)}\right).\]

        For the time complexity, observe that every tree between $s$ and $t$, shown in \cref{fig:psearch}, is isomorphic to the graph displayed in \cref{fig:ladder-decomposition}. Moreover, we observe that we can recursively perform a parallel decomposition on this graph, cf.\ \cref{lem:parallel-decomposition}, each time halving the number of edges that is present in the graph. Thus, we can perform a full tree-parallel decomposition for the graph used in $\mathcal{P}$, with $O(\log(n))$ levels of recursion.

        \begin{figure}[!ht]
            \centering
            \def\n{7}
            \begin{tikzpicture}[vertex/.style = {draw, rounded corners = .3em}]
                \node[vertex,fill=gray!40] (t) at (0,0) {$t$};
                \foreach \x [evaluate=\x] in {2-1,...-1,\n-1} {
                    \draw ({sin((\x-.5)*360/\n)},{cos((\x-.5)*360/\n)}) to ({sin((\x+.5)*360/\n)},{cos((\x+.5)*360/\n)}) to (t);
                }
                \foreach \x [evaluate=\x] in {2-1,...-1,\n-1} {
                    \node[vertex,fill=white] at ({sin((\x+.5)*360/\n)},{cos((\x+.5)*360/\n)}) {};
                }
                \node[vertex,fill=white] (s) at ({sin(180/\n)},{cos(180/\n)}) {$s$};
                \draw (s) to (t);
                \node[vertex,fill=gray!40] at (0,-1) {};
            \end{tikzpicture}
            \caption{This graph is isomorphic to the graph used in the construction of $\mathcal{P}_j'$ in \cref{fig:psearch}. By performing a parallel decomposition between the two gray nodes, we the graph falls apart in $3$ disjoint parts, all of which contain at most half the number of edges of the original graph.}
            \label{fig:ladder-decomposition}
        \end{figure}
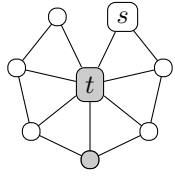

        Since, without loss of generality, we can assume that $n \leq T$ (otherwise we can simply ignore part of the input), we conclude that we can implement the reflection through the circulation space of the graph composition in time polylogarithmic in $T$. As such, the time complexity becomes equal to the query complexity up to polylogarithmic factors. Finally, the number of edges in the graph is $\widetilde{O}(nm)$, but since the construction for all the $\mathcal{P}_j'$s is the same, we don't have to store the routines for these constructions separately in QROM. Thus, the remaining size of QROM required is $\widetilde{O}(n) \subseteq \widetilde{O}(T)$.
    \end{proof}

    \subsubsection{The $\Sigma^*20^*2\Sigma^*$-problem}

    With a slight modification of the graph composition from the previous section, we can solve a different string problem that has been considered in the literature, namely, the $\Sigma^*20^*2\Sigma^*$-problem. We define it first.

    \begin{definition}[The $\Sigma^*20^*2\Sigma^*$-problem]
        Let $n \in \N$, $\Sigma = \{0,1,2\}$ and $x \in \Sigma^n$. The $\Sigma^*20^*2\Sigma^*$-problem asks whether $x$ is recognized by the regular language $\Sigma^*20^*2\Sigma^*$, i.e., whether there exist $j_1,j_2 \in [n]$ such that $j_1 < j_2$, $x_{j_1} = x_{j_2} = 2$, and $x_k = 0$ for all $k \in \{j_1+1, \dots, j_2-1\}$.
    \end{definition}

    This problem was first considered in \cite{aaronson2019quantum}, where it was also referred to as the dynamic-AND-OR language. They showed that the query complexity of any $*$-free regular language is $\widetilde{O}(\sqrt{n})$, which includes the dynamic-AND-OR language. It was also considered by Childs et al., who showed that the query complexity is $O(\sqrt{n\log(n)})$ using a divide-and-conquer approach~\cite[Theorem~4]{childs2022quantum}. We recover their result, and note that our approach does not use any divide-and-conquer strategy.

    \begin{theorem}
        \label{thm:202}
        There is a quantum algorithm that evaluates the $\Sigma^*20^*2\Sigma^*$-problem using $O(\sqrt{n\log(n)})$ queries, and it can be implemented in time $\widetilde{O}(\sqrt{n})$.
    \end{theorem}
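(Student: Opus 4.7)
The plan is to follow the construction sketched in the introduction almost verbatim, carefully verifying the witness size bounds and then invoking the generic time-efficient implementation result to turn the query complexity bound into a time complexity bound.

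First, for each $j \in [n]$ and $b \in \Sigma$, I build the trivial span program $[x_j = b]$ that tests whether the $j$th character equals $b$. This span program has $W_+ = W_- = 1$ and can be implemented in $O(1)$ queries and time. Next, for each starting position $j \in [n-1]$, I construct the graph composition $\mathcal{P}_j$ exactly as in \cref{fig:202-intro}: a ``ladder'' graph whose spine follows the word of zeros, with rungs testing for a closing $2$. The edge weights are $1$ on the first $[x_j = 2]$ edge, $1/k$ on the $k$th spine edge $[x_{j+k} = 0]$, and $1$ on each rung $[x_{j+k} = 2]$. Finally, I take $\mathcal{P} = \bigvee_{j=1}^{n-1} \mathcal{P}_j$ as the parallel graph composition of all the $\mathcal{P}_j$'s.

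For the witness size analysis, if $x$ is a positive instance witnessed by a $20^{\ell}2$ substring starting at position $j$, then \cref{thm:witness-upper-bounds} gives a positive witness via the corresponding $st$-path in $\mathcal{P}_j$, yielding $w_+(x, \mathcal{P}_j) \leq 1 + \sum_{k=1}^{\ell} 1/k + 1 \in O(\log n)$, hence $W_+(\mathcal{P}) \in O(\log n)$ by \cref{thm:and-or-witness-sizes}. For a negative instance, I exhibit a cut through each $\mathcal{P}_j$. If $x_j \neq 2$, a single cut has weight $1$. If $x_j = 2$ and the maximal subsequent block of zeros is followed by $1$ (forming a $20^{\ell-1}1$-pattern), then cutting at the appropriate depth gives weight $2\ell$, and otherwise ($x$ ends in $20^*$) a similar bound holds. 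Summing $w_-(x, \mathcal{P}) \leq \sum_j w_-(x, \mathcal{P}_j)$ via \cref{thm:and-or-witness-sizes} and the key combinatorial observation that the lengths of the disjoint $20^*1$-patterns inside $x$ sum to at most $n$ yields $W_-(\mathcal{P}) \in O(n)$. Thus $C(\mathcal{P}) \in O(\sqrt{n \log n})$, which gives the query bound via \cref{alg:span-program-algorithm}.

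For the time complexity, I need a tree-parallel decomposition of the composed graph of small depth. The outer $\OR$-composition is itself a single parallel decomposition step with $n-1$ children. Each $\mathcal{P}_j$'s graph is exactly the ``ladder'' of \cref{fig:ladder-decomposition}, and as noted in the proof of \cref{thm:or-psearch}, this graph admits a recursive parallel decomposition that halves the number of edges at each step, giving depth $O(\log n)$. Composing with the outer step yields a tree-parallel decomposition of total depth $O(\log n)$ with $K$ polynomial in $n$. Invoking \cref{thm:circulation-space-reflection-implementation} together with \cref{thm:implementation-graph-composition}, and observing that all the span programs $[x_j = b]$ implement their reflections and initial-state preparations in $O(1)$ time, gives an $O(\polylog n)$ cost per span program iteration, hence total time $\widetilde{O}(\sqrt{n})$. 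The main obstacle I foresee is the negative-witness analysis: the telescoping of the $20^*1$-pattern lengths needs to carefully account for the edge case at the end of the string and the possibility of overlapping partial patterns, but since the maximal $20^*1$ blocks are pairwise disjoint substrings of $x$, the bound $\sum \ell \leq n$ is clean.
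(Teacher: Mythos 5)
Your proposal is correct and follows essentially the same route as the paper's proof: the same ladder-shaped graph compositions $\mathcal{P}_j$ with harmonic spine weights, the same path/cut witness analysis with the telescoping bound on the $20^*1$-pattern lengths, and the same $O(\log n)$-depth tree-parallel decomposition via the ladder graph of \cref{fig:ladder-decomposition} combined with \cref{thm:circulation-space-reflection-implementation}. The only difference is cosmetic: you explicitly flag the end-of-string edge case (a trailing $20^*$ block with no closing symbol), which the paper's cut argument handles implicitly; your observation that this case can arise for at most one starting position closes it cleanly.
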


    \begin{proof}
        For all $j \in [n]$ and $b \in \{0,1,2\}$, we construct the trivial span program $[x_j = b]$ that computes whether $x_j$ equals $b$. Since we can do this exactly, i.e., with error probability zero, using just $\Theta(1)$ queries and time, we can construct the operations of these span programs in $\Theta(1)$ queries and time as well.

        Next, we construct a graph composition that computes the $\Sigma^*20^*2\Sigma^*$-problem. To that end, for all $i \in [n-1]$, we attach an edge to $s$ labeled by $[x_i = 2]$. Attached to its leaf, we attach a sequence of edges labeled by $(1/j)[x_{i+j} = 0]$, for $j \in [n-i-1]$. Finally, we connect every $j$th node in this sequence with $t$, by an edge labeled by $[x_{i+j} = 2]$. We refer to the resulting graph as $G$, and the resulting graph composition as $\mathcal{P}$. See also \cref{fig:202}.

        \begin{figure}[!ht]
            \centering\scriptsize
            \begin{tikzpicture}[vertex/.style = {draw, rounded corners = .3em}, scale = 1.5]
                \begin{scope}
                    \node[vertex] (s) at (0,0) {$s$};
                    \node[left] at (s.west) {$\mathcal{P}_j' = \;\;$};
                    \node[vertex] (1) at (1,0) {};
                    \node[vertex] (2) at (2,0) {};
                    \node[vertex] (3) at (3,0) {};
                    \node[vertex] (4) at (4,0) {};
                    \node[vertex] (5) at (5,0) {};
                    \node[vertex] (6) at (6,0) {};
                    \node[vertex] (t) at (6,-.5) {$t$};

                    \draw (s) to node[above] {$[x_j = 2]$} (1);
                    \draw (1) to node[above] {$[x_{j+1} = 0]$} (2);
                    \draw (2) to node[above] {$\frac12[x_{j+2} = 0]$} (3);
                    \draw (3) to node[above] {$\frac13[x_{j+3} = 0]$} (4);
                    \draw[dotted] (4) to (5);
                    \draw (5) to node[above] {$\frac{1}{n-1-j}[x_{n-1} = 0]$} (6);

                    \draw (1) to[bend right = 20] node[near start, below left, rotate = -20] {$[x_{j+1} = 2]$} (t);
                    \draw (2) to[bend right = 15] node[near start, below left, rotate = -20] {$[x_{j+2} = 2]$} (t);
                    \draw (3) to[bend right = 10] node[near start, below left, rotate = -20] {$[x_{j+3} = 2]$} (t);
                    \draw[dotted] (4) to[bend right = 10] (t);
                    \draw[dotted] (5) to[bend right = 10] (t);
                    \draw (6) to node[right] {$[x_n = 2]$} (t);
                \end{scope}
                \begin{scope}[shift={(7.5,-.35)}]
                    \node[vertex] (s) at (0,0) {$s$};
                    \node[left] at (s.west) {$\mathcal{P} = \;\;$};
                    \node[vertex] (t) at (1,0) {$t$};

                    \draw (s) to[bend left = 80] node[above] {$\mathcal{P}_1'$} (t);
                    \draw[dotted] (s) to[bend left = 40] (t);
                    \draw (s) to node[above] {$\mathcal{P}_j'$} (t);
                    \draw[dotted] (s) to[bend right = 40] (t);
                    \draw (s) to[bend right = 80] node[above] {$\mathcal{P}_{n-1}'$} (t);
                \end{scope}
            \end{tikzpicture}
            \caption{The graph composition for the $\Sigma^*20^*2\Sigma^*$-problem. On the right-hand side, $j$ runs over $[n-1]$.}
            \label{fig:202}
        \end{figure}
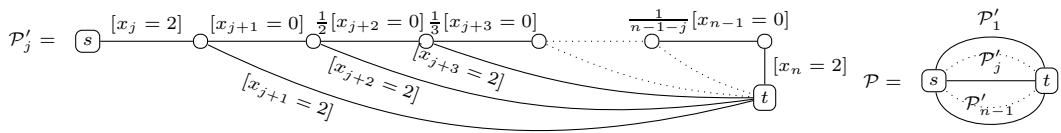

        For a positive instance $x$, observe that there indeed exists a path from $s$ to $t$ along positively-labeled edges. Moreover, if $j_1,j_2$ are such that $j_1 < j_2$, $x_{j_1} = x_{j_2} = 2$, and for all $k \in \{j_1+1, \dots, j_2-1\}$, $x_k = 0$, then the total resistance along this path is
        \[w_+(x,\mathcal{P}) \leq 2 + \sum_{j=j_1+1}^{j_2-1} \frac{1}{j-j_1} \in O(\log(j_2-j_1)) \subseteq O(\log(n)).\]

        On the other hand, for a negative instance $x$, we find a cut through the graph $G$ through negatively-labeled edges. For all $i \in [n]$ for which $x_i \neq 2$, we cut the edge adjacent to $s$ labeled by $[x_i = 2]$. This leaves to find a cut through all the $i$th blocks, for which $x_i = 2$. To that end, let $j > i$ be the minimal index for which $x_j = 2$. Since $x$ is a negative instance, there must be a $k_i \in \{i+1, \dots, j-1\}$, such that $x_{k_i} = 1$. We find a cut through the $i$th block, that cuts through $k_i - i$ edges with weight $1$, and one edge with weight $k_i - i$. Thus, the total resistance along this cut satisfies
        \[w_-(x,\mathcal{P}) \leq n + \sum_{\substack{i = 1 \\ x_i = 2}}^n 2(k_i-i) \leq n + 2n \in O(n),\]
        where we used that $k_i < j$, the latter of which is the $i$ from the next term, and so everything cancels by telescoping.

        We conclude by observing that
        \[C(\mathcal{P}) = \sqrt{W_-(\mathcal{P}) \cdot W_+(\mathcal{P})} \in O\left(\sqrt{n\log(n)}\right).\]

        For the time complexity, we use the same divide and conquer approach as in \cref{thm:or-psearch} to argue that the number of levels of recursion is $O(\sqrt{n})$. As such, the total time complexity of implementing the reflection around the circulation space is polylogarithmic in the number of edges, which is $O(n^2)$, and so the total overhead is polylogarithmic in $n$.
    \end{proof}

    \subsubsection{Recognizing the Dyck language}

    We can apply very similar ideas to recognize the Dyck language with bounded depth. We define the problem first.

    \begin{definition}[Bounded-depth Dyck language recognition problem]
        Let $k \in \N$, and $\Sigma = \{\texttt{(},\texttt{)}\}$. For any finite string $x \in \Sigma^*$, we define $w(x) = |x^{-1}(\texttt{(})| - |x^{-1}(\texttt{)})|$, i.e., the difference of the number of opening and closing brackets in $x$. The Dyck language of depth $k$ contains all words $x \in \Sigma^*$, for which $w(x) = 0$, and any prefix $x'$ (i.e., any substring at the start of $x$) satisfies $0 \leq w(x') \leq k$. For any even $n \in \N$, the depth-$k$ Dyck language recognition problem now asks to determine for a length-$n$ input $x \in \Sigma^n$, whether it is in the depth-$k$ Dyck language, given query-access to $x$.
    \end{definition}

    This problem was studied Ambainis et al.~\cite{ambainis2020quantum}, who gave a quantum query algorithm for recognizing the Dyck language of depth $k$ with $O(\sqrt{n}\log^{k/2}(n))$ queries. Later, it was shown by Khadiev and Kravchenko that the same complexity can be achieved when we consider multiple types of opening and closing brackets~\cite{khadiev2021quantum}.

    For the bounded-depth Dyck language recognition problem with $k = 1$ and $k = 2$, we now trivially give a $\Theta(\sqrt{n})$-query algorithm that recognizes the Dyck language.

    \begin{theorem}
        There exists a graph composition algorithm for recognizing the Dyck language of depths $1$ and $2$, making $O(\sqrt{n})$ queries and spending $\widetilde{O}(\sqrt{n})$ time.
    \end{theorem}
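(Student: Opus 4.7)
The plan is to reduce depth-$1$ and depth-$2$ Dyck recognition to an OR over $O(n)$ local constant-size constraint checks, and then invoke the OR-composition of trivial span programs.

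For depth $1$, $x \in \Sigma^n$ lies in the language iff for every $i \in [n/2]$ one has $x_{2i-1} = \texttt{(}$ and $x_{2i} = \texttt{)}$. Equivalently, the complement language is the OR, over $j \in [n]$, of the violation check $[x_j \neq b_j]$, where $b_j$ is the expected character at position $j$. Each such check is a trivial span program of complexity $1$, so by \cref{thm:and-or-witness-sizes} the OR-composition $\mathcal{P}$ has $W_+(\mathcal{P}) \in O(1)$ and $W_-(\mathcal{P}) \in O(n)$, giving $C(\mathcal{P}) \in O(\sqrt{n})$. Applying \cref{thm:negation} yields a Dyck-recognition span program of the same complexity.

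The main step for depth $2$ is establishing the following local characterization: for even $n$, a string $x \in \Sigma^n$ is a depth-$2$ Dyck word if and only if $x_1 = \texttt{(}$, $x_n = \texttt{)}$, and $x_{2i-1} \neq x_{2i-2}$ for all $i \in \{2, \dots, n/2\}$. I would prove this by tracking the prefix weight $w_i$. The forward direction uses that in any depth-$2$ Dyck word the even-index weights $w_{2i}$ lie in $\{0, 2\}$, which forces the character at an odd position $2i-1$ to be $\texttt{(}$ when $w_{2i-2} = 0$ and $\texttt{)}$ when $w_{2i-2} = 2$; in either case this is the opposite of $x_{2i-2}$, the character that brought the weight to $w_{2i-2}$. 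The converse is a short induction on $i$ showing that these local constraints keep all prefix weights inside $\{0,1,2\}$ and force $w_n = 0$. Once this is in hand, I have $n/2 + 1$ constraint-violation checks, each a constant-depth AND/OR composition of trivial single-bit span programs and hence of complexity $O(1)$. The OR-composition of these checks again has $W_+ \in O(1)$ and $W_- \in O(n)$ by \cref{thm:and-or-witness-sizes}, yielding complexity $O(\sqrt{n})$; negation produces the depth-$2$ Dyck recognizer.

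For the time complexity, in both cases the underlying graph is an OR-composition of a constant number of layers of $O(n)$ parallel edges carrying constant-sized trivial span programs. The tree-parallel decomposition has constant depth, the initial vector $\ket{w_0}$ is a weighted uniform superposition preparable in $\widetilde{O}(\log n)$ time via \cref{thm:uniform-state-preparation}, and by \cref{thm:circulation-space-reflection-implementation} the circulation-space reflection is $\widetilde{O}(\log n)$-time. Combined with the $O(\sqrt{n})$-query span-program algorithm (\cref{alg:span-program-algorithm}) and \cref{thm:implementation-graph-composition}, this gives the claimed $\widetilde{O}(\sqrt{n})$ runtime. The only non-mechanical step is pinning down the depth-$2$ local characterization; everything else is a direct instantiation of the graph composition machinery.
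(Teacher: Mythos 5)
Your proposal is correct. For depth $1$ your construction is exactly the De Morgan dual of the paper's: the paper directly takes the AND-composition (a line graph) of the $n$ position checks $[x_1=\texttt{(}], [x_2=\texttt{)}], \dots$, whereas you build the OR of the violation checks and negate, which by \cref{thm:negation} (or \cref{thm:dual-graph-composition}) yields the same span program with the same witness sizes $W_+\in O(n)$, $W_-\in O(1)$. For depth $2$ the paper draws a ``ladder'' graph that encodes the automaton for the language --- between consecutive odd positions it places two parallel two-edge paths, one checking $x_{2i}=\texttt{(}$, $x_{2i+1}=\texttt{)}$ and the other checking $x_{2i}=\texttt{)}$, $x_{2i+1}=\texttt{(}$ --- and simply reads off that every $st$-path has $n$ edges and every cut has $2$ edges. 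Your local characterization ($x_1=\texttt{(}$, $x_n=\texttt{)}$, and $x_{2i-1}\neq x_{2i-2}$ for $2\le i\le n/2$) is precisely what each rung of that ladder computes, so the two graphs are again dual to one another; your prefix-weight argument is correct in both directions, and it supplies an explicit correctness proof that the paper leaves implicit in its figure. What your route buys is a cleanly stated and proved characterization lemma; what the paper's route buys is that no negation step is needed and the witness-size bookkeeping is immediate from the series-parallel picture. The time-complexity argument is the same in both cases (constant-depth tree-parallel decomposition, \cref{thm:uniform-state-preparation,thm:circulation-space-reflection-implementation}), and the extra negation in your version costs only constant overhead in the three span-program subroutines.
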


    \begin{proof}
        The lower bound search from a simple reduction to search, so it remains to prove the upper bounds. To that end, we consider the graph compositions from \cref{fig:dyck12}.

        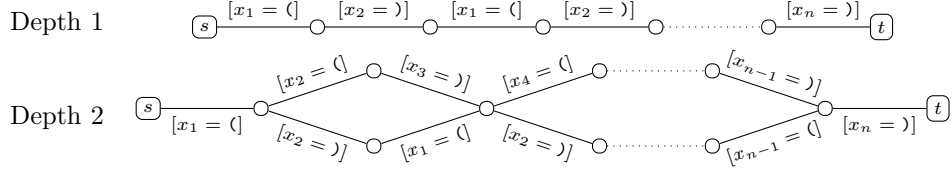
\begin{figure}[!ht]
            \centering\scriptsize
            \begin{tabular}{cc}
                {\normalsize Depth $1$} &
                \raisebox{-.5em}{\begin{tikzpicture}[vertex/.style = {draw, rounded corners = .3em}, xscale=1.5]
                    \node[vertex] (s) at (0,0) {$s$};
                    \node[vertex] (1) at (1,0) {};
                    \node[vertex] (2) at (2,0) {};
                    \node[vertex] (3) at (3,0) {};
                    \node[vertex] (4) at (4,0) {};
                    \node[vertex] (5) at (5,0) {};
                    \node[vertex] (t) at (6,0) {$t$};

                    \draw (s) to node[above] {$[x_1 = \texttt{(}]$} (1) to node[above] {$[x_2 = \texttt{)}]$} (2) to node[above] {$[x_1 = \texttt{(}]$} (3) to node[above] {$[x_2 = \texttt{)}]$} (4);
                    \draw[dotted] (4) to (5);
                    \draw (5) to node[above] {$[x_n = \texttt{)}]$} (t);
                \end{tikzpicture}} \\
                {\normalsize Depth $2$} &
                \raisebox{-2.5em}{\begin{tikzpicture}[vertex/.style = {draw, rounded corners = .3em}, xscale=1.5]
                    \node[vertex] (s) at (0,0) {$s$};
                    \node[vertex] (1) at (1,0) {};
                    \node[vertex] (20) at (2,-.5) {};
                    \node[vertex] (21) at (2,.5) {};
                    \node[vertex] (3) at (3,0) {};
                    \node[vertex] (40) at (4,-.5) {};
                    \node[vertex] (41) at (4,.5) {};
                    \node[vertex] (50) at (5,-.5) {};
                    \node[vertex] (51) at (5,.5) {};
                    \node[vertex] (6) at (6,0) {};
                    \node[vertex] (t) at (7,0) {$t$};

                    \draw (s) to node[below] {$[x_1 = \texttt{(}]$} (1) to node[below, rotate=-{atan(1/3)}] {$[x_2 = \texttt{)}]$} (20) to node[below, rotate={atan(1/3)}] {$[x_1 = \texttt{(}]$} (3) to node[below, rotate=-{atan(1/3)}] {$[x_2 = \texttt{)}]$} (40);
                    \draw (1) to node[above, rotate={atan(1/3)}] {$[x_2 = \texttt{(}]$} (21) to node[above, rotate=-{atan(1/3)}] {$[x_3 = \texttt{)}]$} (3) to node[above, rotate={atan(1/3)}] {$[x_4 = \texttt{(}]$} (41);
                    \draw[dotted] (40) to (50);
                    \draw[dotted] (41) to (51);
                    \draw (50) to node[below, rotate={atan(1/3)}] {$[x_{n-1} = \texttt{(}]$} (6) to node[below] {$[x_n = \texttt{)}]$} (t);
                    \draw (51) to node[above, rotate=-{atan(1/3)}] {$[x_{n-1} = \texttt{)}]$} (6);
                \end{tikzpicture}}
            \end{tabular}
            \caption{Graph composition for the depth-$1$ and depth-$2$ Dyck language recognition problem.}
            \label{fig:dyck12}
        \end{figure}

        For the witness sizes, observe that any path from $s$ to $t$ contains $n$ edges, so $W_+(\mathcal{P}) \in O(n)$, and similarly any cut contains $2$ edges, so $W_-(\mathcal{P}) \in O(1)$. Thus, $C(\mathcal{P}) \in O(\sqrt{n})$.

        For the time complexity, observe that the circulation space for the depth-$1$ construction is empty, so reflecting around it is trivial. For the depth-$2$ case, we can simply use \cref{lem:tree-decomposition} once to decompose into constant-sized unitaries, which take $\widetilde{O}(\polylog\dim(\H)) = \widetilde{O}(\polylog(n))$ time to implement.
    \end{proof}

    For $k = 3$, we make an observation about the structure of a string $x \in \Sigma^n$ that is not a Dyck word.

    \begin{lemma}
        \label{lem:dyck3-properties}
        Let $n \in \N$ even, and $x \in \Sigma^n$. Then $x$ is not a Dyck word of depth $3$ if and only if at least one of the following four statements is true:
        \begin{enumerate}[nosep]
            \item There is a $j \in [n]$ such that $j$ is odd, $x_j = \texttt{)}$, and for all $k \in [j-1]$, $x_k = \texttt{)}$ if and only if $k$ is even.
            \item There is a $j \in [n]$ such that $j$ is even, $x_j = \texttt{(}$, and for all $k \in \{j+1, \dots, n\}$, $x_k = \texttt{(}$ if and only if $k$ is odd.
            \item There are $1 \leq j < k \leq n$, such that $j$ is even, $k$ is odd, $x_j = x_{k+1} = \texttt{(}$, and for all $\ell \in \{j+1, j+2, \dots, k\}$, $x_{\ell} = \texttt{(}$ if and only if $\ell$ is odd.
            \item There are $1 \leq j < k \leq n$, such that $j$ is odd, $k$ is even, $x_j = x_{k+1} = \texttt{)}$, and for all $\ell \in \{j+1, j+2, \dots, k\}$, $x_{\ell} = \texttt{)}$ if and only if $\ell$ is even.
        \end{enumerate}
    \end{lemma}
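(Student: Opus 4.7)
My plan is a case analysis on the prefix-height walk, which I denote $w_m$ for the value of $w$ on the length-$m$ prefix of $x$. This walk starts at $w_0 = 0$, satisfies $|w_m - w_{m-1}| = 1$, and $w_m \equiv m \pmod 2$. The word $x$ is depth-$3$ Dyck exactly when $w_m \in [0,3]$ for all $m$ and $w_n = 0$, so failure decomposes into three disjoint modes: underflow (some $w_m = -1$), overflow (some $w_m = 4$), or the walk stays in $[0,3]$ throughout but $w_n \ne 0$; in the latter case $n$ even forces $w_n = 2$.

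For the easy direction (any condition implies $x$ is not Dyck-$3$), I would compute the walk along the pattern dictated by each condition. Condition~1 forces the prefix $x_{[1:j-1]}$ to equal $\texttt{()}\,\texttt{()}\cdots\texttt{()}$, giving $w_{j-1} = 0$ and $w_j = -1$, an underflow. For condition~2, the specified suffix contributes net zero ($j$ and $n$ both even, so $(j,n]$ has equally many odd and even positions), hence $w_n = w_{j-1} + 1$; since $w_{j-1}$ must be odd, this is either negative (so underflow occurred before $j$) or at least $2$ (so imbalance, or overflow within the suffix). For conditions~3 and~4, a parity count shows the alternating middle contributes exactly $+1$ or $-1$ and the two unusual boundary brackets each contribute $\pm 1$, so $w_{k+1} = w_{j-1} \pm 3$; since $w_{j-1}$ has the parity of $j-1$, this value is at least $4$ (condition~3, overflow) or at most $-1$ (condition~4, underflow or a prior overflow).

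For the hard direction, I would first dispatch the underflow case. Let $j^*$ be the minimal index with $w_{j^*} = -1$; then $j^*$ is odd, $x_{j^*} = \texttt{)}$, $w_{j^*-1} = 0$, and $w_m \in [0,3]$ for $m < j^*$. If $w_m \le 1$ throughout the prefix, the walk is forced to follow the trajectory $0,1,0,1,\ldots,0$, which matches condition~1. Otherwise, let $p$ be the largest index $\le j^*-1$ with $w_p = 2$. Maximality together with the no-overflow assumption on $[1,j^*-1]$ rules out $w_{p+1} = 3$ (as then $w_{p+2}$ would have to equal $2$ or $4$, both forbidden), so $w_{p+1} = 1$; and then $w$ is trapped in $\{0,1\}$ on $[p+1,j^*-1]$, yielding the alternating pattern of condition~4 with boundary indices $p+1$ and $j^*-1$. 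The overflow case (when there is no prior underflow) is the mirror image: take the largest $p < j^*$ with $w_p = 1$, show the walk is trapped in $\{2,3\}$ afterwards, and recover condition~3. Finally, for imbalance, let $p$ be the largest index with $w_p = 1$; the analogous argument traps $w$ in $\{2,3\}$ on $[p+1,n]$ and gives condition~2 with $j = p+1$ (where $p+1 = n$ is the vacuous case).

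The main obstacle is executing the trapped-walk step precisely: once an extremal $p$ is chosen, one must argue (using maximality and the absence of over/underflow on the relevant segment) that the forbidden alternative step is impossible, and then verify that the resulting oscillation of $w$ produces exactly the parity pattern required in conditions~3 or~4. Collapsing or boundary subcases (for instance when condition~4's middle is a single index, which happens precisely when $p = j^* - 3$, or when condition~2 applies vacuously with $j = n$) need small separate checks, but each follows from the parity constraint $w_m \equiv m \pmod 2$ together with $|w_{m+1} - w_m| = 1$.
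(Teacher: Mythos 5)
Your overall strategy is the same as the paper's: track the prefix-height walk, split the failure into overflow, underflow, and imbalance, and in each case pick an extremal index after which the walk is trapped in a window of width two, which forces exactly the alternating bracket pattern of one of the four conditions. The forward direction and the overflow and imbalance cases are fine (you argue the forward direction by direct increment-counting and the imbalance case without the paper's reversal symmetry, but these are cosmetic differences).

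There is one unjustified step. In your underflow case you assert that $w_m \in [0,3]$ for all $m < j^*$, where $j^*$ is the first index with $w_{j^*} = -1$, and you then invoke this ``no-overflow assumption on $[1,j^*-1]$'' to rule out $w_{p+1} = 3$. But your case split is ``some $w_m = -1$'' versus ``overflow with no prior underflow'' versus ``imbalance,'' so in the underflow case nothing prevents the walk from having reached $4$ before $j^*$ (e.g.\ a word whose height goes $0,1,2,3,4,\dots$ and later drops to $-1$), and the claimed bound is simply false. The step it supports is nevertheless true and the repair is short: since $p$ is the \emph{last} index in $[0,j^*-1]$ with $w_p = 2$ and $w_{j^*-1} = 0$, the walk cannot sit at height $3$ at time $p+1$ and still reach height $0$ by time $j^*-1$ without crossing height $2$ again (unit steps), contradicting maximality of $p$; hence $w_{p+1} = 1$ directly, with no appeal to an upper bound on the walk. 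Alternatively you can reorder the cases as the paper does --- dispatch overflow first via condition~3, so that the remaining underflow case genuinely has no overflow --- but note that even the paper's underflow argument does not need that bound; it only needs minimality of $j^*$ (for $w \geq 0$) and maximality of $p$ (for $w \leq 1$ after $p$). The small boundary checks you mention (e.g.\ $p+1 < j^*-1$, which follows because $w_{p+1} \in \{1,3\}$ while $w_{j^*-1}=0$) do all go through.
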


    \begin{proof}
        We start with a few observations. First, suppose that $x$ is a valid Dyck word of bounded depth $3$, $j \in [n]$ is even and $x_j = \texttt{(}$. By a parity argument, we find that $w(x[:j-1])$ is odd, and we also have $w(x[:j]) = w(x[:j-1]) 1$ and $0 \leq w(x[:j-1]), w(x[:j]) \leq 3$. We conclude that $w(x[:j]) = 1$.

        Next, we observe that the problem is symmetric. Indeed, $x$ is a valid Dyck-word of bounded-depth $3$, if and only if $\overline{x}$ is too, where $\overline{x}$ is formed by reversing $x$ and changing every $\texttt{(}$ into $\texttt{)}$ and vice versa.

        Now, we check that if any of the conditions holds, then $x$ cannot be a valid Dyck word of bounded depth $3$. For the first condition, we check directly that it implies that there is an unmatched closing bracket at position $j$. For the third condition, suppose towards contradiction that $x$ is valid. Then, $w(x[:j-1]) = 1$, and so $w(x[:j]) = w(x[:k-1]) = 2$. This implies that $w(x[:k+1]) = w([x:k-1]) + 2 = 4$, which is a contradiction. By symmetry, claims $2$ and $4$ imply that $\overline{x}$ is not a valid Dyck-word of depth $3$, which is equivalant to $x$ not being a valid Dyck-word of depth $3$.

        It remains to check that for every invalid Dyck-word of depth $3$, at least one of the conditions is true. To that end, suppose that $x$ is not a valid Dyck-word of depth $3$.

        First, suppose that there exists a $k \in [n-1]$ for which $w(x[:k+1]) = 4$. Let $k \in [n]$ be the minimal such index, and let $j \in [k]$ be the largest value for which $w(x[:j-1]) = 1$. Then, the third condition holds.

        Next, suppose there is some $k \in [n-1]$ for which $w(x[:k+1]) = -1$. Let $k$ be the smallest such index. We distinguish two cases. First, suppose that there exists some $j \in \{2, \dots, k\}$ for which $w(x[:j-1]) = 2$. Let $j$ be the maximal such choice. Then, the fourth condition holds. On the other hand, if for all $j \in \{2, \dots, k\}$, $w(x[:j]) \leq 1$, then the first condition holds.

        It remains to check the case where $0 \leq w(x[:j]) \leq 3$ for all $j \in [n]$. Since $x$ is not a valid Dyck word, we must have that $w(x) = 2$. But then $w(\overline{x}) = -2$, and hence one of conditions 1 and 4 must hold for $\overline{x}$, which means that one of conditions 2 and 3 must hold for $x$.
    \end{proof}

    This characterization of invalid Dyck-words of depth $3$ allows us to construct an $O(\sqrt{n\log(n)})$-algorithm for the depth-$3$ Dyck language recognition problem.

    \begin{theorem}
        \label{thm:dyck-3}
        There is a quantum algorithm that recognizes the Dyck language of depth-$3$, making $O(\sqrt{n\log(n)})$ queries, and running in time $\widetilde{O}(\sqrt{n})$.
    \end{theorem}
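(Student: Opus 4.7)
The plan is to apply \cref{lem:dyck3-properties}, which decomposes the failure to be a valid depth-$3$ Dyck word into the existence of one of four ``bad patterns'', and build a graph composition $\mathcal{P}$ that accepts iff $x$ exhibits one of these patterns. I would then apply the negation $\lnot\mathcal{P}$ from \cref{thm:negation} to obtain a span program for the Dyck-$3$ language at no additional cost, since $C(\lnot\mathcal{P}) = C(\mathcal{P})$.

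For each of the four conditions I would construct a sub-graph of ``comb'' shape, closely modeled on the $\Sigma^*20^*2\Sigma^*$ graph in \cref{fig:202}. For condition $3$ and a fixed even $j$, for example, the sub-graph $\mathcal{P}_j$ starts with an edge $[x_j = \texttt{(}]$ from $s$, is followed along a spine by a sequence of ``pattern-check'' edges of the form $(1/\ell)[x_{j+\ell} = b]$ (with $b$ alternating according to parity, and harmonically decreasing weight $1/\ell$), and from every node at an odd-parity position $k$ on the spine a ``jumper'' edge $[x_{k+1} = \texttt{(}]$ to $t$. Conditions $1$ and $2$ admit even simpler versions that share a single starting prefix, and condition $4$ is the bracket/parity mirror of condition $3$. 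All four constructions are then OR-composed in parallel to form $\mathcal{P}$, using the trivial span programs $[x_i = b]$ as in the previous applications.

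For the witness sizes, when $x$ is not a valid Dyck-$3$ word, \cref{lem:dyck3-properties} supplies some $(j,k)$ witnessing one of the four patterns; the corresponding $st$-path traverses a harmonic tail of weight $\sum_{\ell=1}^{k-j} 1/\ell$ plus $O(1)$ constant-weight edges, so $W_+(\mathcal{P}) \in O(\log n)$. When $x$ is valid, I would exhibit a cut of total weight $O(n)$ in each sub-graph via the telescoping strategy of \cref{thm:202}: for each starting position $j$, either the initial edge is cut at cost $O(1)$, or the alternating pattern breaks at some minimal position $k^*(j)$ where the jumper and spine together cost $O(k^*(j)-j)$; since the break positions within the input partition $[n]$, these costs sum telescopically to $O(n)$. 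Summing the four cuts keeps $W_-(\mathcal{P}) \in O(n)$, yielding $C(\lnot\mathcal{P}) \in O(\sqrt{n\log n})$. For the time complexity, each sub-graph is (up to relabeling) a comb graph identical to the one used in \cref{thm:202}, so it admits a tree-parallel decomposition of depth $O(\log n)$; the outer OR-composition adds only one layer, and \cref{thm:circulation-space-reflection-implementation} together with \cref{thm:uniform-state-preparation} gives $\widetilde{O}(\polylog n)$ overhead per query, and hence total time $\widetilde{O}(\sqrt{n})$.

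The main obstacle I anticipate is the cut analysis for valid words across all four sub-graphs simultaneously. While each individual condition admits a telescoping cut in the style of the $\Sigma^*20^*2\Sigma^*$ argument, one must verify that a common set of ``break events'' in the input string produces consistent telescoping bounds across all four patterns at once, so that the combined negative witness remains $O(n)$. This will require a careful case analysis over the parity and bracket type at each break point to ensure that no sub-graph's cut accumulates more than $O(n)$ total weight, and to check that the symmetry between conditions $1$--$2$ and $3$--$4$ indeed reduces the argument to two genuinely distinct cases.
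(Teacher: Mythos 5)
Your proposal matches the paper's proof essentially step for step: the same four-pattern decomposition via \cref{lem:dyck3-properties}, the same comb-shaped sub-graphs with harmonic spine weights and unit-weight jumpers to $t$, the same OR-then-negate assembly, and the same telescoping-cut and tree-parallel-decomposition arguments for the witness sizes and time complexity. The obstacle you anticipate at the end is not one: by \cref{thm:and-or-witness-sizes} we have $w_-\bigl(x,\bigvee_j \mathcal{P}^{(j)}\bigr) = \sum_j w_-(x,\mathcal{P}^{(j)})$, so the four cuts are analyzed independently and their individual $O(n)$ bounds simply add, with no cross-pattern consistency required.
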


    \begin{proof}
        For all $j \in [n]$ and $b \in \Sigma$, we let $[x_j = b]$ be the trivial span program that computes whether $x_j$ equals $b$. Since this can be done exactly, i.e., with zero error, in $\Theta(1)$ queries and time, we can implement the operations of this span program in $\Theta(1)$ queries and time as well.

        Next, we build graph compositions checking the conditions from \cref{lem:dyck3-properties} separately. For the first condition, we make a long sequence that checks for alternating \texttt{(} and \texttt{)}, with harmonically decreasing weights for every pair. Then, for every odd position, we additionally connect the vertex to $t$, checking for \texttt{(}, with weight $1$. We refer to the resulting graph-composed span program as $\mathcal{P}_1$. We construct $\mathcal{P}_2$ for condition $2$ in the same way, but with the string reversed. See also the top construction of \cref{fig:dyck3}.

        For the third condition, we attach an edge that checks whether $x_j = \texttt{(}$, for all even $j \in [n]$. Then, to every resulting leaf, we attach a sequence of edges that checks for alternating \texttt{(} and \texttt{)}, with harmonically decreasing weights for every pair. Finally, for every even position, we connect the vertex to $t$ with weight $1$, checking for \texttt{(}. The construction for the fourth condition is again similar. See also the bottom construction in \cref{fig:dyck3}.

        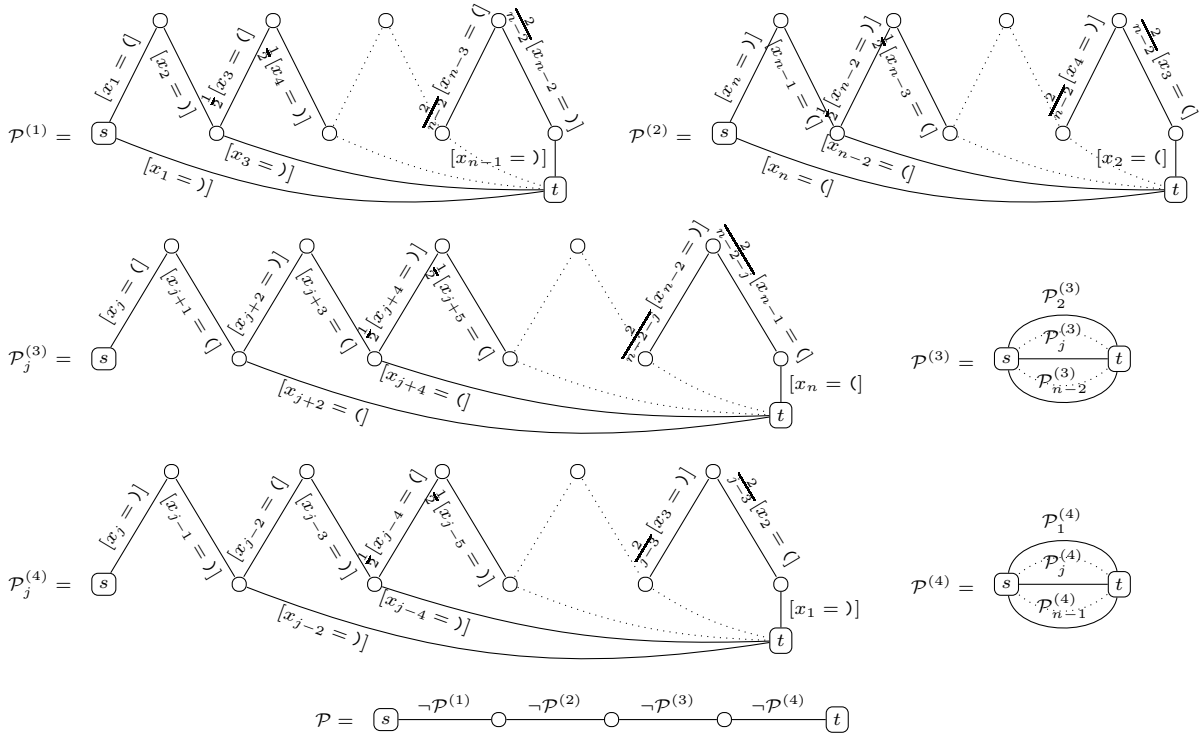
\begin{figure}[!ht]
            \centering\scriptsize
            \begin{tikzpicture}[vertex/.style = {draw, rounded corners = .3em}, scale = 1.5]
                \begin{scope}
                    \node[vertex] (s) at (0,0) {$s$};
                    \node[left] at (s.west) {$\mathcal{P}^{(1)} = \;\;$};
                    \node[vertex] (3) at (.5,1) {};
                    \node[vertex] (4) at (1,0) {};
                    \node[vertex] (5) at (1.5,1) {};
                    \node[vertex] (6) at (2,0) {};
                    \node[vertex] (7) at (2.5,1) {};
                    \node[vertex] (8) at (3,0) {};
                    \node[vertex] (9) at (3.5,1) {};
                    \node[vertex] (10) at (4,0) {};
                    \node[vertex] (t) at (4,-.5) {$t$};

                    \draw (s) to node[above, rotate = {atan(2)}] {$[x_1 = \texttt{(}]$} (3);
                    \draw (3) to node[below, rotate = {-atan(2)}] {$[x_2 = \texttt{)}]$} (4);
                    \draw (4) to node[above, rotate = {atan(2)}] {$\frac12[x_3 = \texttt{(}]$} (5);
                    \draw (5) to node[below, rotate = {-atan(2)}] {$\frac12[x_4 = \texttt{)}]$} (6);
                    \draw[dotted] (6) to (7) to (8);
                    \draw (8) to node[above, rotate = {atan(2)}] {$\frac{2}{n-2}[x_{n-3} = \texttt{(}]$} (9);
                    \draw (9) to node[above, rotate = {-atan(2)}] {$\frac{2}{n-2}[x_{n-2} = \texttt{)}]$} (10);

                    \draw (s) to[bend right = 15] node[near start, below left, rotate = -20] {$[x_1 = \texttt{)}]$} (t);
                    \draw (4) to[bend right = 10] node[near start, below left, rotate = -20] {$[x_3 = \texttt{)}]$} (t);
                    \draw[dotted] (6) to[bend right = 10] (t);
                    \draw[dotted] (8) to[bend right = 10] (t);
                    \draw (10) to node[left] {$[x_{n-1} = \texttt{)}]$} (t);
                \end{scope}
                \begin{scope}[shift={(5.5,0)}]
                    \node[vertex] (s) at (0,0) {$s$};
                    \node[left] at (s.west) {$\mathcal{P}^{(2)} = \;\;$};
                    \node[vertex] (3) at (.5,1) {};
                    \node[vertex] (4) at (1,0) {};
                    \node[vertex] (5) at (1.5,1) {};
                    \node[vertex] (6) at (2,0) {};
                    \node[vertex] (7) at (2.5,1) {};
                    \node[vertex] (8) at (3,0) {};
                    \node[vertex] (9) at (3.5,1) {};
                    \node[vertex] (10) at (4,0) {};
                    \node[vertex] (t) at (4,-.5) {$t$};

                    \draw (s) to node[above, rotate = {atan(2)}] {$[x_n = \texttt{)}]$} (3);
                    \draw (3) to node[below, rotate = {-atan(2)}] {$[x_{n-1} = \texttt{(}]$} (4);
                    \draw (4) to node[above, rotate = {atan(2)}] {$\frac12[x_{n-2} = \texttt{)}]$} (5);
                    \draw (5) to node[below, rotate = {-atan(2)}] {$\frac12[x_{n-3} = \texttt{(}]$} (6);
                    \draw[dotted] (6) to (7) to (8);
                    \draw (8) to node[above, rotate = {atan(2)}] {$\frac{2}{n-2}[x_4 = \texttt{)}]$} (9);
                    \draw (9) to node[above, rotate = {-atan(2)}] {$\frac{2}{n-2}[x_3 = \texttt{(}]$} (10);

                    \draw (s) to[bend right = 15] node[near start, below left, rotate = -20] {$[x_n = \texttt{(}]$} (t);
                    \draw (4) to[bend right = 10] node[near start, below left, rotate = -20] {$[x_{n-2} = \texttt{(}]$} (t);
                    \draw[dotted] (6) to[bend right = 10] (t);
                    \draw[dotted] (8) to[bend right = 10] (t);
                    \draw (10) to node[left] {$[x_2 = \texttt{(}]$} (t);
                \end{scope}
                \begin{scope}[shift={(0,-2)}]
                    \node[vertex] (s) at (0,0) {$s$};
                    \node[left] at (s.west) {$\mathcal{P}_j^{(3)} = \;\;$};
                    \node[vertex] (1) at (.6,1) {};
                    \node[vertex] (2) at (1.2,0) {};
                    \node[vertex] (3) at (1.8,1) {};
                    \node[vertex] (4) at (2.4,0) {};
                    \node[vertex] (5) at (3,1) {};
                    \node[vertex] (6) at (3.6,0) {};
                    \node[vertex] (7) at (4.2,1) {};
                    \node[vertex] (8) at (4.8,0) {};
                    \node[vertex] (9) at (5.4,1) {};
                    \node[vertex] (10) at (6,0) {};
                    \node[vertex] (t) at (6,-.5) {$t$};

                    \draw (s) to node[above, rotate = {atan(5/3)}] {$[x_j = \texttt{(}]$} (1);
                    \draw (1) to node[below, rotate = {-atan(5/3)}] {$[x_{j+1} = \texttt{(}]$} (2);
                    \draw (2) to node[above, rotate = {atan(5/3)}] {$[x_{j+2} = \texttt{)}]$} (3);
                    \draw (3) to node[below, rotate = {-atan(5/3)}] {$[x_{j+3} = \texttt{(}]$} (4);
                    \draw (4) to node[above, rotate = {atan(5/3)}] {$\frac12[x_{j+4} = \texttt{)}]$} (5);
                    \draw (5) to node[below, rotate = {-atan(5/3)}] {$\frac12[x_{j+5} = \texttt{(}]$} (6);
                    \draw[dotted] (6) to (7) to (8);
                    \draw (8) to node[above, rotate = {atan(5/3)}] {$\frac{2}{n-2-j}[x_{n-2} = \texttt{)}]$} (9);
                    \draw (9) to node[above, rotate = {-atan(5/3)}] {$\frac{2}{n-2-j}[x_{n-1} = \texttt{(}]$} (10);

                    \draw (2) to[bend right = 15] node[near start, below left, rotate = -20] {$[x_{j+2} = \texttt{(}]$} (t);
                    \draw (4) to[bend right = 10] node[near start, below left, rotate = -20] {$[x_{j+4} = \texttt{(}]$} (t);
                    \draw[dotted] (6) to[bend right = 10] (t);
                    \draw[dotted] (8) to[bend right = 10] (t);
                    \draw (10) to node[right] {$[x_n = \texttt{(}]$} (t);
                \end{scope}
                \begin{scope}[shift={(8,-2)}]
                    \node[vertex] (s) at (0,0) {$s$};
                    \node[left] at (s.west) {$\mathcal{P}^{(3)} = \;\;$};
                    \node[vertex] (t) at (1,0) {$t$};

                    \draw (s) to[bend left = 80] node[above] {$\mathcal{P}^{(3)}_2$} (t);
                    \draw[dotted] (s) to[bend left = 40] (t);
                    \draw (s) to node[above] {$\mathcal{P}^{(3)}_j$} (t);
                    \draw[dotted] (s) to[bend right = 40] (t);
                    \draw (s) to[bend right = 80] node[above] {$\mathcal{P}^{(3)}_{n-2}$} (t);
                \end{scope}
                \begin{scope}[shift={(0,-4)}]
                    \node[vertex] (s) at (0,0) {$s$};
                    \node[left] at (s.west) {$\mathcal{P}_j^{(4)} = \;\;$};
                    \node[vertex] (1) at (.6,1) {};
                    \node[vertex] (2) at (1.2,0) {};
                    \node[vertex] (3) at (1.8,1) {};
                    \node[vertex] (4) at (2.4,0) {};
                    \node[vertex] (5) at (3,1) {};
                    \node[vertex] (6) at (3.6,0) {};
                    \node[vertex] (7) at (4.2,1) {};
                    \node[vertex] (8) at (4.8,0) {};
                    \node[vertex] (9) at (5.4,1) {};
                    \node[vertex] (10) at (6,0) {};
                    \node[vertex] (t) at (6,-.5) {$t$};

                    \draw (s) to node[above, rotate = {atan(5/3)}] {$[x_j = \texttt{)}]$} (1);
                    \draw (1) to node[below, rotate = {-atan(5/3)}] {$[x_{j-1} = \texttt{)}]$} (2);
                    \draw (2) to node[above, rotate = {atan(5/3)}] {$[x_{j-2} = \texttt{(}]$} (3);
                    \draw (3) to node[below, rotate = {-atan(5/3)}] {$[x_{j-3} = \texttt{)}]$} (4);
                    \draw (4) to node[above, rotate = {atan(5/3)}] {$\frac12[x_{j-4} = \texttt{(}]$} (5);
                    \draw (5) to node[below, rotate = {-atan(5/3)}] {$\frac12[x_{j-5} = \texttt{)}]$} (6);
                    \draw[dotted] (6) to (7) to (8);
                    \draw (8) to node[above, rotate = {atan(5/3)}] {$\frac{2}{j-3}[x_3 = \texttt{)}]$} (9);
                    \draw (9) to node[above, rotate = {-atan(5/3)}] {$\frac{2}{j-3}[x_2 = \texttt{(}]$} (10);

                    \draw (2) to[bend right = 15] node[near start, below left, rotate = -20] {$[x_{j-2} = \texttt{)}]$} (t);
                    \draw (4) to[bend right = 10] node[near start, below left, rotate = -20] {$[x_{j-4} = \texttt{)}]$} (t);
                    \draw[dotted] (6) to[bend right = 10] (t);
                    \draw[dotted] (8) to[bend right = 10] (t);
                    \draw (10) to node[right] {$[x_1 = \texttt{)}]$} (t);
                \end{scope}
                \begin{scope}[shift={(8,-4)}]
                    \node[vertex] (s) at (0,0) {$s$};
                    \node[left] at (s.west) {$\mathcal{P}^{(4)} = \;\;$};
                    \node[vertex] (t) at (1,0) {$t$};

                    \draw (s) to[bend left = 80] node[above] {$\mathcal{P}^{(4)}_1$} (t);
                    \draw[dotted] (s) to[bend left = 40] (t);
                    \draw (s) to node[above] {$\mathcal{P}^{(4)}_j$} (t);
                    \draw[dotted] (s) to[bend right = 40] (t);
                    \draw (s) to[bend right = 80] node[above] {$\mathcal{P}^{(4)}_{n-1}$} (t);
                \end{scope}
                \begin{scope}[shift={(2.5,-5.2)}]
                    \node[vertex] (s) at (0,0) {$s$};
                    \node[left] at (s.west) {$\mathcal{P} = \;\;$};
                    \node[vertex] (1) at (1,0) {};
                    \node[vertex] (2) at (2,0) {};
                    \node[vertex] (3) at (3,0) {};
                    \node[vertex] (t) at (4,0) {$t$};

                    \draw (s) to node[above] {$\lnot\mathcal{P}^{(1)}$} (1) to node[above] {$\lnot\mathcal{P}^{(2)}$} (2) to node[above] {$\lnot\mathcal{P}^{(3)}$} (3) to node[above] {$\lnot\mathcal{P}^{(4)}$} (t);
                \end{scope}
            \end{tikzpicture}
            \caption{The graph composition construction for recognizing the Dyck language with depth $3$. For all $k \in [4]$, the construction for $\mathcal{P}^{(k)}$ checks for the $k$th condition in \cref{lem:3is-properties}. For $\mathcal{P}^{(3)}$, $j$ runs over all even numbers in $[n-2]$, and for $\mathcal{P}^{(4)}$, $j$ runs over all odd numbers in $[n-1]$.}
            \label{fig:dyck3}
        \end{figure}

        For the witness sizes, observe that $W_+(\mathcal{P}_1) \in O(\log(n))$, since it is a harmonic series. On the other hand, any cut is weight at most $2\ell$, where $\ell$ is the length of the cut, and as such $W_-(\mathcal{P}) \in O(n)$. Similarly, observe that $W_+(\mathcal{P}_3) \in O(\log(n))$, and $W_-(\mathcal{P}_4) \in O(n)$, since between any two even positions $j \in [n]$ for which we have a \texttt{(}, we cannot have a cut through the tree longer than the length of the interval, and so everything sums to $n$.

        Finally, we let $\mathcal{P} = \lnot(\mathcal{P}_1 \lor \mathcal{P}_2 \lor \mathcal{P}_3 \lor \mathcal{P}_4)$. Then $\mathcal{P}$ evaluates the Dyck language with depth $3$, and we obtain that
        \[W_+(\mathcal{P}) \leq \max_{j \in [4]} W_-(\mathcal{P}_j) \in O(n), \qquad \text{and} \qquad W_-(\mathcal{P}) \leq \sum_{j=1}^4 W_+(\mathcal{P}_j) \in O(\log(n)),\]
        and so $C(\mathcal{P}) \in O(\sqrt{n\log(n)})$.

        For the time complexity, we use the same divide and conquer strategy as in \cref{thm:or-psearch,thm:202} to decompose the graph using just $O(\log(n))$ recursion depth. Thus, the total overhead is polylogarithmic in the size of the Hilbert space, which is polynomial in $n$.
    \end{proof}

    Note that in \cref{fig:dyck3}, we can improve the weighting scheme on $\mathcal{P}_1$ to obtain $C(\mathcal{P}_1) \in O(\sqrt{n})$. We would not attain any asymptotic improvement for the Dyck-language recognition problem, though, so we leave it as an exercise for the reader.

    The natural follow-up question is whether it's possible to generalize this approach to general $k$. This requires generalizing \cref{lem:dyck3-properties} for general $k$, which seems quite challenging. We leave this for future work.

    \subsubsection{The increasing subsequence problem}

    Finally, we turn to the increasing subsequence problem. We start by formally defining it.

    \begin{definition}[$k$-increasing subsequence problem]
        Let $k,n \in \N$, $\Sigma$ be a totally ordered alphabet, and $x \in \Sigma^n$. A $k$-increasing subsequence for $x$ is a $k$-tuple $p \in [n]^k$ such that for all $i,j \in [k]$, $i < j$ implies $p_i < p_j$ and $x_{p_i} < x_{p_j}$. The extent of $p$ is $p_k-p_1$. The $k$-increasing subsequence problem asks whether a $k$-increasing subsequence exists in an instance $x \in \Sigma^n$, given query access to $x$.
    \end{definition}

    In \cite[Theorem~6]{childs2022quantum}, a quantum algorithm with $O(\sqrt{n}\log^{3(k-1)/2}(n))$ queries for solving the $k$-increasing subsequence problem was presented. Here, we improve on this complexity for the first few values of $k$.

    The problem is trivial for $k = 1$. For $k = 2$, we observe that suffices to check whether the input list is decreasing. We can do this by merely checking if there exists any $i \in [n]$ for which $x_i < x_{i+1}$. Using Grover's search algorithm, this costs $O(\sqrt{n})$ queries.

    For $k = 3$, we give a quantum algorithm that uses $O(\sqrt{n\log(n)})$ queries. To that end, we start by making an observation about the structure of $3$-increasing subsequences.

    \begin{lemma}[Properties of $3$-increasing subsequences]
        \label{lem:3is-properties}
        Let $\Sigma$ be a totally ordered alphabet, and $x \in \Sigma^n$. Suppose that $p = (i,j,k)$ is a $3$-increasing subsequence of $x$ of minimal extent. Then, $x_i < x_{i+1}$, $x_{k-1} < x_k$, and for all $\ell \in [i+1,k-2]$, $x_{\ell} \geq x_{\ell+1}$.
    \end{lemma}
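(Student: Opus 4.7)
The plan is to prove each of the three claims by contradiction, producing in each case a $3$-increasing subsequence of strictly smaller extent than $k-i$, which would contradict the minimality of $p$.

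First, I will show $x_i < x_{i+1}$. Suppose for contradiction that $x_i \geq x_{i+1}$. Note that $i+1 \neq j$, since otherwise $x_i \geq x_{i+1} = x_j$ would contradict $x_i < x_j$; hence $i+1 < j$. Then $(i+1, j, k)$ is $3$-increasing, because $x_{i+1} \leq x_i < x_j < x_k$, and its extent is $k - (i+1) < k - i$. A fully symmetric argument shows $x_{k-1} < x_k$, using the triple $(i, j, k-1)$ after verifying $j < k - 1$.

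For the third claim, I suppose there exists $\ell \in [i+1, k-2]$ with $x_\ell < x_{\ell+1}$, and split into cases based on the position of $\ell$ relative to $j$. If $\ell < j$, then $(\ell, j, k)$ is $3$-increasing (using $x_\ell < x_{\ell+1} \leq \ldots$ — actually directly $x_\ell < x_j$ follows from the case analysis below; more simply, I will argue that $x_\ell < x_{\ell+1} \leq x_j$ or apply the subcase argument) with extent $k - \ell \leq k - (i+1) < k - i$. If $\ell = j$, then $(i, j, j+1)$ is $3$-increasing since $x_i < x_j = x_\ell < x_{\ell+1}$, with extent $j+1-i \leq k-1-i < k-i$. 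The interesting case is $\ell > j$, where I will split into three subcases:
\begin{itemize}[nosep]
    \item If $x_i < x_\ell$, then $(i, \ell, \ell+1)$ has extent $\ell+1-i < k-i$.
    \item If $x_i \geq x_\ell$ and $x_j < x_{\ell+1}$, then $(i, j, \ell+1)$ is $3$-increasing with extent $\ell+1-i < k-i$.
    \item If $x_i \geq x_\ell$ and $x_j \geq x_{\ell+1}$, then $x_{\ell+1} \leq x_j < x_k$, so $(\ell, \ell+1, k)$ is $3$-increasing with extent $k-\ell < k-i$.
\end{itemize}
Each subcase yields a shorter $3$-increasing subsequence, contradicting the minimality of the extent.

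The main obstacle is the third case, where one must consider sufficiently many subcases to cover every possible arrangement of the values $x_i, x_j, x_\ell, x_{\ell+1}, x_k$; the argument in case $\ell < j$ requires a small sanity check that $x_\ell < x_j$, which I will verify via the same subcase split on whether $x_i \geq x_\ell$ (otherwise using $(i, \ell, j)$ directly). Apart from this bookkeeping, the proof is a routine extent-minimization argument.
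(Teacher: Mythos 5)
Your overall strategy is exactly the paper's: for each claim, exhibit a $3$-increasing subsequence of strictly smaller extent, with a case split on the position of $\ell$ relative to $j$ and on a few value comparisons. The first two claims and the cases $\ell = j$ and $\ell > j$ are all handled correctly, and your witness triples there coincide with the paper's.

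There is one step that fails as written: in the case $\ell < j$ with $x_i < x_\ell$, you propose the triple $(i,\ell,j)$. This requires $x_\ell < x_j$, which you do not know and which can fail (e.g.\ $x_i = 1$, $x_\ell = 5$, $x_{\ell+1} = 6$, $x_j = 2$, $x_k = 3$ is consistent with all your hypotheses, yet $x_\ell > x_j$). The repair is immediate and is the triple you already use in the $\ell > j$ case: from $x_i < x_\ell < x_{\ell+1}$ take $(i,\ell,\ell+1)$, whose extent is $\ell+1-i \leq k-1-i < k-i$ since $\ell \leq k-2$. This is also what the paper does in the corresponding case ($j > \ell$ and $x_\ell > x_i$). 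With that single substitution your proof is complete and matches the paper's argument case for case.
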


    \begin{proof}
        We give a proof by contradiction. Suppose that $x_i \geq x_{i+1}$. Then, $j > i+1$, and so $(i+1,j,k)$ would be a $3$-increasing subsequence of $x$, contradicting the minimality of $p$. Similarly, suppose that $x_{k-1} \geq x_k$. Then, $j < k-1$, and $(i,j,k-1)$ would be a $3$-increasing subsequence of $x$, contradicting the minimality of $p$.

        Finally, suppose that there exists an $\ell \in [i+1,k-2]$, for which $x_{\ell} < x_{\ell+1}$. Now, we have
        \begin{align*}
            j \leq \ell \land x_{\ell+1} < x_k & \Rightarrow (\ell,\ell+1,k) \text{ is a } 3\text{-increasing subsequence of } x \\
            j \leq \ell \land x_{\ell+1} \geq x_k & \Rightarrow (i,j,\ell+1) \text{ is a } 3\text{-increasing subsequence of } x \\
            j > \ell \land x_{\ell} \leq x_i & \Rightarrow (\ell,j,k) \text{ is a } 3\text{-increasing subsequence of } x \\
            j > \ell \land x_{\ell} > x_i & \Rightarrow (i,\ell,\ell+1) \text{ is a } 3\text{-increasing subsequence of } x,
        \end{align*}
        each of which contradicts the minimality of the extent of $p$.
    \end{proof}

    We observe that in order to solve the $3$-increasing subsequence problem, it suffices to search for a $3$-increasing subsequence of minimal extent. We can then use \cref{lem:3is-properties} to design a graph composition that detects whether such minimal-extent $3$-increasing subsequences exist. This leads to the following theorem.

    \begin{theorem}
        \label{thm:3-is}
        There is a quantum algorithm that solves the $3$-increasing subsequence problem using $O(\sqrt{n\log(n)})$ queries, and running in time $\widetilde{O}(\sqrt{n})$.
    \end{theorem}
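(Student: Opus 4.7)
The plan is to mirror the graph-composition constructions used for Theorems~\ref{thm:or-psearch}, \ref{thm:202} and~\ref{thm:dyck-3}, using the structural characterization from Lemma~\ref{lem:3is-properties}: if a $3$-increasing subsequence exists, then one of minimal extent $(i,j,k)$ must satisfy $x_i<x_{i+1}$, $x_{k-1}<x_k$, and $x_\ell\geq x_{\ell+1}$ for all intermediate $\ell$. For every ordered pair of indices $(a,b)$ with $a<b$, one can build a trivial span program $[x_a<x_b]$ (and its variants) of complexity $1$ evaluated with $\Theta(1)$ queries and time. We will compose these over a graph reflecting the above structure.

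First I would build, for each candidate ``left-rise'' index $i\in[n-1]$, a sub-composition $\mathcal{P}_i$ consisting of: an opening edge from $s$ carrying $[x_i<x_{i+1}]$; a spine of edges $v_{i+1}\!-\!v_{i+2}\!-\!\cdots\!-\!v_{n-1}$ where the $\ell$-th spine edge has weight $\tfrac{1}{\ell-i}$ and carries $[x_\ell\geq x_{\ell+1}]$; and, from each vertex $v_{k-1}$, an edge to $t$ whose span program checks the conjunction of the ``right-rise'' $x_{k-1}<x_k$ with the existence of a valid middle $j$ (the latter expressed as a small OR-composition of $[x_i<x_j]\wedge[x_j<x_k]$ over $j\in\{i+1,\dots,k-1\}$). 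The full composition is $\mathcal{P}=\bigvee_{i=1}^{n-1}\mathcal{P}_i$.

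For correctness and witness sizes I would reason as in \cref{thm:202}. On a positive input, Lemma~\ref{lem:3is-properties} furnishes a minimal-extent triple $(i^*,j^*,k^*)$; the corresponding $s\!\to\!t$ path inside $\mathcal{P}_{i^*}$ contributes $1+\sum_{\ell=1}^{k^*-i^*-1}\tfrac{1}{\ell}+O(1)\in O(\log n)$ to the positive witness, so $W_+(\mathcal{P})\in O(\log n)$. On a negative input, within each $\mathcal{P}_i$ I would cut either (a) the opening edge when $x_i\geq x_{i+1}$, at cost $1$; or (b) if $x_i<x_{i+1}$, the spine at the first position $\ell$ where $x_\ell<x_{\ell+1}$, whose harmonic weight contributes $O(\ell-i)$, together with the adjacent exit gadgets, which contribute $O(\ell-i)$ using the absence of a valid middle. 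A telescoping argument, identical in spirit to the cut analysis of \cref{thm:202} (where the lengths of maximal non-increasing runs in a negative instance sum to at most $n$), yields $W_-(\mathcal{P})\in O(n)$, hence $C(\mathcal{P})\in O(\sqrt{n\log n})$.

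For the time complexity I would invoke \cref{thm:circulation-space-reflection-implementation}: each $\mathcal{P}_i$ is isomorphic to the ``ladder'' graph of \cref{fig:ladder-decomposition}, which admits a recursive parallel decomposition of depth $O(\log n)$, and the outer OR adds one further parallel layer; since the entire graph has $\mathrm{poly}(n)$ edges, the reflection through the circulation space costs $\polylog(n)$ per call in the QROM model, giving overall time $\widetilde{O}(\sqrt{n})$. The main obstacle is designing the exit gadget at $v_{k-1}$ so that, simultaneously, (i) it contributes only $O(1)$ to the positive witness along the correct completion, and (ii) when summed over all $i$ and all chosen cut positions on negative instances, the cut through these gadgets still telescopes to $O(n)$; this is exactly what must be verified to make the analogy with \cref{thm:202} go through.
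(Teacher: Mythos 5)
Your construction, witness analysis, and decomposition argument coincide with the paper's proof of \cref{thm:3-is}: the same per-$i$ spine with harmonic weights on $[x_\ell \geq x_{\ell+1}]$, the same exit gadget consisting of the right-rise check followed by a parallel OR over middle indices $j$ of $[x_i < x_j]$ in series with $[x_j < x_k]$, and the same telescoping cut. The ``obstacle'' you flag at the end is resolved exactly as you suggest: the gadget contributes $1+2=O(1)$ to the positive witness, and on negative inputs each block's gadgets are cut at cost $O(\ell-i)$ (unit cuts on failed right-rises plus one OR-tree cut of size at most $\ell-i$ where the right-rise holds but no valid middle exists), which telescopes to $O(n)$.
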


    \begin{proof}
        For every $j,k \in [n]$, we create the trivial span program $\mathcal{P} := [x_j < x_k]$ that computes whether $x_j < x_k$. Since this can be computed exactly, i.e., without error, with $\Theta(1)$ queries and time, the span program operations can be implemented using $\Theta(1)$ queries and time as well.

        Next, we generate the graph composition. For every $i \in [n]$, we attach the edge $[x_i < x_{i+1}]$ to the root node $s$. Below the resulting leaf, we create a sequence of edges $\frac{1}{\ell}[x_{i+\ell} \geq x_{i+\ell+1}]$, where we let $\ell$ run from $1$ until $n-i-1$. We attach outgoing edges labeled by the negation, i.e., $[x_{i+\ell} < x_{i+\ell+1}]$, creating new leaves. Next, we attach these to $t$ through a parallel composition of checks $[x_i < x_j]$ and $[x_j < x_{i+\ell+1}]$, where we let $j$ run from $i+1$ to $i+\ell$. We refer to the resulting graph-composed span program by $\mathcal{P}$. See also the pictorial representation in \cref{fig:3is}.

        \begin{figure}[!ht]
            \centering\scriptsize
            \begin{tikzpicture}[vertex/.style = {draw, rounded corners = .3em}, scale = 1.5]
                \begin{scope}
                    \node[vertex] (s) at (0,0) {$s$};
                    \node[left] at (s.west) {$\mathcal{P}_{j,k} = \;\;$};
                    \node[vertex] (1) at (1.5,.75) {};
                    \node[vertex] (2) at (1.5,.375) {};
                    \node[vertex] (3) at (1.5,0) {};
                    \node[vertex] (4) at (1.5,-.375) {};
                    \node[vertex] (5) at (1.5,-.75) {};
                    \node[vertex] (t) at (3,0) {$t$};

                    \draw (s) to node[above, rotate = {atan(1/2)}] {$[x_j < x_{j+1}]$} (1);
                    \draw (1) to node[above, rotate = {-atan(1/2)}] {$[x_{j+1} < x_k]$} (t);

                    \draw (s) to node[above] {$[x_j < x_\ell]$} (3);
                    \draw (3) to node[above] {$[x_\ell < x_k]$} (t);

                    \draw (s) to node[below, rotate = {-atan(1/2)}] {$[x_j < x_{k-1}]$} (5);
                    \draw (5) to node[below, rotate = {atan(1/2)}] {$[x_{k-1} < x_k]$} (t);

                    \draw[dotted] (s) to (2) to (t);
                    \draw[dotted] (s) to (4) to (t);
                \end{scope}
                \begin{scope}[shift = {(-1,-1.5)}]
                    \node[vertex] (s) at (0,0) {$s$};
                    \node[left] at (s.west) {$\mathcal{P}_j = \;\;$};
                    \node[vertex] (1) at (1.5,0) {};
                    \node[vertex] (2) at (3,0) {};
                    \node[vertex] (3) at (4.5,0) {};
                    \node[vertex] (4) at (6,0) {};
                    \node[vertex] (5) at (7.5,0) {};
                    \node[vertex] (6) at (9,0) {};
                    \node[vertex] (10) at (1.5,-.5) {};
                    \node[vertex] (20) at (3,-.5) {};
                    \node[vertex] (30) at (4.5,-.5) {};
                    \node[vertex] (40) at (6,-.5) {};
                    \node[vertex] (50) at (7.5,-.5) {};
                    \node[vertex] (60) at (9,-.5) {};
                    \node[vertex] (t) at (4.5,-1) {$t$};

                    \draw (s) to node[above] {$[x_j < x_{j+1}]$} (1);
                    \draw (1) to node[above] {$[x_{j+1} \geq x_{j+2}]$} (2);
                    \draw (2) to node[above] {$\frac12[x_{j+2} \geq x_{j+3}]$} (3);
                    \draw (3) to node[above] {$\frac13[x_{j+3} \geq x_{j+4}]$} (4);
                    \draw[dotted] (4) to (5);
                    \draw (5) to node[above] {$\frac{1}{n-2-j}[x_{n-2} \geq x_{n-1}]$} (6);

                    \draw (1) to node[left] {$[x_{j+1} < x_{j+2}]$} (10);
                    \draw (2) to node[left] {$[x_{j+2} < x_{j+3}]$} (20);
                    \draw (3) to node[left] {$[x_{j+3} < x_{j+4}]$} (30);
                    \draw[dotted] (4) to (40);
                    \draw[dotted] (5) to (50);
                    \draw (6) to node[left] {$[x_{n-1} < x_n]$} (60);

                    \draw (10) to[bend right = 20] node[near start, below left, rotate = -20] {$\mathcal{P}_{j,j+2}$} (t);
                    \draw (20) to[bend right = 15] node[near start, below, rotate = -20] {$\mathcal{P}_{j,j+3}$} (t);
                    \draw (30) to node[near start, left] {$\mathcal{P}_{j,j+4}$} (t);
                    \draw[dotted] (40) to[bend left = 10] (t);
                    \draw[dotted] (50) to[bend left = 15] (t);
                    \draw (60) to[bend left = 20] node[near start, below, rotate = 20] {$\mathcal{P}_{j,n}$} (t);
                \end{scope}
                \begin{scope}[shift={(5,0)}]
                    \node[vertex] (s) at (0,0) {$s$};
                    \node[left] at (s.west) {$\mathcal{P} = \;\;$};
                    \node[vertex] (t) at (2,0) {$t$};

                    \draw (s) to[bend left = 80] node[above] {$\mathcal{P}_1$} (t);
                    \draw[dotted] (s) to[bend left = 40] (t);
                    \draw (s) to node[above] {$\mathcal{P}_j$} (t);
                    \draw[dotted] (s) to[bend right = 40] (t);
                    \draw (s) to[bend right = 80] node[above] {$\mathcal{P}_{n-2}$} (t);
                \end{scope}
            \end{tikzpicture}
            \caption{The graph composition construction for the $3$-increasing subsequence problem. In the upper-left part, $\ell$ runs from $j+1$ to $k-1$, and in the upper-right part, $j$ runs from $1$ to $n-2$.}
            \label{fig:3is}
        \end{figure}

        Now, if there is a $3$-increasing subsequence in an input $x \in \Sigma^n$, there is also one with minimal extent, which we denote by $(i,j,k) \subseteq [n]$. From \cref{lem:3is-properties}, we observe that in that case, $s$ and $t$ are indeed connected, and by adding up the resistances along its path, we obtain that the effective resistance is at most $w_+(x,\mathcal{P}) \in O(\log(k-i)) \subseteq O(\log(n))$.

        On the other hand, if there is no $3$-increasing subsequence in $x$, then we find a cut in the graph. Indeed, for all $i$ for which $x_i \geq x_{i+1}$, we simply cut through the edges that are directly adjacent to $s$, which gives us a total resistance of at most $O(n)$. Next, if $x_i < x_{i+1}$, we find the smallest $j > i$ such that $x_j < x_{j+1}$. We observe that we can cut through the sequence of edges in the tree, with cost $j-i$, and we can also cut the $j-i$ edges of cost $1$ that come out of it. Finally, since there is no $3$-increasing subsequence, we can also cut through the last OR-tree, which also costs $j-i$. Since summing $j-i$ over all such pairs $(i,j)$ is at most $n$, we obtain that the negative witness size is $w_-(x,\mathcal{P}) \in O(n)$.

        We conclude the proof by observing that
        \[C(\mathcal{P}) = \sqrt{W_-(\mathcal{P}) \cdot W_+(\mathcal{P})} \in O\left(\sqrt{n\log(n)}\right).\]

        For the time complexity, we use the same decomposition technique as in \cref{thm:or-psearch,thm:202,thm:dyck-3}, to argue that we can decompose the graph in just $O(\log(n))$ levels of recursion. This then generates an overhead that is polylogarithmic in the total number of edges in the graph, which is polynomial in $n$.
    \end{proof}

    We suspect that this construction can be generalized to $k$-increasing subsequences, where $k \in \Theta(1)$. However, finding a succinct characterization of a minimal-extent $4$-increasing subsequence, like in \cref{lem:3is-properties}, seems to be rather cumbersome. We leave this generalization for future work.

    \section*{Acknowledgements}

    I would like to thank Galina Pass, Stacey Jeffery, Sebastian Zur, D\'aniel Szab\'o, Roman Edenhofer, Amin Shiraz Gilani, Subhasree Patro, Nikhil S.\ Mande, Francisca Vasconcelos, Stephen Piddock, Sander Gribling and Simon Apers for many fruitful discussions. I would also like to acknowledge Yixin Shen for directing my attention to the pattern matching problem as a potential application to this framework. Finally, I would like to thank my PhD committee for reading and commenting on earlier versions of the graph composition framework. I am supported by a Simons-CIQC postdoctoral fellowship through NSF QLCI Grant No.\ 2016245.

    \bibliographystyle{alpha}
    \bibliography{references}
\end{document}